\documentclass{article}
\usepackage{microtype}
\usepackage[utf8]{inputenc}
\usepackage[T1]{fontenc}
\usepackage{graphicx,booktabs,array,calc,longtable}
\usepackage{xurl}
\usepackage[hidelinks]{hyperref}
\usepackage[preprint]{icml2026}
\makeatletter
\renewcommand{\Notice@String}{\textit{Preprint. October 27, 2026.}}
\makeatother
\hypersetup{pdfsubject={},pdfauthor={Zhong Han}}
\usepackage{balance}
\usepackage{needspace}
\usepackage{amsmath,amssymb,amsthm}
\newtheorem{theorem}{Theorem}
\newtheorem{lemma}{Lemma}
\newtheorem{proposition}{Proposition}
\newtheorem{corollary}{Corollary}
\theoremstyle{definition}\newtheorem{definition}{Definition}
\theoremstyle{remark}

\icmltitlerunning{Optimal Consistency for Interval Selection}
\begin{document}
\twocolumn[
\icmltitle{Online Interval Selection with Binary Predictions: Optimal Consistency and a Continuous Robustness Tradeoff}
\begin{icmlauthorlist}
\icmlauthor{Zhong Han}{nus}
\end{icmlauthorlist}
\icmlaffiliation{nus}{Department of Biomedical Engineering, College of Design and Engineering, National University of Singapore}
\icmlcorrespondingauthor{Zhong Han}{zhong.han.hz@u.nus.edu}
\vskip 0.3in
]
\printAffiliationsAndNotice{}
\begin{abstract}
We study online interval selection with binary predictions of membership in a final optimal solution. Requests arrive in arbitrary order; accepted requests may be revoked, but rejected or revoked requests cannot be recovered. With rewards proportional to interval length, the best consistency under finite strict robustness is $d=(3+\sqrt5)/2$ when both guarantees hold almost surely on each fixed input, including for randomized algorithms. A continuous lower bound forces consistency strictly above $d$ at every robustness budget below $4\varphi+3\approx9.472$, whereas our deterministic algorithm attains $d$ with exact robustness $\varphi^5\approx11.090$, where $\varphi=(1+\sqrt5)/2$. Each held interval inherits a ledger recording positive prediction mass lost through irrevocable decisions. A tunable replacement cost gives a continuous family with exact consistency throughout and exact robustness on two parameter intervals, including $(C,R)=(25/9,10)$ and $(4,7)$. Randomization permits expected consistency approaching one with finite expected robustness, whereas almost-sure robustness forces expected consistency strictly above two at every finite budget.
\end{abstract}

\section{Introduction}\label{sec:intro}
An online selection algorithm can revoke accepted requests but cannot recover rejected alternatives. Even perfect predictions leave a dilemma: a short request may belong to the final optimum because compatible requests arrive later, yet accepting it can be disastrous if the input stops. How much can perfect predictions help while preserving finite robustness?

In the binary optimum-membership model of \citet{karavasilis2025},
optimal consistency under two almost-sure guarantees is $d=\varphi^2$,
below the limit $3$ for every 1-increasing policy with finite robustness
\citep[Definition~4.8 and Lemma~4.9]{karavasilis2025}. Such a policy never
accepts an interval shorter than its longest conflicting holder. Shaped-BANK attains $d$ with robustness $\varphi^5$,
whereas every budget below $4\varphi+3$ forces consistency strictly above $d$.

\paragraph{Why per-execution protection?}
Consider provisional reservations for one resource: offers specify future
slots, decisions precede service, and declined offers expire. When accepted
reservations can still be cancelled and hourly values are equal, duration
is the selection objective. A guarantee for each completed batch protects
against a poor random outcome as well as incorrect advice.
Appendix~\ref{app:cancellation-costs} treats proportional cancellation costs.

Our lower bounds cover randomized algorithms with both ratios almost
surely on each fixed input; deterministic Shaped-BANK attains optimal
consistency. Mixing a prediction follower with a robust algorithm improves
expected ratios but can return zero reward with positive probability.
Almost-sure robustness with expected consistency has different necessary
bounds and an open optimum (Section~\ref{sec:randomized-main}).

\paragraph{Remembering missed predictions.}
When a positive request is rejected, BANK distributes its length among the
ledgers of conflicting holders. A replacement inherits those ledgers, so the
same held interval can require different future decisions depending on
what has already been missed. Bounding this inherited mass permits shorter
replacements while protecting both guarantees. Varying the replacement cost and both special thresholds
gives a continuous tradeoff; Section~\ref{sec:robustness} identifies its exact segments.

\subsection{Related work}
Consistency--robustness tradeoffs quantify the benefit of accurate advice
together with worst-case protection when it is wrong
\citep{mitzenmacher2020,lykouris2018,purohit2018,angelopoulos2020,wei2020tradeoffs}.
Here even exact optimum-membership advice cannot recover rejected or revoked
alternatives. Optimal consistency with a matching lower bound, an attainable budget curve and budget lower bounds quantify this irreversible decision constraint.
In our model, \citet[Observation~4.6]{karavasilis2025} excludes consistency
one by early stopping; Definition~4.8 and Lemma~4.9 give $C\ge3$ for
every finitely robust 1-increasing policy, including Revoke-Proportional (RP), and Theorem~4.10 certifies
$C(\mathrm{RP}_\lambda)\le3\lambda/(\lambda-1)$.
Revoke-Prop-Half permits shorter positive replacements without a matching
guarantee. BANK makes such replacements finitely robust below three and
matches a full-class consistency lower bound.

Replacement charging appears in any-order interval selection
\citep{borodin2023anyorder}. Karavasilis identified dependence between
direct and transferred charges as possible slack. Our joint geometric
bound determines exact robustness for two BANK parameter intervals and unchanged RP; source
measures control mass lost through rejections.

\citet{antoniadis2026switching} also use binary optimum advice and length
rewards, but require release-time order and irrevocable acceptance.
TrustGreedy~\citep{boyar2026} maximizes cardinality, receives a predicted
full input set before arrivals, and cannot revoke accepted intervals;
its replacements update the unaccepted part of a plan.
Other models permit fractional allocation, resumption, recall or unit
profits; Appendix~\ref{app:further-literature} distinguishes these scopes.

\section{Model and Main Guarantees}\label{sec:model}
A request $I=[s(I),f(I))$ has a unique identity, length $w(I)=f(I)-s(I)>0$, and a fixed prediction $p(I)\in\{0,1\}$ revealed on arrival. Positive-length overlap means conflict; endpoint contact is compatible. A feasible set contains pairwise compatible requests, with $w(X)=\sum_{I\in X}w(I)$. Distinct identities may have coincident intervals.

Requests arrive in arbitrary order. Each arrival is immediately accepted or
permanently rejected, maintaining a feasible held set $\mathcal S_t$.
Held requests may be revoked; rejected and revoked identities cannot be
recalled, but their information may be retained. Future arrivals, their
total number and termination time are unknown; any finite prefix may be complete.

\begin{definition}[Strict consistency and robustness]\label{def:performance}
For a nonempty finite labelled input $\sigma$, let $P$ be its positive
requests, $\mathrm{OPT}=\max_{X\text{ feasible}}w(X)$ and
$\mathrm{ALG}=w(\mathcal S_{\rm final})$. Predictions are perfect when
$P$ equals one final optimum, allowing ties. For a possibly randomized
algorithm and inputs fixed independently of its coins, define
\begin{equation}
 \begin{aligned}
 R_{\rm as}(A)&=\sup_\sigma\mathop{\rm ess\,sup}
                 \frac{\mathrm{OPT}(\sigma)}{\mathrm{ALG}_A(\sigma)},\\
 C_{\rm as}(A)&=\sup_{\sigma:P\text{ optimal}}\mathop{\rm ess\,sup}
                 \frac{\mathrm{OPT}(\sigma)}{\mathrm{ALG}_A(\sigma)}.
 \end{aligned}\label{eq:performance}
\end{equation}
The essential supremum is over the coins. A zero denominator gives
$+\infty$; there is no additive term. For deterministic algorithms
write $R,C$ instead of $R_{\rm as},C_{\rm as}$.
\end{definition}
Thus a finite budget $B$ gives, on every fixed labelled prefix $\tau$,
\begin{equation}
 \Pr\{\mathrm{OPT}(\tau)\le B\mathrm{ALG}_A(\tau)\}=1.
 \label{eq:finite-robust}
\end{equation}
The null set may depend on $\tau$. Predictions may be infeasible.

Write $\alpha=1/\varphi$, $d=\varphi^2$, and $M=\varphi^3$. Define
\[
 \begin{aligned}
 C^*(B)&=\inf_{A\text{ deterministic}:R(A)\le B}C(A),\\
 B_{\min}(d)&=\inf_{A\text{ deterministic}:C(A)=d}R(A),
 \end{aligned}
\]
with $\inf\varnothing=+\infty$. Define $C^*_{\rm as}(B)$ and
$B_{\min}^{\rm as}(d)$ analogously using possibly randomized algorithms
and $C_{\rm as},R_{\rm as}$.
The class defining $C^*(B)$ is empty below $M$: all-zero labels transfer the classical prediction-free lower bound \citep{tomkins1995}. The LR rule of \citet{garay1997}, in the interval formulation of \citet[Lemma~4.5]{karavasilis2025}, has $C=R=M$ because it ignores predictions.

\begin{theorem}[Optimal consistency under almost-sure guarantees]\label{thm:main}
Allowing randomized algorithms,
\begin{equation}
 \min_{A:R_{\rm as}(A)<\infty} C_{\rm as}(A)=d.
 \label{eq:main}
\end{equation}
The deterministic Shaped-BANK $T_*$ (Section~\ref{sec:algorithm}) attains the minimum, also under the stronger requirement that both guarantees hold on every execution.
\end{theorem}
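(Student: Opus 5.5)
The proof has two halves: a lower bound valid for every possibly randomized algorithm with $R_{\rm as}<\infty$, and a matching upper bound from the deterministic Shaped-BANK $T_*$. Together they show the infimum is $d$ and is attained.

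\emph{Lower bound.} Fix $A$ with $R_{\rm as}(A)=B<\infty$. By \eqref{eq:finite-robust}, on every fixed labelled prefix $A$ almost surely holds positive reward. In particular, after the first arrival it holds that request almost surely, since any prefix may be complete.

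I would build an adaptive adversary whose labels are perfect at every stopping point. Such an adversary is legitimate against a randomized algorithm because each branch of its decision tree is a fixed input, and the essential supremum is taken per input.
\begin{itemize}
\item Present a short positive interval $J$ of length $1$.
\item Present a long conflicting interval $L$ of length $x$ labelled $0$. To keep $J$'s label perfect on completion, let $L$ overlap $J$ while later compatible positive pieces are promised to fit beside $J$.
\item If $A$ keeps $J$ with positive probability, stop. The ratio is then of order $x$ against a reward of $1$ plus already-accepted compatible pieces.
\item If $A$ switches to $L$ with positive probability, append positive intervals compatible with $J$ but conflicting with $L$. These make $\{J,\dots\}$ optimal, and $J$ is now unrecoverable.
\end{itemize}
Iterating in the Tomkins/LR style gives a sequence of lengths $x_1<x_2<\dots$, and an essential-supremum argument lets us follow a positive-probability branch at each step. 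Finite robustness forbids holding only tiny intervals forever, so eventually some step must switch. Minimising the worst ratio over the switching thresholds yields a recurrence of the form $x_{k+1}\le c\,x_k - x_{k-1}$, and non-divergence of this recurrence forces $c\ge\varphi^2$. This matches the value $d$ quoted as lying strictly between $M$-type constants and the 1-increasing limit $3$. Any consistency $C<d$ then gives a contradiction, and the essential supremum must hold simultaneously along a chosen branch.

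\emph{Upper bound.} I would invoke the analysis of Shaped-BANK from Section~\ref{sec:algorithm}.
\begin{itemize}
\item \emph{Consistency.} Charge each positive request in $P$ either to a held interval or to a ledger. A positive rejection distributes its length among the ledgers of conflicting holders, and replacements inherit those ledgers. The invariant to maintain is that each holder $H$ satisfies $w(H)+\text{ledger}(H)\cdot(\text{shaping})\ge w(\text{positive mass charged})/d$. Verify it case by case on accept, reject and replace.
\item \emph{Robustness.} Use the paper's joint geometric bound on direct and transferred charges to obtain the value $\varphi^5<\infty$.
\end{itemize}
Both arguments are deterministic and per execution, so they hold on every execution, which gives the stronger statement.

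\emph{Main obstacle.} I expect the lower bound to be the hard part. The adversary must keep the predictions exactly perfect at every possible stopping point while the algorithm randomizes. It must also handle partial holdings: the algorithm can keep fragments compatible with later positives, so the charged quantity is a sum, not a single interval. Extracting the sharp constant $\varphi^2$ rather than a weaker bound from that recurrence is delicate. I would first attempt the deterministic case with equal-length positive fillers, then lift it via per-branch essential suprema.
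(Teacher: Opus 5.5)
Your upper bound is acceptable only as a citation of Theorems~\ref{sg:thm:main} and~\ref{sg:thm:robust-main}. The lower-bound half, which carries the content of \eqref{eq:main}, has a genuine gap.

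First, the recurrence you propose cannot yield $d$. A positive sequence with $x_{k+1}\le c\,x_k-x_{k-1}$ exists exactly when $r^2-cr+1$ has a real root, i.e.\ when $c\ge2$ (take $x_k=k$ at $c=2$). So it would give at most $C\ge2$, and nothing in it singles out $\varphi^2$.

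More fundamentally, your adversary never reaches a perfectly labelled terminal at which the algorithm has suffered a forced loss. After a positive $J$ and a longer negative $L$, the branch ``keep $J$ and stop'' has $\{J\}$ non-optimal. Only robustness applies there, and that needs $x>B$. In the switching branch, appending positives compatible with $J$ does not stop the algorithm from simply accepting them. That changes only if they are so short that abandoning $L$ for one of them would violate robustness at that prefix, and this is the mechanism your sketch lacks. Since Naive has consistency one, any consistency lower bound must come from a decision that robustness forces on a prefix whose labels are not yet perfect, and that stays irreversibly bad on a perfectly labelled completion.

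The paper's proof (Theorem~\ref{thm:lower}) supplies exactly this with a single anchor.
\begin{enumerate}
\item Offer negative $J=[0,1)$, then positive tiles partitioning $[\delta,1-\delta)$, each shorter than $1/R$. Because any prefix may be final, a first abandonment of $J$ would leave at most the incoming tile against $\mathrm{OPT}\ge1$. So robustness forces $J$ to be kept and every tile to be rejected. The tiles alone are not optimal, so this step genuinely uses robustness.
\item Offer positive $L=[\delta-\alpha,\delta)$. If it is accepted, stop. All positives are compatible and each conflicts only with $J$, so they form the unique optimum, of weight near $1+\alpha$. The reward is at most $\alpha$, giving ratio near $\varphi+1=d$.
\item If $L$ is rejected, add positive $U=[1-\delta,2-\delta)$. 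The reward is at most $1$, while the optimum is near $2+\alpha=d$.
\end{enumerate}
The length $\alpha$ balances the two branches.

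The tree is finite, so you can intersect its finitely many probability-one guarantee events. A tape in the intersection defines a deterministic rule that survives every branch, which lifts the bound to $C_{\rm as},R_{\rm as}$. Your ``follow a positive-probability branch'' device would also serve for this lift; what is missing is the construction itself.

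Finally, the consistency invariant you sketch should read $b(H)\le d\,w(H)$ for every held $H$ when $P$ is feasible. It cannot be verified by a local accept/reject/replace check: new negative births and special positive births need the source-measure exterior bounds and directional cuts of Lemma~\ref{sg:lem:main-invariant}.
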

\begin{corollary}[Bounds on its robustness cost]\label{cor:golden-endpoint}
The universal budget and the Shaped-BANK budget satisfy
\begin{equation}
 4\varphi+3\le B_{\min}^{\rm as}(d)\le B_{\min}(d)\le R(T_*)=B_*:=\varphi^5.
 \label{sum:eq:budget-main}
\end{equation}
The remaining factor is $\varphi^5/(4\varphi+3)\approx1.171$ for either class.
\end{corollary}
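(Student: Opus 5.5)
The corollary splits into three inequalities in \eqref{sum:eq:budget-main}, and I would prove them in increasing order of difficulty.

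The middle inequality $B_{\min}^{\rm as}(d)\le B_{\min}(d)$ holds because deterministic algorithms are a subclass of randomized ones. For such an algorithm the essential supremum over coins is the single value, so $C_{\rm as}=C$ and $R_{\rm as}=R$. Any deterministic $A$ with $C(A)=d$ is therefore admissible in the randomized infimum. A randomized algorithm with $C_{\rm as}(A)=d$ is not excluded by Theorem~\ref{thm:main}, since $d$ is the minimum.

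The right inequality $B_{\min}(d)\le R(T_*)$ follows from Theorem~\ref{thm:main}, which says $T_*$ is deterministic with $C(T_*)=d$. It then remains to show $R(T_*)=\varphi^5$. I would argue this in Section~\ref{sec:robustness} in two steps.
\begin{itemize}
\item \emph{Upper bound.} Run a charging argument for the final held set. Each optimal interval is charged either to a conflicting holder or, through the ledger, to the positive mass lost by an irrevocable rejection. Replacement chains grow at least geometrically with ratio tied to the threshold $M=\varphi^3$, so the geometric sums of direct and transferred charges bound the total by $\varphi^5\,\mathrm{ALG}$. This needs the joint geometric bound that couples the two kinds of charge.
\item \emph{Tightness.} Construct an explicit adversarial family, with a nested replacement chain ending in an early stop, whose ratio tends to $\varphi^5$.
\end{itemize}

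The left inequality $4\varphi+3\le B_{\min}^{\rm as}(d)$ is the real content and the main obstacle. I would take it from the continuous lower bound announced in the abstract: every budget $B<4\varphi+3$ forces $C^*_{\rm as}(B)>d$. Fix any randomized $A$ with $R_{\rm as}(A)=B<4\varphi+3$. I would build an adaptive adversary over fixed labelled prefixes with perfect predictions, proceeding as follows.
\begin{enumerate}
\item Present a long negative interval.
\item Then present a short positive interval conflicting with it, such that the optimum contains the positive one only once compatible positives arrive later.
\item Almost-sure robustness on the prefix that stops immediately forces, with probability one, that the algorithm keeps enough mass; otherwise the ratio exceeds $B$ on that prefix.
\item Almost-sure consistency on the continued prefix forces the algorithm to have accepted the short positive.
\end{enumerate}

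Optimizing the length parameters, the two constraints are compatible at ratio $d$ only when $B\ge 4\varphi+3$. One checks that $4\varphi+3=\varphi^2+2\varphi^2+\ldots$ arises from a three-stage version of this construction. Because the guarantees must hold almost surely on each fixed prefix, the deterministic argument applies pathwise on a full-measure set, so randomization gives no advantage. The delicate step is selecting the stage lengths so that the forced-acceptance constraints compose exactly to $4\varphi+3$.

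Finally, the numerical factor is $\varphi^5/(4\varphi+3)=(5\varphi+3)/(4\varphi+3)\approx 11.090/9.472\approx1.171$. It is stated for either class because both $B_{\min}$ and $B_{\min}^{\rm as}$ are sandwiched between the same two endpoints.
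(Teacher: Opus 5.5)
Your decomposition is the paper's. The middle inequality follows by class inclusion. The right one follows from Theorem~\ref{thm:main} together with Theorem~\ref{sg:thm:robust-main}. The left one follows from Theorem~\ref{r10:thm:source-first}, transferred to almost-sure randomized guarantees by the finite-tree argument of Appendix~\ref{r39:sec:finite-tree-transfer}. Citing these would complete the proof.

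\textbf{The gap: your construction for the left inequality fails.} This is the step you call the real content.
\begin{enumerate}
\item Your scheme (a negative anchor, a short positive, robustness at the immediate stop, consistency on a continuation) is exactly the two-branch construction of Theorem~\ref{thm:lower}. There $B$ only fixes the tile length, both terminals carry perfect labels, and at $t=\alpha$ both branches give $d$ for \emph{every} finite $B$.
\item Varying the lengths in that scheme therefore cannot yield a budget threshold anywhere near $4\varphi+3$. Even the paper's multistage binary-reserve amplification reaches only $\varphi^4=3\varphi+2$ at $C=d$ (Corollary~\ref{r3:thm:veto}), so a ``three-stage version'' is not enough.
\item Your identity ``$4\varphi+3=\varphi^2+2\varphi^2+\ldots$'' is also not the right decomposition. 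The correct one is $\varphi^3+\varphi^2+\varphi+1$.
\end{enumerate}

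\textbf{The missing idea.} The robustness violation must occur on the \emph{realized} history. Before that, consistency must force a chain of irrevocable decisions, each through a hypothetical perfect completion that is not counted afterwards:
\begin{enumerate}
\item Tiles force retention of a negative anchor $K$ of length near $\varphi$.
\item A right reserve $R_0$ of length near $\varphi^2$ must be rejected.
\item The positive unit $J$ must be accepted. For this long reserve, that needs a remote anchor $H$ with satellite positives, so that the full history stays uniquely optimal.
\item The negatives $I$ and $L$, of lengths near $1$ and $\varphi^3$, must be rejected via Lemma~\ref{r11:lem:quantitative-completion}.
\end{enumerate}
The compatible packing $L,I,\mathcal A_0,R_0$ then weighs $\varphi^3+1+\varphi+\varphi^2=4\varphi+3$, while only $J$, of value one, survives. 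Each forcing terminal needs strict exchange certificates over every feasible negative family. That is where the work lies, not in optimizing lengths in a one-anchor scheme.

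\textbf{A lesser issue: your robustness sketch misdescribes the mechanism.}
\begin{enumerate}
\item The guarantee holds for arbitrary labels, so comparator intervals are not charged ``through the ledger''. The paper splits each rejected comparator among its current conflicts in proportion to their ordinary costs. It passes these charges to final roots and bounds each root by $F_J(x,y)$, using ancestry antichains of terminal costs.
\item Replacement chains need not grow geometrically. A special replacement may be up to a factor $\varphi$ shorter than the negatives it removes.
\item The constant $\varphi^5$ comes from the special-root, outward-child case $(M+1+\varphi)h+q_*(\varphi h,\varphi h)$.
\item The matching family of Proposition~\ref{sg:prop:floor} uses a single special replacement, not a nested chain.
\end{enumerate}
Citing Theorem~\ref{sg:thm:robust-main} avoids needing any of this.
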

Table~\ref{tab:main-guarantees} and Figure~\ref{fig:frontier} distinguish universal
bounds from attainments. Theorems~\ref{thm:lower}, \ref{sg:thm:main}, and
\ref{sg:thm:robust-main} prove Theorem~\ref{thm:main};
Theorem~\ref{r10:thm:source-first} gives the corollary's lower endpoint.
Naive accepts a positive request exactly when it conflicts with no previously
accepted positive, and never revokes; it rejects all negative requests.
\begin{table}[t]
\caption{Principal guarantees. First block: also valid for randomized algorithms
with both guarantees almost surely, using $C_{\rm as},R_{\rm as}$.
Second: deterministic attainments. LR is due to \citet{garay1997};
RP's consistency bound to \citet{karavasilis2025}.
The last row instead uses expected reward on fixed inputs.}
\label{tab:main-guarantees}\centering\small\setlength{\tabcolsep}{3pt}
\begin{tabular}{@{}p{.45\columnwidth}p{.48\columnwidth}@{}}\toprule
Scope & Guarantee\\\midrule
Finite almost-sure robustness & $C_{\rm as}\ge d$\\
Budget $R_{\rm as}\le B$ & $C_{\rm as}\ge\max\{d,G(B),H(B)\}$\\
Optimal $C_{\rm as}=d$ & $R_{\rm as}\ge4\varphi+3$\\\midrule
LR (prediction-free) & $C=R=M$\\
SUM-BANK at $\lambda=M$ & $C=d$, $R=7\varphi+4$\\
Shaped-BANK & $C=d$, $R=\varphi^5$\\
Parametric $T_a$, $\alpha\le a\le1$ & $C=(1+a)^2$, $R\le K_a$; equality outside $(a_-,a_+)$ (Theorem~\ref{param:thm:curve})\\
RP, $\lambda\ge2$ & $C\le3\lambda/(\lambda-1)$; $R=2\lambda+2$\\\midrule
Initial Naive/LR mixture, $1<c\le M$ & $C_{\rm exp}=c$, $R_{\rm exp}=2\varphi c/(c-1)$\\\bottomrule
\end{tabular}\end{table}

\begin{figure*}[t]
\centering\includegraphics[width=\textwidth]{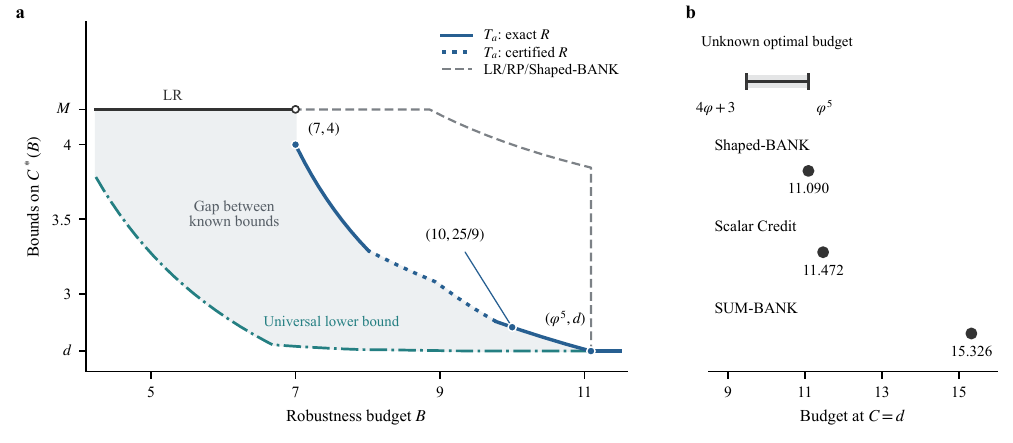}
\caption{Known performance bounds. (a) The universal lower envelope is
$\max\{d,G(B),H(B)\}$; the attainable envelope combines LR with $T_a$.
Solid $T_a$ segments have exact robustness, dotted segments only a certificate;
the shaded gap is unresolved. Near $B=6.65$, where $G$ meets $d+1/20$,
the lower envelope switches to $H$, which is proved only for $\gamma\le1/20$.
The dashed LR/RP/Shaped-BANK envelope excludes interior $T_a$ parameters;
its jump at $B=\varphi^5$ is Shaped-BANK attaining $d$.
(b) At $C=d$, the three algorithm budgets are exact, while the universal
minimum remains in the marked interval. Both panels concern almost-sure guarantees.}
\label{fig:frontier}
\end{figure*}

\section{Universal Lower Bounds}\label{sec:lower}
\begin{theorem}\label{thm:lower}
Every deterministic algorithm with finite robustness has $C(A)\ge d$. For each such algorithm and every $\epsilon>0$, a finite perfectly labelled input with a unique optimum gives ratio greater than $d-\epsilon$.
\end{theorem}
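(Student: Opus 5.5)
The plan is an adaptive adversary argument against an arbitrary deterministic algorithm $A$ with robustness $R(A)=B<\infty$. The one lever we have is that finite robustness holds on \emph{every} prefix, by (\ref{eq:finite-robust}) specialised to deterministic $A$. So $A$ must always hold reward at least $\mathrm{OPT}(\tau)/B$ on the current prefix $\tau$. In particular it must accept the first request, and it can never sit on an interval that is short compared with what is currently available. Against this I would build an input in which perfect labels still force a loss of $d=\varphi^2$. The loss comes from positive requests that $A$ had to reject (or revoke) earlier in order to stay robust. Since rejected identities cannot be recalled, that mass is gone even though it belongs to the final optimum. Ties will be broken by small perturbations, to obtain the unique-optimum clause.

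\textbf{Construction.} First release a negative interval $N_0$ of normalised length $1$, which $A$ must accept. Then release, inside $N_0$, a stream of pairwise compatible positive pieces of tiny length $\eta$. Holding $N_0$, the algorithm either rejects each piece, or swaps to a piece. If it swaps, the adversary stops at once: any such swap must be undone later, and the ratio becomes $1/\eta$, contradicting $B<\infty$ once $\eta<1/B$. Pieces are therefore rejected and their mass $x$ is lost.

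Next the adversary escalates with a chain $Q_1,Q_2,\dots$ of intervals of geometrically growing lengths $\lambda^k$. Each $Q_k$ overlaps the currently held interval and the right part of its predecessor. Labels are assigned at the end: the adversary stops at the first index where $A$'s decision lets it fix a labelling in which $P$ (the rejected pieces plus alternating links of the chain) is a genuine optimum.

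Two facts drive the argument. First, finite $B$ forces $A$ to abandon its current holder after a bounded number of steps, since otherwise $\mathrm{OPT}/\mathrm{ALG}$ grows like $\lambda^k$. Second, each abandonment discards a positive or blocks a future positive. Tracking the ratio at the two possible stopping points around $A$'s switch index yields a bound of the form $\max\{1+1/r,\ 1+r\}$-type quantities in the ratio $r$ between consecutive held lengths.

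\textbf{Optimisation.} Letting $\eta\to0$ and the chain length grow, the adversary's guarantee is the value of the recurrence $r_{k+1}=1+1/r_k$, whose fixed point satisfies $r^2=r+1$, i.e.\ $r=\varphi$. The resulting ratio $(1+\varphi^{-1})\varphi=\varphi^2=d$ is approached within any $\epsilon$ at a finite stage. A final perturbation of the lengths by amounts $o(\epsilon)$ makes the optimum unique while keeping $P$ exactly that optimum, so the ratio stays above $d-\epsilon$.

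\textbf{Main obstacle.} The hard step is the label bookkeeping: guaranteeing that, \emph{whatever} index $A$ switches at, the adversary has a stopping point and a labelling with $P$ equal to the unique final optimum, without the labels revealed on arrival leaking the future. I would first try fixing labels on arrival via a parity rule along the chain, so that positives are always pairwise compatible. I would then verify optimality by a direct exchange argument on the interval graph, which is a path-like conflict structure, so a maximum-weight independent set can be checked locally.
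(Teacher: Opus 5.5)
\textbf{Verdict: there is a genuine gap.} Your first phase is right and matches the paper. Robustness forces acceptance of the negative unit anchor. At the first abandonment of it only the incoming tile could still be held, so stopping there gives ratio at least $1/\eta>B$. Hence every tiny positive tile is rejected and its mass is lost.

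The gap is everything after that. The chain $Q_1,Q_2,\dots$ is never specified: you give no lengths, overlaps, positive links, or admissible stopping points. The bound of ``$\max\{1+1/r,1+r\}$-type'' and the recurrence $r_{k+1}=1+1/r_k$ are asserted, not derived from any concrete stopping point.

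The step ``labels are assigned at the end'' is not allowed in this model. Each $p(I)$ is revealed on arrival, and perfectness is judged on the final input. You flag this yourself, but the parity-rule fix is not carried out, and that is exactly where such a chain breaks. A stop just after a negative link of growing length is generally not a perfectly labelled input, so the admissible stopping points are restricted. You give no argument that every response of $A$ still leaves a ratio near $d$.

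\textbf{The missing idea: no chain is needed, and one binary decision suffices.} Let the anchor be $J=[0,1)$ and let the tiles fill $[\delta,1-\delta)$, so $s=1-2\delta$. Then offer the positive $L=[\delta-t,\delta)$, which overlaps $J$ only by $\delta$.
\begin{itemize}
\item If $A$ accepts $L$, it must revoke $J$ and holds at most $t$. The positives weigh $s+t$, so stopping gives ratio $(s+t)/t$.
\item If $A$ rejects $L$, offer the positive unit $U=[1-\delta,2-\delta)$ and stop. $A$ can retain only $J$ or $U$, so its value is at most $1$, while the positives weigh $s+t+1$.
\end{itemize}
Balancing $(1+t)/t=2+t$ gives $t^2+t-1=0$, i.e.\ $t=\alpha$. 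Both ratios then tend to $d$ as $\delta\to0$.

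Every label is fixed at arrival: $J$ is negative and all other requests are positive. In both terminal inputs the positives are mutually compatible, weigh more than $1$, and each conflicts only with the sole negative $J$. So they form the unique optimum, and the uniqueness comes from $\delta>0$ rather than from a separate perturbation. Determinism lets the adversary pick the branch in advance.

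Your fixed-point equation $r=1+1/r$ is exactly this one-step balance with $r=1/t$. It is not the value of a multi-step recurrence.
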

\begin{proof}
Perfect labels concern the final input; a stopping prefix may have incorrect labels. Fix a finite robustness bound $R$. Let $t=\alpha$ and choose
$0<\delta<\min\{t/2,t\epsilon/2\}$. Set $s=1-2\delta$, take an integer $m>Rs$, and partition $[\delta,1-\delta)$ into $m$ positive tiles of length $\ell=s/m<1/R$.

First offer the negative unit interval $J=[0,1)$, then the tiles. Robustness forces acceptance of $J$. At the first abandonment of $J$, only the arriving tile can remain: earlier tiles have been rejected and cannot be recalled. Stopping there gives $\mathrm{OPT}\ge1$ and $\mathrm{ALG}\le\ell<1/R$, a contradiction. Thus every tile is rejected and $J$ remains. This argument uses robustness, since the positive tiles alone are not yet optimal.

Offer the positive interval $L=[\delta-t,\delta)$. If it is accepted, stop. The positives are compatible and have weight $s+t>1$; every positive conflicts with the sole negative $J$. They therefore form the unique optimum. The algorithm holds at most $L$, giving ratio at least
\[
 (s+t)/t=d-2\delta/t>d-\epsilon.
\]
If $L$ is rejected, offer the positive unit interval $U=[1-\delta,2-\delta)$ and stop. The full positives again form the unique optimum, now of weight $s+t+1$. The algorithm can retain only $J$ or $U$, so its value is at most one and its ratio is at least $d-2\delta>d-\epsilon$. Emptying the held set only worsens either bound. 

Determinism fixes the decision on $L$ from the shared prefix, so the appropriate finite input can be chosen for $A$ without changing any label. Taking the supremum proves the result.
\end{proof}
The same argument applies to two almost-sure guarantees: fix the finite
tree of stopping and perfect terminal inputs, and intersect their
probability-one guarantee events. A tape in that intersection would
survive every branch, contradicting the deterministic argument.
All lower bounds below have such finite trees and therefore also hold
with $C_{\rm as},R_{\rm as},C^*_{\rm as}$ in place of $C,R,C^*$
(Appendix~\ref{r39:sec:finite-tree-transfer}). Tile rejection relies on unknown termination, strict ratios, and no recall.
Figure~\ref{fig:lower} illustrates the two terminal branches. Appendix~\ref{sec:witness} gives bounded-size rational witnesses for fixed $R$.

\begin{figure*}[t]
\centering
\includegraphics[width=\textwidth]{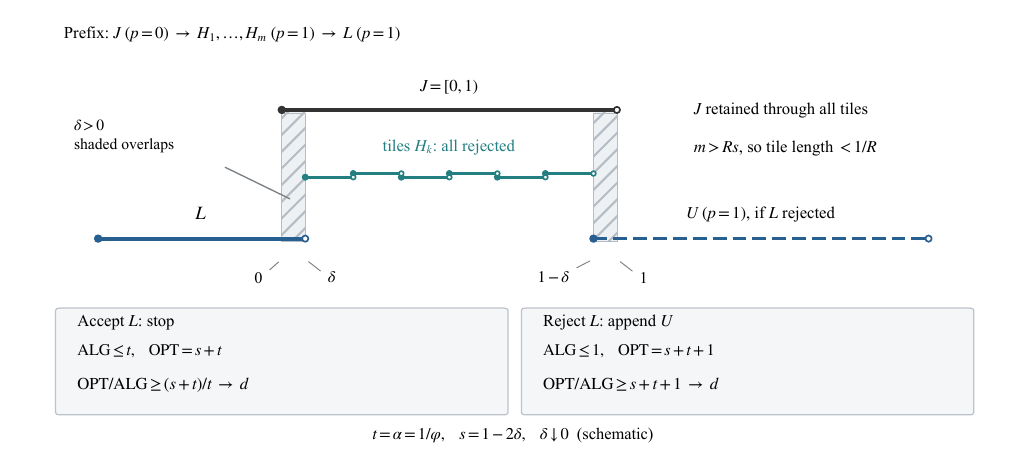}
\caption{Universal lower-bound construction. Filled left and open right endpoints indicate half-open intervals. The common prefix forces retention of $J$ and permanent rejection of every $H_i$; the decision on $L$ selects one of two complete inputs. Both have perfect predictions and a unique optimum. The geometry uses $s=1-2\delta$, $t=\varphi^{-1}$, and $\delta>0$; six central pieces are drawn schematically. }
\label{fig:lower}
\end{figure*}

\begin{theorem}[A budget-dependent lower curve]\label{r3:thm:budget-main}
For every deterministic algorithm with $R(A)\le B<\infty$ and $B\ge M$,
\begin{equation}
 C(A)\ge\max\{d,G(B)\},\qquad G(B)=\frac{B}{(\sqrt B-1)^2}.
 \label{r3:eq:budget-main}
\end{equation}
\end{theorem}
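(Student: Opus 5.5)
The plan is to reuse the skeleton of Theorem~\ref{thm:lower} and let the budget $B$ enter through the shape of the gadget. The quantity $\max\{d,G(B)\}$ suggests this, since $G(B)=(1+a)^2$ with $a=1/(\sqrt B-1)$, i.e.\ $\sqrt B=(1+a)/a$. That matches the consistency $(1+a)^2$ of $T_a$ in Table~\ref{tab:main-guarantees}. I therefore expect a two-level construction whose per-level ratio is $\sqrt B$ and whose squared ratio produces $(1+a)^2$. The bound $d$ is already Theorem~\ref{thm:lower}, so only $C(A)\ge G(B)$ is needed, and only for $M\le B<\varphi^4$ (where $G(B)\ge d$). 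The finite-tree remark after Theorem~\ref{thm:lower} will then transfer the bound to $C_{\rm as},R_{\rm as}$.

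\emph{Step 1 (forced prefix).} As before, offer a negative unit $J=[0,1)$ and then positive tiles of length below $1/B$ covering $[\delta,1-\delta)$. Exactly as in Theorem~\ref{thm:lower}, robustness forces every tile to be rejected and $J$ to be retained.

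\emph{Step 2 (first branch).} Offer a positive interval $L_1$ of length $a$ (up to $\delta$-slack) that conflicts with $J$ only on its margin. If $L_1$ is rejected, finish with a long positive interval adjacent to the tiles. The parameters are chosen so that this rejected branch has ratio at least $(1+a)^2$, which replaces the $2+t$ branch of Theorem~\ref{thm:lower}.

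\emph{Step 3 (second branch).} If $L_1$ is accepted, the algorithm holds only about $a$, so $J$ is gone. I would then not stop immediately, because that would give only $(1+a)/a=\sqrt B$. Instead, offer a scaled copy of the gadget with a new negative $J_2$ of length about $a\sqrt B$. Robustness with budget $B$ forces the algorithm to switch to $J_2$, since keeping $L_1$ alone against $\mathrm{OPT}\ge|J_2|$ would exceed $B$. Then repeat the tile and short-positive step at scale $a\sqrt B$. In the terminal inputs of this second level the positives (tiles of both levels, $L_1$, and the final positive) form the unique optimum. The held value is a $1/\sqrt B$ fraction at each of two levels, so the ratio is $\sqrt B\cdot\sqrt B\cdot(\text{normalization})$. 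Choosing $a$ so that the branches balance should give $(\sqrt B/(\sqrt B-1))^2=G(B)$. The condition $B\ge M$ is needed so that the robustness-forced switch to $J_2$ is legitimate and so that $a\le1$.

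\emph{Expected obstacle.} The hard step is the exact geometry of the second level. Its intervals must overlap the first-level survivors so that (i) robustness, not consistency, forces each switch; (ii) all positives stay mutually compatible and strictly beat every feasible set containing a negative; and (iii) the balancing equation in $a$ has exactly the root $a=1/(\sqrt B-1)$. My first attempt would take scale factor $\sqrt B$ between levels and write both terminal ratios as functions of $a$ and $B$. I would then check that equating them reproduces $(1+a)^2$ with $\sqrt B=(1+a)/a$. If two levels do not close the identity, I would fall back to an infinite geometric chain of forced switches at ratio $\sqrt B$, with the prediction mass summing to the factor $(1+a)$, and let the $\delta$-slacks tend to zero as in Theorem~\ref{thm:lower}.
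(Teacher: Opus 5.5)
\textbf{Verdict: there is a genuine gap.} Your picture of two levels, each contributing a factor $\sqrt B$, does not correspond to a valid adversary. Both concrete steps fail on exactly the range $M\le B<\varphi^4$ where $G(B)>d$, i.e.\ where $a=1/(\sqrt B-1)>\alpha$.

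\emph{Step~2.} Once $L_1$ is rejected and $J$ kept, a final positive of length $u$ meeting $J$ yields ratio $\min\{1+a+u,(1+a+u)/u\}\le 2+a$. But $(1+a)^2-(2+a)=a^2+a-1>0$, so that branch cannot be closed as you describe.

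\emph{Step~3.} The switch is not forced. With $|J_2|\approx a\sqrt B=1+a$, your own comparison $\mathrm{OPT}\ge|J_2|$ against held $L_1$ gives ratio only $\sqrt B$. Even adding the old unit of missed mass gives $(2+a)/a=2\sqrt B-1\le B$. Forcing needs $|J_2|$ near $Ba-1$.

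\emph{The repaired chain still falls short.} Take a sole negative anchor of length $x$ with missed feasible positive mass $p$. Offer a positive of length $t=cx-p$ at its free end, so that keeping the anchor violates consistency. Force a negative reset $X$ of length $Bt-p$ through the historical packing $P\cup\{X\}$, and tile $X$. The density $q=(p-x)/x$ then evolves by
\[
q\mapsto \frac{c}{B(c-1)-1-(B+1)q},
\]
and choosing $t$ or $|X|$ larger only lowers the new density. This is the paper's reset map $F$ with $r$ replaced by $B+1$. Its discriminant argument gives uniform progress only for $c<G(B+1)<G(B)$, and nothing beyond $d$ once $B\ge\varphi^4-1$.

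\textbf{The missing idea is the negative \emph{reserve}, with both responses continued.} At the anchor's free end the paper offers a negative $Z$ of length $z=cx-p$.
\begin{itemize}
\item If $Z$ is accepted, tiles inside $Z$ are forced to be missed by robustness, since the historical anchor has length $x$. The density becomes $T(q)=(1+q)/(k+1-q)>q$, with $k=c-2$.
\item If $Z$ is rejected, consistency forces acceptance of a positive $L\supset Z$. Robustness then forces a reset $X$ through the historical packing $P\cup\{Z,X\}$, so $|X|=B\ell-p-z$ with $\ell=|L|$. The rejected reserve stands in for $L$ in that packing. This saves one copy of $\ell$ relative to your chain and produces $F(q)=c/(A-Bq)$ with $A=(B-1)(c-1)-1$.
\end{itemize}
For $d\le c<G(B)$ the discriminant $A^2-4Bc$ is negative, so both maps advance $q$ by a uniform $\mu>0$. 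A tail positive of length $x$ ends each branch once $q>k$. One perturbation $\delta_0$ is fixed for the whole finite tree, with strict exchange certificates at every node.

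The depth $\lfloor k/\mu\rfloor+1$ diverges as $c\uparrow G(B)$, so the argument genuinely needs an unbounded number of levels, not two. Your $a=1/(\sqrt B-1)$ is the double fixed point $A/(2B)$ of $F$ at $c=G(B)$, not a per-level scale.
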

A finite binary-reserve tree amplifies missed positive mass under either
response to a negative reserve, until a perfect terminal violates
consistency (Appendix~\ref{r7:app:binary-reserve}). For example, $C^*(5)>3.27$ and $C^*(6)>2.85$.

The robustness cost remains separated from $d$ even when consistency only
approaches its optimum.
\begin{theorem}[A continuous cost barrier]\label{r10:thm:source-first}
Put $\beta=4\varphi+3$. Every deterministic algorithm with finite
robustness and $C(A)\le d$ has $R(A)\ge\beta$. More generally,
\begin{equation}
 C^*(B)\ge d+\min\left\{\frac1{100},
 \frac{\beta-B}{83\varphi+53}\right\}>d
 \quad(B<\beta).
 \label{r10:eq:budget}
\end{equation}
\end{theorem}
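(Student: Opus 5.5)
We would prove Theorem~\ref{r10:thm:source-first} with an adversary in the style of Theorem~\ref{thm:lower}, stretched into a second stage. The aim is to force a deterministic algorithm with $R(A)\le B<\beta$ either to violate robustness on some stopping prefix, or to reach a perfectly labelled terminal input with ratio at least $d+\gamma$, where $\gamma=\min\{1/100,(\beta-B)/(83\varphi+53)\}$. The first claim, $R(A)\ge\beta$ whenever $C(A)\le d$, then follows: if $R(A)=B<\beta$, then $C(A)\ge d+\gamma>d$.

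\textbf{Stage one.} We reuse the prefix of Theorem~\ref{thm:lower}: a negative reserve $J=[0,1)$ followed by fine positive tiles. Robustness forces the algorithm to keep $J$ and permanently lose the tile mass $s\approx1$. We then offer the positive left extension $L$ of length $t$. In Theorem~\ref{thm:lower} the choice $t=\alpha$ balances the two branches exactly at $d$. Here we perturb $t$ to $\alpha\pm\eta$, so that one branch already exceeds $d$ by order $\eta$. This leaves the algorithm one real decision: accept or reject $L$.

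\textbf{Stage two, first branch.} Suppose $L$ is rejected, or accepted in the unfavourable orientation. The algorithm now holds one interval of length about $\varphi^{-1}$ or $1$, and positive mass about $\varphi$ or $\varphi^2$ is already irrecoverably lost. We next offer a second negative reserve $J'$ conflicting with the current holder, followed by another round of tiles and a closing positive interval. Robustness at budget $B$ bounds how small a holder may be kept relative to the nominal optimum of each stopping prefix. This is where $\beta=4\varphi+3$ should appear. Keeping the old holder against $J'$ costs a ratio of about $(4\varphi+3)/1$ on a stopping prefix in which the optimum collects $J'$, its tiles and $U$. Switching to $J'$ instead lets the adversary replay the golden argument with accumulated lost mass, which pushes the perfect-terminal ratio above $d$.

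\textbf{Stage two, second branch and parameters.} We write each terminal ratio as $d+c_i\eta-c_i'\delta$. We write each robustness constraint as a lower bound on $B$ of the form $\beta-c''\eta$. Choosing $\eta$ proportional to $\beta-B$ makes the robustness branch infeasible. We cap $\eta$ so the linearization stays valid, which accounts for the $1/100$. The constant $83\varphi+53$ should come out of the ratio of these linear coefficients, computed exactly in $\mathbb{Q}(\sqrt5)$. Finally, determinism fixes the decisions on the shared prefix, and the finite-tree transfer (Appendix~\ref{r39:sec:finite-tree-transfer}) extends the result to almost-sure guarantees.

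\textbf{Main obstacle.} The hard step is designing the second-stage gadget so that every response is covered, including partial revocations and holding nothing, and so that the robustness threshold in the limit is exactly $4\varphi+3$ rather than something weaker. Our first attempt would place $J'$ with length $\varphi$ sharing an endpoint with the current holder. We would then solve for the lengths at which the keep and switch branches tie at $\eta=0$, and check that this yields $\beta$.
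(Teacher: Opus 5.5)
\textbf{There is a genuine gap.} The ``second-stage gadget'' you defer as the main obstacle is the entire content of the theorem. Nothing in the sketch explains why the limit is $4\varphi+3$ or where $83\varphi+53$ comes from.

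The mechanism you outline closely resembles the binary-reserve amplification of Appendix~\ref{r7:app:binary-reserve}: start from the golden prefix of Theorem~\ref{thm:lower}, offer a further negative reserve against the current holder, punish one response by robustness and the other by replaying the golden argument with accumulated lost mass. At $C=d$ that mechanism only reaches $\varphi^4=3\varphi+2$ (Corollary~\ref{r3:thm:veto}). For $r\ge\varphi^4$ its reset map has a fixed point at or below the terminal density, so an algorithm that always rejects the reserve keeps the ideal missed-mass density from crossing the threshold.

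Accumulated missed mass helps only under two conditions, and your sketch checks neither. It must be compatible with the new tiles and extensions. The completed positive set must also still beat every earlier negative in the \emph{full} history, including all rejected ones. Finally, your claim that keeping the old holder ``costs about $(4\varphi+3)/1$'' at a single stopping prefix has no supporting geometry.

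\textbf{The missing idea.} Use robustness only for cheap steps (retaining anchors, missing fine tiles), and force each decisive step by a \emph{consistency} threat, so that the final violation occurs at an imperfectly labelled robustness terminal. The paper's long-reserve construction runs as follows.
\begin{enumerate}
\item Missed positive sources $\mathcal A_0$ of mass $\to\varphi$ are laid inside a negative anchor $K$ of length $\to\varphi$.
\item A negative right reserve $R_0$ of length $\to\varphi^2$, longer than the holder, is offered. It must be rejected, since retaining it admits a perfect completion ($S$, tiles in $R_0$, a tail $T_R$).
\item A positive unit $J$ is offered and must replace $K$. Rejecting it admits a perfect completion in which the unavailable $R_0$ is paid by $T$ together with the boundary positive $U$ of a remote anchor $H$.
\item Negatives $I$ (length $\to1$) and $L$ (length $\to\varphi^3$) conflicting with $J$ are offered; Lemma~\ref{r11:lem:quantitative-completion} excludes accepting either.
\end{enumerate}
Stopping then leaves $\mathrm{ALG}=1$ against the actual packing $L,I,\mathcal A_0,R_0$, of weight $\to\varphi^3+1+\varphi+\varphi^2=\beta$. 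No earlier single prefix exhibits this ratio. Each threat needs strict exchange certificates making the complete positive set the unique full-history optimum, and these certificates are what cap the reserve lengths.

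\textbf{The quantitative part.} The slack is not a perturbation of $t$. Each remote anchor needs length $h>(r-1)a$ to be forced retained, and it consumes slack $\gamma h$. So the parameters depend on $k=d+\gamma r$ as well as $c=d+\gamma$, and the budget is the implicit root $D_{\mathrm l}(\gamma)=F_{\mathrm l}(\gamma,D_{\mathrm l}(\gamma))$. The constant is $-\partial_\gamma F_{\mathrm l}(0,\beta)=\beta(7\varphi+8)+\varphi^3=83\varphi+53$. The linear bound holds on all of $[0,1/100]$ by convexity of $F_{\mathrm l}(\cdot,\beta)$ together with $D_{\mathrm l}(\gamma)\ge F_{\mathrm l}(\gamma,\beta)$. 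The cap $1/100$ is the range on which the long-reserve geometry is valid, not a linearization limit.
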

\paragraph{Proof sketch: missed sources and a longer reserve.}
Figure~\ref{fig:source-first-geometry} shows four forced decisions: retain the negative anchor $K$, reject the right reserve $R_0$, switch to a positive unit $J_+$ (called $J$ in the figure), then reject $I,L$. Tiny positive tiles force retention of $K$, of length near $\varphi$. Accepting $R_0$, of length near $\varphi^2$, permits a perfect completion violating consistency. To force $J_+$, a remote anchor and boundary requests repair full-history optimality after either response; analogous completions exclude $I,L$. The realized history contains a compatible packing of limiting weight $\varphi^3+1+\varphi+\varphi^2=\beta$ while only $J_+$ survives. Hypothetical completion requests are not counted.

For $C\le d+\gamma$, suppose $R(A)<D_{\rm l}(\gamma)$ and choose
$r$ strictly between them. A finite remote anchor of length $h$ consumes
consistency slack $\gamma h$. Choosing $h=(r-1)a+\varepsilon$ after
a hypothetical switch to held value at most $a$, and reducing the
initial anchor, preserves the four forcing steps.
Appendix~\ref{r11:app:continuous-source-first} supplies finite parameters
and strict full-history optimality certificates. Its implicit bound
$D_{\rm l}(\gamma)$ satisfies $D_{\rm l}(0)=\beta$ and
$D_{\rm l}(\gamma)\ge\beta-(83\varphi+53)\gamma$ on $[0,1/100]$;
approaching equality budgets from below proves~\eqref{r10:eq:budget}.

The stronger implicit bound gives $D_{\rm l}(.005)\approx8.680$ and $D_{\rm l}(.01)\approx8.088$. A shorter reserve gives
$D_{\rm s}$ on $0\le\gamma\le1/20$, useful farther from $d$: $D_{\rm s}(.02)>7.41$. Figure~\ref{fig:frontier} combines them as
$H(B)=\max\{H_{\rm s}(B),H_{\rm l}(B)\}$, where
\[
 H_j(B)=d+\sup\bigl(\{0\}\cup
 \{\gamma\in[0,\gamma_j]:B<D_j(\gamma)\}\bigr),
\]
with $\gamma_{\rm s}=1/20$ and $\gamma_{\rm l}=1/100$.

\section{The BANK Mechanism}\label{sec:algorithm}
Each held request has a nonnegative ledger $b(J)$. A positive arrival
adds its length once: acceptance inherits revoked ledgers, while
rejection distributes the length among conflicts. Negative rejection
adds nothing. Ledgers preserve prediction mass lost through irrevocable
decisions.

For conflicts $D$, write $W=w(D)$ and $b_D=\sum_{J\in D}\bar b_J$, where $\bar b_J$ is the snapshot before this arrival. The ordinary gate $\mathcal Q(D)$ is a sum of replacement costs. SUM-BANK uses cost $\lambda w(J)$ ($\lambda=M$ unless stated) for every conflict. Scalar Credit-BANK uses
$\min\{Mv,dv+cb\}$ for a negative conflict, with a fixed coefficient $c\ge0$. Shaped-BANK retains cost $Mv$ for a positive conflict of length $v$, but uses
\begin{equation}
 q_*(v,b)=\min\{2\varphi v,\max\{dv,\varphi v+\alpha b\}\}
 \label{sg:eq:main-cost}
\end{equation}
for a negative conflict. Thus $\mathcal Q(D)=\sum_{J\in D}\tau(J)$, with $\tau(J)=Mw(J)$ or $q_*(w(J),b(J))$ according to its label. The negative cost stays at $dv$ up to $b=\varphi v$, increases linearly up to $b=dv$, and then remains at $2\varphi v$.

After ordinary acceptance fails, a positive request can still replace all its conflicts if they are negative and
\[
 w\ge\alpha W,\qquad w\ge b_D/\varphi.
\]
The first test permits replacing longer conflicts, unlike 1-increasing policies. More generally, an $a$-increasing rule requires $w\ge a\max_{J\in D}w(J)$ when accepting with $D\ne\varnothing$; Shaped-BANK satisfies this for $a=\alpha$. The second prevents the same missed positive mass from being discarded repeatedly. Ordinary acceptance has priority; all thresholds accept equality.

For a rejected positive, put $Z=[\forall J\in D:p(J)=0]$. The simultaneous fees are
\begin{equation}
 g_J=\begin{cases}
 w\,w(J)/W,&Z,\ w<\alpha W,\\
 w\,\bar b_J/b_D,&Z,\ w\ge\alpha W,\ w<b_D/\varphi,\\
 w\,w(J)/W,&\neg Z.
 \end{cases}\label{eq:fees}
\end{equation}
These cases are exhaustive on rejection; $\neg Z$ means that a positive conflict exists.
\begin{algorithm}[t]
\caption{Shared ledger policy with ordinary gate $\mathcal Q$}
\label{alg:bank}
\begin{algorithmic}[1]
\STATE Initialize $\mathcal S\gets\varnothing$.
\STATE On arrival of $(I,p)$, with $w=w(I)$:
\STATE $D\gets\{J\in\mathcal S:J\text{ conflicts with }I\}$; $W\gets w(D)$.
\STATE Snapshot $\bar b_J\gets b(J)$ for every $J\in D$.
\STATE $b_D\gets\sum_{J\in D}\bar b_J$; $Z\gets[\forall J\in D:p(J)=0]$.
\STATE Test ordinary $w\ge\mathcal Q(D)$ first (true if $D=\varnothing$).
\STATE If it fails, test special $p=1$, $Z$, $w\ge\alpha W$, $w\ge b_D/\varphi$.
\IF{either test accepts}
  \STATE Accept $I$; permanently revoke all of $D$.
  \STATE $\mathcal S\gets(\mathcal S\setminus D)\cup\{I\}$; $b(I)\gets b_D+pw$.
\ELSE
  \STATE Permanently reject $I$.
  \IF{$p=1$}
    \STATE Compute every $g_J$ by \eqref{eq:fees} from snapshots.
    \STATE Simultaneously set $b(J)\gets\bar b_J+g_J$ for $J\in D$.
  \ENDIF
\ENDIF
\STATE All other existing ledgers are unchanged; output $\mathcal S$ at termination.
\end{algorithmic}
\end{algorithm}

The compatible-positive filter keeps a positive label only if its request is compatible with all earlier kept positives (Appendix~\ref{new:sec:wrapper}). For fixed parameters, this implementation uses $O(n^2)$ exact arithmetic
operations/comparisons and $O(n)$ stored scalars or interval records, including filtering (Appendix~\ref{r39:sec:arithmetic-resources}).

\begin{lemma}[Conservation]\label{lem:conservation}
For arbitrary labels and each time $t$,
$\sum_{J\in\mathcal S_t}b(J)=w(P_t)$, where $P_t$ is the positive set revealed by time $t$. All ledgers are finite and nonnegative.
\end{lemma}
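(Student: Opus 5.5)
We prove the conservation identity and the finiteness and nonnegativity claims together by induction on the number of arrivals, with invariant
\[
 \Phi_t:=\sum_{J\in\mathcal S_t}b(J)=w(P_t),\qquad 0\le b(J)<\infty\ \text{for all } J\in\mathcal S_t .
\]
At $t=0$ we have $\mathcal S_0=\varnothing$ and $P_0=\varnothing$, so both sides vanish. For the inductive step, fix an arrival $(I,p)$ with conflict set $D$ and snapshots $\bar b_J$. Requests in $\mathcal S_t\setminus D$ keep their ledgers unchanged, as the last line of Algorithm~\ref{alg:bank} states. It therefore suffices to show that the total ledger mass on $D\cup\{I\}$ increases by exactly $pw$. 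This is the increment of $w(P_t)$, since identities are unique and $I$ enters $P$ exactly when $p=1$.

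The analysis splits into three cases.

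\textbf{Acceptance} (by either test). All of $D$ is removed and $I$ is added with $b(I)=b_D+pw$. The mass changes by $-b_D+(b_D+pw)=pw$. Nonnegativity and finiteness are inherited from the snapshots, and $pw\ge0$ is finite.

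\textbf{Rejection of a negative request.} Nothing changes, and $pw=0$.

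\textbf{Rejection of a positive request.} Each $J\in D$ receives $g_J$ simultaneously from the snapshots. We must check that $\sum_{J\in D}g_J=w$ and that each $g_J\ge0$ is well defined.

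The main obstacle is well-definedness of the fee formula \eqref{eq:fees}, that is, that no denominator vanishes. The nonnegativity of each $g_J$ is clear from the inductive hypothesis. We argue as follows.

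First, $D\neq\varnothing$ on rejection, because the ordinary test is true when $D=\varnothing$. Hence $W>0$, since all lengths are positive. In the cases using $w\,w(J)/W$, the fees sum to $w\cdot W/W=w$.

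Second, in the ledger-proportional case we have $w\ge\alpha W>0$ and $w<b_D/\varphi$. These give $b_D>\varphi w>0$, so the denominator is positive, and the fees sum to $w\cdot b_D/b_D=w$.

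Third, the cases are exhaustive. If $\neg Z$, the third line applies. If $Z$, then either $w<\alpha W$, or else $w\ge\alpha W$ and the special test must have failed (its conditions $p=1$ and $Z$ hold). In that situation the only remaining failing condition is $w<b_D/\varphi$, so the second line applies.

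In every rejection case the new ledgers $\bar b_J+g_J$ are finite sums of finite nonnegative terms.

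This completes the induction. Since the labels were never used beyond $p\in\{0,1\}$, the identity holds for arbitrary labels.
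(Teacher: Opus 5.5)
Your proof is correct and follows the same route as the paper: induction over arrivals, with acceptance removing $b_D$ and adding $b_D+pw$, a rejected negative adding nothing, and a rejected positive distributing fees that sum to $w$. Your checks that $W>0$ on rejection (because $D\neq\varnothing$ there) and that $b_D>\varphi w>0$ in the ledger-proportional case are exactly the well-definedness points the paper's short proof relies on.
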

\begin{proof}
Acceptance removes the sum $b_D$ and adds $b_D+pw$. A rejected negative adds zero; a rejected positive adds fees summing to $w$. On rejection $W>0$, and a ledger-test failure has $b_D>0$. Induction proves both assertions.
\end{proof}

\paragraph{Why length alone does not determine a replacement.}
For $0<\varepsilon<1/8$, start with negative $[0,1)$ and reveal
the compatible positives $[\varepsilon,1/2)$,
$[1/2,1-\varepsilon)$ and $[-\alpha+2\varepsilon,\varepsilon)$.
All three are shorter than $\alpha$ and fail both acceptance routes.
The holder remains $[0,1)$, but its ledger grows from zero to
$\varphi-3\varepsilon>1$.
Now positive $[1-\varepsilon,1-\varepsilon+\alpha)$ passes the
length test at equality. Its ledger test fails:
$b_D>1=\varphi w$. Removing that test would leave reward $\alpha$
and a perfect-label ratio approaching $\varphi+2>d$.
The complete five-request calculation is in
Appendix~\ref{app:ledger-ablation}.

\paragraph{The price of a special replacement.}
A negative length-$\varphi$ anchor has zero-ledger cost $M$.
A unit positive replaces it at the special threshold and acquires
cost $M$. Rejections around these successive holders yield a compatible
packing approaching $\varphi^5$ while only the unit survives
(Proposition~\ref{sg:prop:floor}).

\section{Why the Ledger Gives Optimal Consistency}\label{sec:attainment-main}
\begin{theorem}[A sufficient consistency interface]\label{sg:thm:main}
Use positive cost $Mv$, the summed ordinary gate, and the shared BANK rules. Suppose the negative cost is finite and satisfies $q(v,b)\ge dv$ for all $b\ge0$, and
\begin{equation}
 v+\max\{\alpha v,\alpha^2b\}\le\alpha q(v,b)
 \quad (0\le b\le dv).\label{sg:eq:interface-main}
\end{equation}
If the complete positive set $P$ is feasible, every held request satisfies $b(J)\le dw(J)$. Consequently $w(P)\le d\,\mathrm{ALG}$. Shaped-BANK and SUM-BANK with cost $Mv$ satisfy this interface; their finite robustness and Theorem~\ref{thm:lower} imply $C=d$.
\end{theorem}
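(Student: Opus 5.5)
The plan is to prove the invariant $b(J)\le d\,w(J)$ for every held $J$ by induction over arrivals, assuming the complete positive set $P$ is feasible. The consequence then follows from Lemma~\ref{lem:conservation}: $w(P)=\sum_{J\in\mathcal S_{\rm final}}b(J)\le d\,\mathrm{ALG}$.

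The first step uses feasibility of $P$ to prune cases. If the arrival $I$ is positive, no held positive request can conflict with it, because both lie in $P$. So $p(I)=1$ forces $Z$, and the third fee case $\neg Z$ never occurs. From the interface I would derive the two facts used repeatedly. Since $\max\{\alpha v,\alpha^2 b\}\ge\alpha^2 b$, multiplying \eqref{sg:eq:interface-main} by $\varphi$ gives $q\ge\varphi v+\alpha b$. Multiplying once more by $\varphi$ gives $\varphi q\ge dv+b\ge b$ on $0\le b\le dv$; together with $q\ge dv$ this gives $b\le dv\le q$.

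The acceptance cases are then direct.
\begin{itemize}
\item \emph{Ordinary acceptance of a negative $I$.} We need $b_D\le d w$. Since $w\ge\mathcal Q(D)$, it suffices that $b(J)\le d\,\tau(J)$ for every conflict. This holds because $\tau(J)\ge d\,w(J)\ge b(J)/d\cdot d$ for both labels.
\item \emph{Ordinary acceptance of a positive $I$.} All conflicts are negative, and we need $b_D+w\le dw$, i.e.\ $b_D\le\varphi w$. This follows from $b(J)\le\varphi q(w(J),b(J))$ and $w\ge\sum q$.
\item \emph{Special acceptance.} The condition $w\ge b_D/\varphi$ gives $b_D+w\le\varphi w+w=dw$ directly.
\end{itemize}
Rejections of negatives change no ledger.

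The remaining case, and the main obstacle, is rejection of a positive $I$ whose conflicts are all negative. Either $w<\alpha W$ with proportional fees $g_J=w\,w(J)/W<\alpha w(J)$, or $w\ge\alpha W$ with $w<b_D/\varphi$ and ledger-proportional fees. A single-step bound cannot close the induction, since $b(J)$ may already be near $d\,w(J)$.

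My plan is to strengthen the invariant geometrically. For each held $J$, consider the positives whose mass currently sits in $b(J)$, either charged directly or inherited through replacements. I would show they form a compatible family of members of $P$ all meeting a window around $J$. Since all positives are pairwise compatible, those meeting $J$ cover at most $w(J)$ inside $J$ plus two overhanging intervals, one at each end.

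The interface is then used in contrapositive form. Each overhanging positive that is charged must have failed the gate, so $w<q(w(J),b(J))$. The term $v+\max\{\alpha v,\alpha^2 b\}\le\alpha q$ is exactly the statement that interior mass $v$ plus one overhang, bounded by $\alpha v$ via the special length test or by $\alpha^2 b$ via the ledger test, cannot exceed $\alpha q\le dv$ after the charge. I would also check that the ledger-proportional split preserves the ratio $b(J)/w(J)$ across all $J\in D$, so the bound transfers to each member. This charging case analysis, tracking at most one overhang per side under inheritance, is where I expect most of the work.

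Finally, the verification for the two concrete costs reduces to the inequality \eqref{sg:eq:interface-main} on $0\le b\le dv$. For SUM-BANK with $q=Mv$, it reads $v+\max\{\alpha v,\alpha^2 b\}\le dv$, which holds since $\alpha^2 b\le\alpha^2 dv=v$. For Shaped-BANK, check the three pieces of $q_*$ at the breakpoints $b=\varphi v$ and $b=dv$ using $\varphi^2=\varphi+1$. With finite robustness (Theorem~\ref{sg:thm:robust-main}) and Theorem~\ref{thm:lower}, this yields $C=d$.
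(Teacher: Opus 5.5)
Your acceptance cases, the observation that feasibility of $P$ forces $Z$ for every positive arrival, and the verification of \eqref{sg:eq:interface-main} for $q_*$ and for $Mv$ are correct. The gap is that the positive-rejection case is the whole theorem, and the strengthening you sketch for it rests on a false picture. Ledgers are inherited on replacement, so $b(J)$ contains source mass that need not meet $J$ at all. Under Shaped-BANK, a positive $[-.5,.01)$ rejected against a held negative $[0,1)$ is charged in full. A negative $[.5,3.5)$, whose length $3$ exceeds $q_*(1,.51)=d$, then replaces $[0,1)$ and inherits that whole source, which is disjoint from it. Hence ``at most $w(J)$ inside plus one overhang per side'' does not bound $b(J)$.

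For the same reason, an inductive hypothesis recording only $b\le dw$ cannot close. It permits a newborn negative of length $v$ whose inherited ledger $v$ lies entirely to its left. Interior charges totalling $v$ and then a single ledger-failure fee of nearly $\alpha\cdot2v$ from a right-crossing source would give about $2\varphi v>dv$. What is needed is a conserved source measure whose support is not clipped to the holder, with three invariants for each negative $V=[x,x+v)$:
\begin{itemize}
\item exterior mass at birth at most $\alpha v$ on each side;
\item exterior mass at most $\max\{\alpha v,\alpha^2b_n(V)\}$ throughout its lifetime;
\item the directional birth cut $L_{\nu_V,V}(s)+z-s\le\alpha v$ for later positives $[s,t)$, which lets a side already carrying exterior mass at birth absorb a later crossing source within the same budget.
\end{itemize}
Capacity then reads $v+\alpha v+v=dv$: the interior, the side occupied at birth, and one later crossing on the other side, whose fee is at most $\alpha b_{\rm pre}\le\alpha(1+\alpha)v=v$.

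You have also placed the interface in the wrong step. It does not bound a node's own ledger after a charge. Nothing in the hypotheses bounds $q$ from above, so ``$\alpha q\le dv$'' is unavailable; the theorem holds for arbitrarily large finite $q$. The interface is used at a parent's ordinary birth:
\begin{itemize}
\item a negative child $X$ of length $v$ contributes mass left of the parent's start at most $E_L(X)+v\le v+\max\{\alpha v,\alpha^2b(X)\}\le\alpha q(v,b(X))$;
\item a positive child contributes at most its capacity $dv=\alpha Mv$;
\item summing against the gate $\sum\tau\le w$ gives the parent's birth exterior bound.
\end{itemize}
The same weighted inequality, with the gap $z-s$ added, propagates the cuts. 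Propagating cuts through positive children needs a separate positive cut $L_{n,Q}(s)+z-s\le du$. That cut is proved from replacement-tree geometry (accepted-history and negative-tail displacement bounds) and a three-case analysis at special births that uses both special budgets $W_Q\le\varphi u$ and $B_Q\le\varphi u$.

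Two smaller points. The relevant fee bounds come from failures of the special tests ($g<\alpha v$ after length failure, $g_J<\alpha\bar b_J$ after ledger failure), not from the ordinary-gate failure $w<q$ you invoke. The ledger-proportional split multiplies every receiver's ledger by the common factor $1+w/b_D$; it does not preserve $b(J)/w(J)$.

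Without these invariants, and a birth-order induction that establishes each earlier identity's whole-lifetime claims before using them at a parent's birth, the main case remains unproved.
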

\paragraph{Proof sketch: full source shares and directional cuts.}
Fix a finite execution with feasible complete positive set $P$.
At snapshot $n$, conserved fractions $a_{X,n}(I)$ assign every arrived
positive source $I$ among held identities $X$. The measure
\begin{equation}
 \mu_{X,n}(A)=\sum_{I\in P}a_{X,n}(I)|I\cap A|
 \label{sg:eq:measure-main}
\end{equation}
has mass $b_n(X)$ and density at most one; unarrived sources have zero
coefficients. Its support is not clipped to $X$: replacing positive
$[0,1)$ by negative $[.9,6)$ inherits mass one, not $.1$.
Let $E_L,E_R$ denote left and right exterior mass and
$L_{n,X}(s)=\sum_{I\in P:f(I)\le s}a_{X,n}(I)w(I)$ the \emph{full} left shares.

\begin{lemma}[Side and directional-cut invariant]
\label{sg:lem:main-invariant}
Under the cost and BANK-rule hypotheses of
Theorem~\ref{sg:thm:main}, fix a finite execution with feasible complete
positive set $P$. Use the conserved source measure and full shares
defined above. Write $\nu_X$ for birth and $\delta_X$ for revocation,
or one beyond the final arrival if $X$ survives.
For every negative $V=[x,x+v)$ and live snapshot
$\nu_V\le n<\delta_V$, each side $S\in\{L,R\}$ satisfies
\[
 \begin{aligned}
 E_{S,\nu_V}(V)&\le\alpha v,\qquad b_n(V)\le dv,\\
 E_{S,n}(V)&\le\max\{\alpha v,\alpha^2b_n(V)\}.
 \end{aligned}
\]
For each actual positive $T=[s,t)\in P$ arriving after $\nu_V$, if
$s<z\le t$, $z\le x$ and $L_{\nu_V,V}(s)>0$, then
\[
 L_{\nu_V,V}(s)+z-s\le\alpha v.
\]
At each live snapshot $n$, a positive $Q$ of length $u$ has
$b_n(Q)\le du$ and, for every
actual positive $T=[s,t)$ arriving after its live snapshot $n$,
\[
 L_{n,Q}(s)+z-s\le du
 \quad(s<z\le t\le s(Q),\ L_{n,Q}(s)>0).
\]
Both cuts include their reflections under $y\mapsto-y$.
The complete proof is in Appendix~\ref{app:consistency-interface}.
\end{lemma}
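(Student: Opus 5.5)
We prove all clauses of Lemma~\ref{sg:lem:main-invariant} simultaneously, by induction on the snapshot index $n$ of the fixed execution. The induction hypothesis is the full conjunction for every identity held at $n$. For negative holders it contains the birth-side bound, the ledger bound $b_n\le dv$, the live exterior bound and the directional cut. For positive holders it contains the ledger bound $b_n\le du$ and the left cut with its reflection. The base case $\mathcal S_0=\varnothing$ is vacuous.

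Each arrival $(I,p)$ with conflict set $D$ produces one of four events: ordinary acceptance, special acceptance, positive rejection with fees \eqref{eq:fees}, or negative rejection. A negative rejection leaves every measure $\mu_X$ unchanged, so every clause persists. In the other events, all source fractions $a_{X,n}$ are redistributed only among $D\cup\{I\}$. Compatible holders therefore keep their measures, and only the newborn identity or the fee-receiving conflicts need checking. The feasibility of $P$ is what makes the measure geometry work. Arrived positive sources are pairwise disjoint, so $\mu_X$ has density at most one. This is the source of the length bound: mass lying in $V$ is at most $v$, and exterior mass on each side lies in a region that later compatible positives must avoid.

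For acceptance, the birth bounds come first. Under ordinary acceptance, $w\ge\mathcal Q(D)$. Together with $q\ge dv$ and the positive cost $Mv$, this gives $b_D\le\sum d\,w(J)\le d w$ and the birth exterior bound. For a negative newborn, the exterior mass on side $S$ comes from conflicts reaching past that endpoint. We bound it by the live exterior bound of those conflicts, and interface \eqref{sg:eq:interface-main} converts their cost into the needed $\alpha v$. For a positive special newborn, the test $w\ge b_D/\varphi$ together with $b_D\le\sum d w(J)$ and $w\ge\alpha W$ gives $b(I)=b_D+w\le dw$. The cut for a positive $Q$ at birth follows from its conflicts' cuts: shares left of $s\le s(Q)$ were inherited from conflicts, which were compatible with the region the future $T$ occupies.

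For rejection of a positive $T$, fees raise the ledgers of conflicts. In the length-proportional cases the increment $w\,w(J)/W$ is paid against the directional cut. $T$ arrives after $\nu_V$, so the cut gives $L+z-s\le\alpha v$. This bounds the new side exterior mass by $\max\{\alpha v,\alpha^2 b\}$, and through the interface it bounds $b\le dv$. In the ledger-proportional case, the failed test $w<b_D/\varphi$ gives each share $\bar b_J w/b_D<\alpha\bar b_J$, and so $b$ grows by a factor below $1+\alpha\le\varphi$. We must still show that this cannot break $dv$. For that we need the exterior clause $E\le\alpha^2 b$, which is why the lemma tracks it. The cuts propagate because a new positive $T'$ arriving later, compatible with $T$ by feasibility, sees $T$'s contribution only through $z-s$ terms already accounted for.

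The main obstacle is the directional cut under rejection fees when $T$ straddles an endpoint of $V$. The left share $L_{\nu_V,V}$ is frozen at birth, but the fee's mass sits partly outside $V$. We must also show that the interface inequality exactly absorbs $z-s$ plus the shifted exterior. Our first attempt would be to reduce to two cases, $T$ ending inside $V$ or ending left of $x$. In each case we would compare against the cost $q(v,b)$ that $T$ failed to pay, using $w<q(v,b)$ and \eqref{sg:eq:interface-main}. For positive holders we would instead use the failure $w<Mu$.
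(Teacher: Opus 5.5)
Your plan has the right bookkeeping: density at most one from feasibility of $P$, the ledger-failure fee bound $g\le\alpha b_{\rm pre}$, and $b(I)=b_D+w\le dw$ at a special positive birth. But two load-bearing steps are only asserted, and the routes you suggest for them do not work.

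\textbf{The positive cut at a special birth.} This cut does not ``follow from its conflicts' cuts.'' The only cut a negative child $X$ supplies at the pre-birth snapshot is the weighted one, $L_X(s)+z-s\le\alpha q(w(X),b(X))$, derived from the interface. Summing it over share-carrying children gives $L_Q(s)+z-s\le\alpha\sum_X\tau(X)$. That suffices at an ordinary birth, where $\sum\tau\le u$, but a special birth only guarantees $W,b_D\le\varphi u$. For example, a parent of length $u=1$ with negative children $(v,b)=(\alpha,\varphi)$ and $(1,0)$ meets both special tests at equality, yet $\alpha q_*(\alpha,\varphi)+\alpha q_*(1,0)=2\alpha+\varphi=d+\alpha^3>d$.

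The paper closes this with an ingredient absent from your proposal: capacity-independent replacement-forest geometry. Accepted history is at most $2v$ below a negative root and $Mv$ below a positive root. This bounds the displacement of negative tails and the protrusion of subtrees, and it is applied by tracing a share down to its entry support. The result is $e=z-s<u$, hence $L_Q(s)+e\le b_D+u\le du$, except when the leftmost child $V$ alone carries left shares. In that case, the child's frozen birth cut, or density plus its lifetime exterior bound, gives $L_Q(s)+e<\max\{\varphi v,v+\alpha^2b(V)\}\le du$. This last step uses $v\le\varphi u$ and $b(V)\le\varphi u$ simultaneously.

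\textbf{The capacity bound.} The bound $b_n(V)\le dv$ does not come ``through the interface,'' which links cost to ledger but bounds neither. Nor does it follow from the per-side clause $E_S\le\max\{\alpha v,\alpha^2 b\}$: if both sides were in the ledger regime, density would only give $b\le v+2\alpha^2 b$, i.e.\ $b\le v/\alpha^3=Mv$.

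What is missing is the event-time argument:
\begin{itemize}
\item By feasibility, each endpoint of $V$ is crossed by at most one positive source.
\item A side occupied at birth stays at most $\alpha v$, because the birth cut at $z=x$ pays both its old mass and the crosser's exterior segment.
\item If no side is occupied at birth, the first crosser arrives with $b_{\rm pre}\le v$ and so pays at most $\alpha v$.
\item The single crosser of the remaining side arrives with $b_{\rm pre}\le(1+\alpha)v$ and pays at most $\alpha(1+\alpha)v=v$.
\end{itemize}
Hence $b\le v+\alpha v+v=dv$.

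\textbf{The stated main obstacle.} Your main obstacle is misplaced. The negative cut is a birth-snapshot statement that later fees cannot change. Fees are bounded by the failed special tests ($g<\alpha v$ or $g\le\alpha b_{\rm pre}$), not by the failed ordinary gate $w<q(v,b)$, which for a sole negative conflict only gives $g<2\varphi v$. Later fees matter only through the lifetime exterior mass, which feeds the weighted cut used at the next parent's birth. You also need to state and prove that weighted-cut lemma, and the birth cut of a new negative that follows from it. There $z-s$ is paid once: via a child's frozen birth cut when the child starts at or right of $s$, and via density plus lifetime exterior mass otherwise.
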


\paragraph{Why the negative capacity follows.}
Induct on accepted birth order, establishing every live snapshot of
each earlier identity before using it at a parent's birth.
Earlier children's cuts and the cost interface pay the new negative
birth bounds. The positive cut uses the three cases below. Independently, replacement-tree geometry bounds subtree length by $2v$ at a negative root and $Mv$ at a positive root (Appendix~\ref{app:shaped}).
Density bounds the mass inside $V$ by $v$.
A side with nonzero exterior mass at birth stays bounded by $\alpha v$:
an existing crossing source excludes another, while wholly exterior
birth sources and a later crossing source share the birth-cut budget.
Each initially empty side has at most one later crossing source.
Its fee obeys $g<\alpha v$ after length failure, or
$g\le\alpha b_{\rm pre}$ after ledger failure, which implies
$g\le\alpha^2(b_{\rm pre}+g)$.
These are event-time bounds.

With two occupied birth sides, $b\le(1+2\alpha)v<dv$.
With one, the other side's first crossing has
$b_{\rm pre}\le(1+\alpha)v$, so its fee is at most
$\alpha(1+\alpha)v=v$. Without that crossing, density already suffices.
Thus
\[
 b\le\underbrace{v}_{\text{inside}}+
       \underbrace{\alpha v}_{\text{protected side}}+
       \underbrace{v}_{\text{other side}}=dv.
\]
With no occupied birth side, the first crossing has
$b_{\rm pre}\le v$ and fee at most $\alpha v$, reducing to the
preceding case. A source crossing both endpoints pays one fee and
excludes every later crossing. Thus event-time bounds close capacity.

At a positive birth, let $B_Q,W_Q$ be its negative children's total
ledger and length. Ordinary acceptance gives $B_Q\le u$, $W_Q\le u/d$;
special acceptance gives $B_Q,W_Q\le\varphi u$. Adding $Q$ gives
capacity, and compatibility excludes later positive fees.
For the cut, set $e=z-s$, $G=s(Q)-s\ge e$. Trace a left source share
backward to the first support starting below $s$. There are three cases.

(i) An intermediate positive $H$ occurs before that crossing.
Shrinking paths give $w(H)\le u/d$ and $e<dw(H)\le u$;
the cut is at most $B_Q+u\le du$.

(ii) Otherwise the negative path starts with child $V$, of length $v$.
For ordinary $Q$, $G<\varphi v\le\alpha u$.
For special $Q$, a nonleftmost $V$ gives $G<\alpha v\le u$;
if another child $K$ carries left shares, protrusion gives
$\varphi G<v+w(K)\le\varphi u$. Again $B_Q+u$ pays the cut.

(iii) Only the leftmost child $V$ carries shares and $Q$ is special.
If $s(V)\ge s$, its inherited birth cut plus the remaining distance
gives a bound below $\varphi v$. Otherwise density gives
\[
 L_{n,Q}(s)+e<v+E_L(V)
 \le\max\{\varphi v,v+\alpha^2b(V)\}\le du.
\]
Here the length gate pays for $v\le\varphi u$ and the ledger gate
for $b(V)\le\varphi u$; neither budget alone suffices for this bound.
Reflection and source conservation finish
$w(P)=\sum_Xb(X)\le d\,\mathrm{ALG}$.
Appendix~\ref{app:shaped} proves the geometric and boundary-case details.

At the golden endpoint, $q_*$ is the smallest cost nondecreasing in
$b$ that satisfies this sufficient interface
(Proposition~\ref{sg:prop:minimal-main}). The interface itself is not necessary: setting $q(v,0)=\varphi v$ and keeping $q_*$ for $b>0$
also preserves $d$-consistency, by pruning source-free children
(Appendix~\ref{r29:sec:necessary-cost}).
The relaxed policy still satisfies
the nonshrinking hypothesis below, so its robustness is at least $\varphi^5$.

\section{Exact Robustness and the Remaining Gap}\label{sec:robustness}
\begin{theorem}[Exact Shaped-BANK robustness]\label{sg:thm:robust-main}
For arbitrary finite labelled inputs, $R(T_*)=\varphi^5$.
\end{theorem}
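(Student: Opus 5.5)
The proof splits into a lower bound $R(T_*)\ge\varphi^5$ from an explicit input family, and an upper bound $\mathrm{OPT}\le\varphi^5\,\mathrm{ALG}$ by a charging argument over replacement trees.

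\emph{Lower bound.} I would make rigorous the construction sketched in ``The price of a special replacement'' (Proposition~\ref{sg:prop:floor}). First offer a negative anchor of length $\varphi$ with zero ledger; its ordinary replacement cost is $Mw=\varphi^4$ under $q_*$ after ledger growth, or $d\varphi=\varphi^3$ at zero ledger. Then surround it with rejected short positives that are arranged so the anchor's ledger stays small. Next, a unit positive replaces the anchor through the special test ($w\ge\alpha W$, $w\ge b_D/\varphi$) and becomes a positive holder with cost $M$. Around each successive holder I then offer conflicting requests of length just below the ordinary threshold, which are rejected, so that they form a compatible packing. Summing the rejected masses in the limit should give $\varphi^5$, while only the unit survives. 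The numbers come out as $\varphi^5=\varphi^4+\varphi^3$, i.e.\ the contributions $M+d+\dots$, and the geometric parameters are tuned by taking $\varepsilon\to0$.

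\emph{Upper bound.} Fix an execution and an optimal feasible set $O$. Every $I\in O$ is either held at the end or was rejected or revoked at some event. I would build the replacement forest: each final holder $X$ is a root, its children are the holders it revoked, and so on recursively. Each rejected request is charged to the conflicting holders at its rejection time, in proportion to length, exactly as in the fee rule \eqref{eq:fees}, except that here lengths are charged regardless of labels. The goal is a per-root inequality: the charge to the subtree of $X$ from elements of $O$ that are not retained is at most $(\varphi^5-1)w(X)$, or more precisely that everything charged to the subtree of $X$ is at most $\varphi^5 w(X)$ in total, counting $X$ itself if $X\in O$.

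The ingredients are the following.
\begin{itemize}
\item The subtree-length geometry already quoted from Appendix~\ref{app:shaped}: the total length in a subtree is at most $2v$ at a negative root and at most $Mv$ at a positive root.
\item The fact that a rejected request $I$ with conflicts $D$ satisfies $w(I)<\mathcal Q(D)\le Mw(D)$ (using $q_*\le2\varphi v<Mv$). Together with the feasibility of $O$, each held interval is hit by at most two crossing elements of $O$ plus elements of $O$ contained inside it.
\item The Shaped-BANK $a$-increasing property with $a=\alpha$, which bounds how much shorter a child can be relative to its parent.
\end{itemize}
I would run an induction on the tree, with two potential functions, one bounding the charge to a subtree rooted at a negative holder of length $v$ and one for a positive holder of length $u$. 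Intersecting the interior and exterior $O$-mass, as in the directional cuts of Lemma~\ref{sg:lem:main-invariant}, gives a recursion whose fixed point should be $\varphi^5$ at the positive root reached by a special replacement of a negative root. The other root types should have strictly smaller constants, for example $7\varphi+4$-type bounds as in SUM-BANK.

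The main obstacle is the special-replacement step. There a child negative of length up to $\varphi u$ is revoked by a shorter positive, so the usual length monotonicity fails and the child's accumulated charge must be bounded by the ledger test $b_D\le\varphi u$ rather than by its length. I would first try to track a joint invariant on (length, exterior $O$-mass on each side) for each holder. The aim is to show that the worst case, a special replacement followed by maximal exterior rejections on both sides, is exactly the extremal chain of the lower bound.
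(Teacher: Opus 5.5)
Your lower bound uses the right family (Proposition~\ref{sg:prop:floor}), but two details are wrong. First, a negative holder of length $\varphi$ never costs $Mw=\varphi^4$, since $q_*$ is capped at $2\varphi v$; the construction only needs $q_*(\varphi,0)=d\varphi=M$. Second, no rejected positives are used. Around $K=[1-\epsilon,d-\epsilon)$ the rejected requests are the \emph{negatives} $I=[\epsilon,d-2\epsilon)$ and $R=[d-2\epsilon,d-2\epsilon+M-\epsilon)$. Both are shorter than $M$, leave the ledger at zero, and have no special route. After $J=[0,1)$ replaces $K$ at special-length equality, a negative $L$ of length $M-\epsilon$ is rejected against $J$'s cost $M$. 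Then $L,I,R$ tile the hull with weight $2M+d-5\epsilon=\varphi^5-5\epsilon$. Short positives would only feed the ledger that $J$'s special test must pass, and to be rejected they must be shorter than $\alpha\varphi=1$, so they cannot supply this packing.

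The genuine gap is in the upper bound. Splitting each rejected comparator in proportion to \emph{length} only gives a piece $w(I)w(V)/W<(\mathcal Q(D)/W)\,w(V)\le Mw(V)$, independent of $V$'s ledger. If $I$ also conflicts with a long positive holder, a negative $V$ of length $v$ really can receive nearly $Mv$, well above $q_*(v,b)\le2\varphi v$. With that capacity, consider the critical configuration: a special root of length $h$, one crossing comparator charged to the root singleton of cost $Mh$, and the other absorbed by an outward negative child. The best estimate there is of order $(M+1)h+W+MW$, which at $W=\varphi h$ is $(6\varphi+4)h>\varphi^5h$. The ledger test cannot repair this, because $b_D$ measures positive-prediction mass, not comparator charge.

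The ledger enters robustness only through the cost, and only if pieces are split in proportion to current costs. Each piece is then at most the receiver's terminal cost $q_X$, since costs are nondecreasing in lifetime. A special parent's children then carry total terminal cost at most $dW+\alpha^3b_D\le(\varphi+3)h$, by the affine majorant $q_*\le dv+\alpha^3b$.

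You also cannot borrow the directional cuts of Lemma~\ref{sg:lem:main-invariant}. They are consistency invariants for the source measure of a \emph{feasible} complete positive set, whereas robustness must hold for arbitrary labels. In addition, a per-holder count of ``two crossing plus interior'' comparators double-counts mass transferred along replacement edges.

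The missing device is the hull-level joint cut of Lemma~\ref{sg:lem:charge-transfer}. At most two comparators cross a root's connected accepted hull, and one bounds $|H_J\cap[x,y]|+L_J(x)+U_J(y)$. Here $L_J$ and $U_J$ are maximum terminal-cost ancestry antichains restricted by start and by end, and they need not be the same antichain. The case split is root singleton versus children at each endpoint, then outward child, then bridge. The worst case closes only through monotonicity and the fixed-ledger slope inequality $q_*(W,b_D)-q_*(v,b_D)\ge\varphi(W-v)$, using $\rho-1<\varphi$ for $\rho=(\varphi+3)/M$. This yields $(M+1+\varphi)h+q_*(\varphi h,\varphi h)=\varphi^5h$.

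Finally, your expectation that other root types have ``$7\varphi+4$-type'' bounds cannot be right. Since $7\varphi+4>\varphi^5$, such a bound would defeat the target; in fact every other case is strictly below $\varphi^5$, for example $4\varphi+3$ when both root endpoints are eligible.
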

\begin{proof}[Proof sketch: joint endpoint charges]
In the replacement forest, revoked requests are children of their replacements. Give each accepted identity its ordinary cost just before revocation or final termination. Costs do not decrease during a lifetime. An ordinary parent of length $h$ pays at most $h$ in child costs. At a special parent, its negative children have total length and ledger $W,b_D\le\varphi h$, and
\[
 \begin{aligned}
 q_*(v,b)&\le dv+\alpha^3b,\\
 \sum_iq_i&\le dW+\alpha^3b_D\le Ah,\quad A=\varphi+3.
 \end{aligned}
\]
For analysis, split each rejected optimal request among its current conflicts in proportion to their costs. These receivers form an ancestry antichain: no receiver is an ancestor of another. The comparator is a fixed offline optimum. Passing the charges to final roots preserves their sum. At most two disjoint optimal requests cross the boundary of a root's connected accepted hull; the others fit between their cuts. Bounding the two exterior charges jointly with the space remaining between them avoids counting the same interval twice.

Let $H_J$ be the connected accepted hull of a root $J$. For a cut $x$, define $L_J(x)$ as the largest sum of terminal costs over an ancestry antichain in this subtree whose intervals start strictly left of $x$; define $U_J(y)$ by ends strictly right of $y$. The cuts $x,y$ are the inward endpoints of the crossing comparator intervals, using the hull endpoints when a crossing is absent. Unless one comparator covers the entire hull, feasibility gives $x\le y$ and the root charge is bounded by
\[
 F_J(x,y)=|H_J\cap[x,y]|+L_J(x)+U_J(y).
\]
The two maxima need not use the same antichain. A comparator covering the entire hull is treated separately: it is the only contributor there and is charged at most $Ah$.

At an ordinary node, its own cost dominates every descendant antichain's total terminal cost, so the node itself maximizes an endpoint whenever eligible. At a special node $J=[a,b)$, each endpoint maximum is either the root's singleton cost or the sum of child maxima. Orient a single singleton to the left. If $y<b$, the gap lies inside $J$; if $y\ge b$, at most one compatible child extends beyond $y$. With both maxima supplied by children, at most one child bridges both cuts. These alternatives exhaust the geometry, including coincident cuts and equality at an interval endpoint. The coefficients are in Table~\ref{sg:tab:coefficients} (Appendix~\ref{curve:sec:robustness}).

The largest case is a special root with one endpoint charged at the root and the other at an outward child. Write $\rho=A/M$. Monotonicity and the fixed-ledger slope inequality $q_*(W,b_D)-q_*(v,b_D)\ge\varphi(W-v)$ absorb that child's cost and excess length. Since $\rho-1<\varphi$, the total charge is at most
\begin{align}
 &(M+1)h+W+q_*(W,b_D)\notag\\
 &\quad\le(M+1+\varphi)h+q_*(\varphi h,\varphi h)
 =\varphi^5h.\label{sg:eq:compression-main}
\end{align}
The other endpoint choices give smaller coefficients; Appendix~\ref{app:shaped} lists and proves them. Summing over final roots proves the upper bound. The five-request construction in Proposition~\ref{sg:prop:floor} approaches $\varphi^5$, proving equality.
\end{proof}

The endpoint cannot be improved merely by reshaping nonshrinking negative costs ($q_-(v,b)\ge v$) within the same special tests: $d$-consistency then
forces $R\ge\varphi^5$, even without cost continuity or monotonicity
(Appendix~\ref{r29:sec:necessary-cost}). To trade consistency for a
smaller budget, we change both special thresholds below.

\subsection{A continuous attainable tradeoff}
The optimal-consistency endpoint is part of a tunable family. Fix
$a\in[\alpha,1]$ and put
\[
 m_a=\max\{(1+a)/a^2,1+a+a^{-1}\}.
\]
Policy $T_a$ uses positive cost $m_av$, negative cost
\begin{equation}
 \begin{aligned}
 q_a(v,b)=\min\bigl\{&(1+a+a^{-1})v,\\
             &\max\{(1+a^{-1})v,v/a+b/(1+a)\}\bigr\},
 \end{aligned}
 \label{param:eq:cost-main}
\end{equation}
and special tests $w\ge aW$, $w\ge ab_D$. In all fee-case guards,
replace $\alpha W$ by $aW$ and $b_D/\varphi$ by $ab_D$;
the proportional fees, priority and inheritance are unchanged.

\begin{theorem}[Continuous attainable guarantees]\label{param:thm:curve}
For every $a\in[\alpha,1]$, $C(T_a)=(1+a)^2$ and
$R(T_a)\le K_a=\max\{O_a,N_a\}$, where the outward and no-bridge
coefficients are defined in~\eqref{curve:eq:two-coefficients}: $O_a$ comes from a special root with one outward child; $N_a$ from child charges with no child bridging both cuts.
The bound is exact on $[\alpha,a_-]\cup[a_+,1]$, where
$a_-\approx0.677821$ and $a_+\approx0.813824$ are the unique roots
specified by~\eqref{curve:eq:transition-polynomials}
in Proposition~\ref{curve:prop:exact-arcs}. In particular,
\[
 \begin{array}{c|ccc}
 a&\alpha&2/3&1\\\hline
 C(T_a)&d&25/9&4\\
 R(T_a)&\varphi^5&10&7
 \end{array}
\]
All three pairs are exact, and the guarantees hold after the
compatible-positive filter.
\end{theorem}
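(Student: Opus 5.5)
The plan has four parts: the consistency upper bound, a matching consistency lower bound, the robustness certificate $R(T_a)\le K_a$, and tightness on the two end intervals.

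\emph{Consistency upper bound.} I would generalize Theorem~\ref{sg:thm:main} from the golden endpoint to a general parameter $a\in[\alpha,1]$. The constants change as follows: $\alpha\mapsto a$ in the special length test, $1/\varphi\mapsto a$ in the ledger test, and the capacity $d\mapsto(1+a)^2$. The positive cost becomes $m_av$. I would state an $a$-interface analogous to~\eqref{sg:eq:interface-main}:
\[
 v+\max\{av,\,a^2b/(1+a)\}\le a\,q_a(v,b)\quad(0\le b\le(1+a)^2v),
\]
with $q_a\ge(1+a^{-1})v$. I would verify it directly from~\eqref{param:eq:cost-main} on the three pieces of the min/max.

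I would then rerun the side and directional-cut invariant of Lemma~\ref{sg:lem:main-invariant} with these constants. The required bounds are:
\begin{itemize}
\item exterior mass at a negative birth at most $av$;
\item fee bounds $g<aW$ after length failure and $g\le ab_{\rm pre}$ after ledger failure;
\item negative capacity $v+av+v\cdot(\text{other side})\le(1+a)^2v$;
\item special-parent children satisfying $W_Q,B_Q\le u/a$;
\item ordinary children satisfying $W_Q\le u/m_a$.
\end{itemize}
The choice $m_a\ge(1+a)/a^2$ is exactly what makes case~(i) of the positive cut close, since $e<(1+a)^2w(H)\le u$ requires $w(H)\le u/(1+a)^2$. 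Conservation (Lemma~\ref{lem:conservation}) then gives $w(P)\le(1+a)^2\mathrm{ALG}$.

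\emph{Consistency lower bound.} For the matching lower bound $C(T_a)\ge(1+a)^2$, I would adapt the construction of Theorem~\ref{thm:lower}, replacing $t=\alpha$ by $t=a$. First, tiles around a negative unit $J$ are rejected by $T_a$: each is shorter than the ordinary cost and than $aW$. Next, a positive of length just below $a$ arrives on the left and is rejected by the length test. A positive of length $a$ on the right is then accepted specially, since its ledger $\le 1\le$ threshold holds when the tiles are short. I would then arrange missing mass $1+a$ on both sides of the survivor, so that the ratio tends to $(1+a)^2$. If a concrete five-request instance is cleaner, I would mimic Proposition~\ref{sg:prop:floor}.

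\emph{Robustness certificate.} I would repeat the joint endpoint-charge proof of Theorem~\ref{sg:thm:robust-main} for $T_a$. The ingredients are:
\begin{itemize}
\item lifetime-monotone ordinary costs;
\item an ordinary parent paying at most its length in child costs;
\item a special parent with $W,b_D\le h/a$ and $q_a(v,b)\le(1+a^{-1})v+\theta_ab$ with the appropriate slope $\theta_a$;
\item the antichain maxima $L_J,U_J$ and the functional $F_J(x,y)$.
\end{itemize}
The case analysis (singleton root endpoint, outward child, no-bridge, single bridging child, covering comparator) yields one coefficient per case. The two dominant ones define $O_a$ and $N_a$; the rest I would check are smaller for $a\in[\alpha,1]$. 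This gives $R(T_a)\le K_a$.

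\emph{Exactness and the table.} For exactness on $[\alpha,a_-]\cup[a_+,1]$, I would build two families of lower-bound instances, each realizing one coefficient. The first is a special root with an outward child, analogous to the $\varphi^5$ floor, which realizes $O_a$. The second is a no-bridge configuration that realizes $N_a$. Each is tight only where its coefficient is the attained maximum and the instance's forcing conditions (the special tests and fee guards) remain valid. The transition roots $a_\pm$ of~\eqref{curve:eq:transition-polynomials} mark where neither instance matches $K_a$. Substituting $a=\alpha,2/3,1$ gives the table. Preservation under the compatible-positive filter would follow since the filter keeps the labelled set feasible and only turns positives to negatives, and every bound above holds for arbitrary labels.

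\emph{Main obstacle.} I expect the main obstacle to be the robustness case analysis at general $a$. The golden identities such as $\rho-1<\varphi$ no longer hold, so each case coefficient must be compared as an explicit function of $a$. I would try first to reduce every case to the two forms $O_a,N_a$ by monotone absorption via the fixed-ledger slope inequality $q_a(W,b)-q_a(v,b)\ge(W-v)/a$.
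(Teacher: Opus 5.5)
Your upper-bound routes are the paper's: the source-measure birth induction for consistency and the joint-cut certificate for robustness. The consistency lower bound $C(T_a)\ge(1+a)^2$, however, has a genuine gap.

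\textbf{The consistency lower bound.} Running the construction of Theorem~\ref{thm:lower} with $t=a$ against $T_a$ gives only $1+1/a$. After the tiles the anchor's ledger is $1-2\delta<1$, so the left positive of length $a$ passes both special tests at equality. It is accepted, and the adversary stops with ratio $(1+a-2\delta)/a$. This is strictly below $(1+a)^2$ for every $a>\alpha$ (it is $2$ versus $4$ at $a=1$).

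No rearrangement around a specially accepted positive can help. Accepting a positive of length $w$ requires $b_D\le w/a$. With perfect (hence feasible) labels, no later positive can conflict with it, so by conservation the ratio stays at most $(b_D+w)/w\le1+1/a$. In your own sequence, the anchor's ledger is about $1+a>1$ when the right positive of length $a$ arrives. That positive is therefore rejected by the ledger test, and you obtain only $1+2a$. Proposition~\ref{sg:prop:floor} cannot serve either, since its optimum consists of negatives.

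The missing idea is to keep the negative unit anchor held and let the ledger test block a \emph{long} positive:
\begin{itemize}
\item Reject, by the length test, two seeds tiling $[\epsilon,1-\epsilon)$ and a left positive $[2\epsilon-a,\epsilon)$. This leaves $b_J=1+a-3\epsilon$.
\item Offer a right positive of length $z=ab_J-\epsilon$. It passes the length test and fails the ledger test. It also fails the ordinary gate, because $z<a+a^2\le1+1/a=q_a(1,b_J)$.
\item Then $\mathrm{ALG}=1$. The four positives tile a hull containing the anchor and form the unique optimum, with $\mathrm{OPT}\to1+a+a(1+a)=(1+a)^2$.
\end{itemize}
This is Proposition~\ref{curve:prop:exact-consistency}.

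\textbf{Exactness.} Your exactness plan is also misconceived. No family matching $N_a$ is needed, and none is known; its absence is exactly the open gap on $(a_-,a_+)$. If you had instances matching both $O_a$ and $N_a$, you would have exactness on all of $[\alpha,1]$, contradicting your own reading of $a_\pm$.

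The paper uses only the outward five-request family of Proposition~\ref{curve:prop:outward-lower}, which gives $R(T_a)\ge O_a$ for every $a$. It then locates where $O_a\ge N_a$:
\begin{itemize}
\item On $[\alpha,a_c]$, $O_a-N_a$ is strictly decreasing, positive at $\alpha$ and negative at $a_c$, with $a^2(1+a)^3(O_a-N_a)=p_-(a)$.
\item On $[a_c,1]$, its sign is that of $m_a-A_a$, where $A_a-m_a$ is strictly decreasing and $a^2(1+a)^2(A_a-m_a)=p_+(a)$.
\end{itemize}
So $a_\pm$ are the crossings of the two coefficients, not points where instances break down. For the filtered claims, you must also check that each witness has compatible positives, so that the filter changes no label.

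\textbf{Constants in the upper-bound sketch.} Several constants are mis-generalized from the golden endpoint:
\begin{itemize}
\item The interface term is $\frac{a}{1+a}b$, not $\frac{a^2}{1+a}b$. Ledger-failure fees $g\le ab_{\rm pre}$ create exterior mass up to $\frac{a}{1+a}b$, and your version does not even reduce to $\alpha^2b$ at $a=\alpha$.
\item Children of a positive parent are negative, so ordinary acceptance gives $W_Q\le u/q_{0,a}$, not $u/m_a$.
\item The positive capacity must be $k_a=am_a$, kept separate from $(1+a)^2$. The negative-birth step pays a positive child's cut by $k_aw\le am_aw$, and $(1+a)^2\le am_a$ fails for every $a>\alpha$.
\end{itemize}
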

\paragraph{Proof sketch: mechanism of the tradeoff.}
The source induction separates negative capacity $(1+a)^2v$ from
positive capacity $am_av$. The exterior-mass bounds become $av$
at birth and $\max\{av,ab/(1+a)\}$ during a negative lifetime.
Their weighted cut is paid by $q_a$. In the positive cut's multichild
case, the accepted-history bounds change branch at $a^3+a^2=1$;
both branches pay the same directional gap. The joint-cut analysis
reduces to the two continuous coefficients $O_a,N_a$.
Five-request families match the consistency throughout the range
and the robustness on the two stated intervals.
Appendix~\ref{app:shaped} gives the induction, coefficient comparison
and matching constructions.
At budget $7$, $T_1$ improves LR's $M\approx4.236$ and RP's certificate
$5$; continuity extends this improvement to an interval. At budget
$10$, $T_{2/3}$ has exact consistency $25/9<3$. For $a_-<a<a_+$,
only $O_a\le R(T_a)\le N_a$ is known.

The envelope combines LR's $C=R=M$, RP's
$C\le3(B-2)/(B-4)$ for $B\ge6$, and $T_a$.
RP has exact robustness $2\lambda+2$ for $\lambda\ge2$
(Appendix~\ref{sum:app:robustness}); its consistency certificate beats LR
above $4+3\varphi\approx8.854$. Appendix~\ref{app:additional-policies} compares SUM and scalar Credit.
For $M\le B<7$, LR remains the best guarantee among these policies.

\section{Guarantees and Learning Interfaces}\label{sec:randomized-main}
\subsection{Randomization and the guarantee being protected}
For fixed inputs, measure $\mathrm{OPT}/\mathbb E[\mathrm{ALG}]$.
An initial coin selects \textsc{Naive} with probability $\pi$, otherwise LR. Its exact tradeoff is
\begin{equation}
 \overline R=\frac{2\varphi c}{c-1},\quad 1<c\le M.
 \label{eq:mixture-main}
\end{equation}
Here $\pi=(M/c-1)/(M-1)$. At $c=d$, $\overline R=2d$;
at $\overline R=\varphi^5$, $c\approx1.412$. On all-negative inputs
the Naive branch gains zero. The finite-expected-robustness consistency
infimum is one, unattained; the full frontier remains open
(Appendix~\ref{r12:sec:randomized-lower}).

With only robustness almost surely and consistency in expectation,
every finite budget $B>1$ instead requires
(Proposition~\ref{r38:prop:hybrid-simple})
\[
 C_{\rm exp}\ge2+\frac1{B-1}>2.
\]
Robustness on a fixed shorter input forces rejection on the shared
prefix; a fixed longer perfect input then gives the bound.

A half-length special gate with ordinary factor four allows pathwise
robustness at most $18$, a sharp uniform bound over this action family,
but every fixed independent acceptance probability
has $C_{\rm exp}\ge d$ (Appendix~\ref{r37:sec:hybrid}).

For $0<\pi<1$, adapting inputs to observed actions makes the mixture's
expected robustness infinite (Appendix~\ref{r12:sec:adaptive-scope}; \citealp{bendavid1994}). This further motivates protection of individual executions rather than relying solely on a mixture's fixed-input expected reward.

\raggedbottom
\balance
\subsection{Workload and learned-label diagnostics}\label{sec:empirical}
NASA-iPSC and CTC-SP2 give 18,066 and 77,205 intervals.
Twenty shared reveal/label repetitions flip members (FN) or nonmembers
(FP) of one optimum; FP 1 is all-positive, FN 1 all-negative.

Filtering preserves both exact guarantees and adds
$\mathrm{OPT}\le d\,\mathrm{ALG}+\eta_K$. Here $\eta_K$ is missed optimal
weight plus each false positive's nonnegative excess of conflicting
optimal weight over its own length (Appendix~\ref{new:sec:wrapper}).
Zero conflict error need not make positives feasible, so raw SG lacks
this guarantee.

Table~\ref{tab:main-mixtures} (Appendix~\ref{r5:app:current-comparison}) reports
means, intervals and individual-run fifth percentiles $q_{.05}$.
Intervals use 10,000 shared whole-input bootstrap resamples
(Appendix~\ref{r39:sec:saved-quantiles}).

LR leads FP $.1$ and $1$ in mean and $q_{.05}$. In the two low-FP
rows of Table~\ref{tab:main-mixtures}, SG$^F$ gains $.245$ and $.532$ points over LR
(paired intervals $[.076,.401]$ and $[.365,.693]$), and raw SG gains a further $.527$ and $.916$ (Appendix~\ref{r5:app:current-comparison}), so filtering is not the source. Both rows use the smallest rate of the prespecified low-FP grid (Appendix~\ref{sec:experiment-protocol}). Perfect-label Naive gets $100\%$ but lacks finite robustness.
Filtered SG exceeds filtered RP in all 20 cell means; its differences
from best scalar Credit range from about $-.001$ to $.04$ points (Appendix~\ref{r5:app:current-comparison}).

Chronological logistic advice on 7 NASA and 31 CTC instances gives
SG$^F$ $94.58/89.66\%$, below LR's $96.86/91.25\%$.
Reward selection on reused blocks mainly chooses all-negative advice;
the nonconstant CTC cost score remains below LR
(Appendices~\ref{sec:r029-extensions}--\ref{sec:r030-decision-alignment}).

\subsection{What should a learned advice bit predict?}
The same rejection can leave the same held interval but different ledgers,
changing a later replacement; thus the useful prediction can concern
continuation reward rather than immediate reward.
Appendix~\ref{app:rollout-interface} lets the bit estimate its rollout effect on final reward. Its transfer bound uses the reference policy's membership error and deployment-state rollout error; the latter is not estimated here. An independent cohort favors lookahead but does not confirm additional learning gains.

\section{Discussion}\label{sec:scope}
The universal budget for attaining $d$ remains in
$[4\varphi+3,\varphi^5]$. The optimal universal frontier and the tradeoff
with almost-sure robustness but expected consistency are open; within
$T_a$, only $(a_-,a_+)$ lacks exact robustness.

Improving the budget upper bound $\varphi^5$ requires leaving policies
that always reject negative arrivals shorter than $d$ times a sole
negative holder, even with arbitrary history-dependent ledgers and
positive rules (Appendix~\ref{app:history-barrier}).

A filter-free policy with $C=d$, finite robustness and a smooth
conflict-error guarantee remains open; weighted precision and recall
need not control this error (Appendix~\ref{r12:sec:precision}).

\label{page:main-end}
\clearpage
\nobalance
\label{page:refs-start}
\section*{Impact Statement}
The results concern retained length under unrestricted revocation.
Proportional fees account for one cancellation cost, but not reliability
obligations or unequal consequences across users. Training labels inherit
their offline objective's biases; neither optimal membership nor these
competitive ratios ensure fair allocation. The public job traces are
used only to construct selection instances, without inferring individual
behavior.

\bibliographystyle{icml2026}
\bibliography{references}
\label{page:refs-end}
\clearpage
\onecolumn
\raggedbottom
\makeatletter\setlength{\@fptop}{0pt}\makeatother
\appendix
\numberwithin{equation}{section}
\numberwithin{theorem}{section}
\numberwithin{lemma}{section}
\numberwithin{proposition}{section}
\numberwithin{corollary}{section}
\numberwithin{definition}{section}
\section*{Technical Appendix}

\subsection*{Guide to the appendix}
\begin{center}\small
\begin{tabular}{@{}p{.47\textwidth}p{.45\textwidth}@{}}\toprule
Question & Reading route\\\midrule
Universal consistency and budget lower bounds & Appendices~\ref{sec:proof-lower}--\ref{r11:app:continuous-source-first}\\
Parametric guarantees, exact intervals and cost limits & Appendix~\ref{app:shaped}, especially Proposition~\ref{curve:prop:exact-arcs}; penalties in Appendix~\ref{app:cancellation-costs} and history-dependent restriction in Appendix~\ref{app:history-barrier}\\
Credit, SUM and RP guarantees & Overview in Appendix~\ref{app:additional-policies}; proofs in Appendices~\ref{app:credit}--\ref{sum:app:robustness}\\
Filtering and randomized guarantees & Appendix~\ref{new:sec:wrapper}; randomized guarantees in Appendix~\ref{w02r4:sec:randomized-scope}, including the sharp safe action family in Proposition~\ref{r38:prop:random-action-envelope}\\
Mechanism experiments and protocol & Results in Appendices~\ref{app:ledger-ablation}--\ref{r5:app:current-comparison} and~\ref{app:randomized-experiment}; shared protocol in Appendix~\ref{sec:experiment-protocol}\\
Learned labels and rollout advice & Appendix~\ref{sec:r030-decision-alignment}; rollout and its expected reference certificate in Appendix~\ref{app:rollout-interface}\\
Model and literature comparisons & Appendix~\ref{sec:comparison-appendix}\\\bottomrule
\end{tabular}\end{center}

\subsection*{Notation guide}
The algorithm stores only requests, labels and ledgers. Source measures, cuts, capacities and replacement histories are proof quantities.
\begin{center}\small
\begin{tabular}{@{}p{.25\textwidth}p{.67\textwidth}@{}}\toprule
Notation & Meaning\\\midrule
\multicolumn{2}{@{}l}{\emph{Core policies and error}}\\
$\varphi,\alpha,d,M$ & Golden ratio, $\alpha=1/\varphi$, optimal consistency $d=\varphi^2$, and positive cost coefficient $M=\varphi^3$.\\
$T_*,q_*(v,b)$ & Shaped-BANK and its negative cost for length $v$ and ledger $b$, defined in~\eqref{sg:eq:cost}.\\
$T_a,q_a,C_a,K_a$ & Parametric policy, negative cost, exact consistency $(1+a)^2$, and robustness certificate for $a\in[\alpha,1]$.\\
$O_a,N_a,a_-,a_+$ & Outward and no-bridge coefficients, with $K_a=\max\{O_a,N_a\}$; exact robustness holds on $[\alpha,a_-]\cup[a_+,1]$.\\
$\tau(J),q_J$ & A held request's current ordinary cost and its terminal cost at revocation or final termination; subscripts specify the policy.\\
$W,b_D,b(J)$ & Total conflicting length, total conflicting ledger, and the ledger of a current request; some appendix proofs denote the total ledger by $B$.\\
$\eta_K(P,O)$ & Conflict-sensitive error of original labels relative to a fixed optimum; subscript $K$ names the metric.\\
\multicolumn{2}{@{}l}{\emph{Lower bounds and attained budgets}}\\
$B_*,B_{C*},B_S$ & Exact Shaped-BANK, best scalar Credit and SUM endpoints $\varphi^5$, $4\varphi+5$ and $7\varphi+4$.\\
$G(B),D_{\rm s},D_{\rm l},H$ & Binary-reserve lower curve, short/long-reserve necessary budgets, and $H=\max\{H_{\rm s},H_{\rm l}\}$ from their inverse consistency bounds.\\
$\beta$ & Local notation: $4\varphi+3$ in Appendix~\ref{r11:app:continuous-source-first}; $\varphi$ in Appendix~\ref{sharp:sec:threshold}; a free transfer parameter in Appendix~\ref{sum:app:robustness}.\\
$\pi,r_{\rm LR}$ & Initial-mixture probability of Naive and the LR ratio $r_{\rm LR}=M$.\\
\multicolumn{2}{@{}l}{\emph{Appendix variants and proof quantities}}\\
$\beta_a,q_{0,a},Q_a,m_a,k_a$ & Appendix~\ref{app:shaped}'s parameter-dependent cost and capacity coefficients; $Q_a$ is distinct from a request $Q$.\\
$h_{0,a},h_{1,a},\theta_a,c_a,A_a,\rho_a$ & Parameter-dependent history, source-share and terminal-cost coefficients in~\eqref{curve:eq:constants} and~\eqref{curve:eq:robust-constants}; $a_c^3+a_c^2=1$ marks the history branch change.\\
$A_\lambda$ & Descriptive MAX-BANK comparator with $\alpha=1/\varphi$, $\beta=\varphi$ and ordinary maximum threshold $\lambda$.\\
$S_\lambda$ & SUM-BANK with the same special tests and ledgers, and ordinary total-conflict threshold $\lambda$.\\
$T_c,c_*=\alpha^3$ & Scalar Credit-BANK and its best consistent coefficient; $T=T_{\varphi/2}$ denotes the fixed $c=\varphi/2$ comparison.\\
$a_{X,n}(I),\mu_{X,n},E_L,E_R$ & Conserved share of positive source $I$ in $X$, its source measure, and the mass exterior to $X$ on each side.\\
$P,P'$ & Original positive set and permanent feasible positive history produced by the filter.\\
\bottomrule
\end{tabular}\end{center}

\section{Finite universal witnesses}\label{sec:proof-lower}\label{sec:witness}
Theorem~\ref{thm:lower} gives the complete stopping-prefix proof.
For its bounded-size refinement, fix a finite robustness bound $R\ge1$ with
\begin{equation}
 \mathrm{OPT}(\tau)\le R\mathrm{ALG}_A(\tau)
 \quad\text{on every nonempty labelled prefix }\tau.
 \label{eq:old-L1}
\end{equation}
Write $J=[0,1)$, $s=1-2\delta$ and $\ell=s/m<1/R$. The positive requests are
\begin{align}
 H_k&=[\delta+(k-1)\ell,\delta+k\ell),\quad k=1,\ldots,m,\label{eq:old-L3}\\
 L&=[\delta-t,\delta),\label{eq:old-L4}\\
 U&=[1-\delta,2-\delta).\label{eq:old-L5}
\end{align}
The negative anchor $J$ is followed by the $H_k$ and $L$; $U$ arrives
only if $L$ is rejected. Thus $\sigma^-=(J,H_1,\ldots,H_m,L)$ and
$\sigma^+=(J,H_1,\ldots,H_m,L,U)$. Figure~\ref{fig:lower} shows both branches.

\begin{corollary}[Bounded-size witnesses with a prescribed ratio]\label{cor:witness}

Fix a deterministic algorithm \(A\) in the model in the main text and any valid finite robustness bound \(R\ge1\) satisfying \eqref{eq:old-L1}. For every real number \(\kappa\in(2,d)\), there is a finite input \(\sigma_\kappa\) with at most \(\lceil R\rceil+3\) requests such that its positive requests form its unique optimal solution and \[
\frac{\mathrm{OPT}(\sigma_\kappa)}{\mathrm{ALG}_A(\sigma_\kappa)}=\kappa.
\] If \(\kappa\) is rational, all interval endpoints can be chosen rational. In particular, for this same \(A\) and \(R\), every \(\varepsilon>0\) admits a rational-endpoint input with at most \(\lceil R\rceil+3\) requests, unique optimum, perfect predictions, and ratio greater than \(d-\varepsilon\).

\end{corollary}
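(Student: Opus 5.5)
The plan is to rerun the two-branch construction in the proof of Theorem~\ref{thm:lower}, but with $t$ and $\delta$ left free. They are then tuned so that the two terminal branches give the \emph{same} ratio $\kappa$. With that tuning, the algorithm's decision on $L$ does not affect the value obtained.

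\textbf{Step 1 (forced prefix behaviour and request count).} Take $m=\lceil R\rceil$. Since $s=1-2\delta<1$, we get $\ell=s/m<1/R$, so $m>Rs$ holds as required. As in Theorem~\ref{thm:lower}, \eqref{eq:old-L1} on the prefixes forces the following. $J$ is accepted. Every $H_k$ is rejected, because the first abandonment of $J$ would leave at most one tile, of length $\ell<1/R$. Hence $J$ is the sole holder when $L$ arrives.

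\textbf{Step 2 (exact values of $\mathrm{ALG}$ in each branch).} We need equalities rather than inequalities, and robustness on every nonempty prefix supplies them.
\begin{itemize}
\item If $L$ is accepted, the algorithm must revoke $J$, since $L=[\delta-t,\delta)$ overlaps $J$ whenever $t>\delta$. The held set is then $\{L\}$, so $\mathrm{ALG}(\sigma^-)=t$.
\item If $L$ is rejected, the arrival of $U$ leaves the algorithm holding $J$ or $U$, each of length one. An empty set is excluded by robustness on $\sigma^+$. Hence $\mathrm{ALG}(\sigma^+)=1$.
\end{itemize}
Either input has $m+3\le\lceil R\rceil+3$ requests.

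\textbf{Step 3 (unique optimum).} The positives are pairwise compatible. Every positive conflicts with $J$, so the only feasible set containing $J$ is $\{J\}$. Every other feasible set is a proper subset of the positives, or the positives themselves. Thus the positives form the unique optimum exactly when their weight exceeds one. For $\sigma^-$ this means $s+t>1$, that is, $t>2\delta$. For $\sigma^+$ it is automatic.

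\textbf{Step 4 (solving for the parameters; the main point).} The two ratios are $(s+t)/t$ on $\sigma^-$ and $s+t+1$ on $\sigma^+$. Setting both equal to $\kappa$ gives $t=s/(\kappa-1)=\kappa-1-s$. This yields
\[
 s=\frac{(\kappa-1)^2}{\kappa},\qquad t=\frac{\kappa-1}{\kappa},\qquad \delta=\frac{1-s}{2}.
\]
We must check that these parameters are admissible.
\begin{itemize}
\item $s<1$ is equivalent to $\kappa^2-3\kappa+1<0$, which holds for $\kappa<d$. Hence $\delta>0$.
\item $t>2\delta=1-s$ reduces to $(\kappa-1)^2>1$, which holds exactly when $\kappa>2$. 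This condition also gives $t>\delta$, which Step 2 used.
\end{itemize}
This is why the statement restricts to $\kappa\in(2,d)$.

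\textbf{Step 5 (rationality and conclusion).} For rational $\kappa$, the quantities $s,t,\delta$ and $\ell=s/m$ are all rational, so every endpoint is rational. Because the two ratios coincide, the input $\sigma_\kappa$ (equal to $\sigma^-$ or $\sigma^+$, whichever branch $A$ takes) has ratio exactly $\kappa$. Since $A$ is deterministic, the branch is fixed by the shared prefix. Finally, given $\varepsilon>0$, pick a rational $\kappa\in(\max\{2,d-\varepsilon\},d)$; the resulting $\sigma_\kappa$ gives the final claim.
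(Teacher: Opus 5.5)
Your proposal is correct and follows the paper's proof essentially step for step. You use the same parameters $t=(\kappa-1)/\kappa$, $s=(\kappa-1)^2/\kappa$, $\delta=(1-s)/2$ and $m=\lceil R\rceil$, the same forced prefix, and the same two branches, each giving ratio exactly $\kappa$. The only difference is that the paper simply states these values, while you derive them by equating the two branch ratios. One small point to tighten: in the accepting branch you should also invoke \eqref{eq:old-L1} to exclude an empty final held set, as you already do in the rejecting branch.
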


\begin{proof}

Fix \(\kappa\in(2,d)\), and set \begin{equation}
t=\frac{\kappa-1}{\kappa},\qquad
s=\frac{(\kappa-1)^2}{\kappa},\qquad
\delta=\frac{1-s}{2}=\frac{3-\kappa-\kappa^{-1}}2,
\qquad m=\lceil R\rceil,\qquad \ell=\frac{s}{m}.
\label{eq:old-A1}
\end{equation} Since \(2<\kappa<d\), we have \(\kappa^2-3\kappa+1<0\), and hence \(0<s<1\) and \(\delta>0\). Moreover, \begin{equation}
0<t<1,\qquad t-2\delta=\kappa-2>0,\qquad
s+t=\kappa-1>1.
\label{eq:old-A2}
\end{equation} In particular, \(0<\delta<t/2<1/2\). Also, \begin{equation}
m\ge R>Rs,\qquad
0<\ell=\frac{s}{m}\le\frac{s}{R}<\frac1R.
\label{eq:old-A3}
\end{equation} Thus the same \(m\), chosen solely from \(R\), works for every \(\kappa\) in the stated range.

Use the requests \(J,H_1,\ldots,H_m,L,U\) defined in \eqref{eq:old-L3}--\eqref{eq:old-L5} and the preceding definition of \(J\), now with the parameters \eqref{eq:old-A1}. The prefix argument of Theorem~\ref{thm:lower} applies because \(\ell<1/R\): the algorithm must retain \(J\) and permanently reject every \(H_k\). Explicitly, a first abandonment at \(H_k\) would leave value at most \(\ell\), while stopping at that prefix leaves the offline request \(J\) of weight one, contradicting \eqref{eq:old-L1}.

The geometry is unchanged: the positive requests are mutually compatible, each conflicts with the sole negative request \(J\), and the positive total weights in the two complete inputs are \[
\mathrm{OPT}(\sigma^-)=s+t=\kappa-1>1,
\qquad
\mathrm{OPT}(\sigma^+)=s+t+1=\kappa>1.
\] They are therefore the unique optima, so both inputs have perfect predictions.

If \(A\) accepts \(L\), choose \(\sigma^-\). Its only possible nonempty final set is \(\{L\}\), and \eqref{eq:old-L1} excludes an empty final set. Hence its value is exactly \(t\), giving \[
\frac{\mathrm{OPT}(\sigma^-)}{\mathrm{ALG}_A(\sigma^-)}
=\frac{\kappa-1}{(\kappa-1)/\kappa}=\kappa.
\] If \(A\) rejects \(L\), choose \(\sigma^+\). Its only possible nonempty final sets are \(\{J\}\) and \(\{U\}\), each of weight one. Again \eqref{eq:old-L1} excludes the empty set, so the ratio is exactly \(\kappa\). As in Theorem~\ref{thm:lower}, determinism and unknown termination allow the appropriate complete input to be fixed for \(A\), with no change to its labels.

The two branches use \(m+2\) and \(m+3\) requests, respectively. If \(\kappa\) is rational, then \(t,s,\delta\), and \(\ell=s/m\) are rational, so every endpoint in the construction is rational. Finally, for any \(\varepsilon>0\), choose a rational \[
\kappa\in\bigl(\max\{2,d-\varepsilon\},d\bigr).
\] The construction gives the asserted rational witness with the same request bound.

\end{proof}

\paragraph{Request count and the limiting boundary.}
\label{remark-request-count-and-endpoint-precision}
The request bound $\lceil R\rceil+3$ is uniform in $\kappa$ and
$\varepsilon$ for fixed $R$, but not over all algorithms. Rational
endpoints give finite descriptions without a uniform encoding-length
bound as $\kappa\to d$.
\label{remark-why-the-gap-must-remain-positive}
The gap $\delta>0$ is essential: at $\delta=0$, $L=[-t,0)$ and
$U=[1,2)$ are compatible with $J=[0,1)$. The two branch bounds
$\mathrm{ALG}\le t$ and $\mathrm{ALG}\le1$ no longer follow, and replacing
the $H_k$ by $J$ gives a second optimum of equal weight.
The witnesses therefore keep $\delta>0$ and only approach this limit.

\section{An all-algorithm tradeoff by binary reserve amplification}
\label{r7:app:binary-reserve}
\label{r3:app:tradeoff}
\label{r3:app:veto}
\label{r3:app:composed}

Put $\varphi=(1+\sqrt5)/2$, $d=\varphi^2$, and
\begin{equation}
 G(t)=\frac{t}{(\sqrt t-1)^2},\qquad t>1.
 \label{r7:eq:G}
\end{equation}
The following bound applies to every deterministic algorithm in the original model.

\begin{theorem}[Universal consistency--robustness tradeoff]
\label{r7:thm:binary-reserve}
\label{r3:thm:tradeoff}
For every finite $B\ge\varphi^3$, an algorithm with $R(A)\le B$
satisfies
\begin{equation}
 C(A)\ge\max\{d,G(B)\}.
 \label{r3:eq:frontier-lower}
\end{equation}
In particular,
\begin{equation}
 R(A)\le5\quad\Longrightarrow\quad
 C(A)\ge\frac{15+5\sqrt5}{8}>3,
 \label{r3:eq:budget-five}
\end{equation}
and $R(A)\le6$ implies $C(A)\ge(42+12\sqrt6)/25$.

\end{theorem}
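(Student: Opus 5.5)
The bound $C(A)\ge d$ is Theorem~\ref{thm:lower}, so only $C(A)\ge G(B)$ needs proof. Write $x=\sqrt B>1$, so that $G(B)=\bigl(x/(x-1)\bigr)^2$. The target has the form of a geometric series $\sum_{k\ge0}x^{-k}=x/(x-1)$, squared. This suggests a construction in which missed positive mass is amplified by a factor near $x$ at each forced decision. The amplification would run on two sides, or in two nested phases, and the two factors would multiply.

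I would build a finite adversary tree, as in Theorem~\ref{thm:lower}. Start with a negative anchor $K_0$. Tiny positive tiles of length below $1/B$ times the current holder are placed inside it. As in Theorem~\ref{thm:lower}, robustness forces retention of $K_0$ and permanent rejection of every tile, and this creates missed positive mass of about $w(K_0)$. Next offer a negative reserve $K_1$ that conflicts with $K_0$, is compatible with the tiles already offered, and has length near $x\,w(K_0)$. The two responses branch as follows.
\begin{itemize}
\item If $A$ keeps $K_0$, the adversary continues with a reserve roughly $x$ times longer still. Once a reserve would exceed $B$ times the held value, stopping would violate robustness. So after finitely many levels $A$ must switch; this is where the budget $B=x^2$ enters.
\item If $A$ switches to $K_j$, the adversary refills the new holder with tiles, which are again forced to be rejected, and repeats the step.
\end{itemize}
At any point the adversary may instead issue a perfect terminal completion. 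This completion consists of positive requests that are compatible with all rejected tiles and make the positive set the unique optimum, with every negative conflicting with some positive, as in the endpoint-gap device of Corollary~\ref{cor:witness}.

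The accounting step is to choose lengths so that, along every branch, the rejected positive mass is a geometric sum $\sum_k x^{k}$ times a unit. The algorithm can then retain only the current reserve, or one completion piece, of size about $x^{n}$. This gives a ratio approaching $x/(x-1)$ from one side. Repeating the amplification on the opposite side of the hull, with the reserves on the right scaled by the value reached on the left, should multiply the two factors and give $(x/(x-1))^2$. After fixing a finite depth, the ratio exceeds $G(B)-\epsilon$, and the finite-tree transfer then yields the almost-sure version. The numerical cases $B=5$ and $B=6$ are direct evaluations of $G$.

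The main obstacle is the terminal completions. Each branch must end in an input whose full positive set is the unique optimum, even though every earlier negative reserve was a plausible alternative. The completion also has to keep every held negative conflicting with positives, without giving the algorithm a chance to recover mass. I would first try to place completion positives in the small gaps $\delta$ between consecutive reserves, as in Corollary~\ref{cor:witness}, and then verify optimality by an explicit comparison of weights. If the products do not close exactly, I would tune the reserve ratios level by level by solving the one-step recursion $r_{k+1}=x\,r_k$ subject to the robustness cutoff at $x^2$.
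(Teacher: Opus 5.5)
Your seed step and your accepting branch, where a newly accepted reserve is refilled with tiles that robustness forces $A$ to miss, are essentially the paper's. The proposal still has a genuine gap. Your adversary never punishes $A$ for \emph{rejecting} a reserve, and the squaring of $\sqrt B/(\sqrt B-1)$ has no mechanism behind it.

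Take holder $K_0$ of length $1$ with missed tiles of mass about $1$. Rejecting a compatible reserve of length $\sqrt B$ leaves a historical ratio of about $1+\sqrt B<B$, since $\sqrt B>\varphi$, so robustness is satisfied. Your only remedy is a longer reserve, which is again just a robustness threat.
\begin{itemize}
\item If the longer reserve lies on the other side of $K_0$, the rejected $K_1$ survives as a negative alternative. Every perfect terminal must outweigh it by positives conflicting with it. Once $A$'s holder sits on the other side, such positives conflict with nothing held, so $A$ simply keeps them. Positives placed in the $\delta$-gaps are far too light to help.
\item If the longer reserve lies on the same side and covers $K_1$, then $A$ can reject every reserve until robustness forces a switch at a factor of about $B$.
\end{itemize}
In either case the phases do not compound. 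Start from a holder $h$ whose total missed positive mass is $\rho h$. After $n$ switches by factor $\lambda$, each with refilled tiles, the ratio of missed mass to held length is $\rho\lambda^{-n}+\frac{\lambda}{\lambda-1}(1-\lambda^{-n})$. This is a convex combination of $\rho$ and $\lambda/(\lambda-1)$, so repeating phases on either side cannot multiply factors. Your chain therefore reaches at most $\sqrt B/(\sqrt B-1)\approx1.95$ at $B=\varphi^3$, which is below $d$. Against the reject-until-forced algorithm the ratio stays near $B/(B-1)$.

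The missing idea is a consistency--robustness pincer, applied while $A$ still holds its anchor $Q=[b,b+x)$ and all positives, of total weight $p=x+w$, have been missed.
\begin{itemize}
\item The reserve $Z$ sits just left of $Q$, overlapping it slightly. Its length is $z=(c-1)x-w$, tied to the target $c$ rather than to $\sqrt B$.
\item If $Z$ is rejected, the adversary offers the positive $L=[b-z,b+\delta)\supset Z$, which overlaps $Q$. The positives then form a unique perfect optimum of weight $cx+\delta$, so consistency forces $A$ to move onto $L$.
\item Next comes a negative $X$ of length $r\ell-p-z+\delta$, with $\ell=z+\delta$ and $r$ the robustness budget. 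It is compatible with $Z$ and all earlier positives but overlaps $L$, so robustness forces $A$ to replace $L$ by $X$.
\item Thus $L$ itself becomes missed mass that pays for $Z$, and $X$ is refilled with tiles.
\end{itemize}
Write $q=w/x$, $k=c-2$ and $A_r=(r-1)(c-1)-1$. Acceptance of $Z$ maps $q$ to $T(q)=(1+q)/(k+1-q)$, and rejection maps it to $F(q)=c/(A_r-rq)$. On $[0,k]$ one always has $T(q)>q$. The inequality $F(q)>q$ holds there because the discriminant $A_r^2-4rc$ of $rq^2-A_rq+c$ is negative, which for $c\ge d$ is exactly the condition $c<G(r)$. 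Compactness gives a uniform increment, so every branch of a finite binary response tree reaches $q>k$. At that point a positive tail of length $x$ overlapping $Q$ gives ratio $2+q>c$. So $G$ marks where the reset map $F$ acquires a fixed point; it is not the square of a one-sided geometric factor.
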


\begin{corollary}[Endpoint of the binary-reserve bound]
\label{r3:thm:veto}
An algorithm with $C(A)=d$ has
\begin{equation}
 R(A)\ge\varphi^4=3\varphi+2.
 \label{r3:eq:d-budget}
\end{equation}
\end{corollary}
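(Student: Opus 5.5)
The corollary should follow from Theorem~\ref{r7:thm:binary-reserve} by inverting the curve $G$ at consistency level $d$. Suppose $C(A)=d$ and $R(A)=B<\infty$. Finite robustness is implicit here: infinite robustness leaves nothing to prove, and the class of finitely robust algorithms is empty below $M=\varphi^3$. Hence $B\ge\varphi^3$ and the theorem applies, giving $d\ge C(A)\ge G(B)$. It then suffices to show that $G(t)\le d$ with $t\ge\varphi^3$ forces $t\ge\varphi^4$.

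The key steps are as follows.
\begin{enumerate}
\item Check that $G$ is strictly decreasing on $(1,\infty)$. Writing $u=\sqrt t$, we have $G=u^2/(u-1)^2=(1+1/(u-1))^2$, which is decreasing in $u>1$.
\item Solve $G(t)=d$. The equation $u/(u-1)=\varphi$ gives $u=\varphi/(\varphi-1)=\varphi\cdot\varphi=\varphi^2$, since $\varphi-1=\alpha$. Hence $t=\varphi^4=3\varphi+2$, using $\varphi^2=\varphi+1$.
\item Conclude from strict monotonicity. If $B<\varphi^4$, then $G(B)>G(\varphi^4)=d$, contradicting $C(A)=d$. Therefore $R(A)\ge\varphi^4$.
\end{enumerate}

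For the boundary case, note that $\varphi^3<\varphi^4$, so any $B$ in $[\varphi^3,\varphi^4)$ is covered by the theorem's hypothesis. No step is a real obstacle, since all the substance lies in Theorem~\ref{r7:thm:binary-reserve}. The one point needing care is that the theorem's hypothesis $B\ge\varphi^3$ is met, and the prediction-free lower bound of \citet{tomkins1995} recalled in Section~\ref{sec:model} supplies this.
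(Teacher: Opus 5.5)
Your proposal is correct and follows the paper's own argument: the corollary is read off at the crossing $G(\varphi^4)=d$, since $G$ is strictly decreasing, so any budget in $[\varphi^3,\varphi^4)$ forces $C(A)\ge G(B)>d$. The appeal to the prediction-free lower bound is harmless but not needed, because you could apply the theorem directly with $B=\max\{R(A),\varphi^3\}$.
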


Theorem~\ref{r10:thm:source-first} strengthens this endpoint to $4\varphi+3$; the continuous bound above remains useful away from $C=d$.

\begin{proof}[Proof of Theorem~\ref{r7:thm:binary-reserve} and its corollary]
We prove the theorem by constructing a finite adaptive input for every
strict pair
\begin{equation}
 \varphi^3\le r<\varphi^4,\qquad d\le c<G(r).
 \label{r7:eq:strict-pair}
\end{equation}
All requests have positive length and all stated conflicts have positive
overlap. Labels are fixed at arrival. At every prefix at which consistency
is invoked, the complete positive set is a feasible unique optimum.

\paragraph{Exchange certificates.}
For a feasible positive set $P$ and a set $S$ of negative requests, let
$P(S)$ be the positive requests intersecting at least one member of $S$.
If
\begin{equation}
 w(P(S))>w(S)
 \quad\text{for every nonempty feasible negative set }S,
 \label{r7:eq:exchange}
\end{equation}
then $P$ is the unique optimum. Indeed, any feasible solution using
exactly the negatives $S$ has weight at most
$w(S)+w(P\setminus P(S))<w(P)$. A proper subset of $P$ also has smaller
weight. In the induction below, the certificate for old negatives uses
old positives and that for newly added negatives uses a disjoint group
of new positives. Thus their weights can be added even when a single
new positive intersects several new negatives.

\paragraph{The two ideal transitions.}
Write
\[
 k=c-2,\qquad A=(r-1)(c-1)-1,
\]
and define
\begin{equation}
 F(q)=\frac{c}{A-rq},\qquad
 T(q)=\frac{1+q}{k+1-q}.
 \label{r7:eq:binary-maps}
\end{equation}
Since $r\ge\varphi^3>4$ and $G$ decreases with $G(4)=4$, we have
$0<k<2$. Also $G(r)<r$ for $r>4$, so $r-c>0$. On $[0,k]$,
\[
 A-rq\ge A-rk=r-c>0,\qquad k+1-q\ge1.
\]
The sign of $F(q)-q$ is the sign of $rq^2-Aq+c$. Its discriminant,
\[
 \Delta=A^2-4rc=\bigl((r-1)c-r\bigr)^2-4rc,
\]
has, as a quadratic in $c$, roots
$r/(\sqrt r+1)^2<1$ and $G(r)$. Hence
\eqref{r7:eq:strict-pair} gives $\Delta<0$ and $F(q)>q$.
Furthermore,
\[
 T(q)-q=\frac{q^2-kq+1}{k+1-q}>0,
 \qquad q^2-kq+1=(q-k/2)^2+1-k^2/4>0.
\]
Consequently
\begin{equation}
 \mu=\min_{0\le q\le k}\bigl(\min\{F(q),T(q)\}-q\bigr)>0.
 \label{r7:eq:uniform-progress}
\end{equation}
Starting at $q=0$, every ideal sequence of the two transitions exceeds
$k$ within $\lfloor k/\mu\rfloor+1$ steps. This gives a uniformly
finite binary response tree.

\paragraph{A strict initial state.}
Choose a sufficiently small $\delta_0>0$. Present negative
$J=[0,1)$, which robustness forces the algorithm to accept. Present
positive pieces partitioning $[\delta_0,1+2\delta_0)$: prescribe
the final piece $[1-\delta_0,1+2\delta_0)$ and partition
$[\delta_0,1-\delta_0)$ into pieces shorter than $1/r$.
Take $\delta_0<1/2$ and $3\delta_0<1/r$. Every piece strictly
intersects $J$. At the first departure from $J$, only the incoming
piece could be held, contradicting robustness because historical
$\mathrm{OPT}\ge1$. Thus all pieces are missed. Their total weight is
$1+\delta_0>1$, and each conflicts with the sole negative $J$, so they
form a unique optimum. The algorithm holds $J$ alone.

\paragraph{The inductive state and terminal tail.}
At a live state the algorithm holds a sole negative anchor
$Q=[b,b+x)$. Every positive has been missed; their feasible set $P$
is the unique optimum and has weight
\[
 p=x+w,\qquad q=w/x>0.
\]
The leftmost positive endpoint is $b+\delta$, where $0<\delta<x$,
and every positive starts at or to the right of that point. Initially
$(x,w,\delta)=(1,\delta_0,\delta_0)$.

If $q>k$, append a positive tail of length $x$ ending at $b+\delta$.
It is compatible with $P$ and overlaps $Q$ by $\delta$. The old
exchange certificates persist after adding a compatible positive,
so the full positive set remains uniquely optimal. Either nonempty
response has gain $x$, but its optimum has weight
$p+x=2x+w>cx$. The empty response violates robustness. This terminates
the branch.

\paragraph{A reserve and the accepting branch.}
Suppose $q\le k$. Set
\begin{equation}
 z=(c-1)x-w,\qquad \ell=z+\delta,
 \qquad Z=[b+\delta/4-z,b+\delta/4).
 \label{r7:eq:reserve}
\end{equation}
Then $z\ge x>0$. Present $Z$ as a negative request. It overlaps $Q$
by $\delta/4$ and is separated from $P$ by a gap of $3\delta/4$.

If $Z$ is accepted, put $\zeta=\delta/16$ and present a positive
partition of
\[
 [\operatorname{left}(Z)+\zeta,
   \operatorname{right}(Z)+2\zeta).
\]
Prescribe the final crossing piece
$[\operatorname{right}(Z)-\zeta,
\operatorname{right}(Z)+2\zeta)$ and partition the preceding
interval into pieces shorter than $x/r$. We choose the initial
perturbation so that $3\zeta<x/r$ and $z>2\zeta$ at every such state.
Every piece strictly intersects $Z$. Historical $Q$ has length $x$,
so robustness forces every piece to be missed. All new positives end
at or before $b+3\delta/8<b+\delta$, hence are compatible with $P$.

The new positive group has weight $z+\zeta>z$ and every member
intersects $Z$. For any feasible subset of historical negatives,
use its old conflict certificate inside $P$; if it also contains $Z$,
add this new group. The groups contain distinct positive requests,
so \eqref{r7:eq:exchange} holds for every subset and the augmented
positive set is uniquely optimal. The new state has
\begin{equation}
 x'=z,\qquad w'=x+w+\delta/16,\qquad
 \delta'=\delta/16.
 \label{r7:eq:accept-update}
\end{equation}
Its density tends to $T(q)$ as $\delta/x\to0$. The small right-crossing
piece repairs the reserve while leaving a positive free margin at its
left endpoint.

\paragraph{The rejecting branch and a forced reset.}
If $Z$ is rejected, present positive
\[
 L=[b-z,b+\delta),\qquad w(L)=\ell.
\]
It contains $Z$ with margins $\delta/4$ and $3\delta/4$, is compatible
with $P$, and overlaps $Q$ by $\delta$. Old $P$ certifies every old
negative subset, while $L$ alone strictly dominates $Z$. Thus $P+L$
is uniquely optimal. Its weight is
\[
 p+\ell=cx+\delta,
\]
so retaining $Q$ violates consistency. A surviving algorithm accepts
$L$.

Present negative $X$, ending at $\operatorname{left}(Z)$, of length
\begin{equation}
 x'=r\ell-p-z+\delta
    =(A-rq)x+(r+1)\delta
    \ge(r-c)x+(r+1)\delta>0.
 \label{r7:eq:reset-length}
\end{equation}
It overlaps $L$ by $\delta/4$ and is compatible with $Z$ and every
member of $P$. The historical feasible set $P+Z+X$ has weight
$r\ell+\delta$, forcing replacement of $L$ by $X$ through robustness.

Fill $[\operatorname{left}(X)+\delta/4,\operatorname{left}(L))$
with positive pieces shorter than $x'/r$. Its length is
$x'-\delta/2>0$ by \eqref{r7:eq:reset-length}. Each piece intersects
$X$, and historical $X$ forces every piece to be missed. The new
positive group, consisting of $L$ and this fill, has weight
\[
 \ell+x'-\delta/2=x'+z+\delta/2.
\]
For the newly added negatives, the full certificates are
\[
\begin{array}{c|c|c}
 \text{negative subset}&\text{positive conflict group}&
     \text{strict surplus}\\ \hline
 \{Z\}&L&\delta\\
 \{X\}&L+\text{fill}&z+\delta/2\\
 \{Z,X\}&L+\text{fill}&\delta/2.
\end{array}
\]
In the joint row the positive group is counted once. Combining a row
with the old certificate inside $P$ establishes
\eqref{r7:eq:exchange} for every feasible historical negative subset.
Hence the full positive set is uniquely optimal. All positives have
been missed, $X$ is the sole held request, and the new state is
\begin{equation}
 x'=Ax-rw+(r+1)\delta,\qquad
 w'=cx+\delta/2,\qquad \delta'=\delta/4.
 \label{r7:eq:reject-update}
\end{equation}
Its limiting density is $F(q)$.

\paragraph{One finite perturbation for the complete tree.}
The ideal binary tree from \eqref{r7:eq:uniform-progress} has finitely
many nodes. All its nonterminal anchor lengths are positive, since
the transition denominators are bounded below by $r-c$ and $1$;
every terminal density strictly exceeds $k$. For each branch, the
initial state and the exact updates
\eqref{r7:eq:accept-update}, \eqref{r7:eq:reject-update} depend
continuously on $\delta_0$ near zero and converge to the corresponding
ideal states.

Choose one sufficiently small $\delta_0>0$ simultaneously for all
nodes. Then all ideal terminal excesses remain positive, all anchor
lengths remain positive, and every required ratio $\delta/x$ is small
enough to ensure $\delta<x$, $z>2\zeta$, and $3\zeta<x/r$, as well
as the initial seed inequalities. The margin is divided by $16$ on
acceptance and by $4$ on rejection. Anchor lengths need not increase:
continuity at the finitely many positive ideal lengths controls every
ratio separately. If a perturbed state crosses $k$ earlier, terminate
there. Otherwise it reaches a terminal level of the finite ideal tree
with its strict terminal excess preserved.

All overlaps and forcing margins that vanish in the ideal limit have
explicit positive finite identities. These are the overlaps
$\delta/4$, $\delta$, $\delta/4$ at the reserve, positive repair,
and reset; the accepting repair surplus $\delta/16$; the two forcing
surpluses $\delta$; and the joint reset certificate surplus
$\delta/2$. The seed has positive surplus $\delta_0$ and a crossing
piece of length $3\delta_0$.

\paragraph{Causality, all responses, and termination.}
Each partition has finitely many pieces. Along a surviving history
the held set is a singleton and the incoming request intersects it.
The legal responses are therefore to retain that singleton, replace
it by the incoming request, or hold nothing. Empty responses violate
robustness. Both nonempty reserve responses are continued explicitly.
At all other steps the alternative response has a strict violation
at the current prefix, where the adversary may stop. Voluntary
revocation creates no additional available request, and rejected or
revoked requests cannot be recalled. Thus the construction covers
every deterministic decision and uses no assumption about the
termination time. All consistency invocations have the complete
positive optimum certified above. The finite response tree proves
the contradiction for every strict pair
\eqref{r7:eq:strict-pair}.

For $\varphi^3\le B<\varphi^4$ and every $d\le c<G(B)$, the construction gives $C(A)>c$. Therefore $C(A)\ge G(B)$ by taking the supremum over those $c$. Since $G$ is decreasing and
$G(\varphi^4)=d$, the established universal lower bound $C(A)\ge d$
proves Theorem~\ref{r7:thm:binary-reserve} for all stated budgets.
The corollary follows at the crossing. More generally,
$G(G(t))=t$, so for $d\le c<G(\varphi^3)$ the theorem equivalently
requires $R(A)\ge G(c)$ whenever $C(A)\le c$.
\end{proof}

\paragraph{Scope of the lower bound.}
The proof provides no matching algorithm. At $c=d$ and $r=\varphi^4$,
the reset map $F$ has a double fixed point at $q=k$; its ideal orbit
from zero approaches $k$ from below. For larger $r$ its smaller fixed
point is below $k$. Always rejecting the reserve therefore prevents
this ideal construction from crossing $k$. This explains the limit of the present mechanism.

\begin{corollary}[An explicit finite request bound]\label{r12:cor:binary-size}
For a strict pair $(r,c)$ in \eqref{r7:eq:strict-pair}, put
\[
 k=c-2,\quad A=(r-1)(c-1)-1,\quad \Delta=A^2-4rc,
 \qquad
 m_0=\min\left\{\frac{-\Delta}{4rA},
                    \frac{1-k^2/4}{k+1}\right\}>0.
\]
Set $N=\lfloor k/m_0\rfloor+1$ and
$Q=\max\{\lfloor r(c-1)\rfloor+3,\lfloor r\rfloor+4\}$.
The binary-reserve adversary can use at most
\[
 \lfloor r\rfloor+4+NQ
\]
requests on every branch. The bound depends on the strict target
ratios, not on the algorithm or the choice of sufficiently small
perturbation and allowed subdivisions.
\end{corollary}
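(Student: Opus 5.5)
We count requests along each branch of the binary-reserve adversary. The plan is to split the count into an initial seed phase and the inductive phase. We bound the number of inductive steps by $N$, using a quantitative version of the progress constant $\mu$ in \eqref{r7:eq:uniform-progress}. Then we bound the requests used per step by $Q$, taking the larger of the two branches plus the terminal tail.

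\emph{Seed phase.} The anchor $J$ is one request. The crossing piece $[1-\delta_0,1+2\delta_0)$ is another. The interval $[\delta_0,1-\delta_0)$, of length below $1$, can be partitioned into $\lfloor r\rfloor+1$ equal pieces, each of length $<1/r$. This gives at most $\lfloor r\rfloor+3\le\lfloor r\rfloor+4$ seed requests. We would state the count as $\lfloor r\rfloor+4$ to keep one unit of slack. Because the allowed subdivisions are ours to choose, we fix the coarsest admissible equal partition, and the count then does not depend on the algorithm.

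\emph{Number of steps.} We show $\min\{F(q),T(q)\}-q\ge m_0$ on $[0,k]$. For $T$, we have $T(q)-q=(q^2-kq+1)/(k+1-q)$. The numerator is at least $1-k^2/4$ and the denominator is at most $k+1$. For $F$, we have $F(q)-q=(rq^2-Aq+c)/(A-rq)$. The numerator is at least $-\Delta/(4r)$ by completing the square, and the denominator lies in $(0,A]$. Hence $F(q)-q\ge -\Delta/(4rA)$. So each ideal step raises $q$ by at least $m_0$, and $q$ exceeds $k$ within $N=\lfloor k/m_0\rfloor+1$ steps. Perturbed states terminate no later, since the construction stops as soon as $q>k$.

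\emph{Per-step count.} The accepting branch uses the reserve $Z$, the crossing piece, and a partition of an interval of length about $z=(c-1)x-w\le(c-1)x$ into pieces shorter than $x/r$. That is at most $\lfloor r(c-1)\rfloor+1$ pieces, so the branch uses at most $\lfloor r(c-1)\rfloor+3$ requests. The rejecting branch uses $Z$, $L$, $X$, and a fill of length $x'-\delta/2$ with pieces shorter than $x'/r$. That is at most $\lfloor r\rfloor+1$ pieces, so the branch uses at most $\lfloor r\rfloor+4$ requests. The terminal tail adds one request, and it can be absorbed into the final step's slack or into the seed slack.

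\emph{Main obstacle.} The delicate point is making the piece counts exact with strict inequalities. In the accepting branch the length is $z+\zeta$, slightly more than $z$, so dividing it into $\lfloor r(c-1)\rfloor+1$ equal pieces might produce pieces that are not strictly shorter than $x/r$. The case where $r(c-1)$ is an integer is the worst. I would handle this with the choice of small $\delta_0$, together with $w>0$. Then $z\le(c-1)x-w$ sits strictly below $(c-1)x$ by a margin of order $w\ge\delta$-scale terms, which absorbs $\zeta$ and keeps the pieces strictly shorter than $x/r$. The rejecting fill is handled the same way via $-\delta/2$. Summing the two phases gives $\lfloor r\rfloor+4+NQ$.
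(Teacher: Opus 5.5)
Your proposal is correct and follows the paper's proof essentially step for step. The same completing-the-square estimates give the progress constant $m_0$ and the depth $N$, and the same counts ($\lfloor r\rfloor+3$ for the seed, $\lfloor r(c-1)\rfloor+3$ or $\lfloor r\rfloor+4$ per transition, and one terminal tail) sum to $\lfloor r\rfloor+4+NQ$.

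Your ``main obstacle'' does not actually arise. In the accepting branch the crossing piece is prescribed separately, so the interior partition covers only a segment of length $z-2\zeta$, not $z+\zeta$. Moreover, splitting any segment of length $L$ into $\lfloor L/u\rfloor+1$ equal pieces always gives pieces strictly shorter than $u$. Hence neither the integrality case nor the margin $w\ge\delta$ is needed.
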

\begin{proof}
On $[0,k]$, completing the square gives
$rq^2-Aq+c\ge-\Delta/(4r)$, while $0<A-rq\le A$.
Thus $F(q)-q\ge-\Delta/(4rA)$.
Likewise $q^2-kq+1\ge1-k^2/4$ and
$0<k+1-q\le k+1$, so $T(q)-q\ge(1-k^2/4)/(k+1)$.
Every ideal branch therefore terminates in at most $N$ transitions.
The common finite-tree perturbation in the proof preserves this
depth; a perturbed branch reaching $q>k$ sooner terminates sooner.

Partition a segment of length $L$ into $\lfloor L/u\rfloor+1$
equal pieces whenever every piece must be strictly shorter than $u$.
The seed uses the anchor, at most $\lfloor r\rfloor+1$ interior
pieces, and one prescribed crossing piece. On an accepting-reserve
transition, $z/x=c-1-q\le c-1$: the reserve, at most
$\lfloor r(c-1)\rfloor+1$ interior pieces, and the crossing
piece use at most $\lfloor r(c-1)\rfloor+3$ requests.
On a rejecting-reserve transition, the reserve, positive repair and
negative reset are three requests. The fill has length less than
$x'$, with pieces shorter than $x'/r$, so at most
$\lfloor r\rfloor+1$ pieces suffice. Finally one terminal tail
is enough. Adding these counts gives the displayed bound. Early
violations or voluntary deletion only shorten a branch.
\end{proof}

\section{Source-first budget bounds with two reserve lengths}
\label{r11:app:continuous-source-first}\label{r10:app:source-first}

Put $c=d+\gamma$, $k=d+\gamma r$, and $\beta=4\varphi+3$.
We use two constructions, distinguished by the length of a rejected right
reserve. Their limiting packing values are
\begin{align}
 F_{\rm s}(\gamma,r)&=2V_{\rm s}
 +(1+V_{\rm s})\left(\frac1k+\frac1{k-2}\right),
 &V_{\rm s}&=\frac1{c-2},\notag\\
 F_{\rm l}(\gamma,r)&=V_{\rm l}\left(1+\frac1{c-2}\right)
 +(1+V_{\rm l})\left(\frac1k+\frac1{k-2}\right),
 &V_{\rm l}&=\frac1{k-2}.
 \label{r11:eq:implicit-map}
\end{align}
The short-reserve construction allows $0\le\gamma\le1/20$;
the long-reserve construction allows $0\le\gamma\le1/100$.
Let $D_{\rm s}(\gamma)$ and $D_{\rm l}(\gamma)$ solve
\begin{equation}
 D_j(\gamma)=F_j(\gamma,D_j(\gamma)),\qquad j\in\{{\rm s},{\rm l}\}.
 \label{r11:eq:continuous-budget}
\end{equation}

\begin{theorem}[Necessary robustness budgets]
\label{r11:thm:continuous-source-first}
The roots above are unique in $[6,2\varphi^3]$ and $[8,\beta]$,
respectively. Each is continuous and strictly decreasing on its stated
domain, with $D_{\rm s}(0)=2\varphi^3$ and $D_{\rm l}(0)=\beta$.
In the model of Section~\ref{sec:model}, every deterministic algorithm
with finite robustness satisfies
\[
 C(A)\le d+\gamma\quad\Longrightarrow\quad R(A)\ge D_j(\gamma)
\]
on the corresponding domain. Every perfect terminal used below has a
unique full-history optimum equal to the complete positive family.
\end{theorem}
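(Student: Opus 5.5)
The proof has two parts: the analytic claims about the implicit maps~\eqref{r11:eq:implicit-map}, and the adversarial construction. For the analysis, I would first check that $D_j(0)$ takes the stated values. At $\gamma=0$ we have $c=k=d$, $c-2=d-2=\alpha$, so $1/(c-2)=\varphi$. Also $1/k=\alpha^2$, $1/(k-2)=\varphi$, and $1/k+1/(k-2)=\alpha^2+\varphi=2$ (since $\alpha^2=2-\varphi$). This gives $F_{\rm s}=2\varphi+(1+\varphi)\cdot 2=2\varphi^3$ and $F_{\rm l}=\varphi\cdot\varphi^2+\varphi^2\cdot 2=\varphi^3+2\varphi^2=4\varphi+3=\beta$.

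For uniqueness I would use $r\mapsto r-F_j(\gamma,r)$. Since $k=d+\gamma r$ increases in $r$, every term of $F_j$ decreases in $r$, so $r-F_j$ is strictly increasing. Evaluating its sign at the interval endpoints ($6$ and $2\varphi^3$, respectively $8$ and $\beta$) for all $\gamma$ in the domain gives a unique root. These endpoint checks are routine monotone bounds, using $k-2\ge\alpha$ and $\gamma\le1/20$ or $1/100$. Since $F_j$ is also decreasing in $\gamma$ (both $c$ and $k$ increase), the implicit-function argument for a strictly increasing function gives continuity and strict decrease of $D_j$.

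For the lower bound I would follow the four-step forcing scheme sketched after Theorem~\ref{r10:thm:source-first}, taking the long reserve first. Fix $r<D_{\rm l}(\gamma)$ and suppose $R(A)\le r$ and $C(A)\le c$. The steps are:
\begin{enumerate}
\item Offer a negative anchor $K$ of length near $V_{\rm l}$ (with $K$ rescaled so the unit is the later positive), together with tiny positive tiles of length below $1/r\cdot$(current OPT). As in Theorem~\ref{thm:lower}, robustness forces retention of $K$ and rejection of every tile.
\item Offer a negative right reserve $R_0$ of length governed by $c-2$. If it is accepted, complete with a positive tail as in the binary-reserve argument of Appendix~\ref{r7:app:binary-reserve}; this perfect terminal gives ratio above $c$, so $R_0$ must be rejected.
\item Offer a positive unit $J_+$ crossing $K$. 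If $A$ keeps $K$, add a remote negative anchor of length $h=(r-1)a+\varepsilon$, with boundary requests certifying full-history optimality of the positives. The remote anchor costs only $\gamma h$ of consistency slack, which is where $k=d+\gamma r$ enters. So $A$ switches to $J_+$.
\item Offer $I$ and $L$ of lengths $1/k$ and $1/(k-2)$ flanking $J_+$. Analogous completions, using the $d$-consistency threshold $k$ after the slack, force their rejection.
\end{enumerate}
At the end the history holds a compatible packing with weight converging to $F_{\rm l}(\gamma,r)>r$ while only $J_+$, of length one, survives. This contradicts $R(A)\le r$. The short-reserve variant has the same structure with reserve length $V_{\rm s}$ and yields $F_{\rm s}$.

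Each forcing step needs its own certificates. I would use the exchange certificate~\eqref{r7:eq:exchange} for every terminal. As in Appendix~\ref{r7:app:binary-reserve}, a single small $\delta_0>0$ is chosen for the finite response tree, so that all strict surpluses survive the perturbation.

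The main obstacle is step~3. Both responses to $J_+$ must lead to perfect terminals whose complete positive family is the unique full-history optimum. At the same time the remote anchor must neither create a heavier mixed packing nor exceed the consistency slack. Balancing $h$ against $\gamma h$ is exactly what restricts the domains to $\gamma\le1/20$ and $\gamma\le1/100$. I would first verify the certificates in the ideal limit $\delta_0=0$ with margins tracked symbolically. I would then check strictness numerically at the domain endpoints before the continuity argument.
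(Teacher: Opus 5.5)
Your treatment of the implicit maps matches the paper: the values at $\gamma=0$, the monotonicity of $r-F_j$, the endpoint signs and the strict decrease in $\gamma$. So does the forcing order $K$, $R_0$, $J_+$, then $I,L$. The gap is that the forcing steps, as you parametrize them, do not produce a packing of weight $F_j(\gamma,r)$.

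The lengths of $I$ and $L$ should not be $1/k$ and $1/(k-2)$. The completion that punishes accepting $I$ compares the held weight $w(I)$ with the positive family $\mathcal{A}=\{J_+\}\cup\mathcal{A}_0$, of weight $P=1+V$, all of which is lost once $I$ replaces $J_+$. After the remote anchor of length $h=(r-1)w(I)+\varepsilon$ consumes its slack $\gamma h$, the requirement becomes $P>k\,w(I)$.

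For $L$, the punishing completion fills the part of $L$ left of $J_+$ with short positive tiles. It then adds a positive $B$ of length $w(L)$ beyond the left end of $L$, so the requirement is $P+2w(L)>k\,w(L)$. Hence $w(I)\approx P/k$ and $w(L)\approx P/(k-2)$; the paper takes $t=p/k-\varepsilon$ and $\ell=(p-4\varepsilon)/(k-2)$ with $p=P-6\varepsilon$. With your lengths, the final packing $L,I,\mathcal{A}_0,R_0$ tends to $V+z+1/k+1/(k-2)$. At $\gamma=0$ in the long case this is $2\varphi+3$ rather than $\beta=4\varphi+3$, below even the binary-reserve endpoint $\varphi^4$.

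Second, the short variant does not have ``the same structure'' at the $J_+$ step. What makes rejecting $J_+$ fatal is an additional positive $T=[p,p+v)$, which raises the missed mass to about $1+2v$ against held weight at most $v$. Your step 3 never mentions $T$.

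In the long case $T$ cannot pay $R_0$, because $z>v$. This is why the remote anchor is needed there, and its slack turns the condition into $1+2v>kv$, i.e.\ $v<1/(k-2)=V_{\rm l}$. In the short case $z=v-\varepsilon$ is paid by $T$ alone, and the adversary stops immediately after $T$. The condition is then the slack-free $1+2v>cv$, which is exactly what permits $v\approx V_{\rm s}=1/(c-2)$. If you inserted the remote anchor there as well, you would need $v<1/(k-2)<V_{\rm s}$ for $\gamma>0$, and the short bound would fail.

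Finally, the domains $\gamma\le1/20$ and $\gamma\le1/100$ do not come from balancing $h$ against $\gamma h$; that slack is absorbed exactly into $k$ for every $\gamma$. The domains are what keep the roots above $6$ and $8$. Those values of $r$ are needed so that $h$ is large enough for $\alpha h>z+2\zeta$, which places each remote anchor beyond all old requests. They also yield the numerical bounds $t>4/5$, $\ell>2$ and, in the long case, $v<z<d$, which the certificates use.
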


For fixed $\gamma$, each $F_j$ is nonincreasing in $r$, so $F_j-r$
is strictly decreasing. Each $F_j$ strictly decreases with $\gamma$
at fixed positive $r$. At the upper root endpoints its value is at most
that endpoint, with equality when $\gamma=0$. For the short construction,
$13/5<d<21/8$ gives
\[
 F_{\rm s}(\gamma,6)>
 \frac{80}{27}+\frac{67}{27}
 \left(\frac{40}{117}+\frac{40}{37}\right)>6.
\]
For the long construction, $d<131/50$ gives $k<27/10$, $c<263/100$
at $r=8$, and hence
\[
 F_{\rm l}(\gamma,8)>
 \frac{10}{7}\left(1+\frac{100}{63}\right)
 +\frac{17}{7}\left(\frac{10}{27}+\frac{10}{7}\right)
 =\frac{10670}{1323}>8.
\]
Continuity proves existence and uniqueness. The strict parameter decrease
and uniqueness on a compact interval prove the root properties.

\subsection{A finite completion}

For a feasible positive family $\mathcal P$, let
$\Gamma_{\mathcal P}(\mathcal S)$ be the positive identities conflicting
with a negative family $\mathcal S$. The condition
\begin{equation}
 w(\Gamma_{\mathcal P}(\mathcal S))>w(\mathcal S)
 \quad\text{for every nonempty feasible negative family }\mathcal S
 \label{r11:eq:strict-exchange}
\end{equation}
makes $\mathcal P$ the unique full-history optimum: a packing using
$\mathcal S$ has weight at most
$w(\mathcal P)-w(\Gamma_{\mathcal P}(\mathcal S))+w(\mathcal S)$,
and a proper subset of $\mathcal P$ also has smaller weight.

\begin{lemma}[Completion with finite consistency slack]
\label{r11:lem:quantitative-completion}
Let $c=d+\gamma$, $\gamma\ge0$, and $r>1$. Suppose the complete positive
family $\mathcal P$ is feasible, ends at or before $p$, and has weight
$P_0$. The negatives are $\mathcal N$ and an unavailable
$R_0=[p,p+z)$, $z>0$. Suppose~\eqref{r11:eq:strict-exchange} holds for
every nonempty feasible subset of $\mathcal N$. The held weight is at
most $a>0$, and a historical feasible family has weight at least $a$.
If finite $h,\zeta>0$ satisfy
\begin{align}
 h&>(r-1)a,&h&>2\zeta,&\alpha h&>z+2\zeta,\notag\\
 p+\alpha h-\zeta&>\text{every old right endpoint},&&&
 P_0-ca-\gamma h-2\zeta&>0,
 \label{r11:eq:completion-conditions}
\end{align}
there is a finite continuation causing an $r$-robustness violation or a
$c$-consistency violation at a unique-perfect terminal.
\end{lemma}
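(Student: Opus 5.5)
The plan is to run the argument of Theorem~\ref{thm:lower} a second time, scaled by $h$ and placed to the right of $p$. The old configuration is carried along as an additive term of weight $P_0$ in $\mathrm{OPT}$ and at most $a$ in $\mathrm{ALG}$; the margin $P_0-ca-\gamma h-2\zeta>0$ pays for the consistency slack $\gamma h$ that the scaled copy gives up. Set $x_0=p+\alpha h-\zeta$. By \eqref{r11:eq:completion-conditions}, $x_0$ exceeds every old right endpoint.

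\emph{Step 1: force the anchor.} Present the negative anchor $H=[x_0,x_0+h)$. It is compatible with every historical request, so a historical feasible family of weight at least $a$ extends to one of weight at least $a+h>ra$. If the algorithm does not hold $H$ after its arrival, its value is at most $a$ and stopping there gives an $r$-robustness violation. So we may assume $H$ is held, and the held value is at most $a+h$.

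\emph{Step 2: reject the tiles.} Present positive tiles partitioning $[x_0+\zeta,x_0+h-\zeta)$, each shorter than a bound $\ell$ that is small compared with $h/r$. At the first abandonment of $H$, only the arriving tile can be added to whatever old holdings remain. The historical optimum still has weight at least $a+h$, and the algorithm's value is at most $a+\ell$. If this step were deferred, $h>(r-1)a$ together with $\ell$ small would give a robustness violation. Thus every tile is rejected and $H$ is retained.

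\emph{Step 3: the two terminal branches.} Present positive $L=[x_0+\zeta-\alpha h,x_0+\zeta)$. It overlaps $H$ by $\zeta$. It starts at $p+2\zeta-\dots\le p$, so it conflicts with $R_0$, and it is compatible with $\mathcal P$ because $\mathcal P$ ends at or before $p$; the inequalities $\alpha h>z+2\zeta$ and $h>2\zeta$ make this placement legitimate. If $L$ is accepted, stop. Otherwise present positive $U=[x_0+h-\zeta,x_0+2h-\zeta)$ and stop. In each branch the final held value is at most $a$ plus whatever $L$, $H$ or $U$ contributes; each of these conflicts with the tiles and with its neighbours, so the contribution is at most $\alpha h$ after accepting $L$ and at most $h$ after rejecting $L$. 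The complete positive weight is $P_0+\alpha h+(h-2\zeta)$ in the first branch and $P_0+\alpha h+(h-2\zeta)+h$ in the second. Using $\alpha h+h=\varphi h$ and $c\varphi h\le(\varphi+\gamma)\alpha h\cdot\varphi$, each ratio exceeds $c$ exactly when $P_0-ca-\gamma h-2\zeta>0$ (for the first branch, even $\gamma\alpha h$ suffices). Hence both branches are $c$-consistency violations. Determinism fixes the branch on the common prefix, so the appropriate terminal can be chosen without changing any label.

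\emph{Step 4: unique full-history optimality.} At each terminal, verify \eqref{r11:eq:strict-exchange} for every nonempty feasible negative subfamily $\mathcal S$ by splitting $\mathcal S$ into three parts: its old members, $R_0$, and $H$. The old members are certified inside $\mathcal P$ by hypothesis. $R_0$ is certified by $L$ alone, since $w(L)=\alpha h>z$. $H$ is certified by the tiles together with $L$, and also $U$ in the second branch; their weight is $h-2\zeta+\alpha h>h$. These certifying groups are disjoint. $R_0$ and $H$ can appear together in a feasible $\mathcal S$, and then both are hit by $L$, which must be counted only once. For that joint case, use tiles${}+L\,(+U)$ of weight $h+\alpha h-2\zeta>h+z$, which again relies on $\alpha h>z+2\zeta$.

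I expect Step 4 to be the main obstacle, because $L$ serves both as the certificate for $R_0$ and as part of the certificate for $H$, and the joint subset $\{R_0,H\}$ has to be handled explicitly. A secondary difficulty is to keep every boundary overlap strictly positive while $L$ stays clear of $\mathcal P$. This is where $p+\alpha h-\zeta$ exceeding the old endpoints is used, and it is also the point at which I would adjust the offset of $L$ if the endpoint inequalities turn out to be off by one $\zeta$.
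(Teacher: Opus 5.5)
Your proposal is correct and is essentially the paper's proof: your $L=[p,p+\alpha h)$ and $U$ are exactly the paper's $U$ and $V_1$, with the same anchor, tiles, branch gain bounds $a+\alpha h$ and $a+h$, and the same joint $\{R_0,H\}$ certificate of weight $\varphi h-2\zeta>h+z$. A few slips are cosmetic and should be fixed:
\begin{itemize}
\item $L$ starts exactly at $p$, not merely at or before $p$, and this is what makes it compatible with $\mathcal P$.
\item The tiles need length below $(a+h)/r-a$.
\item The consistency identities are $c\alpha h=\varphi h+\gamma\alpha h$ and $ch=dh+\gamma h$.
\end{itemize}
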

\begin{proof}
Set $b=p+\alpha h-\zeta$ and offer negative $H=[b,b+h)$, disjoint
from all old requests. Omitting it leaves gain at most
$a<(a+h)/r$, so it must be retained. Offer a finite positive partition
$\mathcal F$ of $[b+\zeta,b+h-\zeta)$, each tile shorter than
$(a+h)/r-a$. A first departure from $H$ leaves at most $a$ plus the
incoming tile, so robustness forces every tile to be missed.

Offer positive $U=[p,p+\alpha h)$. If retained, stop. The gain is at
most $a+\alpha h$, while the complete positives weigh
$P_0+\varphi h-2\zeta$. Their consistency excess is
$P_0-ca-\alpha\gamma h-2\zeta>0$. If $U$ is rejected and $H$
discarded, stop for a robustness violation. Otherwise offer positive
$V_1=[b+h-\zeta,b+2h-\zeta)$ and stop. Every response has gain at
most $a+h$, whereas the positives weigh $P_0+dh-2\zeta$, with excess
$P_0-ca-\gamma h-2\zeta>0$.

The old positive group strictly pays every feasible subset of
$\mathcal N$. The disjoint new group $\{U\}\cup\mathcal F$ strictly
pays $R_0$, since $\alpha h>z$; it pays $H$ or $\{R_0,H\}$, since
its weight $\varphi h-2\zeta$ exceeds $h+z$. This covers every mixed
negative family without double counting. Adding $V_1$ preserves the
certificate. All partitions are finite, no old request is recalled, and
voluntary deletion only reduces the gain bounds.
\end{proof}

\subsection{One construction with short or long reserves}
Figure~\ref{fig:source-first-geometry} shows the surviving long-reserve branch
and the compatible packing used at its terminal state.

\begin{proof}[Proof of Theorem~\ref{r11:thm:continuous-source-first}]
Choose a construction $j$ in its stated domain and suppose
$C(A)\le c$ and $R(A)<D_j(\gamma)$. Choose $r>R(A)$ strictly below
the root, with $r\ge6$ for $j={\rm s}$ or $r\ge8$ for $j={\rm l}$.
Then $F_j(\gamma,r)>r$. Write $V=V_j$, $P=1+V$, and set
\begin{align}
 \Xi_{\rm s}&=\frac{10}{k-2}+\frac6k+12,&
 \Xi_{\rm l}&=\frac{10}{k-2}+\frac6k+\frac4{c-2}+17,\notag\\
 0<\varepsilon&<\min\left\{\frac1{10000},
 \frac{F_j(\gamma,r)-r}{\Xi_j}\right\}.&&
 \label{r31:eq:finite-slack}
\end{align}
Use common parameters
\begin{align}
 v&=V-4\varepsilon,&p&=1+v-2\varepsilon=P-6\varepsilon,\notag\\
 t&=p/k-\varepsilon,&\ell&=(p-4\varepsilon)/(k-2),&s&=\varepsilon-\ell,
 \label{r11:eq:finite-parameters}
\end{align}
and reserve length
\[
 z=\begin{cases}v-\varepsilon,&j={\rm s},\\
 v/(c-2)-10\varepsilon,&j={\rm l}.
 \end{cases}
\]
Both cases satisfy $v>1$, $p>2$, $4/5<t<1-2\varepsilon$,
$\ell>2$, and $\varepsilon<v/(4r)$. In the short case $z<v<\varphi$;
in the long case $v<z<d$. To check the uniform bounds, in the short case
$V>40/27$, $p>67/27-6/10000$ and $k<61/20$ imply $t>4/5$;
in the long case $V>25/18$, $p>2.38$ and $k<68/25$ do so.
In both cases $p\le d-6\varepsilon$, so
$t\le1-(6/k+1)\varepsilon$ and
$p-v-t\ge(6/k-1)\varepsilon>0$.

\paragraph{Missed sources and a rejected right reserve.}
Offer negative $K=[1-\varepsilon,1-\varepsilon+v)$ and a positive
partition $\mathcal A_0$ of $[1,p)$, each tile shorter than $v/r$.
Finite robustness first forces $K$ to be retained; at a first departure
from it, only the incoming tile could remain. Historical $K$ would then
give a ratio exceeding $r$. Thus every tile is missed and $K$ remains alone.

Offer negative $R_0=[p,p+z)$, overlapping $K$ by $\varepsilon$.
If retained, offer positive $S=[1-\varepsilon,1)$, then a positive
partition $\mathcal F$ of $[p,p+z-\varepsilon)$ whose first tile is
$[p,p+2\varepsilon)$ and all tiles are shorter than $v/(2r)$.
Discarding $R_0$ on $S$ leaves at most $\varepsilon<v/r$; a first
departure on a tile leaves at most $\varepsilon$ plus that tile, also
below $v/r$. Hence $R_0$ is retained throughout. Offer positive
$T_R=[p+z-\varepsilon,p+2z-\varepsilon)$ and stop.
Every response has gain at most $z+\varepsilon$, and the complete
positive weight is $v+2z-2\varepsilon$. Its excess over $c$ times the
gain bound is
\[
 \begin{cases}
 (3-c)v-4\varepsilon>0,&j={\rm s},\\
 (9c-22)\varepsilon>0,&j={\rm l}.
 \end{cases}
\]
The positives are compatible. The group $S,\mathcal A_0$ and the first
new tile conflicts with $K$ and weighs $v+\varepsilon>v$.
The group $\mathcal F,T_R$ conflicts with $R_0$ and weighs
$2z-\varepsilon>z$. Since $K,R_0$ conflict, these checks cover every
feasible negative family. The terminal is unique-perfect, contradicting
consistency. Consequently $R_0$ is rejected and $K$ retained. The
hypothetical positive continuation is absent from this rejection history.

\paragraph{Forcing a positive switch.}
Offer positive $J=[0,1)$ and write
$\mathcal A=\{J\}\cup\mathcal A_0$, of weight $p$.
Suppose $J$ is rejected. Offer positive $T=[p,p+v)$ and put $E=p+v$.
The only possible local holders are $K$ and $T$, which conflict and
each weigh $v$. For the short reserve, stop: $T$ strictly pays $R_0$
because $v>z$, and $\mathcal A$ pays $K$. The positive set tiles
$[0,E)$, is uniquely optimal, and has consistency excess
\[
 E-cv=\bigl(4(c-2)-2\bigr)\varepsilon>0.
\]

For the long reserve, $T$ alone does not pay $R_0$. Put
$\chi=2\alpha-1>0$, $\zeta=\chi\varepsilon/4$ and
$h=(r-1)v+\varepsilon$. With $b=E+\alpha h-\zeta$, offer negative
$H=[b,b+h)$, a positive partition $\mathcal F_H$ of
$[b+\zeta,b+h-\zeta)$ with tiles shorter than $(v+h)/r-v$, and then
positive $U=[E,E+\alpha h)$. Here $\alpha h>7\alpha>d+2\zeta>z+2\zeta$,
so $H$ is beyond all old requests. Historical $\{K,H\}$ forces $H$
to be retained while its tiles are missed. If $U$ is retained, stop;
otherwise $H$ must remain, and we offer
$V_1=[b+h-\zeta,b+2h-\zeta)$ and stop. Gains are at most
$v+\alpha h$ and $v+h$, respectively. The second consistency excess,
which is no larger than the first, is
\[
 E-cv-\gamma h-2\zeta
 =\bigl(4(k-2)-2-\gamma-\chi/2\bigr)\varepsilon>0,
\]
since its coefficient is at least $3\chi/2-1/100>0$.

For full-history optimality, the negative families are subsets of
$\{K,H\}$ or $\{R_0,H\}$. The following payment groups have positive
surplus:
\begin{center}
\begin{tabular}{lll}\toprule
Negative family & Conflicting positive group & Surplus\\\midrule
$K$ & $\mathcal A$ & $p-v$\\
$R_0$ & $T,U$ & $v+\alpha h-z$\\
$H$ & $U,\mathcal F_H$ & $\alpha h-2\zeta$\\
$R_0,H$ & $T,U,\mathcal F_H$ & $v+\alpha h-z-2\zeta$\\\bottomrule
\end{tabular}
\end{center}
The groups for $K$ and $H$ are disjoint, covering the remaining joint
family. The shared $U$ is counted only once for $R_0,H$. Adding $V_1$
preserves all certificates. Thus both terminals are unique-perfect.
In either reserve construction, $J$ must be accepted and retained alone.

\paragraph{Forcing rejection of the interior and left negatives.}
Offer negative $I=[\varepsilon,\varepsilon+t)$, disjoint from $K,R_0$
and conflicting with $J$. If accepted, apply
Lemma~\ref{r11:lem:quantitative-completion} with positives $\mathcal A$,
held bound $a=t$, local negatives $\{K,I\}$, boundary reserve $R_0$
and historical witness $I$. The positive $J$ strictly pays $I$ because
$t<1$, and $\mathcal A$ pays $K$ and $\{K,I\}$ because $p-v-t>0$.
Take $h=(r-1)t+\varepsilon$ and $\zeta=\varepsilon/4$.
In the short case $\alpha h>(3/5)5(4/5)=12/5>z+2\zeta$;
in the long case $\alpha h>(3/5)7(4/5)=84/25>z+2\zeta$.
The anchor therefore lies beyond all old endpoints, and the deficit is
\[
 p-ct-\gamma h-2\zeta=(k-\gamma-1/2)\varepsilon>0.
\]
The lemma rules out acceptance of $I$.

Now offer negative $L=[s,\varepsilon)$, touching $I$, disjoint from
$K,R_0$, and conflicting with $J$. If accepted, set $\eta=\varepsilon/4$,
offer a positive partition of $[s+\eta,0)$ with tiles shorter than
$\ell/r$, and then positive $B=[s+\eta-\ell,s+\eta)$.
Historical $L$ forces the tiles to be missed. Their union with $B$ is a
feasible positive family $\mathcal Z$ of weight
$2\ell-\varepsilon-\eta>\ell$, all conflicting with $L$.
Every response to $B$ leaves gain at most $\ell$.

Apply the lemma with positives $\mathcal A\cup\mathcal Z$, $a=\ell$,
local negatives $\{K,I,L\}$, boundary reserve $R_0$ and witness $L$.
For $\{I\}$, the conflicting group is $\Gamma(I)=\{J\}$ and
$w(J)=1>t$. For $\{K\}$ and $\{K,I\}$, $\mathcal A$ pays with
surplus at least $p-v-t>0$. The disjoint group $\mathcal Z$ pays $L$.
Combining these certificates covers $\{I,L\}$ and every other nonempty
subset of the three compatible local negatives without double counting. Take
\begin{equation}
 h=(r-1)\ell+\varepsilon,\qquad \zeta=\varepsilon/4.
 \label{r11:eq:finite-completion-scale}
\end{equation}
Since $\ell>2$, the geometric requirements are stronger than those just
checked for $I$. The deficit is
\[
 p+2\ell-\varepsilon-\eta-c\ell-\gamma h-2\zeta
 =(9/4-\gamma)\varepsilon>0.
\]
The lemma rules out acceptance of $L$, including the joint $R_0,H$
payment without reusing a boundary positive.

\paragraph{The actual robustness terminal.}
Stop after $L$ is rejected. The algorithm holds only $J$, of value one.
The actual requests $L,I,\mathcal A_0,R_0$ form a compatible packing of
weight
\[
 \ell+t+(p-1)+z=F_j(\gamma,r)-\Xi_j\varepsilon>r.
\]
This packing is uniquely optimal. Without $K$, choosing $L,I$ beats
$J$ and permits every member of $\mathcal A_0$ and $R_0$; a packing
containing $K$ has weight at most $\ell+t+v$, which is smaller because
$(p-1)+z>v$. This violates robustness. All partitions are finite; each
branch retains its original labels and complete negative history.
Voluntary deletion either reduces the displayed gain or violates robustness
at an earlier prefix. Determinism fixes which finite branch is taken.
\end{proof}

\begin{figure}[t]
\centering\includegraphics[width=\textwidth]{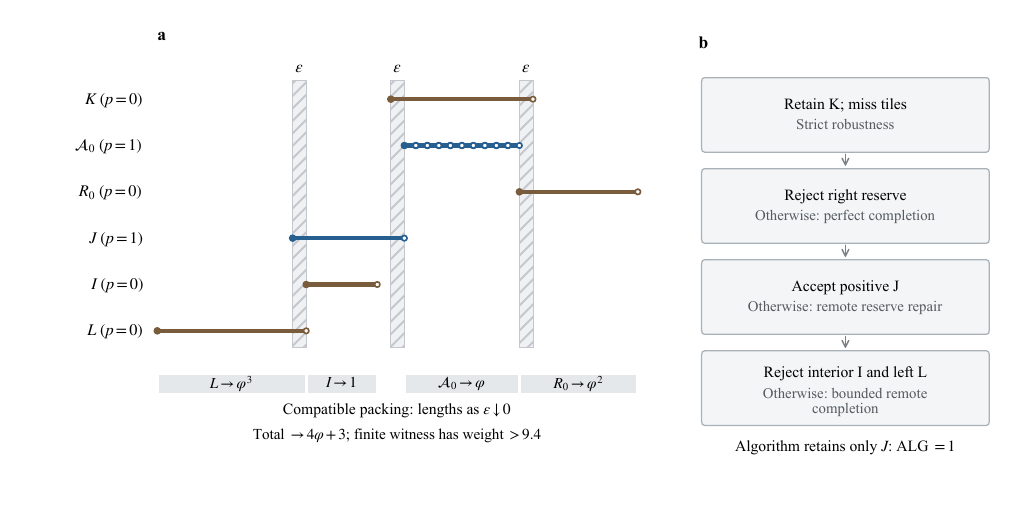}
\caption{Surviving branch with a long right reserve, at $\gamma=0$,
$r=9.4$ and $\varepsilon=1/20000$. Endpoint gaps are exaggerated:
the schematic is not to scale, and hatching marks positive-length overlaps.
Positive tiles arrive before the reserve. The forced responses retain $K$,
reject $R_0$, switch to $J$, and reject $I,L$. The actual packing
$L,I,\mathcal A_0,R_0$ approaches $4\varphi+3$, while the algorithm
retains value one. If $J$ is rejected, the additional positive $T$ and
remote completion pay the entire longer reserve; this branch is separate
from the displayed history. Packing labels give the limiting lengths
as $\varepsilon\downarrow0$, not the finite instance's exact lengths.}
\label{fig:source-first-geometry}
\end{figure}

\subsection{Consequences for the consistency infimum}

For $j\in\{{\rm s},{\rm l}\}$ put $\gamma_{\rm s}=1/20$,
$\gamma_{\rm l}=1/100$, and define
\[
 H_j(B)=\begin{cases}
 d+\gamma_j,&B\le D_j(\gamma_j),\\
 d+D_j^{-1}(B),&D_j(\gamma_j)<B<D_j(0),\\
 d,&B\ge D_j(0).
 \end{cases}
\]
Taking parameters approaching equality from below in the theorem gives
\begin{equation}
 C^*(B)\ge\max\{d,G(B),H(B)\},\qquad H(B)=\max\{H_{\rm s}(B),H_{\rm l}(B)\}.
 \label{r11:eq:strict-infimum}
\end{equation}
The $G$ term is used on its domain $B\ge M$. Since $D_{\rm l}$ is
continuous at zero, every $B<\beta$ has $C^*(B)>d$.

An explicit weaker version follows by fixing $r=\beta$ and writing
\[
 F_{\rm l}(\gamma,\beta)=2V+V^2+V/q+1/k+V/k,
 \quad V=(d+\gamma\beta-2)^{-1},\quad q=d+\gamma-2.
\]
Every term is a positive product of reciprocals of increasing affine
functions. Its negative derivative is nonincreasing, and at zero their
sum is
\[
 -\partial_\gamma F_{\rm l}(0,\beta)
 =\beta(7\varphi+8)+\varphi^3=83\varphi+53.
\]
As $D_{\rm l}(\gamma)\le\beta$,
$D_{\rm l}(\gamma)\ge F_{\rm l}(\gamma,\beta)
\ge\beta-(83\varphi+53)\gamma$. Thus, by taking smaller parameters
and then their supremum,
\[
 C^*(B)\ge d+\min\left\{\frac1{100},
 \frac{\beta-B}{83\varphi+53}\right\}>d\qquad(B<\beta).
\]
The short-reserve curve remains useful farther from $d$ (Section~\ref{sec:lower}). Its corresponding explicit bound
$D_{\rm s}(\gamma)\ge2\varphi^3-(32\varphi+20)\gamma$ follows by the
same derivative argument, and is retained by~\eqref{r11:eq:strict-infimum}.

At $\gamma=0$ the long construction permits $v\to\varphi$,
$z\to\varphi^2$, $t\to1$ and $\ell\to\varphi^3$.
The larger reserve is possible because a separate completion pays it
when $J$ is rejected. These four parameter inequalities give the limiting value $\beta$. Further
improvement requires additional compatible missed mass on the continuing
history or different forcing decisions. A larger counterfactual completion
cannot itself supply that mass.

\section{Source measures and shaped replacement costs}\label{app:shaped}

Write $\alpha=1/\varphi$, $d=\varphi^2$, and $M=\varphi^3$.
We prove a common capacity theorem for $a\in[\alpha,1]$;
$a=\alpha$ recovers the interface used by Shaped-BANK, SUM-BANK,
and Credit-BANK. Set
\begin{equation}
 \begin{gathered}
 \beta_a=a^{-1},\quad C_a=(1+a)^2,\quad q_{0,a}=1+\beta_a,\quad Q_a=1+a+\beta_a,\\
 m_a=\max\{(1+a)/a^2,Q_a\},\qquad k_a=am_a=\max\{1+\beta_a,1+a+a^2\},\\
 \theta_a=\frac{a}{1+a},\qquad
 h_{0,a}=\frac{m_a+1}{m_a-\beta_a},\qquad h_{1,a}=1+\beta_a h_{0,a}.
 \label{curve:eq:constants}
 \end{gathered}
\end{equation}
These constants are fixed for an entire execution. The ordinary gate
adds the actual costs of all current conflicts and has priority.
The special gate accepts a positive arrival only if every conflict
is negative and $w\ge aW$, $w\ge ab_D$.
The simultaneous fees are those in~\eqref{eq:fees}, with $\alpha$
replaced by $a$ and the ledger-test condition replaced by $w<ab_D$.
Full inheritance and equality acceptance are unchanged.

\subsection{A common consistency interface}\label{app:consistency-interface}

\begin{theorem}[Parametric feasible-positive capacity]
\label{curve:thm:interface}
Let the fixed finite nonnegative ordinary costs satisfy
\begin{align}
 q_+(v,b)&\ge m_av &&(v>0,\ b\ge0),\notag\\
 q_-(v,b)&\ge q_{0,a}v &&(v>0,\ b\ge0),\label{curve:eq:global}\\
 q_-(v,b)&\ge\max\{q_{0,a}v,v/a+b/(1+a)\}
      &&(v>0,\ 0\le b\le C_av).\label{curve:eq:capacity}
\end{align}
For every finite execution whose complete positive set $P$ is feasible,
each negative held identity satisfies $b(J)\le C_aw(J)$ and each positive
one satisfies $b(J)\le k_aw(J)$ throughout its lifetime. Consequently
$w(P)\le C_a\,\mathrm{ALG}$.
\end{theorem}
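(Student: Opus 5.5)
The plan is the source-measure induction sketched after Theorem~\ref{sg:thm:main}, now with the parameter $a$ in place of $\alpha$. Fix a finite execution with feasible complete positive set $P$. Define the conserved shares $a_{X,n}(I)$ and the measures $\mu_{X,n}$ as in~\eqref{sg:eq:measure-main}. Acceptance passes each revoked child's shares to the new holder. A rejected positive's fee $g_J$ moves the fraction $g_J/w(I)$ of source $I$ to $J$. Conservation (Lemma~\ref{lem:conservation}) gives $\mu_{X,n}$ total mass $b_n(X)$, and feasibility of $P$ gives density at most one. So the mass of $\mu_{X,n}$ inside $X$ is at most $w(X)$, and all remaining ledger is exterior mass $E_L+E_R$. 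Once both capacity bounds are shown, $w(P)=\sum_X b(X)\le C_a\,\mathrm{ALG}$ follows immediately, because $k_a=am_a\le C_a$ (check: $1+1/a\le(1+a)^2$ and $1+a+a^2\le(1+a)^2$ on $[\alpha,1]$).

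I will prove a parametric version of Lemma~\ref{sg:lem:main-invariant}, by induction on accepted birth order, establishing every live snapshot of earlier identities before using them at a parent's birth. The claims are as follows.
\begin{itemize}
\item For a negative $V$ of length $v$: $E_{S,\nu_V}(V)\le av$ at birth, $E_{S,n}(V)\le\max\{av,ab_n(V)/(1+a)\}$ during its lifetime, $b_n(V)\le C_av$, and the directional cut $L_{\nu_V,V}(s)+z-s\le av$.
\item For a positive $Q$ of length $u$: $b_n(Q)\le k_au$ and the cut $L_{n,Q}(s)+z-s\le k_au$.
\end{itemize}
At a negative birth, the exterior bound is paid by the children's cuts together with the ordinary-gate inequality. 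Hypothesis~\eqref{curve:eq:capacity} is exactly what converts the total child cost $\sum q_-(\cdot)+\sum q_+(\cdot)\le v$ into $a$-weighted exterior length, with $m_a$ paying positive children. During the lifetime, each side receives at most one new crossing source. Its fee is below $av$ after length failure ($w<aW$). After ledger failure ($w<ab_{\rm pre}$) the fee is at most $ab_{\rm pre}$, which gives $g\le\frac{a}{1+a}(b_{\rm pre}+g)$, that is, $\theta_a$ times the post ledger. The capacity count then repeats the golden case with $a$ in place of $\alpha$. Two occupied birth sides give $b\le(1+2a)v$. One occupied side gives a fee of at most $a(1+a)v$ on the other, so $b\le v+av+a(1+a)v=C_av$. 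No occupied side reduces to the one-side case. A source crossing both endpoints pays once and blocks later crossings.

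At a positive birth, I will bound the children. Ordinary acceptance gives ledger at most $u$ and negative child length at most $u/q_{0,a}$. Special acceptance gives $B_Q\le u/a$ and $W_Q\le u/a$. Compatibility of $P$ excludes later positive fees, so $b(Q)\le u+B_Q\le u(1+1/a)\le k_au$. For the positive cut I follow the three cases (i)–(iii) of the sketch, using the constants $h_{0,a},h_{1,a}$ to bound how far left a share can reach along shrinking replacement paths. Case (iii), a special $Q$ whose leftmost negative child alone carries shares, is bounded by $v+E_L(V)\le\max\{(1+a)v,v+ab(V)/(1+a)\}\le k_au$. Both gates are needed there.

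I expect the main obstacle to be the multichild positive cut, case (ii). There the bound depends on whether $1+1/a$ or $1+a+a^2$ is the larger term in $k_a$; the branch changes at $a^3+a^2=1$. I will first try verifying both branches with a single gap inequality $G<a\,w(\text{child})$ combined with protrusion. If that fails near $a_c$, I will split the argument at $a_c$ and check each branch's polynomial inequality separately on its subinterval.
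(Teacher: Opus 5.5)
Your outline is the paper's proof: the same conserved source measures, the same birth-order induction on birth and lifetime exterior bounds, birth cuts and capacities, and the same three positive-cut cases. One step, however, fails as written.

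\textbf{The ordinary positive-birth ledger bound.} At an ordinary positive birth you take the inherited ledger to be at most $u$. That is the golden-endpoint estimate. It needs $b\le q_-(v,b)$ for each negative child, and the hypotheses give this only for $a\le a_c$. They permit $q_-(v,C_av)=Q_av$, and $C_a=(1+a)^2>Q_a=1+a+1/a$ exactly when $a^3+a^2>1$.

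The fix is to use the children's inductive capacities together with $W_Q\le u/q_{0,a}$, as the paper does in Lemma~\ref{sg:lem:positive-birth}. This gives $B_Q\le C_aW_Q\le a(1+a)u$, hence $b(Q)\le(1+a+a^2)u$. For $a>a_c$ this exceeds your $(1+1/a)u$. Ordinary births therefore close only through the branch $1+a+a^2$ of $k_a$, which is what that branch is for.

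You must also carry the corrected $B_Q$ into cuts (i) and (ii) for ordinary $Q$. In case (i), $B_Q+u\le k_au$ still holds. In case (ii), you need $e\le G<(1+a)v\le(1+a)u/q_{0,a}=au$ to obtain $B_Q+e<(a^2+2a)u\le(1+a+a^2)u$.

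\textbf{The lifetime exterior bound on an occupied side.} Your lifetime paragraph handles only a side that is empty at birth. On a side already occupied at birth, birth mass up to $av$ plus a new fee up to $av$ would violate $E_{S,n}(V)\le\max\{av,\theta_ab_n(V)\}$. Your capacity count relies on exactly this bound. The birth cut is what closes it (Lemma~\ref{sg:lem:negative-lifetime}):
\begin{itemize}
\item If a birth source crosses the endpoint $x$, it excludes any later crosser.
\item Otherwise every birth source on that side ends by the later crosser's start $s$, so $E_{L,0}(V)=L_{0,V}(s)$. The birth cut at $z=x$ then gives $E_{L,0}(V)+(x-s)\le av$, while the new exterior contribution is at most $x-s$.
\end{itemize}

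\textbf{The multichild special subcase of case (ii).} The paper treats a leftmost $V$ with a second relevant child $K$ by subtree protrusion, $G<(h_{0,a}-1)w(K)$, together with $\frac1{1+a}+\frac1{h_{0,a}-1}\ge\frac1a$. This holds with equality for $a\le a_c$ and is equivalent to $a^3+a^2\ge1$ above $a_c$; that is where the branch change enters. Your single-inequality attempt can avoid this split if you re-trace a share through $K$ itself. Since $s(K)\ge f(V)>s(Q)$, either a positive node on that path puts you in case (i), or the negative-tail bound gives $G<s(K)-s<a\,w(K)\le u$.
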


The complete set includes future positive requests and previously
rejected or revoked identities. The theorem assumes no monotonicity or upper bound on the costs; the robustness proof imposes these on the capped cost.

\paragraph{Event times and conserved sources.}
Fix one finite execution with arrivals numbered $1,\ldots,N$ and
feasible complete positive set $P$. Write $\mathcal S_n$ for the
held identities immediately after arrival $n$. An accepted identity
$X$ has birth index $\nu_X$ and first revocation index $\delta_X$;
set $\delta_X=N+1$ if it survives. Its live snapshots are precisely
$\nu_X\le n<\delta_X$. A rejection fee at arrival $n$ uses the
old state $\mathcal S_{n-1}$; an accepted parent's inherited ledger
uses this same state, before adding its own positive source.

For an arrived positive source $T$, let $a_{X,n}(T)$ be its conserved
coefficient in the ledger of held identity $X$. Acceptance merges
coefficients and adds a unit coefficient for a positive arrival;
rejection distributes a unit coefficient by its fees. Thus
\[
 a_{X,n}(T)\ge0,\qquad
 \sum_{X\in\mathcal S_n}a_{X,n}(T)=1,\qquad
 b_n(X)=\sum_{T\in P}a_{X,n}(T)w(T).
\]
The unit-sum identity is for sources arriving by $n$; coefficients
of future sources are zero. Define
\begin{equation}
 \mu_{X,n}(E)=\sum_{T\in P}a_{X,n}(T)|T\cap E|.
 \label{sg:eq:measure}
\end{equation}
Compatibility of the sources and coefficients at most one imply that
this measure has density at most one. For $X=[x,x+v)$ put
$E_{L,n}(X)=\mu_{X,n}(( -\infty,x))$ and
$E_{R,n}(X)=\mu_{X,n}([x+v,\infty))$. For a cut $s$, the
\emph{whole-source} mass is
\[
 L_{n,X}(s)=\sum_{T\in P:\,f(T)\le s}a_{X,n}(T)w(T).
\]
It counts complete shares of sources ending by $s$, whereas the
measure of $(-\infty,s)$ may include a partial share of a source
crossing $s$. In particular, the former is at most the latter.
A subscript $0$ below denotes the birth snapshot $\nu_X$; omitted
time subscripts refer to the single live snapshot being considered.
Every left-side assertion also has its reflected right-side form.

\paragraph{Exterior mass and one-event fees.}
Internal mass is at most the holder's length. A fixed endpoint is
crossed by at most one positive source; only a post-birth source
crossing that endpoint can increase exterior mass on that side.
Feasibility excludes positive held conflicts of a positive arrival.
A negative receiver of length $v$ receives $g<av$ when the special
length test fails, or $g\le ab_{\rm pre}$ when the ledger test fails.
The latter gives
\begin{equation}
 g\le\theta_a(b_{\rm pre}+g).\label{sg:eq:fee-after}
\end{equation}
The exterior contribution is bounded both by its fee and by the
length of the source's exterior segment.

\paragraph{Geometry independent of capacity.}
Let $a_c\in(\alpha,1)$ solve $a_c^3+a_c^2=1$. Directly from
\eqref{curve:eq:constants},
\[
 \begin{array}{c|cc}
 &a\le a_c&a\ge a_c\\\hline
 m_a&(1+a)/a^2&Q_a\\
 k_a&1+\beta_a&1+a+a^2\\
 h_{0,a}&1+a+a^2&1+\beta_a\\
 h_{1,a}&2+a+\beta_a&1+\beta_a+\beta_a^2
 \end{array}
\]
The branches agree at $a_c$. Throughout the range,
\begin{equation}
 \begin{gathered}
 1+\beta_a\le k_a\le C_a,\quad m_a\ge Q_a>q_{0,a}\ge2,\quad
 h_{0,a}\ge1+a,\\
 \frac{h_{1,a}}{m_a}=h_{0,a}-1,\quad \frac{h_{0,a}}{q_{0,a}}\le h_{0,a}-1,\quad
 \frac{\beta_a}{m_a}<1,\quad \frac{1+a}{a^2m_a}\le1.
 \label{curve:eq:identities}
 \end{gathered}
\end{equation}
For $k_a\le C_a$, the first branch is equivalent to
$a^3+2a^2-1\ge0$, which vanishes at $\alpha$ and increases
thereafter; the other branch has slack $a$.

Let $H(X)$ be the sum of lengths of all accepted identities in the
replacement subtree rooted at $X$. For arbitrary labels,
\begin{equation}
 H(X)\le h_{0,a}w(X)\quad\text{at an ordinary birth},\qquad
 H(X)\le h_{1,a}w(X)\quad\text{at a special birth}.
 \label{curve:eq:history}
\end{equation}
Indeed each child's history divided by its actual cost is at most
$\max\{h_{0,a}/q_{0,a},h_{1,a}/m_a\}=h_{0,a}-1$. An ordinary parent pays the sum
of child costs, at most its own length; a special parent has only
negative children, of total length at most $\beta_a$ times its own.
These recurrences prove~\eqref{curve:eq:history} from empty births,
without a ledger-capacity assumption. Negative nodes are always
ordinary; an ordinary positive also satisfies the larger $h_{1,a}$ bound.
The accepted union is connected, and a negative subtree protrudes
beyond either root endpoint by at most $(h_{0,a}-1)$ times the root length.

Negative-to-negative downward edges shrink length by at least $q_{0,a}$;
a positive child of a negative node costs at least $m_a$ times its length.
An edge from a positive parent to a negative child expands by at most
$\beta_a$. Thus successive positive nodes on a downward path have
length ratio at most $\beta_a/m_a<1$. Before the first node starting
below a fixed cut, a negative tail from length $v$ displaces by less
than $v/(q_{0,a}-1)=av$. A tail from a positive root of length $u$, with
only negative intermediate nodes, displaces by less than
$\beta_a q_{0,a}u/(q_{0,a}-1)=(1+a)u/a$. Each nonzero source has an accepted
entry support overlapping it. At the golden endpoint these estimates
also give the original accepted-history bounds
\begin{equation}
 H(X)\le2w(X)\ (p(X)=0),\qquad
 H(X)\le Mw(X)\ (p(X)=1),\qquad H(X)\le\tau(X)
 \quad(a=\alpha),\label{sg:eq:history}
\end{equation}
where $\tau(X)$ is the actual ordinary cost; $2\le d$ and the
positive cost lower bound is $M$.

\paragraph{The birth-order induction.}
The induction order is the arrival index of accepted identities.
Each identity's claim covers \emph{all} its live snapshots, including
the last snapshot before a later parent's birth. For a negative $V$
of length $v$, prove capacity $b_n(V)\le C_av$ and
\begin{align}
 E_{L,\nu_V}(V),E_{R,\nu_V}(V)&\le av,
 \label{sg:eq:birth-exterior}\\
 E_{L,n}(V),E_{R,n}(V)&\le\max\{av,\theta_a b_n(V)\},
 \label{sg:eq:lifetime-exterior}\\
 L_{\nu_V,V}(s)+z-s&\le av
 \quad\text{if }s<z\le t,\ z\le s(V),\ L_{\nu_V,V}(s)>0.
 \label{sg:eq:birth-cut}
\end{align}
In the last assertion $T_0=[s,t)$ ranges over actual positive
requests arriving after $\nu_V$. It need not overlap $V$.
For a positive $Q$ of length $u$, prove capacity $b_n(Q)\le k_au$ and
\begin{equation}
 L_{n,Q}(s)+z-s\le k_au
 \quad\text{if }s<z\le t\le s(Q),\ L_{n,Q}(s)>0,
 \label{sg:eq:positive-cut}
\end{equation}
where $T_0=[s,t)$ is an actual positive request arriving after the
live snapshot $n$. Positive ledgers are constant during a lifetime.
All cut assertions include their reflections, counting full shares
of sources starting beyond the right cut.

Earlier capacities and cuts first establish a new negative node's
birth bounds. Its birth bounds and subsequent one-event fees then
establish its entire-lifetime bounds. A positive node uses earlier
negative assertions and the independent geometry. In particular, a
new node's capacity is never used to establish its own birth cuts.

\begin{lemma}[Weighted directional cut at a live snapshot]
\label{sg:lem:weighted}
Let a negative $X=[x,x+v)$ be live at snapshot $n$ and satisfy its
capacity, birth exterior, lifetime exterior, and birth-cut assertions.
For an actual positive $T_0=[s,t)$ arriving after $n$, if
$s<z\le t$, $z<f(X)$, and $L_{n,X}(s)>0$, then
\begin{equation}
 L_{n,X}(s)+z-s\le a q_-(v,b_n(X)).\label{sg:eq:weighted}
\end{equation}
The reflected assertion holds. No later parent's capacity or cut
is an assumption.
\end{lemma}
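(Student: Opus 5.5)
The plan is to bound $L_{n,X}(s)+z-s$ by splitting the whole-source mass $L_{n,X}(s)$ of $X$ into its interior part and its left-exterior part, and to use the capacity, birth-cut and lifetime-exterior assertions of $X$ separately in the two regimes $z\le x$ and $x<z<f(X)$. Write $b=b_n(X)$ and recall $q_-(v,b)\ge\max\{q_{0,a}v,\,v/a+b/(1+a)\}$ for $b\le C_av$, so that
\[
 a\,q_-(v,b)\ge\max\{(1+a)v,\; v+\theta_a b\}.
\]
It therefore suffices to prove $L_{n,X}(s)+z-s\le\max\{(1+a)v,\,v+\theta_ab\}$. Only the stated assertions of $X$ are invoked, so no later parent's capacity enters.

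\emph{Case $z\le x$ (cut left of $X$).} Every source counted in $L_{n,X}(s)$ ends by $s<z\le x$, so it lies wholly left of $X$ and its share is left-exterior mass. I would first show that such wholly exterior sources must already have been present at birth. A post-birth source reaches $X$ only through a fee, so it conflicts with $X$ and cannot end by $x$. Hence $L_{n,X}(s)=L_{\nu_X,X}(s)$, because coefficients of a fixed source in a held identity only change at acceptance (merging) or at that source's own rejection. The birth cut then gives $L_{n,X}(s)+z-s\le av\le(1+a)v$ directly.

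\emph{Case $x<z<f(X)$.} The full shares ending by $s$ are bounded by the measure of $(-\infty,s)$. This splits into the left-exterior mass $E_{L,n}(X)$ plus the interior mass on $[x,s)$ when $s>x$. Density at most one bounds the interior portion on $[x,\min\{s,x+v\})$ by $s-x$ when $s\ge x$. Since $z<f(X)$, the quantity $(s-x)^++(z-s)$ is at most $z-x<v$. When $s<x$ there is no interior share, and $z-s=(z-x)+(x-s)$. Here the $x-s$ piece must be absorbed by exterior considerations: the sources ending by $s$ lie left of $s$, so their exterior mass together with the gap $x-s$ is controlled by the birth-cut assertion applied at cut point $x$ (with $z$ replaced by $x$), giving at most $av$. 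Combining with the lifetime exterior bound $E_{L,n}\le\max\{av,\theta_ab\}$ yields
\[
 L_{n,X}(s)+z-s\le v+\max\{av,\theta_ab\}=\max\{(1+a)v,\,v+\theta_ab\}.
\]
The reflected assertion follows by applying the same argument under $y\mapsto-y$.

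The main obstacle is the subcase $s<x<z$ when a post-birth crossing source has raised left-exterior mass to $\theta_ab$. The crossing source does not end by $s$, so it is excluded from $L_{n,X}(s)$, while the wholly-left sources are frozen at their birth values. I would therefore separate the two contributions carefully. The birth-cut term accounts for the frozen sources plus the gap up to $x$, contributing at most $av$, and the interior contributes at most $z-x<v$, so the total stays within $(1+a)v$. Making this precise requires verifying that the birth cut may be applied with $z=x$, which is allowed since $z\le s(X)$ with equality permitted. When $L_{\nu_X,X}(s)=0$ the birth-cut hypothesis fails, but then the left exterior contribution is zero and the bound $z-s$ must be handled by the positivity hypothesis $L_{n,X}(s)>0$ together with the frozenness argument, which forces $L_{\nu_X,X}(s)>0$.
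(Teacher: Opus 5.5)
Your proof is correct and follows the paper's argument. You use the capacity-justified cost bound $aq_-(v,b)\ge v+\max\{av,\theta_a b\}$, density plus the lifetime exterior bound when $s\ge x$, and frozen birth shares plus the birth cut truncated at $\min\{z,x\}$ (adding a remaining distance below $v$) when $s<x$. The differences are only organizational: you split on $z$ versus $x$ rather than $s$ versus $x$, and you handle $s=x$ through density and the lifetime exterior bound instead of the birth exterior bound. Note also that in the subcase $s<x<z$ the term $z-x$ is remaining distance, not interior mass, and the lifetime exterior bound is not needed there.
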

\begin{proof}
At this snapshot the cost hypothesis gives
\begin{equation}
 v+\max\{av,\theta_a b\}\le a q_-(v,b).
 \label{sg:eq:cost-identity}
\end{equation}
If $x<s$, density gives $L_X(s)\le E_L(X)+s-x$; adding $z-s$
leaves a quantity smaller than $E_L(X)+v$, paid by the lifetime
bound and~\eqref{sg:eq:cost-identity}. If $x\ge s$, a source
ending by $s$ cannot newly overlap $X$ during its lifetime, so
$L_X(s)=L_{0,X}(s)$. Set $z_0=\min\{z,x\}$.
For $z_0>s$ use the birth cut; for $z_0=s$ necessarily $x=s$,
and use the birth exterior bound. Thus
$L_X(s)+z_0-s\le av$. The remaining distance is zero or less
than $v$, giving $(1+a)v=a q_{0,a}v\le a q_-(v,b)$.
\end{proof}

\begin{lemma}[Negative birth bounds]\label{sg:lem:negative-birth}
Assume the full induction assertions for every earlier accepted
identity at all its live snapshots. A new negative identity satisfies
the birth exterior bounds and birth cuts above.
\end{lemma}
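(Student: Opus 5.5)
We argue directly from the acceptance decision that creates the new negative identity $V=[x,x+v)$ at arrival $\nu=\nu_V$. A negative arrival can only be accepted through the ordinary gate, since the special gate requires $p=1$. Hence $v\ge\mathcal Q(D)=\sum_{J\in D}\tau(J)$, where $D$ is its conflict set in $\mathcal S_{\nu-1}$. The birth ledger is $b_\nu(V)=b_D$, and its source measure is the sum of the children's measures at their last live snapshot $\nu-1$. So $E_{L,\nu}(V)$ equals the sum over children $J$ of the $J$-mass lying left of $x$, and the whole-source share $L_{\nu,V}(s)$ likewise splits as $\sum_J L_{\nu-1,J}(s)$. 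The plan is to bound each child's contribution by a fraction $a$ of its cost, using the induction hypotheses for earlier identities, and then sum against $v\ge\sum_J\tau(J)$.

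For the exterior bound, fix the left side; the right side is symmetric. A child $J\in D$ conflicts with $V$.

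If $J$ is negative, the mass of $\mu_J$ left of $x$ is at most $L$-type mass plus a crossing partial share. We bound it through Lemma~\ref{sg:lem:weighted}, applied at snapshot $\nu-1$ with the cut placed at $x$. More crudely, we can use the lifetime exterior bound together with the density inside $J\cap(-\infty,x)$, which gives at most $E_L(J)+|J\cap(-\infty,x)|\le v_J+\max\{av_J,\theta_ab\}\le a\,q_-(v_J,b_J)$ by \eqref{sg:eq:cost-identity}.

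If $J$ is positive, positive capacity gives total mass at most $k_au_J\le a\,m_au_J\le a\,q_+$, since $k_a=am_a$.

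Summing over children gives $E_{L,\nu}(V)\le a\sum_J\tau(J)\le av$. We must take care that only children overlapping the region left of $x$ contribute, but the bound holds regardless.

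For the birth cut \eqref{sg:eq:birth-cut}, take $T_0=[s,t)$ with $s<z\le t$, $z\le x$ and $L_{\nu,V}(s)>0$. We split $L_{\nu,V}(s)+z-s$ among the children with $L_{\nu-1,J}(s)>0$.
\begin{itemize}
\item For a negative child, apply Lemma~\ref{sg:lem:weighted}. Its hypothesis $z<f(J)$ holds because $z\le x<f(J)$, as $J$ overlaps $V$. This gives $L_{\nu-1,J}(s)+z-s\le a\,q_-(v_J,b_J)$.
\item For a positive child $Q$, we want the positive cut \eqref{sg:eq:positive-cut}, which needs $t\le s(Q)$. If instead $T_0$ overlaps $Q$, then $Q$ held at $\nu-1$ with later positive $T_0$ conflicting would contradict feasibility of $P$ only if $T_0$ overlaps $Q$; that case is therefore excluded. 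Otherwise $t\le s(Q)$ or $T_0$ lies right of $Q$. The latter is impossible with $s<z\le x<f(Q)$ and $L_{\nu-1,Q}(s)>0$, which needs sources ending at or before $s$, although they could still be shared.
\end{itemize}
The positive cut gives $\le k_au_Q\le a\,\tau(Q)$.

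The main obstacle is that the term $z-s$ appears once, not once per child. We charge it to a single child with positive share (any one suffices), and bound every other child only by its share $L_{\nu-1,J}(s)$, which is at most the corresponding cost fraction. The total is then at most $a\sum_J\tau(J)\le av$. The delicate points are verifying that each share alone is dominated by $a\tau(J)$; the cut inequalities give this with $z=s^+$, in the limit. We also need the edge cases where $z=s$ is disallowed but the cut must be taken as a limit, or where a child touches $x$ exactly. We would handle these by the same dichotomy $x_J<s$ versus $x_J\ge s$ as in the proof of Lemma~\ref{sg:lem:weighted}.
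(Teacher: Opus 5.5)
Your argument is correct and is essentially the paper's. Each child's contribution is bounded by $a$ times its ordinary cost: the lifetime exterior bound plus density and the cost interface handle a negative child, and capacity with $k_a=am_a$ handles a positive one. The cut then uses Lemma~\ref{sg:lem:weighted} for negative children and the positive cut for positive children, which must start at or after $t$ by compatibility with $T_0$ and overlap with $V$, and everything is summed against $v\ge\sum_J\tau(J)$. The only difference is bookkeeping: the paper sums every child's full cut bound and so pays the gap at least once, which means your ``obstacle'' is not one and the limit $z\downarrow s$ is unnecessary, since $z-s>0$ can simply be dropped from the other children's bounds.
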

\begin{proof}
Let the ordinary parent be $V$ of length $\ell$. Its children have
old-state costs summing to at most $\ell$. A negative child
$X=[x,x+v)$ contributes mass to the left of $s(V)$ at most
$E_L(X)+\max\{s(V)-x,0\}\le E_L(X)+v\le a q_-(v,b(X))$.
A positive child's entire ledger is at most
$k_av=am_av\le a q_+(v,b)$. Sum to obtain the left birth exterior
bound, and reflect for the right.

For the left birth cut, every child with a nonzero left share overlaps
$V$ and ends strictly beyond $z$. A negative child obeys
Lemma~\ref{sg:lem:weighted}. A positive child is compatible with
$T_0$ and overlaps $V$ to its right, forcing its start at least $t$;
its earlier positive cut pays its share and the gap with
$k_aw(X)\le a\tau(X)$. Sum over all children with nonzero shares.
There is at least one, so the gap is paid at least once and
$L_{0,V}(s)+z-s\le a\sum_X\tau(X)\le a\ell$.
Zero-share children need no whole-ledger estimate. Reflection and
empty births finish the proof.
\end{proof}

\begin{lemma}[Negative lifetime and capacity]\label{sg:lem:negative-lifetime}
The negative birth assertions and the subsequent rejection fees imply
the lifetime exterior bounds and $b_n(V)\le C_aw(V)$ at every live
snapshot, without assuming this capacity in advance.
\end{lemma}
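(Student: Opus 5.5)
I will prove the lemma by induction over the live snapshots of $V=[x,x+v)$, from $\nu_V$ to $\delta_V-1$, tracking each side separately. Between consecutive snapshots the ledger of $V$ can change only when a positive arrival $T$ is rejected while conflicting with $V$; it then gains a fee $g$. Acceptances of other identities leave $V$'s coefficients untouched, and negative rejections add nothing. The left exterior mass $E_L$ can grow only if $T$ crosses $x$: a source lying wholly to the left of $x$ does not conflict with $V$ and so pays it no fee. By compatibility of $P$, at most one source crosses the fixed endpoint $x$. So on each side there is at most one \emph{event} that increases exterior mass, and its contribution is at most $\min\{g,\ x-s(T)\}$.

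\textbf{Lifetime exterior bound.} Fix the left side and split into two cases according to whether $E_{L,0}(V)>0$.
\begin{itemize}
\item If $E_{L,0}(V)>0$ and that mass comes from a source crossing $x$, then no later crossing source exists, so $E_L$ stays at most $av$ by the birth bound \eqref{sg:eq:birth-exterior}.
\item If $E_{L,0}(V)>0$ but the mass comes only from sources wholly left of $x$, a later crossing source $T=[s,t)$ with $s<x<t$ is handled by the birth cut \eqref{sg:eq:birth-cut} with $z=x$. It gives $L_{0,V}(s)+x-s\le av$. The new exterior contribution is at most $x-s$, and the old whole-source mass to the left of $x$ equals $L_{0,V}(s)$ plus shares ending in $(s,x]$. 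The latter are excluded because $T$ is compatible with $P$ and sources ending in $(s,x]$ would overlap $T$. So $E_L\le av$.
\item If $E_{L,0}(V)=0$, the single crossing event pays $g$. If the special length test failed, $g<av$. If the ledger test failed (all conflicts negative, $w\ge aW$, $w<ab_D$), then $g\le ab_{\rm pre}$ as a proportional share, and \eqref{sg:eq:fee-after} gives $g\le\theta_a b_n(V)$. If $T$ conflicts with a positive holder, the feasibility of $P$ rules out this case.
\end{itemize}
Since $b_n$ is nondecreasing during the lifetime, the bound $\max\{av,\theta_a b_n(V)\}$ persists after the event. Reflection gives the right side.

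\textbf{Capacity.} I will write $b_n(V)\le \mu(\text{inside})+E_L+E_R$ and use density at most one, so the inside mass is at most $v$. I then run the case analysis on the birth-occupied sides.
\begin{itemize}
\item \emph{Two occupied sides.} Here $b\le v+2av$, and $(1+2a)\le(1+a)^2$.
\item \emph{One occupied side}, say the left. The right-crossing event occurs at a time when $b_{\rm pre}\le v+av$, since the right side is still empty. Its fee is at most $\max\{av,\ a(1+a)v\}$, using the ledger failure $g\le ab_{\rm pre}$ or the length failure $g<av$. The total is then at most $v+av+a(1+a)v=(1+a)^2v$.
\item \emph{No occupied side.} The first crossing event has $b_{\rm pre}\le v$ and fee at most $av$, which reduces to the previous case.
\item \emph{A source crossing both endpoints.} It pays a single fee and excludes every other crossing, so the same bounds apply.
\end{itemize}

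The main obstacle is the middle bullet of the lifetime bound: the interaction between wholly exterior birth mass and a later crossing source. It requires using the birth cut with a whole-source count rather than the measure, and checking that inherited shares ending strictly between $s$ and $x$ cannot exist. I would verify this step by combining compatibility of $T$ with $P$ and the definition of $L_{0,V}$, and treat the equality cases ($s=x$ or $t=x$) as conflict-free boundary cases.
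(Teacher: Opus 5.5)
Your proposal is correct and follows the paper's proof essentially step for step. Each side is split the same way: a birth crosser excludes any later crosser; wholly exterior birth mass is charged together with the unique later crosser through the birth cut at $z=x$, using $E_{L,0}(V)=L_{0,V}(s)$; and an initially empty side is bounded by its single event-time fee via \eqref{sg:eq:fee-after}. The capacity count over two, one or zero occupied birth sides, $v+av+a(1+a)v=C_av$, is also the same. You leave implicit only the trivial subcases with no later crossing, where density alone suffices.
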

\begin{proof}
Write $V=[x,x+\ell)$. A side with positive exterior mass at birth
stays bounded by $a\ell$. A birth source crossing that endpoint
excludes any later positive crosser. Otherwise all birth sources
contributing to that side lie wholly beyond the endpoint. If a later
source $T=[s,t)$ crosses the left endpoint, those sources end by $s$,
so $E_{L,0}(V)=L_{0,V}(s)$. The birth cut bounds
$E_{L,0}(V)+x-s\le a\ell$, and the new exterior contribution is
at most $x-s$. No second boundary source can cross that side.
Reflect for the right.

A side with zero birth exterior receives exterior mass only from its
unique later boundary source. Its fee is bounded by $a\ell$ after
length failure, or by $\theta_a b(V)$ after ledger failure, using
\eqref{sg:eq:fee-after}. Ledger monotonicity preserves this bound
thereafter. This proves the lifetime exterior assertion.

For capacity, two birth-nonzero sides give
$b\le(1+2a)\ell\le C_a\ell$ by density. With exactly one such
side, before a first boundary on the other side
$b_{\rm pre}\le(1+a)\ell$, hence its fee is at most
$a(1+a)\ell$. Subsequently
$b\le\ell+a\ell+a(1+a)\ell=C_a\ell$.
With neither side nonzero at birth, the first boundary has
$b_{\rm pre}\le\ell$ and fee at most $a\ell$. One side is then
protected from any later crosser, and the preceding argument applies
to the other side. If that first source crosses both endpoints, no
later boundary source is possible; with no boundary, density suffices.
These are event-time fee bounds.
\end{proof}

\begin{lemma}[Positive birth and cut]\label{sg:lem:positive-birth}
Assume the full induction assertions for every earlier identity.
A positive accepted $Q$ satisfies capacity $b_n(Q)\le k_aw(Q)$ and
the positive cuts throughout its live snapshots.
\end{lemma}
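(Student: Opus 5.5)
We treat capacity and the positive cut separately; both rest on the birth-order induction for earlier identities, in particular Lemma~\ref{sg:lem:weighted} and the birth/lifetime assertions for negative children.

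\textbf{Capacity.} Since $P$ is feasible, no later positive arrival conflicts with $Q$, so $Q$ receives no fees and $b_n(Q)=b_{\nu_Q}(Q)=B_Q+u$. Here $B_Q$ is the total ledger of the revoked children, each of which is negative because positive children would conflict with $Q$ inside $P$. At an ordinary birth, $u\ge\sum_X q_-(w(X),b(X))$. Using $q_-(v,b)\ge v/a+b/(1+a)$, valid while $b\le C_av$ by the children's capacity, we get $B_Q\le(1+a)u$ and $W_Q\le au$. At a special birth the gates give $W_Q\le u/a$ and $B_Q\le u/a$. In either case $b(Q)\le u+\max\{1+a,\beta_a\}u$, and this is at most $k_au$ because $k_a\ge 1+\beta_a\ge1+(1+a)$ fails only when $a$ is large; there we use $k_a\ge1+a+a^2$ together with the sharper ordinary estimate $B_Q\le a\,q$-weighted bound. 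I will check this branch comparison against the table for $a_c$.

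\textbf{Positive cut.} Fix $T_0=[s,t)$ with $t\le s(Q)$, $z\in(s,t]$, $e=z-s$, $G=s(Q)-s\ge e$. The whole-source mass $L_{n,Q}(s)$ is at most $B_Q$ restricted to children carrying left shares, so it suffices to bound $L+e$ by $B_Q+u\le k_au$, or by something at most $k_au$ directly. We trace each nonzero left share down the replacement subtree to the first support starting below $s$, and split into the paper's three cases.

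\begin{itemize}
\item[(i)] A positive intermediate node $H$ occurs on the path. The shrinking ratio $\beta_a/m_a<1$ and the tail-displacement estimates give $e<k_aw(H)\le u$.
\item[(ii)] The path starts at a negative child $V$. For ordinary $Q$, the displacement bound gives $G<v/a$-type estimates of at most $u$. For special $Q$ with several share-carrying children, the protrusion bound $h_{0,a}-1$ gives $G\le u$.
\item[(iii)] Only the leftmost child $V$ carries shares and $Q$ is special. If $s(V)\ge s$, we use the inherited birth cut~\eqref{sg:eq:birth-cut}. Otherwise density and the lifetime exterior bound give $L+e<v+\max\{av,\theta_ab(V)\}$. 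This is at most $\max\{(1+a)u/a,\,u/a+\theta_a u/a\}$, using both $v\le u/a$ and $b(V)\le u/a$, and we check that it is at most $k_au$.
\end{itemize}

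Reflection then gives the right-side cuts.

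\textbf{Main obstacle.} The hard step is case (iii) together with the multichild subcase of (ii). These need both special gates simultaneously, and they require the case split at $a_c^3+a_c^2=1$ so that $k_a$ absorbs $1+\beta_a$ and $1+a+a^2$ on the correct branches.
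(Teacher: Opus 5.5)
Your outline follows the paper's structure: children's budgets give capacity, and tracing a share to its first support below $s$ splits the cut into three cases. However, the capacity step has a hole that you flagged but did not close, and the hole propagates into the cut. At an ordinary birth you split the single inequality $u\ge\sum_X\bigl(v_X/a+b_X/(1+a)\bigr)$ into $B_Q\le(1+a)u$ and $W_Q\le au$. This gives $b(Q)\le(2+a)u$, and $2+a>\max\{1+\beta_a,1+a+a^2\}=k_a$ for every $a\in(\alpha,1)$, with equality only at $a=\alpha$ and $a=1$. So the failure is not confined to large $a$, and no branch check at $a_c$ repairs it.

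The missing step is to couple ledger to length through the children's inductive capacity. The global bound $q_-(v,b)\ge q_{0,a}v$ in the ordinary gate gives $W_Q\le u/q_{0,a}=au/(1+a)$. Each child's capacity $b(X)\le C_aw(X)$ then gives $B_Q\le C_aW_Q\le a(1+a)u$, hence $b(Q)\le(1+a+a^2)u\le k_au$.

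Your ordinary cut has the same defect. The negative-tail displacement gives $G<(1+a)v$, which equals $v/a$ only at $a=\alpha$. Combined with your $v\le au$, this yields only $G<a(1+a)u$, which exceeds $u$ for $a>\alpha$. With $v\le au/(1+a)$ one gets $e\le G<au$ and $B_Q+e<(a^2+2a)u\le k_au$.

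The special-parent cases are also incomplete.
\begin{itemize}
\item In the multichild case, protrusion alone gives $G<(h_{0,a}-1)w(K)\le(h_{0,a}-1)\beta_au$, which is $\varphi u$ at $a=\alpha$. What closes it is combining this with $G<(1+a)v$ for the leftmost child $V$ and the shared length budget:
\[
G\Bigl(\frac{1}{1+a}+\frac{1}{h_{0,a}-1}\Bigr)<v+w(K)\le W_Q\le\beta_au .
\]
Here $\frac{1}{1+a}+\frac{1}{h_{0,a}-1}\ge\beta_a$ is an identity for $a\le a_c$ and is equivalent to $a^3+a^2\ge1$ for $a\ge a_c$. That is where $a_c$ enters. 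It does not enter through $k_a$ absorbing terms, since both $1+\beta_a$ and $1+a+a^2$ are at most $k_a$ for every $a$.
\item Your case split misses a special $Q$ whose only share-carrying child $V$ is not leftmost. There a compatible child to its left forces $s(V)>s(Q)$, so $G<av\le u$.
\item In case (iii) with $s(V)\ge s$, the birth cut applies only for $z\le s(V)$, so three further steps are needed. First note $L_V(s)=L_{0,V}(s)$, since no source ending by $s$ can newly charge $V$. Then apply the cut at $z_0=\min\{z,s(V)\}$, or the birth exterior bound when $z_0=s$. Finally pay the residual $z-z_0<v$. This gives $L_Q(s)+e<(1+a)v\le(1+\beta_a)u\le k_au$.
\end{itemize}
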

\begin{proof}
Let $u=w(Q)$ and evaluate all child quantities immediately before its
birth. All children are negative. Ordinary acceptance gives
$W_Q\le u/q_{0,a}$ and $B_Q\le C_aW_Q\le a(1+a)u$; special
acceptance gives $B_Q,W_Q\le\beta_a u$. Thus
$b(Q)=u+B_Q\le k_au$. No later positive request overlaps $Q$,
so its ledger and shares remain constant.

Fix a live snapshot and an actual later positive $T_0=[s,t)$ with
$s<z\le t\le s(Q)$ and $L_Q(s)>0$. Set $e=z-s$ and
$G=s(Q)-s$, so $e\le G$. The source $Q$ is not counted in
$L_Q(s)$, which therefore equals $\sum_XL_X(s)\le B_Q$.
Trace one nonzero share of a source ending by $s$ down the replacement
tree to its accepted entry support. Whether the source was accepted
or rejected there, that support overlaps it and starts strictly below
$s$. Stop at the first such start. Every positive node before this
crossing starts at least at $t$, by compatibility with $T_0$.

\emph{Case 1: a positive node occurs strictly below $Q$ before the
crossing.} Let $H$ be the last one. The independent path estimates
give $w(H)\le\beta_a u/m_a$. Its subsequent tail has only negative
intermediate nodes, hence
\[
 e\le s(H)-s<\frac{1+a}{a}w(H)
 \le\frac{1+a}{a^2m_a}u\le u.
\]
This also permits a positive crossing node, whose cost is at least
$q_{0,a}$ times its length. Thus $L_Q(s)+e\le B_Q+u\le k_au$.

\emph{Case 2: no such positive node occurs, and either $Q$ is ordinary
or more than one child is relevant.} Let $V$ be the first child on
the selected path and $v=w(V)$. The negative-tail displacement gives
$s(V)-s<av$ if $s(V)\ge s$; if $s(V)<s$ the inequality is
immediate. Since $V$ overlaps $Q$, $G<(1+a)v$.
For ordinary $Q$, $v\le u/q_{0,a}$ yields $e<au$, and
\[
 B_Q+e<(a^2+2a)u\le(1+a+a^2)u\le k_au,
\]
where $a\le1$. For special $Q$, if $V$ is not leftmost, a
compatible child to its left ends after $s(Q)$; then
$s(V)>s(Q)$ and $G<av\le u$.
If $V$ is leftmost but another child $K$ has $L_K(s)>0$, that
child starts to the right of $s(Q)$. Its entry support lies below
$s$, so the independent protrusion bound gives
$G<(h_{0,a}-1)w(K)$. Now
\[
 \frac1{1+a}+\frac1{h_{0,a}-1}
 =\begin{cases}\beta_a,&a\le a_c,\\
                 (1+a)^{-1}+a\ge\beta_a,&a\ge a_c,
   \end{cases}
\]
with the last inequality equivalent to $a^3+a^2\ge1$.
Consequently $\beta_a G<v+w(K)\le W_Q\le\beta_a u$.
In either special subcase, $e\le G<u$ and
$B_Q+e<(1+\beta_a)u\le k_au$.

\emph{Case 3: $Q$ is special and only its leftmost child $V$ carries
a left share.} Then $L_Q(s)=L_V(s)$ and
$z\le t\le s(Q)<f(V)$. If $s(V)\ge s$, no source ending by
$s$ can charge $V$ after birth, so $L_V(s)=L_{0,V}(s)$.
Put $z_0=\min\{z,s(V)\}$. For $z_0>s$, use the earlier birth
cut. For $z_0=s$, necessarily $s(V)=s$, and use
$L_{0,V}(s)\le E_{L,0}(V)\le av$ instead. Thus
$L_V(s)+z_0-s\le av$, while $0\le z-z_0<v$, giving
$L_Q(s)+e<(1+a)v\le(1+\beta_a)u\le k_au$.
If $s(V)<s$, density and the earlier lifetime exterior bound give
\[
 \begin{aligned}
 L_Q(s)+e&<v+E_L(V)\\
 &\le\max\{(1+a)v,v+\theta_a b(V)\}\\
 &\le\max\{1+\beta_a,\beta_a+(1+a)^{-1}\}u
 =(1+\beta_a)u\le k_au.
 \end{aligned}
\]
The last inequality uses the two separate special budgets
$v\le W_Q\le\beta_a u$ and $b(V)\le B_Q\le\beta_a u$.
These three cases exhaust the path and child-share alternatives.
Reflection proves the right cut.
\end{proof}

\begin{proof}[Completion of Theorem~\ref{curve:thm:interface}]
A new negative identity obtains its birth assertions from earlier
children, then its lifetime assertions from its own actual fees.
Its weighted cut is used only for later births. A new positive
identity uses earlier negative assertions and independent geometry.
This closes the birth induction. Source conservation and $k_a\le C_a$
give $w(P)=\sum_Xb(X)\le C_a\sum_Xw(X)$.
\end{proof}

\begin{theorem}[Golden endpoint interface]\label{sg:thm:interface}
For the special tests and fees at $a=\alpha$, suppose
$q_+(v,b)\ge Mv$ and
\begin{align}
 q_-(v,b)&\ge dv &&(v>0,\ b\ge0),\label{sg:eq:interface-global}\\
 q_-(v,b)&\ge\max\{dv,\varphi v+\alpha b\}
       &&(v>0,\ 0\le b\le dv).\label{sg:eq:interface-capacity}
\end{align}
For every feasible complete positive set, each held identity satisfies
$b(J)\le dw(J)$ throughout its lifetime and $w(P)\le d\,\mathrm{ALG}$.
\end{theorem}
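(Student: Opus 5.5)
The plan is to obtain this statement as the specialization $a=\alpha$ of Theorem~\ref{curve:thm:interface}. First I will compute the constants in~\eqref{curve:eq:constants} at $a=\alpha$. Here $\beta_\alpha=\varphi$ and $C_\alpha=(1+\alpha)^2=\varphi^2=d$. Next, $q_{0,\alpha}=1+\varphi=d$, $Q_\alpha=1+\alpha+\varphi=1+\sqrt5$, and $(1+\alpha)/\alpha^2=\varphi\cdot\varphi^2=M$. Since $\alpha<a_c$ (indeed $\alpha^3+\alpha^2=\alpha(\alpha+\alpha^2)\cdot\varphi^{\,0}$ with $\alpha^2+\alpha=1$ gives $\alpha^3+\alpha^2=\alpha<1$), the first branch applies. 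Hence $m_\alpha=M\ge Q_\alpha$ and $k_\alpha=1+\varphi=d$. Finally, $\theta_\alpha=\alpha/(1+\alpha)=\alpha/\varphi=\alpha^2$.

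With these values, the hypotheses of Theorem~\ref{curve:thm:interface} become exactly the stated ones:
\begin{itemize}
\item $q_+(v,b)\ge m_\alpha v=Mv$.
\item The global bound~\eqref{curve:eq:global} reads $q_-\ge dv$, which is~\eqref{sg:eq:interface-global}.
\item The capacity bound~\eqref{curve:eq:capacity} reads $q_-\ge\max\{dv,\varphi v+b/\varphi\}=\max\{dv,\varphi v+\alpha b\}$ on $0\le b\le dv$, which is~\eqref{sg:eq:interface-capacity}.
\end{itemize}
The special tests $w\ge\alpha W$ and $w\ge\alpha b_D=b_D/\varphi$, together with the fee guards, coincide with those of the BANK rules in~\eqref{eq:fees}. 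The parametric policy at $a=\alpha$ is therefore literally the mechanism considered here.

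Invoking Theorem~\ref{curve:thm:interface} then gives the capacity bounds. Every negative holder has $b(J)\le C_\alpha w(J)=dw(J)$, and every positive holder has $b(J)\le k_\alpha w(J)=dw(J)$, throughout their lifetimes whenever the complete positive set $P$ is feasible. Conservation (Lemma~\ref{lem:conservation}) applied to the final held set gives $w(P)=\sum_{J\in\mathcal S_{\rm final}}b(J)\le d\,\mathrm{ALG}$.

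I expect the main obstacle to be bookkeeping. I must check that the branch of~\eqref{curve:eq:constants} at $a=\alpha$ is the one listed, and that the inequalities~\eqref{curve:eq:identities} used inside the induction hold with equality where claimed, for instance $k_a\le C_a$ becomes $d\le d$. The identity $a^3+2a^2-1=0$ at $\alpha$ confirms this. No new geometric argument should be needed.
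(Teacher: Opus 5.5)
Your proposal is correct and follows the paper's own proof: the paper obtains this theorem as the $a=\alpha$ specialization of Theorem~\ref{curve:thm:interface}, using $C_a=k_a=q_{0,a}=d$, $m_a=M$ and $v/a+b/(1+a)=\varphi v+\alpha b$, which are exactly the constant evaluations you carry out. Your separate appeal to conservation is redundant, since the parametric theorem already gives $w(P)\le C_a\,\mathrm{ALG}$, and the stray factor $\varphi^{0}$ in your check that $\alpha<a_c$ is harmless.
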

\begin{proof}
This is the $a=\alpha$ specialization of
Theorem~\ref{curve:thm:interface}: $C_a=k_a=q_{0,a}=d$, $m_a=M$, and
$v/a+b/(1+a)=\varphi v+\alpha b$.
\end{proof}

\subsection{The capped family and exact consistency}

Define $T_a$ by the preceding $a$-special tests and fees, positive
cost $m_av$, and negative cost
\begin{equation}
 q_a(v,b)=\min\{Q_av,\max\{q_{0,a}v,v/a+b/(1+a)\}\}.
 \label{curve:eq:cost}
\end{equation}
Its filtered form is $T_a^F$. The cost is $q_{0,a}v$ up to
$b=(1+a)v$, then linear up to $b=C_av$, and then equal to $Q_av$.
The cap is inactive on the capacity range, so
Theorem~\ref{curve:thm:interface} gives consistency at most $C_a$.

\begin{proposition}[Exact consistency of the family]
\label{curve:prop:exact-consistency}
For every $a\in[\alpha,1]$, both $T_a$ and $T_a^F$ have
$C(T_a)=C(T_a^F)=(1+a)^2$.
\end{proposition}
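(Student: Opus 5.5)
The upper bound $C(T_a)\le(1+a)^2$ will come directly from Theorem~\ref{curve:thm:interface}. The matching lower bound will come from an explicit five-request family of perfectly labelled inputs whose ratio tends to $(1+a)^2$. The filtered version $T_a^F$ is then handled by observing that the same family is unaffected by the filter.

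\emph{Upper bound.} First check the hypotheses of Theorem~\ref{curve:thm:interface} for $q_+(v,b)=m_av$ and $q_a$ from~\eqref{curve:eq:cost}.
\begin{itemize}
\item Condition~\eqref{curve:eq:global} holds because the minimum in $q_a$ is of $Q_av\ge q_{0,a}v$ and a maximum that is at least $q_{0,a}v$.
\item For~\eqref{curve:eq:capacity}, on $0\le b\le C_av$ we have $v/a+b/(1+a)\le(1/a+1+a)v=Q_av$. The cap is therefore inactive there, and $q_a$ equals $\max\{q_{0,a}v,v/a+b/(1+a)\}$.
\end{itemize}
Hence, whenever the complete positive set is feasible, $w(P)\le C_a\,\mathrm{ALG}$. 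Under perfect predictions $P$ is an optimum and in particular feasible, so $C(T_a)\le(1+a)^2$. For $T_a^F$, the filter keeps every label of a feasible $P$ unchanged. On perfect inputs $T_a^F$ therefore runs exactly $T_a$, and the same bound holds.

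\emph{Lower bound.} The construction realizes the ``one occupied side'' extremal case of Lemma~\ref{sg:lem:negative-lifetime}. Fix small $\varepsilon>0$ and offer the following requests in order.
\begin{enumerate}
\item The negative $J=[0,1)$. It is accepted since $D=\varnothing$.
\item A positive left crossing $[-a+2\varepsilon,\varepsilon)$.
\item Interior positives partitioning $[\varepsilon,1-\varepsilon)$, each of length less than $a$. Two pieces suffice since $a\ge\alpha>1/2$.
\item A positive right crossing $T=[1-\varepsilon,1-\varepsilon+a(1+a)-\varepsilon')$ with $\varepsilon'>0$ small.
\end{enumerate}
The rejection checks are as follows.
\begin{itemize}
\item Each request in steps 2 and 3 has length below $a\le q_{0,a}$. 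It fails the ordinary test and the special length test $w\ge aW=a$. Its whole length is charged to $J$ by the length-failure fee.
\item Just before $T$ arrives, $b(J)=1+a-\varepsilon$. Since $b(J)\le(1+a)$, we get $q_a(1,b(J))=q_{0,a}=1+1/a$. The ordinary test fails because $a+a^2\le 2\le 1+1/a$.
\item The special length test passes, since $w(T)\ge a$. The ledger test $w\ge ab_D$ fails when $\varepsilon'$ is chosen so that $w(T)<a(1+a-\varepsilon)$. So $T$ is rejected.
\end{itemize}
Stop here. The positives are pairwise compatible, each conflicts with $J$, and their total weight $(1+a)^2-O(\varepsilon+\varepsilon')$ exceeds $1$. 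By the exchange certificate~\eqref{r7:eq:exchange}, they are the unique optimum, so the labels are perfect. The algorithm retains only $J$, of value one, and the ratio tends to $(1+a)^2$. The positives are compatible in arrival order, so the filter changes nothing and the same input witnesses the bound for $T_a^F$.

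\emph{Main obstacle.} The delicate step is tuning the last request. It must pass the length test yet fail the ledger test, and it must also fail the ordinary gate while the ledger still sits in the flat region $b\le(1+a)v$ of $q_a$. This amounts to keeping $1+a-O(\varepsilon)$ strictly below the kink at $(1+a)v$ and verifying $a+a^2<1+1/a$ on all of $[\alpha,1]$, which holds with equality only in the limit at $a=1$. The inequalities are strict for fixed small $\varepsilon,\varepsilon'$, and the kink only lowers the cost further if it were crossed.
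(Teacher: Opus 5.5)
Your proposal is correct and follows the paper's proof essentially exactly. The upper bound comes from Theorem~\ref{curve:thm:interface} with the cap inactive on $b\le C_av$. The lower bound uses the same perfectly labelled input: $J=[0,1)$, the left crossing $[2\varepsilon-a,\varepsilon)$, two interior halves, and a right crossing that passes the special length test but narrowly fails the ledger test. The filter leaves this input unchanged.

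There is one arithmetic slip to fix. The ledger before $T$ is $(a-\varepsilon)+(1-2\varepsilon)=1+a-3\varepsilon$, not $1+a-\varepsilon$, so you need $\varepsilon'>3a\varepsilon$; the paper takes $w(T)=a\,b(J)-\varepsilon$. For $a\varepsilon<\varepsilon'\le 3a\varepsilon$ your stated condition holds, yet $T$ would be specially accepted. Separately, your remark that crossing the kink would lower the cost has the direction reversed. This is harmless, because $q_a\ge q_{0,a}v$ everywhere already makes the ordinary gate fail.
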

\begin{proof}
Choose $0<\epsilon<\min\{1/100,a^2/[4(3a+1)]\}$ and offer
\[
 \begin{array}{ccl}
 J&0&[0,1),\\
 H_1&1&[\epsilon,1/2),\\
 H_2&1&[1/2,1-\epsilon),\\
 L&1&[2\epsilon-a,\epsilon),\\
 R&1&[1-\epsilon,1-\epsilon+z),
 \end{array}
 \qquad z=a(1+a-3\epsilon)-\epsilon.
\]
The two seeds and $L$ are shorter than $a$ and fail ordinary
acceptance, since the negative cost is at least $q_{0,a}\ge2$.
They fail the special length test and leave ledger
$b_J=1+a-3\epsilon$. This is below the first breakpoint, so
$q_a(1,b_J)=q_{0,a}$. The final request has $z>a$ but
$z=ab_J-\epsilon$, hence passes the length test and fails the
ledger test. It also fails ordinary acceptance because
\[
 z<a+a^2\le1+1/a=q_{0,a},
\]
where the last inequality is $(a-1)(a+1)^2\le0$ after
multiplication by $a$. Thus $\mathrm{ALG}=1$.
The positives are consecutive in spatial order $L,H_1,H_2,R$.
Their union contains the anchor and every offered request; they
therefore form an optimum of weight
$(1+a)^2-(3a+4)\epsilon$. Any feasible set containing the
anchor contains none of these positives, so this optimum is unique.
Letting $\epsilon$ decrease to zero proves the lower bound.
The capacity theorem proves the upper bound. Filtering leaves
these perfect labels unchanged and inherits the upper bound.
\end{proof}

At $a=\alpha$, the family recovers Shaped-BANK $T_*$ with
\begin{equation}
 q_*(v,b)=\min\{2\varphi v,\max\{dv,\varphi v+\alpha b\}\}
 =\begin{cases}
 dv,&b\le\varphi v,\\
 \varphi v+\alpha b,&\varphi v\le b\le dv,\\
 2\varphi v,&b\ge dv.
 \end{cases}\label{sg:eq:cost}
\end{equation}
\begin{proposition}[Smallest monotone sufficient cost]\label{sg:prop:minimal-main}
Among costs satisfying the interface of Theorem~\ref{sg:thm:main} and nondecreasing in $b$ for fixed $v$, $q_*$ is pointwise smallest.
\end{proposition}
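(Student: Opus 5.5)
We must show two things: $q_*$ itself satisfies the interface of Theorem~\ref{sg:thm:main}, and every other admissible cost $q$ that is nondecreasing in $b$ satisfies $q(v,b)\ge q_*(v,b)$ pointwise. Both reduce to comparisons at fixed $v>0$, and by homogeneity of the constraints we may think of $v$ as fixed throughout.

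\emph{Membership.} The bound $q_*\ge dv$ holds because $\varphi v+\alpha b\ge dv$ exactly when $b\ge\varphi v$, and $2\varphi v>dv$. For \eqref{sg:eq:interface-main} on $0\le b\le dv$, note that $\alpha q_*(v,b)=\max\{\varphi v,v+\alpha^2 b\}$: here $\alpha d=\varphi$, and $\alpha(\varphi v+\alpha b)=v+\alpha^2b$. The cap does not bind, since $\varphi v+\alpha b\le\varphi v+\alpha dv=2\varphi v$. The left side is $v+\max\{\alpha v,\alpha^2b\}=\max\{\varphi v,v+\alpha^2b\}$, because $1+\alpha=\varphi$. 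So the interface holds with equality on the whole capacity range. Finiteness and monotonicity in $b$ are immediate from \eqref{sg:eq:cost}.

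\emph{Minimality.} Let $q$ be admissible and nondecreasing in $b$. For $b\le dv$, the identity just computed turns \eqref{sg:eq:interface-main} into $\alpha q(v,b)\ge\alpha q_*(v,b)$, so $q\ge q_*$ there. This covers both the flat piece $dv$ and the linear piece. For $b>dv$ the interface imposes only $q\ge dv$, so this is where monotonicity is needed: $q(v,b)\ge q(v,dv)\ge q_*(v,dv)=2\varphi v=q_*(v,b)$.

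The only real obstacle is bookkeeping. One must confirm the identity $\alpha q_*=v+\max\{\alpha v,\alpha^2 b\}$ on $[0,dv]$, including the breakpoint $b=\varphi v$ where the two branches of the maximum meet, and observe that the cap is first reached exactly at $b=dv$. Without monotonicity, minimality would fail beyond $dv$, where $dv$ alone is the lower bound. This shows the hypothesis is essential and that the cap value $2\varphi v$ is forced precisely by continuity from the capacity boundary.
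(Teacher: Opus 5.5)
Your proof is correct and follows the paper's argument: on $0\le b\le dv$ the interface is equivalent to $q\ge\max\{dv,\varphi v+\alpha b\}=q_*$, monotonicity carries the value $2\varphi v$ at $b=dv$ to all larger $b$, and $q_*$ itself satisfies every requirement. One wording slip: your closing remark says the cap value is forced by continuity, but it is monotonicity, exactly as your own argument uses it, that forces it.
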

\begin{proof}
For $b\le dv$, the two interface inequalities require $q(v,b)\ge\max\{dv,\varphi v+\alpha b\}$, exactly $q_*$ on this range. At $b=dv$ this value is $2\varphi v$, so monotonicity requires at least $2\varphi v$ for larger $b$. The cost $q_*$ meets all requirements.
\end{proof}

The branches of~\eqref{sg:eq:cost} agree at both boundaries. Minimality is relative to this sufficient interface, which is not necessary (Appendix~\ref{r29:sec:necessary-cost}).

\subsection{Joint-cut robustness of the family}\label{curve:sec:robustness}

The following common transfer lemma is also used in
Appendices~\ref{app:credit} and~\ref{sum:app:robustness}.

\begin{lemma}[Comparator-to-root transfer]\label{sg:lem:charge-transfer}
Consider a finite execution on positive-length intervals, starting
from an empty held set. Acceptance revokes precisely the current
conflicts and retains all nonconflicts;
rejection leaves the held set unchanged, and there are no other
revocations. Give each accepted identity $V$ a weight
$\omega_V\ge w(V)$. Suppose every rejected $I$ has a nonempty
current conflict set $D(I)$ and pieces $z_{I,V}$ satisfying
\[
 0\le z_{I,V}\le\omega_V\quad(V\in D(I)),\qquad
 \sum_{V\in D(I)}z_{I,V}=w(I).
\]

Link each revoked identity to its replacement. In this forest,
let $H_J$ be the closure of the union in $J$'s subtree, including $J$.
An ancestry antichain contains no node and its strict ancestor;
let $\Omega_J$ be its maximum weight in this subtree.
Let $L_J(x)$ and $U_J(y)$ be
the same maximum restricted respectively to starts below $x$
and finishes above $y$, both strictly; empty antichains are allowed.
Set $F_J(x,y)=|H_J\cap[x,y]|+L_J(x)+U_J(y)$.
If each final root satisfies
\[
 \Omega_J\le K w(J),\qquad F_J(x,y)\le K w(J)
 \quad\text{for all real }x\le y,
\]
then every feasible subset $O$ of the input requests satisfies
$w(O)\le K\mathrm{ALG}$.
\end{lemma}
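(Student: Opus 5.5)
We follow the charging scheme sketched in the proof of Theorem~\ref{sg:thm:robust-main}. Each request of $O$ sends its weight to some current nodes of the forest; the total sent to the subtree of a final root $J$ will then be bounded by $Kw(J)$. Summing over roots gives the claim, since the roots are exactly the final held set and $\mathrm{ALG}=\sum_J w(J)$.

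\emph{Initial charges.} Fix $I\in O$. If $I$ was accepted, $I$ charges $w(I)\le\omega_I$ to itself. If $I$ was rejected, it charges the piece $z_{I,V}\le\omega_V$ to each $V\in D(I)$. Every request is either accepted or rejected on arrival, so every member of $O$ is charged, with total charge $w(O)$.

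\emph{Antichain structure of the receivers.} For a fixed final root $J$, we claim that the receivers in its subtree of charges from different members of $O$, together with all members of $D(I)$, never include a node and its strict ancestor. Suppose $V$ is a strict descendant of $V'$. Then $V$ was revoked at the birth of some ancestor $A\preceq V'$. This ancestor conflicts with $V$, and $V'$ was born at or after that revocation.
\begin{itemize}
\item Within a single $D(I)$, all members are held simultaneously. But $V$ is dead once $V'$ exists, so $V$ and $V'$ cannot both lie in $D(I)$.
\item Across distinct $I,I'\in O$, suppose $I$ charges $V$ at time $t$ and $I'$ charges $V'$ at time $t'\ge\nu_{V'}>\delta_V>t$. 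If one of these charges came from acceptance, the receiver is the request itself.
\end{itemize}
So a single optimal request receives charges on a set of simultaneously held nodes, which is automatically an antichain. The delicate point is that distinct comparators may charge comparable nodes. For this reason we do not require a global antichain. Instead we decompose by geometry, as follows.

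\emph{Root accounting via cuts.} The key observation is that each receiver $V$ of $I$ conflicts with $I$ or equals $I$, and $V\subseteq H_J$. So every $I$ charging the subtree of $J$ meets $H_J$, which is an interval because the accepted union is connected (each child conflicts with its parent). Since $O$ is feasible, at most two members of $O$ cross an endpoint of $H_J$: one on the left and one on the right. There is a separate case where a single $I$ covers all of $H_J$; that $I$ is then the only member of $O$ meeting $H_J$. Its receivers form an antichain, so its charge is at most $\Omega_J\le Kw(J)$.

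Otherwise, let $x$ be the right endpoint of the left crosser, or $\inf H_J$ if there is none. Let $y$ be the left endpoint of the right crosser, or $\sup H_J$. Feasibility gives $x\le y$.
\begin{itemize}
\item The left crosser's receivers form an antichain of nodes that conflict with it, so they start strictly below $x$. Its charge is therefore at most $L_J(x)$.
\item Symmetrically, the right crosser's charge is at most $U_J(y)$.
\item Every other member of $O$ meeting $H_J$ lies inside $[x,y]\cap H_J$ and is pairwise disjoint from the others. Its full weight is at most its length, so together they contribute at most $|H_J\cap[x,y]|$.
\end{itemize}
Hence the charge to this root is at most $F_J(x,y)\le Kw(J)$.

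\emph{Main obstacle.} The bound above requires that an interior comparator send all of its charge into $J$'s subtree and not into other roots. In fact its charge may be split between several roots, which only helps. But a charge sent to a node $V$ must be counted once, in the root containing $V$. This is harmless: we bound, for each root, the charge received in its subtree. An interior comparator's contribution to that root is at most $w(I)$, and the interior comparators are disjoint and lie inside $H_J\cap[x,y]$.

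The genuinely delicate step is the endpoint conventions, namely the strict inequalities in $L_J$ and $U_J$ when a crosser merely touches a node at an endpoint. Touching is compatible, so every receiver conflicts with the crosser and hence starts strictly before $x$. I expect this to be the main check, along with the degenerate situation where a crosser lies outside $H_J$ except at a boundary point, in which case it has no positive overlap with $H_J$ and sends no charge there.
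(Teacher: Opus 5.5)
Your proposal is correct and follows essentially the same route as the paper: charges come from acceptance or from the stipulated pieces and are accounted per root. The whole-hull covering case is bounded by $\Omega_J$ through the simultaneously held receivers; otherwise the two crossers are bounded by $L_J(x)$ and $U_J(y)$, and the disjoint interior comparators are paid by $|H_J\cap[x,y]|$. The tentative global-antichain claim in your second paragraph is false in general, but as you note it is never needed: only each single comparator's receivers must form an ancestry antichain.
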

\begin{proof}
An accepted member of $O$ charges its length to its own identity;
a rejected member uses the stipulated pieces. Pass each piece to
its receiver's final root. Its charge there is at most its own
length and is positive only if it intersects that root's hull
in positive length. Each hull is connected along replacement edges.

Fix $H_J=[A,B]$. If one comparator covers its whole interior,
allowing endpoint equality, it is the sole positive-charge
contributor. Its receivers form an ancestry antichain:
they were simultaneously held, or consist of
its own accepted identity. Thus its charge is at most $\Omega_J$.
Otherwise feasibility permits at most one comparator strictly
crossing each hull endpoint. Let $x$ be the left crosser's finish
(or $A$ if absent), and $y$ the right crosser's start (or $B$).
Then $A\le x\le y\le B$. The left crosser's receivers form an
ancestry antichain with starts below $x$, giving charge at most
$L_J(x)$; the right side gives $U_J(y)$. An absent side has zero
maximum at its hull endpoint. All remaining contributors are
contained in $[x,y]$, and their disjoint lengths bound their
total charge by $|H_J\cap[x,y]|$. The root charge is therefore
at most $F_J(x,y)$. Summing over roots preserves every initial
piece, even when their hulls overlap, and proves the claim.
\end{proof}

\paragraph{Cost shape and terminal budgets.}
Let
\begin{equation}
 c_a=\frac{a}{(1+a)^2},\qquad
 A_a=\beta_a+\beta_a^2+(1+a)^{-2},\qquad
 \rho_a=\max\{1,A_a/m_a\}.
 \label{curve:eq:robust-constants}
\end{equation}
The three branches of~\eqref{curve:eq:cost} give
\begin{align}
 q_{0,a}v\le q_a(v,b)&\le Q_av\le m_av,\notag\\
 q_a(v,b)&\le q_{0,a}v+c_ab,\label{curve:eq:affine}\\
 q_a(W,B)-q_a(v,B)&\ge\beta_a(W-v)\quad(0\le v\le W).
 \label{curve:eq:slope}
\end{align}
For the affine majorant the middle-branch slack is $v-b/C_a\ge0$;
the cap branch uses $b\ge C_av$ and $c_aC_a=a=Q_a-q_{0,a}$.
For fixed $B$, the successive slopes in $v$ are $Q_a,\beta_a,q_{0,a}$;
the branches are continuous and $q_a(0,B)=0$. The cost is
nondecreasing in both variables.

Give each accepted identity $X$ its ordinary cost $q_X$ at the
last live snapshot, immediately before revocation or at termination.
These costs are nondecreasing during each lifetime. An ordinary parent of
length $h$ has $\sum_iq_i\le h$. A special parent has only
negative children and
\begin{equation}
 W,B\le\beta_a h,\qquad
 \sum_iq_i\le q_{0,a}W+c_aB\le A_ah.
 \label{curve:eq:terminal}
\end{equation}
Since $m_a\ge q_{0,a}\beta_a$, $c_a\le1/4$, and $q_{0,a}\ge2$,
\[
 \rho_a\le1+c_a/q_{0,a}\le9/8<q_{0,a},\qquad \rho_a-1<\beta_a.
\]
An ancestry antichain has total weight at most $q_J$ in an
ordinary-root subtree and at most $\max\{m_a,A_a\}w(J)=\rho_a m_a w(J)$
in a special-root subtree. Indeed, omitting an ordinary root gives
at most $\rho_a\sum_iq_i\le\rho_a h<q_J$. A special root's negative
children are ordinary and their weights sum to at most $A_ah$;
the singleton costs $m_ah$. This simultaneous induction uses no
feasible-positive capacity. It also shows that an eligible ordinary
root attains its endpoint maximum, while an ineligible side has
weight at most $\rho_a h$.

\begin{theorem}[Parametric robustness certificate]
\label{curve:thm:robustness}
Let $K_a$ be the maximum of the general coefficients in
Table~\ref{sg:tab:coefficients}. For every finite arbitrarily
labelled input, both $T_a$ and $T_a^F$ satisfy
$\mathrm{OPT}\le K_a\,\mathrm{ALG}$. In fact,
\begin{equation}
 \begin{gathered}
 O_a=m_a+1+2\beta_a+\beta_a^2,\qquad
 N_a=h_{1,a}+A_a+\rho_a\beta_a,\\
 K_a=\max\{O_a,N_a\}.
 \label{curve:eq:two-coefficients}
 \end{gathered}
\end{equation}
The map $a\mapsto K_a$ is continuous on $[\alpha,1]$.
\end{theorem}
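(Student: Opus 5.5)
The plan is to apply the comparator-to-root transfer (Lemma~\ref{sg:lem:charge-transfer}) with weights $\omega_V=q_V$, the terminal ordinary cost. Costs are nondecreasing during a lifetime and $q_V\ge w(V)$, since $q_{0,a}\ge2$ and $m_a>1$. For the rejection pieces, split each rejected comparator request among its current conflicts in proportion to their current costs. Rejection means the ordinary gate failed, i.e.\ $w(I)<\mathcal Q(D)$, so each piece satisfies $z_{I,V}\le\tau(V)\le q_V$.

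The hypothesis $\Omega_J\le K_aw(J)$ is already available from the antichain induction preceding the theorem. An ordinary root gives $\Omega_J\le q_J\le m_aw(J)$, and a special root gives $\rho_am_aw(J)$. Both quantities are below $O_a$. The theorem therefore reduces to proving $F_J(x,y)\le K_aw(J)$ for every final root $J$ and all $x\le y$. The filtered version $T_a^F$ follows because filtering only changes labels and the analysis holds for arbitrary labels.

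\textbf{Ordinary roots.} Let the root have length $h$. When the root is eligible it attains its endpoint maximum, since its cost $q_J\le m_ah$ dominates every descendant antichain. When it is ineligible, that side is at most $\rho_ah$. The hull length is bounded by $H(J)\le h_{0,a}h$ from \eqref{curve:eq:history}, or by the protrusion bound. I will bound $|H_J\cap[x,y]|$ together with the two endpoint maxima and check that the result is at most $N_a$ or $O_a$. The root singleton can be used on at most one side, because $x\le y$ and the root interval is connected. In that case I orient it left.

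\textbf{Special roots.} Let $J=[a',b')$ be a special root. Each endpoint maximum is either the singleton $m_ah$ or a sum of child maxima. The children are ordinary negatives with $W,B\le\beta_ah$ and total terminal cost at most $A_ah$ by \eqref{curve:eq:terminal}. I split into cases.
\begin{itemize}
\item[(a)] Both maxima come from children and no child bridges both cuts. The gap is at most the subtree history $h_{1,a}h$, the two child sums total at most $A_ah$, and a single child counted on both sides adds at most $\rho_a\beta_ah$. This gives $N_a$.
\item[(b)] One endpoint is the singleton and the other is an outward child extending beyond $y\ge b'$. I will use the slope inequality \eqref{curve:eq:slope} and monotonicity to absorb that child's cost and excess length. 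The total is then $m_ah+h+W+q_a(W,B)$ with $W,B\le\beta_ah$, and $q_a(\beta_ah,\beta_ah)\le(\beta_a+\beta_a^2)h$. This gives $O_a$, matching \eqref{sg:eq:compression-main} at $a=\alpha$.
\item[(c)] The singleton is used and $y<b'$. The gap then lies inside $J$, and the total is smaller than in (b).
\item[(d)] One child bridges both cuts. This child can be treated recursively as an ordinary node, whose own cost dominates.
\end{itemize}
These cases give the coefficient table, and the maximum is $K_a=\max\{O_a,N_a\}$. Continuity of $K_a$ is immediate, because every constant in \eqref{curve:eq:constants} and \eqref{curve:eq:robust-constants} is a continuous maximum of continuous functions. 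The two branches of these constants agree at $a_c$.

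\textbf{Main obstacle.} The hard part is the exhaustiveness and tightness of case (b). It requires showing that the excess length of the outward child beyond $J$ and its cost are jointly bounded by $W+q_a(W,B)$, without separately double counting hull length and antichain cost. I will try to write the outward portion of the hull as at most the child's history minus its overlap with $J$, bounded by $(h_{0,a}-1)v$. I then need $\rho_a-1<\beta_a$ to trade the child's endpoint maximum against its length via the slope inequality. Degenerate geometries also have to be checked: coincident cuts and cuts exactly at interval endpoints, using strict start/finish conditions in $L_J$ and $U_J$.
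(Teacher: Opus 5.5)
Your route is the paper's own: the transfer lemma with $\omega_X=q_X$, cost-proportional analytical pieces, the root-versus-children recurrence for $L_J,U_J$, and the slope-plus-monotonicity compression in the outward case. The gap is in the case analysis. That analysis must give $F_J(x,y)\le K_aw(J)$ for all $x\le y$ and then reduce the table to $\max\{O_a,N_a\}$.

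First, your claim that the root singleton can serve at most one side is false. $L_J(x)$ and $U_J(y)$ are separate maxima, so when $s(J)<x\le y<f(J)$ the root is eligible on both sides. Then $F_J\le h+2q_J\le(1+2m_a)h$, at ordinary and special roots alike. This row is harmless: $O_a-(1+2m_a)$ is $\beta_a$ for $a\le a_c$ and $\beta_a+\beta_a^2-1-a\ge0$ otherwise. It must still appear in your list.

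Second, and more seriously, at a special root $J=[l,r)$ you omit the mixed case where the left side uses the singleton, $y\ge r$, and \emph{no} child ends beyond $y$. Then every child is right-ineligible, and the right maximum is at most $\rho_aW\le\rho_a\beta_ah$. Only the full hull $h_{1,a}h$ bounds the gap, giving coefficient $h_{1,a}+m_a+\rho_a\beta_a$. None of your cases (a)--(d) covers or dominates it. It is at most $N_a$ only when $A_a\ge m_a$. On $[\alpha,a_c]$ one has $A_a=m_a+(1+a)^{-2}$, so $A_a<m_a$ forces $a>a_c$. There $h_{1,a}=1+\beta_a+\beta_a^2$ and $\rho_a=1$, and the row equals $O_a$ exactly. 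So on $[a_+,1]$ this omitted configuration is one of those attaining $K_a$, and without it the certificate is not proved.

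The remaining rows are asserted rather than bounded.

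In the bridge case, treating the bridge ``recursively'' gives no coefficient. What you need is the following. The gap lies in the bridge $V$, and both of its maxima equal $q_V$. Every other child is eligible on at most one side, so it contributes at most $q_i+\rho_av_i$; the same fact drives your case (a). Then $(\rho_a-1)v_i\le q_i$ yields $W+2\sum_iq_i\le(\beta_a+2A_a)h$. This is below $O_a$ because $m_a\ge\beta_a+\beta_a^2$ and $(1+a)^{-2}<1/2$.

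The ordinary rows $h_{0,a}+m_a+\rho_a$ and $h_{0,a}+2\rho_a$ need $h_{0,a}\le1+\beta_a$ and $\rho_a\le9/8$ to be compared with $O_a$.

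Finally, the obstacle you single out is not real, and your proposed remedy would hurt. In the outward case the gap bound $h+v$ holds simply because $l<x$, $f(V)>y$, and $V$ overlaps $J$. Hence $[x,y]\subseteq\overline{J\cup V}$, and no hull, history or protrusion estimate enters. Bounding the outward part by the child's protrusion $(h_{0,a}-1)v$ instead replaces $v$ by a strictly larger multiple of $v$ for $\alpha<a<1$. The constant obtained would be at least $O_a+(h_{0,a}-2)\beta_a>O_a$. It would therefore fail to prove $K_a$ on $(\alpha,a_-]\cup[a_+,1)$, where $K_a=O_a$.
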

\begin{proof}
Use Lemma~\ref{sg:lem:charge-transfer} with terminal cost weights
$\omega_X=q_X$. The independent history bounds give hull length
at most $h_{0,a}h$ at an ordinary root and $h_{1,a}h$ at a special root.
For $x\le y$, write
$F_J(x,y)=|H_J\cap[x,y]|+L_J(x)+U_J(y)$ as in that lemma.
The two endpoint maxima need not coexist.

An antichain either contains the root singleton or omits it and
decomposes over child subtrees. Thus
\[
 L_J(x)=\max\left\{q_J\mathbf1_{\{s(J)<x\}},
                         \sum_iL_{C_i}(x)\right\},
\]
with the analogous right recurrence. A zero or ineligible root
term is assigned to the child choice. Ties of positive eligible
values can use either attaining choice. There is no mixed root
and descendant choice, since they are comparable in ancestry.
The descendants of an antichain need not be spatially compatible;
only immediate children are known to have been held together.

At an ordinary root, two eligible endpoints put the gap in the
root and give $1+2m_a$. One eligible endpoint gives
$h_{0,a}+m_a+\rho_a$, and none gives $h_{0,a}+2\rho_a$, using the full hull.

At a special root $J=[l,r)$, the two endpoints choose exactly two,
one, or zero singletons. Two singletons give $1+2m_a$.
For exactly one singleton, reflect to put it on the left, so $l<x$.
If $y<r$, the gap lies in $J$ and the coefficient is $1+m_a+A_a$.
Otherwise $r\le y$. Every child starts strictly before $r$;
compatibility permits at most one child $V$ ending beyond $y$.
If this outward child exists, the gap is contained in $J\cup V$,
of length at most $h+v$. Its right maximum is $q_V$, while
all other child roots are right-ineligible. Hence
\begin{align}
 F_J&\le(m_a+1)h+(1-\rho_a)v+q_a(v,b_V)+\rho_a W\notag\\
 &\le(m_a+1)h+(1-\rho_a)v+q_a(v,B)+\rho_a W\notag\\
 &\le(m_a+1)h+W+q_a(W,B)\notag\\
 &\le(m_a+1+\beta_a+q_{0,a}\beta_a)h.
 \label{curve:eq:outward}
\end{align}
The third line follows from the fixed-ledger slope
$\beta_a>\rho_a-1$; the last uses monotonicity, both special budgets,
and $q_a(\beta_a h,\beta_a h)=q_{0,a}\beta_a h$.
If there is no outward child, all child roots are right-ineligible.
The full hull then gives $h_{1,a}+m_a+\rho_a\beta_a$.

If both endpoint choices use children, call a child $V$ a bridge
when $s(V)<x\le y<f(V)$. There is at most one compatible bridge,
including at $x=y$. Its gap and endpoint weights contribute at
most $v+2q_V$. Every other child has at most one eligible endpoint,
so its endpoint sum is at most $q_i+\rho_a v_i$; this also covers
two ineligible endpoints since $q_i\ge q_{0,a}v_i\ge\rho_a v_i$.
Consequently
\[
 F_J\le v+2q_V+\sum_{i\ne V}(q_i+\rho_a v_i)
 \le W+2\sum_iq_i\le(\beta_a+2A_a)h,
\]
where the middle inequality uses $(\rho_a-1)v_i\le q_i$.
With no bridge, use the full hull and all endpoint sums to obtain
$h_{1,a}+A_a+\rho_a\beta_a$. These cases exhaust the table: root versus
children at each endpoint, the unique outward child in the mixed
case, and the presence or absence of a bridge in the child-only case.
Strict eligibility assigns endpoint equality to ineligibility.

The full antichain bound $\max\{m_a,A_a\}h$ is at most
$(1+m_a+A_a)h\le K_ah$. A rejected request $I$ has a nonempty
conflict set and fails ordinary acceptance, so assign analytical
pieces $w(I)\tau(X)/\sum_D\tau$ to its current conflicts.
They sum to $w(I)$ and are each below its receiver's current cost,
hence below $q_X$. All weights exceed the corresponding interval
lengths. These analytical pieces are not ledger fees.
Every hypothesis of Lemma~\ref{sg:lem:charge-transfer} is now
verified, giving the robustness bound. The filter produces another
causal labelled execution and is covered by the same bound.

It remains to reduce the table to~\eqref{curve:eq:two-coefficients}.
Put $e_a=(1+a)^{-2}$ and $m_{0,a}=\beta_a+\beta_a^2$, so
$A_a=m_{0,a}+e_a$ and $m_a\ge m_{0,a}$.
The branch formulas give $h_{0,a}\le1+\beta_a$; also
$e_a\le\alpha^2<1/2$ and $1\le\rho_a\le9/8<m_a$.
The gap $O_a-(1+2m_a)$ is $\beta_a$ for $a\le a_c$ and
$\beta_a+\beta_a^2-1-a\ge0$ for $a\ge a_c$. Further,
\[
 \begin{aligned}
 O_a-(h_{0,a}+m_a+\rho_a)
   &\ge\beta_a+\beta_a^2-\rho_a\ge7/8,\\
 O_a-(1+m_a+A_a)&=\beta_a-e_a>0,\\
 O_a-(\beta_a+2A_a)&=m_a-m_{0,a}+1-2e_a>0.
 \end{aligned}
\]
The ordinary no-endpoint row is smaller than its one-endpoint row.
The no-outward row $h_{1,a}+m_a+\rho_a\beta_a$ is at most $N_a$
when $A_a\ge m_a$. Otherwise $a>a_c$, since $A_a=m_a+e_a$ on
the first branch; then $h_{1,a}=1+\beta_a+\beta_a^2$ and
$\rho_a=1$, making this row equal to $O_a$. Since both $O_a,N_a$
occur in the table, the reduction follows. Continuity follows from
the defining functions and maxima.
\end{proof}

\begin{table}[t]
\caption{Complete joint-cut case list. The special root is $[l,r)$
and mixed choices are oriented with the singleton on the left.
$K_a$ is the maximum of the general coefficients, with the repeated
$1+2m_a$ counted once. The endpoint column preserves the specialized
Shaped-BANK calculation; the largest entry there is $5\varphi+3$.}
\label{sg:tab:coefficients}
\centering
\footnotesize
\setlength{\tabcolsep}{2.5pt}
\begin{tabular}{lll}
\toprule
Root and endpoint choices & General coefficient & At $a=\alpha$\\
\midrule
Ordinary: two eligible & $1+2m_a$ & $4\varphi+3$\\
Ordinary: one eligible & $h_{0,a}+m_a+\rho_a$ & $7\varphi-4$\\
Ordinary: none eligible & $h_{0,a}+2\rho_a$ & $10\varphi-12$\\
\midrule
Special: two singletons & $1+2m_a$ & $4\varphi+3$\\
Special: one, $y<r$ & $1+m_a+A_a$ & $3\varphi+5$\\
Special: one, outward child & $m_a+1+\beta_a+q_{0,a}\beta_a$ & $5\varphi+3$\\
Special: one, no outward child & $h_{1,a}+m_a+\rho_a\beta_a$ & $2\varphi+7$\\
Special: children, bridge & $\beta_a+2A_a$ & $3\varphi+6$\\
Special: children, no bridge & $h_{1,a}+A_a+\rho_a\beta_a$ & $\varphi+9$\\
\bottomrule
\end{tabular}

\end{table}

\paragraph{The golden robustness endpoint.}
At $a=\alpha$, $c_a=\alpha^3$, $h_{0,a}=2$, $h_{1,a}=m_a=M$,
$A_a=\varphi+3$, and $\rho_a=A_a/M$. The cost identities specialize to
\begin{equation}
 dv\le q_*(v,b)\le2\varphi v<Mv,\qquad
 q_*(v,b)\le dv+\alpha^3b,\label{sg:eq:affine}
\end{equation}
\begin{equation}
 q_*(W,B)-q_*(v,B)\ge\varphi(W-v),\label{sg:eq:slope}
\end{equation}
and the special terminal budgets become
\begin{equation}
 W,B\le\varphi h,\qquad
 \sum_iq_i\le dW+\alpha^3B\le(\varphi+3)h.
 \label{sg:eq:terminal}
\end{equation}
In particular, the critical outward case is
\begin{equation}
 F_J\le(M+1)h+W+q_*(W,B)
 \le(M+1+\varphi)h+q_*(\varphi h,\varphi h)
 =(5\varphi+3)h.\label{sg:eq:outward}
\end{equation}
All other endpoint coefficients in the table are smaller: their
positive slacks follow directly from $3/2<\varphi<2$.
This gives $K_\alpha=\varphi^5$.

\begin{theorem}[Exact shaped-cost guarantees]\label{sg:thm:exact}
Both $T_*$ and $T_*^F$ have consistency $d$ and exact strict
robustness $\varphi^5=5\varphi+3$.
\end{theorem}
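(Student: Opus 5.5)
The theorem has four parts: the consistency upper bound, the consistency lower bound, the robustness upper bound, and a matching robustness lower-bound construction, each for both $T_*$ and $T_*^F$.

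\emph{Consistency.} Theorem~\ref{sg:thm:interface} applies because $q_*$ satisfies \eqref{sg:eq:interface-global} and \eqref{sg:eq:interface-capacity}; on $0\le b\le dv$ the cap $2\varphi v$ is inactive, since $\max\{dv,\varphi v+\alpha b\}\le 2\varphi v$ there. It gives $w(P)\le d\,\mathrm{ALG}$ whenever $P$ is feasible, and hence $C(T_*)\le d$. Filtering preserves perfect labels, so the same bound holds for $T_*^F$. Equality follows from Proposition~\ref{curve:prop:exact-consistency} at $a=\alpha$, or directly from Theorem~\ref{thm:lower} once finite robustness is known.

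\emph{Robustness upper bound.} Apply Theorem~\ref{curve:thm:robustness} at $a=\alpha$. I will verify that the specialized constants are $c_a=\alpha^3$, $h_{0,a}=2$, $h_{1,a}=m_a=M$, $A_a=\varphi+3$ and $\rho_a=A_a/M<\varphi$. I then check each row of Table~\ref{sg:tab:coefficients} against $5\varphi+3$ using $\varphi^2=\varphi+1$. The binding row is the outward case \eqref{sg:eq:outward}, where $q_*(\varphi h,\varphi h)=d\varphi h$ because $b=\varphi v$ lies at the first breakpoint. This gives $(M+1+\varphi+\varphi^3)h=(5\varphi+3)h=\varphi^5h$. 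The remaining rows are $4\varphi+3$, $7\varphi-4$, $10\varphi-12$, $3\varphi+5$, $2\varphi+7$, $3\varphi+6$ and $\varphi+9$. Each is at most $5\varphi+3$ because $\varphi>3/2$; for example, $\varphi+9\le5\varphi+3$ requires $\varphi\ge 3/2$. So $R(T_*)\le\varphi^5$. The filter yields another causal execution of the same rule, so the same bound holds for $T_*^F$.

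\emph{Robustness lower bound (main obstacle).} Here I need the five-request construction behind the claim that $\varphi^5$ is attained (Proposition~\ref{sg:prop:floor}), and I would build it explicitly as follows.
\begin{enumerate}
\item Offer a negative anchor $K$ of length $\varphi$ (zero ledger, cost $d\varphi=M$).
\item Offer short positives around $K$; they must fail both gates.
\item Offer a unit positive $Q$ overlapping $K$. It passes the special test at equality: $1\ge\alpha\varphi$ and $1\ge b_D/\varphi$ with $b_D\le\varphi$. It then inherits cost $M$.
\item Offer negative or positive requests on both sides of $Q$. They are rejected because their lengths stay just below $Q$'s cost $M$ and the conflict-cost sums. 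They form a compatible packing of total length $M+1+\varphi+\varphi^3-O(\epsilon)$, mirroring the outward-case accounting: one side charged to the root $Q$ at cost $M$, the other to the outward child at length plus $q_*(\varphi,\varphi)$.
\end{enumerate}
The delicate step is to choose tiny offsets $\epsilon$ so that every rejection is forced, all strict inequalities and equality-accept conventions behave correctly, and only $Q$ of length $1$ survives. I will verify each gate inequality with explicit $\epsilon$-margins, then let $\epsilon\to0$ to get a ratio approaching $\varphi^5$. Using negative labels, or labels that the filter leaves unchanged, makes the same instance work for $T_*^F$.
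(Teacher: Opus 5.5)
Your consistency argument and robustness upper bound are the same as the paper's proof: the capacity theorem with Proposition~\ref{curve:prop:exact-consistency}, and the $a=\alpha$ specialization of Theorem~\ref{curve:thm:robustness}, where the outward row binds. One small slip: in $q_*(\varphi h,\varphi h)$ the ledger equals the length, so it lies strictly below the first breakpoint $b=\varphi\cdot\varphi h=dh$, not at it. The value $Mh$ is unaffected.

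The gap is in your explicit lower-bound family, which fails in the order you give it. In step~4, every packing request arrives after $Q$ has replaced $K$, and you justify each rejection by $Q$'s cost $M$. But only requests overlapping $Q$ conflict with $Q$. Consider the long request extending beyond $K$'s outer end, the one you yourself charge to the outward child. It does not overlap $Q$, and once $K$ is revoked it has an empty conflict set, so it is accepted and $\mathrm{ALG}>1$.

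Being charged to the outward child means being rejected \emph{while $K$ is still the sole holder}, that is, before $Q$ arrives. The working order is that of Proposition~\ref{sg:prop:floor}:
\begin{enumerate}
\item Offer the negative $K=[1-\epsilon,d-\epsilon)$, which is accepted with zero ledger.
\item Offer the negatives $I=[\epsilon,d-2\epsilon)$ and $R=[d-2\epsilon,d-2\epsilon+M-\epsilon)$. Both are rejected against $q_*(\varphi,0)=M$.
\item Offer the positive $J=[0,1)$, which replaces $K$ by the special test at length equality.
\item Offer the negative $L=[2\epsilon-M,\epsilon)$, which is rejected against $J$'s cost $M$.
\end{enumerate}
This gives $\mathrm{ALG}=1$ and $\mathrm{OPT}\ge w(L)+w(I)+w(R)=2M+d-5\epsilon=\varphi^5-5\epsilon$. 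There is a single positive, so the filter changes nothing.

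The requests rejected against $K$ must also be negative. A positive of length at least $1=\alpha\varphi$ would pass both special tests against the negative $K$ and replace it, so your ``negative or positive'' choice is not free on that side. Your step-2 short positives are harmless but unnecessary, since $q_*(\varphi,b)=M$ for all $b\le d$.
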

\begin{proof}
The capacity theorem and Proposition~\ref{curve:prop:exact-consistency}
give exact consistency. The specialized joint-cut calculation gives
the robustness upper bound; Proposition~\ref{sg:prop:floor} below
supplies the matching family.
\end{proof}

\begin{proposition}[A matching outward family]
\label{curve:prop:outward-lower}
For every $a\in[\alpha,1]$, raw and filtered $T_a$ satisfy
$R\ge O_a$. At $a=1$ the exact pair is
$(C,R)=(4,7)$.
\end{proposition}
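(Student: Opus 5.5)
The plan is to build, for each $a\in[\alpha,1]$ and each small $\epsilon>0$, a five-block input on which $T_a$ ends holding a single unit positive, while the realized history contains a compatible packing of weight $O_a-O(\epsilon)$. The geometry is read off the critical outward row of Table~\ref{sg:tab:coefficients}. There the bound~\eqref{curve:eq:outward} is tight when the following all hold:
\begin{itemize}
\item the special root $J$ has length $h=1$;
\item $J$ has exactly one negative child $V$, of length $W=\beta_a$ and terminal ledger $B\approx\beta_a$, so that $q_a(V)=q_{0,a}\beta_a=\beta_a+\beta_a^2$;
\item $J$ and $V$ overlap only by $\epsilon$;
\item a rejected request of length just below $m_a$ crosses the left end of $J$;
\item a rejected request of length just below $q_a(V)$ crosses the outward end of $V$.
\end{itemize}
The packing weight is then $m_a+1+\beta_a+(\beta_a+\beta_a^2)=O_a$.

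Concretely, I would offer the following requests in order.
\begin{enumerate}
\item The negative $V=[1-\epsilon,1-\epsilon+\beta_a)$. It is accepted, since the held set is empty.
\item A positive partition of $[1,\beta_a+1-2\epsilon)$ into tiles shorter than $a\beta_a=1$. Each conflicts only with $V$. Each fails the ordinary gate, since the cost is at least $q_{0,a}\beta_a>1$. Each fails the special length test. These tiles raise $b(V)$ to $\beta_a-O(\epsilon)$, which is below the breakpoint $(1+a)\beta_a$, so $V$'s cost stays $q_{0,a}\beta_a$.
\item A negative $R$ of length $q_{0,a}\beta_a-\epsilon$ starting just before the right end of $V$ and to the right of all tiles. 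It fails the ordinary gate. As a negative it has no special route, so it is rejected and adds nothing to any ledger.
\item The positive $J=[0,1)$. It conflicts only with the negative $V$. It fails the ordinary gate but passes the special tests: $1\ge a\beta_a$ holds with equality, and $1\ge a\,b(V)$ holds because $b(V)<\beta_a$. So $J$ replaces $V$ and receives cost $m_a$.
\item Positive tiles filling $[\epsilon,1-\epsilon)$, followed by a request $L=[\epsilon-m_a+\epsilon,\epsilon)$ of length below $m_a$. Both conflict only with $J$. Both fail the ordinary gate because $J$ costs $m_a$. Both fail the special gate because $J$ is positive. So both are rejected.
\end{enumerate}

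Stopping here gives $\mathrm{ALG}=1$. The pieces $L$, the $J$-interior tiles, the $V$-tiles and $R$ are pairwise compatible up to endpoint contact. Their total weight is $m_a+1+\beta_a+q_{0,a}\beta_a-O(\epsilon)=O_a-O(\epsilon)$, so $\mathrm{OPT}/\mathrm{ALG}\to O_a$. For the filtered form $T_a^F$, the positive labels form a feasible family: every positive is compatible with every other, since $L$, the $J$-tiles and the $V$-tiles abut only at endpoints. The filter therefore keeps all labels and the execution is unchanged.

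At $a=1$ we have $m_1=\max\{2,3\}=3$ and $\beta_1=1$, so $O_1=3+1+2+1=7$. Theorem~\ref{curve:thm:robustness} gives $R(T_1)\le K_1$. I would check directly that $N_1=h_{1,1}+A_1+\rho_1\le7$, using $h_{1,1}=3$, $A_1=2+1/4$ and $\rho_1=1$, so that $K_1=O_1$. Together with Proposition~\ref{curve:prop:exact-consistency}, which gives $C(T_1)=4$, this yields the exact pair $(4,7)$.

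The main obstacle is making all the strict gate inequalities hold simultaneously with only $O(\epsilon)$ loss. Three points need checking:
\begin{itemize}
\item $J$ must pass the ledger test, which forces the tile mass charged to $V$ to be strictly below $\beta_a$ while still approaching it.
\item No tile may cross the endpoint of $V$ adjacent to $J$, otherwise fees would be split.
\item $R$ must overlap $V$ but no tile, and must stay below its threshold, which equals $V$'s cost at $R$'s arrival time.
\end{itemize}
I would fix the overlaps as explicit multiples of $\epsilon$ and verify each test with equality accepted.
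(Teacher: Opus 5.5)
Your raw-policy construction works and is a close variant of the paper's. It uses the same anchor of length $\beta_a$, the same right reserve of length $q_{0,a}\beta_a-\epsilon$, the same unit positive $J$ replacing the anchor at special equality, and the same left request of length $m_a-\epsilon$. Your $a=1$ check is also correct: $O_1=7$ and $N_1=3+9/4+1=25/4$, so $K_1=7$, and exact consistency gives $C=4$.

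\textbf{The filtered step rests on a false claim.} You list $L$, the $J$-tiles and the $V$-tiles as the positives, but $J=[0,1)$ is itself positive. It contains the tiles of $[\epsilon,1-\epsilon)$, and it overlaps $L$ by $\epsilon$ if $L$ is positive. So the complete positive set is infeasible, and the permanent filter does change labels:
\begin{itemize}
\item it admits the $V$-tiles, and then $J$, since they meet only at $1$;
\item it gives the later $J$-interior tiles, and $L$, effective label $0$.
\end{itemize}
Your conclusion still survives. These requests, now negative, have no special route and are shorter than $J$'s cost $m_a$. They are therefore still rejected without any ledger effect, and again $\mathrm{ALG}=1$. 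Alternatively, label them negative from the start.

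\textbf{How the paper avoids this.} It replaces both of your tile groups with a single negative $I=[\epsilon,1+\beta_a-2\epsilon)$, offered while the anchor is held. $I$ is rejected because $q_{0,a}-3\epsilon<q_{0,a}\beta_a$. Then $J$ is the only positive, so filtering is vacuous.

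\textbf{The tiles inside $V$ are unnecessary.} We have $q_a(v,b)=q_{0,a}v$ for every $b\le(1+a)v$. So a zero-ledger anchor already costs $q_{0,a}\beta_a$, which is the value the tight outward chain uses. The terminal ledger you aim for, $B\approx\beta_a$, plays no role.
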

\begin{proof}
For $0<\epsilon<1/10$ offer
\[
 \begin{array}{ccl}
 K&0&[1-\epsilon,1+\beta_a-\epsilon),\\
 I&0&[\epsilon,1+\beta_a-2\epsilon),\\
 R&0&[1+\beta_a-2\epsilon,1+\beta_a-2\epsilon+q_{0,a}\beta_a-\epsilon),\\
 J&1&[0,1),\\
 L&0&[2\epsilon-m_a,\epsilon).
 \end{array}
\]
The anchor $K$ has length $\beta_a$, zero ledger, and cost
$q_{0,a}\beta_a$. Both $I$ and $R$ reject; their lengths are
$q_{0,a}-3\epsilon<q_{0,a}\beta_a$ and $q_{0,a}\beta_a-\epsilon$.
The unit positive $J$ fails ordinary acceptance and passes the
special length test at equality with zero ledger. After it replaces
$K$, the final negative $L$ rejects at length $m_a-\epsilon<m_a$.
Thus $\mathrm{ALG}=1$, while the compatible $L,I,R$ tile the
entire offered hull and give
$\mathrm{OPT}=m_a+1+\beta_a+q_{0,a}\beta_a-5\epsilon$.
There is only one positive, so filtering changes nothing.
Let $\epsilon$ decrease to zero. At $a=1$ the constants are
$q_{0,a}=2$, $m_a=Q_a=3$, $A_a=9/4$, $\rho_a=1$, $h_{0,a}=2$, $h_{1,a}=3$;
the table's maximum is $7$, matching this lower bound, and the
exact consistency proposition gives $4$.
\end{proof}

\begin{proposition}[Exact robustness on two parameter intervals]
\label{curve:prop:exact-arcs}
Let $a_-\in(\alpha,a_c)$ and $a_+\in(a_c,1)$ be the unique zeros
in these respective intervals of
\begin{equation}
 \begin{aligned}
 p_-(x)&=1+3x+x^2-5x^3\\
       &\hspace{1em}-6x^4-4x^5-x^6,\\
 p_+(x)&=1+2x+x^2-3x^3-3x^4-x^5.
 \end{aligned}
 \label{curve:eq:transition-polynomials}
\end{equation}
For both $T_a$ and $T_a^F$, the exact robustness is
\[
 R=
 \begin{cases}
 1+3/a+2/a^2,&\alpha\le a\le a_-,\\
 a+2+3/a+1/a^2,&a_+\le a\le1.
 \end{cases}
\]
On $(a_-,a_+)$, the bounds are $O_a\le R\le N_a$ with $O_a<N_a$.
In particular, $a=2/3$ gives the exact pair $(C,R)=(25/9,10)$.
\end{proposition}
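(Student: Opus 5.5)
The plan is to combine the two bounds already in hand. Theorem~\ref{curve:thm:robustness} gives $R(T_a)\le K_a=\max\{O_a,N_a\}$ for raw and filtered $T_a$, and Proposition~\ref{curve:prop:outward-lower} gives $R\ge O_a$ for both. So exact robustness $R=O_a$ holds precisely where $N_a\le O_a$. What remains is pure algebra: compute $O_a$ in closed form on each branch of~\eqref{curve:eq:constants}, and show that $N_a\le O_a$ exactly on $[\alpha,a_-]\cup[a_+,1]$, with strict reversal on $(a_-,a_+)$.

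\emph{Step 1: closed forms for $O_a$.} On $a\le a_c$ we have $m_a=(1+a)/a^2=\beta_a+\beta_a^2$, so
\[
 O_a=m_a+1+2\beta_a+\beta_a^2=1+3/a+2/a^2 .
\]
On $a\ge a_c$ we have $m_a=Q_a=1+a+\beta_a$, so
\[
 O_a=a+2+3/a+1/a^2 .
\]
These are the two stated formulas. At $a=2/3\le a_c$ the first gives $1+9/2+9/2=10$, and Proposition~\ref{curve:prop:exact-consistency} gives $C=25/9$, provided $2/3\le a_-$ (checked in Step 3).

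\emph{Step 2: $N_a$ on each branch.} Here $N_a=h_{1,a}+A_a+\rho_a\beta_a$ with $A_a=\beta_a+\beta_a^2+(1+a)^{-2}$.

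\begin{itemize}
\item First branch ($a\le a_c$). Here $h_{1,a}=2+a+\beta_a$, and $A_a=m_a+e_a>m_a$, so $\rho_a=A_a/m_a$. Then $O_a-N_a$ is a rational function in $a$. Multiplying by the positive denominator $a^2(1+a)^3$ (or whatever the common denominator turns out to be), I expect to obtain a positive multiple of $p_-(a)$.
\item Second branch ($a\ge a_c$). Here $h_{1,a}=1+\beta_a+\beta_a^2$ and $m_a=1+a+\beta_a$. I first determine the sign of $A_a-m_a=\beta_a^2+e_a-1-a$ on $[a_c,1]$, expecting $\rho_a=1$ there (at $a=1$, $A_a=9/4<3=m_a$). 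If $\rho_a=A_a/m_a$ on some initial subinterval, I treat that piece separately. With $\rho_a=1$, the difference $O_a-N_a$ simplifies, after clearing $a^2(1+a)^2$, to a positive multiple of $p_+(a)$.
\end{itemize}

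\emph{Step 3: roots and signs of $p_\pm$.} The coefficient sign patterns are $(+,+,+,-,-,-,-)$ for $p_-$ and $(+,+,+,-,-,-)$ for $p_+$. Each has exactly one sign change, so by Descartes' rule each has exactly one positive root. Each is positive near $0$ and negative for large $x$. To place the roots, I evaluate at rational points: show $p_-(2/3)>0>p_-(a_c)$ and $p_+(a_c)<0<p_+(1)$, using $a_c^3=1-a_c^2$ to reduce powers. This yields $a_-\in(2/3,a_c)$ and $a_+\in(a_c,1)$, and the sign of $O_a-N_a$ on each side of these roots. Equality at $a_\pm$ is harmless. On $(a_-,a_+)$ the strict inequality $O_a<N_a$ follows from the strict signs together with continuity at $a_c$, where both branches agree. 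This gives the stated two-sided bounds there.

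The main obstacle is Step 2: carrying out the rational simplification so that it lands exactly on $p_-$ and $p_+$ with the correct positive factors, and correctly settling which expression $\rho_a$ takes on the second branch. If $\rho_a=A_a/m_a$ there near $a_c$, the resulting polynomial will differ, and its sign must be checked separately on that subinterval. I would attack this first by numerical evaluation at a few points to fix the orientation, then verify the identity symbolically.
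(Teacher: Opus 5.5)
Your architecture is the paper's: sandwich $R$ between the outward family $O_a$ and the certificate $K_a=\max\{O_a,N_a\}$, then compare $O_a$ and $N_a$ algebraically. The first branch works as you anticipate, with $a^2(1+a)^3(O_a-N_a)=p_-(a)$. Your Descartes argument (one sign change in each polynomial), together with $p_-(2/3)=119/729>0>a_c^2-2=p_-(a_c)$, is a valid and somewhat tidier substitute for the paper's monotonicity of $\Delta_a$.

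The second branch does not go as you predict, and this is the gap. Your expectation that $\rho_a=1$ on $[a_c,1]$ is false near $a_c$. Indeed $g_a=A_a-m_a$ satisfies $a^2(1+a)^2g_a=p_+(a)$ and $g_{a_c}=(1+a_c)^{-2}>0$, so $\rho_a=A_a/m_a>1$ on $[a_c,a_+)$. Thus $p_+$ is the polynomial whose root $a_+$ is exactly where $\rho_a$ drops to one. The piece you defer (``treat that piece separately'') is where the missing idea lies. On $[a_c,a_+]$,
\[
 O_a-N_a=(m_a-A_a)+\beta_a\bigl(1-A_a/m_a\bigr)=(m_a-A_a)\bigl(1+\beta_a/m_a\bigr),
\]
while on $[a_+,1]$ one has $O_a-N_a=m_a-A_a$. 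Both pieces are positive multiples of $m_a-A_a$. Hence the $\rho_a$ switch and the crossing $O_a=N_a$ coincide, and no second polynomial appears.

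Your signs in Steps 2--3 are also both reversed. On $[a_+,1]$, $a^2(1+a)^2(O_a-N_a)=-p_+(a)$, a negative rather than positive multiple. Moreover $p_+(a_c)=a_c^2>0>-3=p_+(1)$, so the check $p_+(a_c)<0<p_+(1)$ you propose would fail. It also contradicts your own remark that $p_+$ is positive before its unique positive root. The two reversals cancel, so your stated conclusion ($O_a<N_a$ on $[a_c,a_+)$, $O_a\ge N_a$ on $[a_+,1]$) is correct, but only once both are fixed.

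With the factorization and the corrected signs, the rest goes through as in the paper: exactness on $[\alpha,a_-]\cup[a_+,1]$, the strict gap on $(a_-,a_+)$, and $O_{2/3}=10$ with $2/3<a_-<a_c$.
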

\begin{proof}
On $[\alpha,a_c]$, substitution gives
\[
 \begin{gathered}
 \Delta_a:=O_a-N_a=\frac1{a^2}-1-a-\frac{1+2a}{(1+a)^3},\\
 a^2(1+a)^3\Delta_a=p_-(a).
 \end{gathered}
\]
Here $d\Delta_a/da=-2/a^3-1+(1+4a)/(1+a)^4<0$, since the last
fraction is below one. The signs $\Delta_\alpha=2\sqrt5-4>0$ and
$\Delta_{a_c}=-(1+2a_c)/(1+a_c)^3<0$ give the unique crossing $a_-$.
On $[a_c,1]$, put
\[
 \begin{gathered}
 g_a:=A_a-m_a=\frac1{a^2}+\frac1{(1+a)^2}-1-a,\\
 a^2(1+a)^2g_a=p_+(a).
 \end{gathered}
\]
Now $dg_a/da=-2/a^3-2/(1+a)^3-1<0$,
$g_{a_c}=(1+a_c)^{-2}>0$, and $g_1=-3/4$.
Thus its unique zero is $a_+$, where $\rho_a$ reaches one, and
\[
 O_a-N_a=
 \begin{cases}
 (m_a-A_a)(1+\beta_a/m_a),&a_c\le a\le a_+,\\
 m_a-A_a,&a_+\le a\le1.
 \end{cases}
\]
The signs and Proposition~\ref{curve:prop:outward-lower} prove
the two exact intervals and the strict middle certificate gap;
substituting the two branches of $m_a$ gives the displayed $R$.
Finally, $\Delta_{2/3}=119/1500>0$ and $2/3<a_c$, with $O_{2/3}=10$.
Proposition~\ref{curve:prop:exact-consistency} gives $C=25/9$.
\end{proof}

Numerically, $a_-\approx0.677821$, $a_c\approx0.754878$, and
$a_+\approx0.813824$.

\subsection{A matching family and the architectural limitation}

\begin{proposition}[A floor for fixed special rules]\label{sg:prop:floor}
Fix $a=\alpha$. Keep positive ordinary cost $Mv$, the shared special tests,
ordinary priority, fees, and ledger inheritance. If the negative
cost is finite and satisfies $q(v,0)\ge dv$, the resulting raw
and filtered policies have strict robustness at least $\varphi^5$.
\end{proposition}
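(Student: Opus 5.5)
The plan is to reuse the five-request outward family of Proposition~\ref{curve:prop:outward-lower} at $a=\alpha$. The key check is that every decision in that family depends only on $q(v,0)\ge dv$, the positive cost $Mv$, and the fixed special tests. Neither the shape of $q_*$ for $b>0$ nor monotonicity or continuity of the cost should be needed. Fix $0<\epsilon<1/10$. With $\beta_\alpha=\varphi$, $q_{0,\alpha}=d$ and $m_\alpha=M$, I would offer, in this order,
\[
 K=[1-\epsilon,\varphi+1-\epsilon)\ (0),\quad I=[\epsilon,\varphi+1-2\epsilon)\ (0),\quad R=[\varphi+1-2\epsilon,\varphi+1-2\epsilon+M-\epsilon)\ (0),
\]
followed by $J=[0,1)\ (1)$ and then $L=[2\epsilon-M,\epsilon)\ (0)$.

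I would trace the execution step by step.
\begin{enumerate}
\item $K$ arrives to an empty held set and is accepted with ledger $0$. Its ordinary cost is $q(\varphi,0)\ge d\varphi=M$; here finiteness of $q$ only matters so that the gate comparison is meaningful.
\item $I$ has length $d-3\epsilon<M$ and conflicts only with $K$. It fails the ordinary gate, and the special test does not apply to a negative arrival, so it is rejected.
\item $R$ has length $M-\epsilon<M$, overlaps only $K$ (by $\epsilon$), and is rejected the same way. Negative rejections add no fee, so $K$ keeps ledger $0$ and cost $q(\varphi,0)$ throughout.
\item $J$ conflicts only with $K$ and fails the ordinary test, since $1<M\le q(\varphi,0)$. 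It passes the special length test at equality, $1=\alpha\varphi$, and the ledger test, $1\ge 0/\varphi$. So $J$ replaces $K$ and holds positive cost $M$.
\item $L$ has length $M-\epsilon$ and conflicts only with $J$ (its right endpoint $\epsilon$ lies inside $J$). It fails the gate $w\ge M$ and is rejected.
\end{enumerate}
Thus $\mathrm{ALG}=1$ on every such execution.

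For the optimum, I would check that $L$, $I$ and $R$ are pairwise compatible: $L$ ends at $\epsilon=s(I)$, and $I$ ends at $\varphi+1-2\epsilon=s(R)$. Hence
\[
 \mathrm{OPT}\ge (M-\epsilon)+(d-3\epsilon)+(M-\epsilon)=2M+d-5\epsilon .
\]
Using $M=2\varphi+1$ and $d=\varphi+1$, the limit $2M+d=5\varphi+3$ equals $\varphi^5$. Letting $\epsilon\downarrow0$ then gives $R\ge\varphi^5$ for the raw policy.

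For the filtered policy, the input has the single positive $J$, so the compatible-positive filter keeps every label and the execution is identical. The main obstacle I expect is bookkeeping rather than analysis. I must confirm that each arrival's conflict set is exactly as claimed, with endpoint contacts treated as compatible. I must also confirm that the ordinary-priority rule never lets a larger or irregular negative cost change any decision. It does not: a larger $q(\varphi,0)$ only makes the rejections of $I$ and $R$ and the ordinary failure of $J$ easier, while the special acceptance of $J$ ignores the cost entirely.
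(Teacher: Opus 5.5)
Your proposal is correct and is the paper's proof. You use the same five requests $K,I,R,J,L$, which coincide with the outward family at $a=\alpha$. Your trace is the same: only $q(\varphi,0)\ge M$ and the positive cost $M$ enter the decisions, $\mathrm{ALG}=1$, the compatible $L,I,R$ give $2M+d-5\epsilon\to\varphi^5$, and the single positive leaves the filter inert.
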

\begin{proof}
For $0<\epsilon<1/100$ offer the following five requests in order:
\[
\begin{array}{ccl}
 K&0&[1-\epsilon,d-\epsilon),\\
 I&0&[\epsilon,d-2\epsilon),\\
 R&0&[d-2\epsilon,d-2\epsilon+M-\epsilon),\\
 J&1&[0,1),\\
 L&0&[2\epsilon-M,\epsilon).
\end{array}
\]
$K$ is accepted empty, has length $\varphi$, and carries zero
ledger. Both $I$ and $R$ overlap it and have length below
$q(\varphi,0)\ge M$, so they are rejected. The positive $J$
fails ordinary acceptance, passes the special length test at
equality and the zero-ledger test, and replaces $K$.
Finally $L$ overlaps $J$ and has length $M-\epsilon<M$, so it
is rejected. Thus $\mathrm{ALG}=1$. The compatible intervals
$L,I,R$ tile the entire offered hull; all five requests lie in
that hull. Therefore
\[
 \mathrm{OPT}=2(M-\epsilon)+(d-3\epsilon)
 =5\varphi+3-5\epsilon.
\]
There is only one positive, so filtering changes nothing.
Taking $\epsilon\downarrow0$ proves the floor and the matching
lower bound in Theorem~\ref{sg:thm:exact}.
\end{proof}

With positive cost $Mv$, these results make Shaped-BANK robustness-optimal
within the sufficient-cost class of Theorem~\ref{sg:thm:interface}.
The zero-ledger relaxation and the nonshrinking-cost floor are analyzed next.

\section{Cancellation penalties and replacement-cost limits}\label{app:cancellation-costs}
The buyback objective subtracts a proportional payment for each revoked
request from the value finally retained \citep{babaioff2009buyback}.
For the unchanged Shaped-BANK policy, its accepted-history bound also
controls this payment. The following result applies with or without the
permanent-history filter and for arbitrary raw predictions.

\begin{proposition}[Sharp cumulative revocation factor]
\label{r32:prop:cancellation}
Let $T$ be the total length of accepted identities subsequently revoked,
each counted once, and let $N,P$ be the final lengths held with effective
labels zero and one. Then
\[
 T\le N+2\varphi P\le2\varphi\,\mathrm{ALG}.
\]
The supremum of $T/\mathrm{ALG}$ over finite nonempty inputs is
$2\varphi$, also for the filtered policy.
\end{proposition}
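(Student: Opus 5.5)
The plan is to bound $T$ by the accepted-history bounds \eqref{sg:eq:history}. A matching five-request family will show the supremum is $2\varphi$.

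\emph{Upper bound.} Work in the replacement forest, where revoked identities are children of their replacements. Every revoked identity lies strictly below exactly one final root. Hence
\[
 T=\sum_{J\text{ final}}\bigl(H(J)-w(J)\bigr),
\]
where $H(J)$ is the total length of $J$'s subtree. For a negative root, \eqref{sg:eq:history} gives $H(J)\le2w(J)$, so its revoked mass is at most $w(J)$. For a positive root it gives $H(J)\le Mw(J)$, so the revoked mass is at most $(M-1)w(J)=2\varphi w(J)$, since $\varphi^3-1=2\varphi$. Summing gives $T\le N+2\varphi P$. Then $N+2\varphi P\le2\varphi(N+P)=2\varphi\,\mathrm{ALG}$, because $2\varphi>1$.

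The history bounds were proved for arbitrary labels and need no capacity assumption, so they apply to raw predictions. For the filtered policy I will apply them to the effective labels. The filter is again a causal labelled execution of the same policy, so the bounds hold verbatim with $N,P$ read from effective labels. The step I must check most carefully is the exact form of \eqref{sg:eq:history}. The statement $H(X)\le Mw(X)$ must hold for a positive $X$ born either ordinarily or specially. At $a=\alpha$ we have $h_{1,\alpha}=M$ and $h_{0,\alpha}=2\le M$, so both births are covered. Negative nodes are always ordinary, with $h_{0,\alpha}=2$.

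\emph{Sharpness.} I need a positive final root whose subtree length approaches $Mw(J)$. I will build it through a special birth over negative children of total length near $\varphi$, each carrying its own maximal subtree of ratio near $2$. Take a unit positive $J$ replacing a negative $K$ of length $\varphi$ by the special test, as in Proposition~\ref{sg:prop:floor}. This alone gives only $T=\varphi$. To reach $2\varphi$, first build $K$'s subtree: before $K$ arrives, hold a negative child of length close to $\varphi/d=\alpha$. Its ordinary replacement by $K$ costs $d\cdot\alpha=\varphi$, paid by $K$ at equality. This gives $H(K)\to\varphi+\alpha$.

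Iterating the construction geometrically, with negative replacements by factor $d$, makes $H(K)/w(K)\to\sum_j d^{-j}=d/(d-1)=\varphi$. This gives $T\to\varphi\cdot\varphi=d$, which is still below $2\varphi$. The full target instead requires the special parent's children to have total length $\varphi$ together with ledger mass, with the slack coming from the combination of both branches of the history recursion. I expect the main obstacle to be exactly this matching construction: realizing the extremal recurrence $h_1=1+\varphi(h_0-1)\cdot\ldots$, so that $H(J)\to M$. I would first try a chain of nested ordinary negative replacements of shrinking factor $d$ beneath $K$, then a special positive, and iterate positive$\to$negative$\to$positive levels with length ratio $\varphi/M$. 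I would then check numerically that the resulting geometric series sums to $M-1=2\varphi$ while keeping all intervals nested and overlapping as required, with $\mathrm{ALG}$ equal to the unit $J$. Equality at thresholds will be approached with $\epsilon$-perturbations as in the earlier floor propositions.
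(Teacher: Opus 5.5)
Your upper bound is the paper's argument: partition the accepted identities by final root, subtract the roots from the history bounds \eqref{sg:eq:history}, and use $M-1=2\varphi$. Your remarks on birth type and effective labels are correct. The gap is in the sharpness half. You leave it as a plan, and the chain you propose to try first cannot reach $2\varphi$.

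Equality in $H(X)\le 2w(X)$ at a negative node requires its ordinary children to have history equal to their cost. A negative child of length $v$ has history at most $2v$ but cost at least $dv$, a ratio of $2/d<1$. So any negative--negative replacement, including your factor-$d$ chain, loses mass; this is exactly why you obtained only $d$. Only a specially born positive child, with cost $Mu$ and history $Mu$, is lossless. The extremal recurrence is $h_-=1+h_+/M$, $h_+=1+\varphi h_-$, with fixed point $(2,M)$. Hence the labels must strictly alternate: a negative is replaced specially by a positive of $\alpha$ times its length, and that positive is replaced ordinarily by a negative of $M$ times its length.

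The paper realizes this chain with nested, growing negatives. For $k=0,\ldots,K$, present $N_k=[0,d^k)$ with label $0$, then $P_k=[a_k,a_{k+1})$ with label $1$, where $a_k=\alpha^2(d^k-1)$. Then $w(P_k)=\alpha d^k$ and $P_k\subset N_k$.
\begin{itemize}
\item $P_k$ passes the special length test at equality.
\item $P_k$ passes the ledger test, which you did not address. This is the one constraint that can fail, since every accepted positive's length accumulates in the inherited ledger. Here the ledger of $N_k$ is $\sum_{j<k}\alpha d^j=a_k<d^k=\varphi\,w(P_k)$.
\item That ledger stays below $\varphi d^k$, so the negative cost remains $d\cdot d^k>w(P_k)$ and ordinary acceptance fails.
\item $N_{k+1}$ then replaces $P_k$ ordinarily at equality, since $d^{k+1}=M\alpha d^k$.
\end{itemize}
The final holder is $P_K$, so $\mathrm{ALG}=\alpha d^K$, and
\[
T=\sum_{k\le K}d^k+\alpha\sum_{k<K}d^k=2d^K-1=2\varphi\,\mathrm{ALG}-1.
\]
The positives are consecutive, hence compatible, so the filter changes no label. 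Any top-down version with a fixed unit root, as you envisage, would need the same ledger and compatibility checks.
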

\begin{proof}
The replacement forest partitions all accepted identities among the
final held roots. Equation~\eqref{sg:eq:history} bounds the accepted
history of a negative root by twice its length and that of a positive
root by $M$ times its length. Subtracting the final roots gives
$T\le N+(M-1)P=N+2\varphi P$.

For sharpness, put $n_k=d^k$ and $a_k=\alpha^2(d^k-1)$.
For $k=0,\ldots,K$, present in order
\[
 N_k=[0,n_k),\ p(N_k)=0,\qquad
 P_k=[a_k,a_{k+1}),\ p(P_k)=1.
\]
The positives are mutually compatible, have lengths $\alpha n_k$,
and lie inside their respective $N_k$, since
$a_{k+1}=n_k-\alpha^2$. Thus the permanent filter preserves their
labels. Before $P_k$, the negative holder has ledger
$\sum_{j<k}\alpha n_j=a_k\le n_k$.
The special length test accepts $P_k$ at equality, and its ledger
test passes. Then $N_{k+1}$ passes the positive holder's ordinary
test at equality, since $n_{k+1}=M\alpha n_k$.
The final reward is $\alpha d^K$, and
\[
 T=\sum_{k=0}^{K}d^k+\alpha\sum_{k=0}^{K-1}d^k
   =2\varphi\,\mathrm{ALG}-1.
\]
Taking $K\to\infty$ proves the supremum.
\end{proof}

Charge $c\,w(I)$ for revoking $I$, in the same units as retained
length, and define $\mathrm{ALG}_{c}=\mathrm{ALG}-cT$.
Finite net robustness means a positive constant $R_c$ satisfying
$\mathrm{OPT}\le R_c\mathrm{ALG}_{c}$ on every input.
An offline algorithm can select its final optimum without cancellations,
so its net optimum is still $\mathrm{OPT}$. For
$0\le c<1/(2\varphi)$, Proposition~\ref{r32:prop:cancellation} gives
\begin{equation}
 \mathrm{ALG}_{c}\ge(1-2\varphi c)\mathrm{ALG},\qquad
 R_c\le\frac{\varphi^5}{1-2\varphi c}.
 \label{r32:eq:net-robustness}
\end{equation}
Under perfect membership predictions the corresponding sufficient
consistency bound is $d/(1-2\varphi c)$. For arbitrary predictions
with the permanent filter, Corollary~\ref{new:cor:wrapper} instead gives
\[
 \mathrm{ALG}_{c}\ge(1-2\varphi c)
 \max\left\{\frac{\mathrm{OPT}}{\varphi^5},
             \frac{\mathrm{OPT}-\eta_K}{d}\right\}.
\]
These ratio bounds are sufficient.

The fee range for finite robustness is exact for this unchanged
algorithm. In the sharpness family all intervals lie in $N_K$,
itself an offered interval, so $\mathrm{OPT}=d^K$ and
\[
 \mathrm{ALG}_{c}=\alpha d^K(1-2\varphi c)+c.
\]
At $c=1/(2\varphi)$ the net reward is constant while the optimum
diverges; above that threshold it is eventually negative.
This threshold concerns the unmodified policy; a monetary refund rate must first be converted to retained-duration units to give $c$.

\subsection{Which replacement-cost assumptions are necessary?}
\label{r29:sec:necessary-cost}

The zero-ledger condition in the sufficient interface of Theorem~\ref{sg:thm:interface} is not
necessary. What matters in that proof is the cost paid along paths that
actually carry positive source mass.

\begin{theorem}[Consistency after relaxing the zero-ledger cost]
\label{r29:thm:zero-ledger}
In the shared ledger policy, suppose positive costs are at least $Mv$.
For negative costs require
\[
 q(v,b)\ge dv\quad(b>0),\qquad
 q(v,b)\ge\varphi v+\alpha b\quad(0<b\le dv).
\]
The value $q(v,0)$ may be any finite nonnegative number.
For every finite execution with feasible complete positive set $P$,
every held interval satisfies $b(X)\le d w(X)$ and
$w(P)\le d\,\mathrm{ALG}$.
\end{theorem}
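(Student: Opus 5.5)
The plan is to rerun the birth-order induction of Theorem~\ref{curve:thm:interface} at $a=\alpha$, tracking only the nodes that actually carry positive source mass. The claim to prove by induction is this: every accepted identity $X$ with $b(X)>0$ satisfies the assertions \eqref{sg:eq:birth-exterior}--\eqref{sg:eq:positive-cut} at every live snapshot. Every source-free identity ($b\equiv0$ so far) satisfies them trivially, since all its measures and shares vanish. Negative capacity then holds by density, and positive capacity by $b(Q)=u+B_Q$.

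The first key observation is that the hypothesis $q(v,0)\ge dv$ was used in only three places.
\begin{itemize}
\item The weighted cut, Lemma~\ref{sg:lem:weighted}.
\item The ordinary-parent payments in Lemmas~\ref{sg:lem:negative-birth} and~\ref{sg:lem:positive-birth}, where a child $X$ contributes $E_L(X)+v\le\alpha q(v,b(X))$. For the positive parent this enters through $W_Q\le u/d$ and through Case~2.
\item The accepted-history geometry \eqref{curve:eq:history}, which feeds the displacement and protrusion estimates in the positive cut.
\end{itemize}
In the weighted-cut and parent-payment steps the child's contribution is zero when $b(X)=0$. I will therefore prune: replace each sum over children by a sum over children with $\bar b_X>0$. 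For those children the cost hypothesis is intact, and their costs still sum to at most the parent's length. A parent whose relevant children are all source-free starts with zero inherited ledger, so its birth exterior and cut bounds are trivial.

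For a positive ordinary parent $Q$, I will use $B_Q\le\sum_{b_X>0}d\,w(X)$ together with $\sum_{b_X>0}w(X)\le u/d$. Both follow from $q\ge dv$ on $b>0$, which gives $B_Q\le u$ as before.

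The main obstacle is the geometric step. The history bounds \eqref{sg:eq:history} and the tail-displacement estimates ($<\alpha v$ for a negative tail, $<\varphi u$ for a positive root) relied on every negative-to-negative edge shrinking by the factor $d$. With $q(v,0)$ arbitrary, a source-free negative child may be long. To handle this, I will restrict every path estimate to paths along which a traced source share descends. Along such a path every node on the way to the entry support holds a nonzero coefficient of that source at its revocation. Ledgers are nondecreasing during a lifetime, so each such node had positive ledger when its parent was born, and its cost therefore satisfies the unrelaxed bounds. The displacement and shrinking estimates then hold on exactly the paths used in Cases~1--3 of Lemma~\ref{sg:lem:positive-birth} and in the tail displacement of Lemma~\ref{sg:lem:weighted}.

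The one remaining use of whole-subtree geometry is the protrusion bound $G<(h_{0,a}-1)w(K)$ in Case~2. There I will redo it along $K$'s source-carrying path, bounding the protrusion by the sum of a geometric displacement series, which already yields $<\varphi w(K)$ at $a=\alpha$. I must also confirm that the special-gate budgets $W_Q\le\varphi u$ and $B_Q\le\varphi u$ are unaffected, since they come from the tests and not from costs. The fee bounds in Lemma~\ref{sg:lem:negative-lifetime} are cost-free.

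Finally, conservation (Lemma~\ref{lem:conservation}) and capacity $b(X)\le d\,w(X)$ give $w(P)=\sum_X b(X)\le d\,\mathrm{ALG}$, which is the statement.
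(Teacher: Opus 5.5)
Your plan follows the paper's own argument: drop source-free children from every inherited sum, and rederive contraction and displacement only along replacement edges whose child held a nonzero share at revocation, where $q_-\ge dv$ and $q_+\ge Mv$ still apply. The gap is a wrong constant at the one step where the argument is tight.

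In Case~2 of Lemma~\ref{sg:lem:positive-birth} at $a=\alpha$, the required protrusion bound is $G<(h_{0,\alpha}-1)w(K)=w(K)$, because $h_{0,\alpha}=1+\alpha+\alpha^2=2$. Together with $G<\varphi v$ it gives $\varphi G=(\alpha+1)G<v+w(K)\le W_Q\le\varphi u$. Hence $e\le G<u$ and $B_Q+e<(\varphi+1)u=du$, with no slack. The bound you announce, $G<\varphi w(K)$, only yields $2\alpha G<\varphi u$, i.e.\ $G<du/2$. That gives $B_Q+e<(\varphi+d/2)u$, which exceeds $du$. So the positive cut, and with it the induction, would not close as you state it.

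The repair is to sum your displacement series correctly. On a source-bearing path below a negative node of length $n$:
\begin{itemize}
\item a negative child has length at most $n/d=\alpha^2 n$;
\item a positive child has length at most $n/M=\alpha^3 n$;
\item a positive node's negative child has length at most $\varphi$ times the positive node's length.
\end{itemize}
By induction on path length, the descendant lengths sum to at most $n$, since $\max\{2\alpha^2,\ \alpha^3+2\alpha^2\}=1$; the alternating negative/positive path is extremal. This gives exactly $G<w(K)$. It is the path form of the paper's retained-tree bound, where retained history is at most $2v$ at a negative root and so protrusion is at most $v$.

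Two smaller slips. First, the displacement of a negative tail below a positive root of length $u$ is $<du$, not $<\varphi u$; $\varphi u$ is only the first child's length. This is harmless, because Case~1 needs only $e<d\,w(H)\le u$ with $w(H)\le u/d$. Second, negative capacity comes from the three fee cases of Lemma~\ref{sg:lem:negative-lifetime}, not from density alone, although you correctly note that those fee bounds are cost-free.
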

\begin{proof}
Use the source coefficients and birth-order induction of
Theorem~\ref{sg:thm:interface}. A zero-ledger child has zero
coefficient for every positive source, so it contributes nothing to an
inherited measure, exterior mass, or whole-source cut. It can be omitted
from every such sum; the omitted ordinary costs are nonnegative.

We justify the geometric estimates separately, since they need not hold
for source-free accepted history. Follow any nonzero source share from
its accepted entry support to its current holder. At each replacement
edge on this path, the child has positive ledger immediately before
revocation. A negative child therefore costs at least $dv$, and a
positive child at least $Mv$. The original edge contractions apply:
an ordinary parent has a negative child of length at most $1/d$ times
its own, or a positive child of length at most $1/M$ times its own;
a special positive parent has negative children of total length at most
$\varphi$ times its own. Consecutive positive nodes on a source path
contract by at least $d$. The negative-tail displacement bounds remain
$\alpha v$ from a negative root and $du$ from a positive root.

Equivalently, retain the root and only replacement edges whose child
had positive ledger at revocation. Every nonzero source has its entry
support in this retained tree. Its interval union is connected.
Induction gives retained history at most $2v$ at a negative root and
$Mv$ at a positive root: an ordinary parent pays all retained child
costs, and a special parent has only negative children of total length
at most $\varphi$ times its own. Thus the negative-subtree protrusion
bound also holds for all source-bearing paths used by the proof.

The weighted cut has a nonzero left share, hence positive ledger, so
its cost inequalities are unchanged. At a negative birth, omit the
zero-ledger children and apply the original weighted or positive cuts
to the others. Their costs sum to at most the actual ordinary gate.
Birth exterior and cut bounds follow, with the zero-birth-ledger case
immediate. The lifetime argument and its three capacity cases use no
cost restriction at zero ledger.

At an ordinary positive birth of length $u$, every child satisfies
$b_i\le q(v_i,b_i)$, also when $b_i=0$, so the inherited ledger is at
most $u$. One must not infer that the total child length is at most
$u/d$: zero-ledger children can violate that inequality. The positive
cut proof only needs this length bound for the selected child carrying
a nonzero share, for which its individual cost gives $v_i\le u/d$.
At a special birth the actual tests still give both child length and
inherited ledger at most $\varphi u$. All remaining path and sibling
arguments carrying nonzero shares use the geometry established above.
The positive cuts and the birth induction therefore close as before.
Finally source conservation gives the claimed bound for $w(P)$.
\end{proof}

For example, define
\[
 q_0(v,b)=
 \begin{cases}\varphi v,&b=0,\\q_*(v,b),&b>0.\end{cases}
\]
With positive cost $Mv$, this policy is $d$-consistent and has finite
robustness, although $q_0(v,0)<dv$. It is homogeneous and nondecreasing
in ledger, with an explicit jump at zero. A finite robustness certificate follows. Total accepted subtree length is at most
$dv$ at a negative root and $(M+1)v=2dv$ at a positive root. At an
ordinary parent, child history divided by child cost is at most
$\varphi$; at a special parent all children are negative and their
total length is at most $\varphi v$. This proves both bounds
simultaneously. Thus total accepted history $H$ is at most
$2d\,\mathrm{ALG}$.

Distribute each rejected offline-comparator interval's length among its
then-held conflicts proportionally to their lengths. Ordinary rejection
and costs at most $Mv$ bound a receiver's charge by $Mv$. For a fixed
accepted identity, at most two compatible comparator intervals extend
past its endpoints; those contained inside it have total length at most
its own. Its total rejected-comparator charge is therefore at most
$(2M+1)v$. Accepted comparator identities contribute at most $H$.
Consequently $\mathrm{OPT}\le(2M+2)H\le8d^2\mathrm{ALG}$.
The universal lower bound then makes its consistency exactly $d$.

\begin{theorem}[A floor under nonshrinking negative costs]
\label{r29:thm:nonshrinking-floor}
Keep the ordinary-first sum gate, both special tests, fees, and full
ledger inheritance. Allow arbitrary fixed finite nonnegative cost
functions $q_-(v,b)$ and $q_+(v,b)$, but assume
\begin{equation}
 q_-(v,b)\ge v\qquad(v>0,\ b\ge0).
 \label{r29:eq:nonshrinking}
\end{equation}
If the raw or permanently filtered policy is $d$-consistent, then
\[
 q_-(v,v)\ge dv,\qquad q_+(v,dv)\ge Mv
 \qquad(v>0),
\]
and its strict robustness is at least $\varphi^5$.
No monotonicity, continuity, or homogeneity of either cost is assumed.
\end{theorem}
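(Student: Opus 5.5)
The plan has three parts. First, show by two adversarial five-request inputs that $d$-consistency forces $q_-(v,v)\ge dv$ and $q_+(v,dv)\ge Mv$. Then feed these two inequalities into the five-request family of Proposition~\ref{sg:prop:floor}. Everything is homogeneous in the unit, so I scale the holder to length $v$ and write the constructions for $v=1$ after rescaling.

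\emph{Negative cost bound.} Reuse the exact-consistency family of Proposition~\ref{curve:prop:exact-consistency} at $a=\alpha$, modified so the ledger of the negative anchor $J=[0,1)$ reaches exactly $b=1$. Offer the anchor, then compatible positive seeds inside $J$ of total length $1$, each shorter than $\alpha$. Nonshrinking, $q_-\ge v$, makes each seed fail the ordinary test against any negative of comparable length, and each seed fails the special length test. Their full lengths go to $J$'s ledger, which gives $b(J)=1$. Now suppose $q_-(1,1)<d$. Offer a positive $R$ crossing the right end of $J$ with length $w\in[q_-(1,1),d)$; it is accepted ordinarily. Then offer negative requests that are forced away by robustness or that the algorithm must reject. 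The aim is to reach a terminal where the positives form the unique optimum, with total weight tending to $d\cdot w$-scale mass, while the algorithm keeps only something short. The cleanest choice is a follow-up that makes $R$ itself lose. The difficulty is that a positive holder with a larger positive-cost hurdle is hard to displace, so I would instead apply the contrapositive. Assume $q_-(1,1)<d$ and offer a negative request $X$ of length just above $q_-(1,1)$ that replaces $J$. Then add a positive completion around the missed seeds so the positives weigh about $2$ plus a right piece, while $X$ holds less than $1/\varphi$ of that mass. Consistency is then violated once the ratio exceeds $d$. Tuning the lengths so the missed mass over the retained length exceeds $d$ exactly when $q_-(1,1)<d$ is the computation to do.

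\emph{Positive cost bound.} Build a positive holder $Q$ of length $1$ with ledger $d$, inheriting it by special acceptance from a negative anchor of length $\varphi$ whose ledger is $\varphi$, as in the $\varphi^5$ family. Then offer a negative of length in $[q_+(1,d),M)$. If $q_+(1,d)<M$, this negative replaces $Q$. A compatible positive completion then gives total positive mass near $M\cdot$(retained) above $d$ times the retained value. I expect this step to be routine once the negative step is settled.

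\emph{Robustness floor.} Rerun Proposition~\ref{sg:prop:floor}. Its only use of the cost was that $I$ and $R$, of lengths below $M=d\varphi$, are rejected against the zero-ledger anchor $K$ of length $\varphi$, and that $L$, of length below $M$, is rejected against the unit positive $J$ with ledger $0$. With general costs, I would first load ledgers so the relevant costs are $q_-(\varphi,\varphi)\ge d\varphi=M$ and $q_+(1,d)\ge M$. To do this, insert tiny positive tiles inside $K$ before $I,R$; they are forced-rejected and sum to $\varphi$, and they are absorbed into the comparator packing. Then $J$ still passes the special tests at equality, since $\varphi\cdot1\ge b_D=\varphi$, and inherits ledger $\varphi+1=d$. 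The final ratio then approaches $\varphi^5$.

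The main obstacle is the negative-cost step. The perfect-label witness must make the positives uniquely optimal while only a cost strictly below $d$ lets the short negative in, and the nonshrinking assumption is what stops the tiles from being accepted ordinarily during the setup.
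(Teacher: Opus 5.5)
Your robustness step is exactly the paper's witness. Three rejected tiles on $[1,d)$ give $K=[1-\epsilon,d-\epsilon)$ ledger $\varphi$, so $q_-(\varphi,\varphi)\ge M$ blocks $I$ and $R$. Then $J$ passes both special tests at equality and inherits ledger $d$, and $q_+(1,d)\ge M$ blocks $L$.

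The gap is that the two cost inequalities this step rests on are never proved. The negative-cost construction is left as ``the computation to do,'' and the positive-cost step is deferred to it. Moreover, the geometry you sketch cannot reach $d$:
\begin{itemize}
\item Seeds of total length $1$ inside $J=[0,1)$ cover $J$. Let $X$ have length $c=q_-(1,1)\in[1,d)$ and overlap $J$ near one end. Once $X$ replaces $J$, the only positive mass that can still be charged to $X$ lies in $X\setminus J$, plus one piece crossing the free endpoint. Before that piece the ledger is $B<1+c$.
\item Let $\ell$ be the piece's length. If it passes both special tests, the ratio $1+B/\ell$ is at most $d$.
\item Otherwise, a policy whose (unconstrained) cost at that state is large rejects it. The ratio $(B+\ell)/c$ then stays below $\varphi(1+c)/c$, which exceeds $d$ only when $c<\varphi$.
\end{itemize}
So this completion only yields $q_-(v,v)\ge\varphi v$.

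The paper supplies two ingredients you are missing.

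First, one seed crosses the anchor's far end. With $t=\alpha-\epsilon$, three tiles on $[t-\epsilon,1-\epsilon)$ and $T_0=[1-\epsilon,1-\epsilon+t)$ are rejected. They leave ledger exactly $1$, so the cost in force is exactly $q_-(1,1)=c$, while $[0,t-\epsilon)$ stays uncovered. Then $W=[\epsilon-c,\epsilon)$ replaces $V=[0,1)$ at ordinary equality. Four tiles on $[2\epsilon-c,0)$, followed by $T_1=[0,t-\epsilon)$, are each shorter than $\alpha c$ and than $c\le q_-$, so they are rejected. This gives $B=c+1+t-3\epsilon\approx c+\varphi$.

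Second, a final positive $L$ ending at $2\epsilon-c$ with $\ell=\alpha B-\epsilon$ fails the ledger test. Since $q_-(c,B)$ is unconstrained, both responses are checked:
\begin{itemize}
\item acceptance leaves ratio $1+B/\ell>d$;
\item rejection leaves $(\varphi B-\epsilon)/c$, and $\varphi B-\epsilon-dc=d-c-(4\varphi+1)\epsilon>0$.
\end{itemize}
The positives tile a hull containing both negatives, so the labels are perfect.

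The positive-cost bound uses the same endgame. A negative $W$ of length $c<M$ with $c\ge q_+(1,d)$ replaces $Q$; now $B=d+c-2\epsilon$ and the margin is $M-c-(2\varphi+1)\epsilon$.

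Finally, the costs are not homogeneous, so you cannot prove the inequalities at unit scale and rescale them. Instead, multiply the coordinates by the given $v$ (or $u$) and put $c=q_-(v,v)/v$. This is legitimate because the special tests and $q_-\ge v$ are scale-free. Also note that all positives in these witnesses are compatible, so the filtered policy behaves identically.
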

\begin{proof}
First fix $v>0$ and suppose $c=q_-(v,v)/v<d$; by
\eqref{r29:eq:nonshrinking}, $c\ge1$. All following coordinates are
multiplied by this fixed $v$, without rescaling the cost function.
Put $t=\alpha-\epsilon$ and $x=1-t$, with sufficiently small
$\epsilon>0$. Offer negative $V=[0,1)$, three positive tiles
partitioning $[1-\epsilon-x,1-\epsilon)$, then positive
$T_0=[1-\epsilon,1-\epsilon+t)$. All four positives fail the special
length test and are shorter than the ordinary cost, which is at least
one. Their rejections give ledger $x+t=1$ exactly. Negative
$W=[\epsilon-c,\epsilon)$ now replaces $V$ at ordinary equality.
Offer four positive tiles partitioning $[2\epsilon-c,0)$, and then
$T_1=[0,t-\epsilon)$. Their lengths are below both $\alpha c$ and $c$,
so all reject, leaving ledger
$B=c+1+t-3\epsilon$.

Finally offer a positive $L$ ending at $2\epsilon-c$ with length
$\ell=\alpha B-\epsilon$. It fails the special ledger test. If it
is accepted ordinarily, its ratio is $1+B/\ell>d$. If it is rejected,
the ratio is $(\varphi B-\epsilon)/c$, and
\[
 \varphi B-\epsilon-dc=d-c-(4\varphi+1)\epsilon>0.
\]
The positives are consecutive and tile the full offered hull; both
negatives lie inside it. Hence their labels are perfect in either
branch. This contradiction proves $q_-(v,v)\ge dv$ and uses no
positive cost.

Next fix $u>0$ and suppose $q_+(u,du)<Mu$. Choose $0<c<M$ with
$cu\ge q_+(u,du)$, and multiply all following coordinates by $u$.
Offer negative $K=[1-\epsilon,d-\epsilon)$ and three positive tiles
partitioning $[1,d)$. They reject, giving length and ledger both
$\varphi$. The first part makes its ordinary cost at least $M$.
Positive $Q=[0,1)$ is special at equality in both tests and has ledger
$d$. Negative $W=[\epsilon-c,\epsilon)$ replaces it ordinarily.
Four positive tiles partitioning $[2\epsilon-c,0)$ reject, leaving
ledger $B=d+c-2\epsilon$. A final positive $L$ ending at
$2\epsilon-c$, of length $\ell=\alpha B-\epsilon$, again gives
ratio $1+B/\ell>d$ if accepted, whereas rejection gives
\[
 \varphi B-\epsilon-dc=M-c-(2\varphi+1)\epsilon>0.
\]
The positives again partition the complete hull containing both
negatives. Thus $q_+(u,du)\ge Mu$.

For the robustness witness, use unit scale and negative
$K=[1-\epsilon,d-\epsilon)$ followed by three positive tiles
partitioning $[1,d)$. These rejections give ledger exactly $\varphi$,
so $q_-(\varphi,\varphi)\ge M$. Offer
\[
\begin{array}{ccl}
 I&0&[\epsilon,d-2\epsilon),\\
 R&0&[d-2\epsilon,d-2\epsilon+M-\epsilon),\\
 J&1&[0,1),\\
 L&0&[2\epsilon-M,\epsilon).
\end{array}
\]
The first two requests reject. Request $J$ is special at equality
in both tests, with ledger $d$, and its positive cost is at least $M$.
Thus $L$ also rejects and the final gain is one. The compatible
$L,I,R$ tile the entire hull and give
$\mathrm{OPT}=\varphi^5-5\epsilon$. The four positive predictions
are mutually compatible, so the filter changes no label in this
witness or in either perfect-label construction above. Letting
$\epsilon\downarrow0$ proves the bound.
\end{proof}

Unlike Proposition~\ref{sg:prop:floor}, which fixes positive cost $Mv$ and assumes $q(v,0)\ge dv$ but not $C=d$, this floor permits arbitrary fixed positive costs and assumes the global condition~\eqref{r29:eq:nonshrinking} together with $C=d$. In particular, changing the fixed positive coefficient to $\kappa<M$, including $\kappa=d$, cannot preserve $d$-consistency under these hypotheses. The unrestricted claim for arbitrary negative costs remains open.

\paragraph{Discounting positive-ledger states can fail.}
The lower-cost rule
\[
 q_{\rm low}(v,b)=\min\{2\varphi v,
 \max\{\min\{\varphi(v+b),dv\},\varphi v+\alpha b\}\}
\]
is not $d$-consistent, even with the original positive cost $Mv$.
Here is a 14-request family with $0<\epsilon\le10^{-3}$.
Put $v=\varphi^5$, $u=\varphi^4$, $w=\varphi^7$ and offer
\[
\begin{array}{ccl}
 V_0&0&[0,1),\\
 S_0,S_1&1&[0,1/2),\ [1/2,1),\\
 V_1&0&[1-\epsilon,\varphi+2-\epsilon),\\
 V_2&0&[\varphi+2-2\epsilon,4\varphi+3-2\epsilon),\\
 V_3&0&[4\varphi+3-3\epsilon,9\varphi+6-3\epsilon),\\
 Q&1&[9\varphi+6-4\epsilon,12\varphi+8-4\epsilon),\\
 W&0&[9\varphi+6-5\epsilon,22\varphi+14-5\epsilon).
\end{array}
\]
Next offer four positive tiles partitioning
$[12\varphi+8-4\epsilon,22\varphi+14-6\epsilon)$, followed by
positive $L=[d-3\epsilon,9\varphi+6-4\epsilon)$ and positive $R$
starting at $22\varphi+14-6\epsilon$ with length
$\alpha(1+w+\varphi^6-3\epsilon)-\epsilon$.

The first two positives reject. At ledger one, the successive negative
costs are $d,3\varphi+1,5\varphi+3$, so $V_1,V_2,V_3$ are ordinary
replacements at equality. Request $Q$ is special at length equality
and $W$ is ordinary at equality $w=Mu$. All remaining requests
reject: the four tiles and $L$ fail the length test, and $R$ fails
the ledger test. They also fail ordinary acceptance. Thus
\[
 \mathrm{ALG}=w,\qquad
 w(P)=dw+\varphi-(3\varphi+1)\epsilon>dw.
\]
To check optimality, the first two positives pay for $V_0$ at equality.
Request $L$ is longer than every compatible subset of
$\{V_1,V_2,V_3\}$, whose largest weight is
$w(V_1)+w(V_3)=6\varphi+4<\varphi^6-\epsilon$.
The positive $Q$, the four tiles, and $R$ all conflict with $W$ and
have total weight $w-2\epsilon+w(R)>w$. These three assigned positive
groups are disjoint and compatible, and each belongs to the positive
conflict set of its assigned negative group. Replacing any feasible
negative subset by its conflicting positives therefore never loses
weight. Hence $P$ is an optimum; it need not be unique because
$V_0$ ties $S_0,S_1$. The ratio tends to
$d+\varphi^{-6}\approx2.674>d$.
Unlike the zero-ledger relaxation, this policy transports nonzero
source mass through several discounted negative replacements.

\section{A barrier for singleton-negative replacement}
\label{app:history-barrier}

The five-request floor in Proposition~\ref{sg:prop:floor} fixes the
positive cost and special predicates. The following result permits
arbitrary ledgers, positive costs, and history-dependent rules. It requires
only one restriction on negative arrivals, evaluated before their decisions.

\begin{theorem}[Singleton-negative barrier]
\label{hist:thm:singleton-negative}\label{hist:prop:barrier}
Consider a deterministic policy $A$ in the model of Section~\ref{sec:model}
that satisfies the following condition at every reachable prearrival state:
\begin{quote}
\textbf{(N)} If the sole held request is a negative $K$ of length $v$, an
incoming negative that conflicts with $K$ and has length less than $dv$
is rejected.
\end{quote}
If $C(A)\le d$ and $R(A)<\infty$, then $R(A)\ge\varphi^5$.
The result allows voluntary revocation and arbitrary dependence on the
entire rejected history. It also holds after the permanent feasibility filter.
\end{theorem}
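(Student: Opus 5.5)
The plan is an adaptive adversary that reuses the five-request geometry of Proposition~\ref{sg:prop:floor}. We replace every use of specific BANK tests with forcing arguments based only on (N), $C(A)\le d$ and finite robustness. Fix $\epsilon>0$ small and a finite robustness bound $R$. The target history is as follows: $A$ ends holding only a unit positive $J=[0,1)$, while the rejected requests $L,I,R_0$ tile a compatible packing of limiting weight $2M+d=\varphi^5$.

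Step one builds the anchor state. Offer negative $K=[1-\epsilon,d-\epsilon)$, of length near $\varphi$. Then offer a fine positive partition of $[1,d)$ with tiles shorter than $w(K)/R$. By the tile argument of Theorem~\ref{thm:lower}, robustness forces $A$ to keep $K$ alone and miss every tile. Next offer negatives $I=[\epsilon,d-2\epsilon)$ and $R_0=[d-2\epsilon,d-2\epsilon+M-\epsilon)$. Both conflict with $K$, and $M-\epsilon<d\varphi$ and $d-3\epsilon<d\varphi$. So (N) forces their rejection, since the sole holder is still the negative $K$; voluntary revocation before them would leave nothing and violate robustness on that prefix.

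Step two forces the switch to $J$. Offer positive $J=[0,1)$. If $A$ rejects $J$ and keeps $K$, we stop after appending a suitable positive completion whose full positive set is the unique optimum and has weight exceeding $d\,w(K)$. This would parallel the rejected-$J$ branch in the proof of Theorem~\ref{r10:thm:source-first}, using the exchange certificates \eqref{r11:eq:strict-exchange}; the completion has to pay both the missed tiles and the already rejected $I,R_0$, so a remote-anchor continuation as in Lemma~\ref{r11:lem:quantitative-completion} may be needed. Given that, $C(A)\le d$ forces acceptance of $J$, which revokes $K$. The remaining step is to offer negative $L=[2\epsilon-M,\epsilon)$, which conflicts only with $J$. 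If $A$ accepts $L$, we continue with a perfect completion: a fine positive partition of $L$ minus a margin, followed by a crossing positive, as in the accepting-reserve branch of Appendix~\ref{r7:app:binary-reserve}. Its ratio must exceed $d$, because the missed positive mass near $J$ is about $\varphi+1$ plus the new mass near $M$, against a held value of about $M$. If $A$ rejects $L$, we stop: $\mathrm{ALG}=1$ and $\mathrm{OPT}\ge w(L)+w(I)+w(R_0)=\varphi^5-5\epsilon$.

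The main obstacle is the two consistency forcing completions, for rejecting $J$ and for accepting $L$. Each must produce a unique perfect terminal whose positive optimum beats every negative family, including the permanently rejected $I,R_0$. I would first try to reuse the completion lemma, checking that the arithmetic closes exactly at the golden ratio, with $(\varphi+1)$ missed mass against holder length $\varphi$ and slack only in $\epsilon$. I would add a remote anchor $H$ if the local positives cannot pay $R_0$. Finally, the filter clause follows because every positive in the construction is mutually compatible, so the filter changes no label. Taking $\epsilon\downarrow0$ gives $R(A)\ge\varphi^5$.
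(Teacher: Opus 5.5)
There is a genuine gap. Your outline has the same skeleton as the paper's proof: retain $K$, reject $I,R$ by (N), force acceptance of $J$ and rejection of $L$ by perfect completions, and stop with $\mathrm{ALG}=1$. However, the geometry you import from the BANK-specific floors makes the first forcing step impossible.

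With $K=[1-\epsilon,d-\epsilon)$ and tiles partitioning $[1,d)$, the missed family $\mathcal A$ ($J$ plus the tiles) partitions $[0,d)\supset K$. Suppose $A$ rejects $J$. Every later positive must then be compatible with $\mathcal A$, since perfect labels require the complete positive set to be feasible. Hence every later positive is compatible with $K$. Nothing caps the held value at $w(K)$, and the deficit $d\,w(K)-w(\mathcal A)=M-d=\varphi$ is of order one, not of order $\epsilon$. The rejected-$J$ branch of Theorem~\ref{r10:thm:source-first} that you want to parallel relies precisely on a positive $T$ that conflicts with $K$.

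The paper therefore uses the source-first geometry. Tiles partition $[1,p)$ with $p=1+v-2\epsilon$. The anchor $K=[1-\epsilon,1-\epsilon+v)$ protrudes by $\epsilon$ past $p$, and $R=[p,p+z)$ with $z=dv-\epsilon$ starts exactly at $p$. The paper offers $I,R$ before the tiles, which is immaterial under (N). Now $T=[p,p+v)$ conflicts with $K$ and $R$ but with no member of $\mathcal A$. So every response holds at most $v$ while the positives weigh $E=p+v$. The choice $v=\varphi(1-4\epsilon)$ gives $E-dv=2\epsilon$; taking $v=\varphi$ would give $-2\epsilon$. A remote anchor $H$ of length $h>\max\{dv,(r-1)v,1\}$, its tiles, $U=[E,b+\eta)$ and, if $U$ is dropped, $V$ then pay $R$, since $\alpha h>\varphi v$. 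This uses only $2\eta=\epsilon$ of the slack.

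The $L$ branch also fails as you parametrized it. After $A$ accepts $L$, the positives that historical $L$ forces it to miss lie in $L$ minus its overlap with $J$. One further positive $B$ may cross the far endpoint of $L$, with length at most $\ell$, since a longer one would simply be taken. The local positive weight is therefore below $w(\mathcal A)+2\ell-\epsilon$ against held value $\ell$. With your $w(\mathcal A)=d$ and $\ell=M-\epsilon$, the excess is below $d-\alpha\ell-\epsilon=(\alpha-1)\epsilon<0$. This is before the remote anchor needed for $R_0$ consumes any more slack.

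The ratio reaches $d$ only because $B$ itself has length about $\ell$. With a short crossing piece, as in the accepting-reserve branch, the new mass is only about $M$ and the ratio is about $(d+M)/M=1+\alpha$.

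The paper sets $\ell=\varphi p-2\epsilon$ and $L=[s,\epsilon)$ with $s=\epsilon-\ell$. It tiles $[s+\eta,0)$ and adds $B=[s+\eta-\ell,s+\eta)$, where $\eta=\chi\epsilon/4$ and $\chi=2\alpha-1=\alpha^3$. This gives $P_0-d\ell=3\chi\epsilon/4$. It then applies Lemma~\ref{r11:lem:quantitative-completion} with $\zeta=\chi\epsilon/8$. That lemma requires all old positives to end at or before the start of the reserve. Your $R_0=[d-2\epsilon,\dots)$ overlaps the last tile and so violates this hypothesis.

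With these corrections the terminal gives $\mathrm{OPT}=\varphi^5-(18\varphi+14)\epsilon$ against $\mathrm{ALG}=1$. Your filter remark is fine.
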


\begin{proof}
Fix a finite robustness bound $r>\max\{1,R(A)\}$ and
$0<\epsilon<1/1000$. Put
\[
 v=\varphi(1-4\epsilon),\qquad p=1+v-2\epsilon,
 \qquad z=dv-\epsilon.
\]
Offer the negative requests
\[
 K=[1-\epsilon,1-\epsilon+v),\qquad
 I=[\epsilon,p),\qquad R=[p,p+z)
\]
in this order. Finite robustness forces $K$ to be retained. Both subsequent
requests conflict with it and lie strictly below the threshold $dv$, with
margins $\varphi v-1+3\epsilon>0$ and $\epsilon$, respectively.
Condition (N) rejects both; discarding $K$ would leave zero reward.

Offer a finite positive partition of $[1,p)$, each tile shorter than $v/r$.
At the first departure from $K$, only the incoming tile could remain,
whereas historical $K$ gives $\mathrm{OPT}\ge v$. Robustness therefore
forces every tile to be missed and $K$ retained. Now offer positive
$J=[0,1)$. Let $\mathcal A$ consist of $J$ and these tiles; it tiles
$[0,p)$ and has weight $p$. This positive family is feasible but is not an
optimum of the current history, since $R$ is compatible with all of it.

\paragraph{A perfect completion forces retention of $J$.}
Suppose $J$ is rejected. Then $K$ remains, since deleting it gives zero
reward. Offer positive $T=[p,p+v)$ and put $E=p+v$. After any response,
the local held weight is at most $v$, whereas
\[
 w(\mathcal A\cup\{T\})=E,\qquad E-dv=2\epsilon.
\]
Set $\eta=\epsilon/2$, choose finite
$h>\max\{dv,(r-1)v,1\}$, and put
\[
 u=\alpha h,\quad b=E+u-\eta,\quad
 H=[b,b+h),\quad U=[E,b+\eta).
\]
Offer negative $H$. It is disjoint from all earlier requests; in particular,
\[
 b-(p+z)=\alpha h-\varphi v+\epsilon-\eta>0.
\]
The historical packing $\{K,H\}$ has weight $v+h>rv$, so $H$ must
be retained. Offer a finite positive partition $\mathcal F$ of
$[b+\eta,b+h-\eta)$, with each tile shorter than $(v+h)/r-v>0$.
At the first departure from $H$, the held weight is at most $v$ plus the
incoming tile. The same historical packing forces every tile to be missed
and $H$ retained, even if the local holder is voluntarily revoked.

Offer positive $U$. If retained, stop, with held weight at most $v+\alpha h$.
If $U$ is not retained, robustness forces $H$ to remain. In that case offer
positive $V=[b+h-\eta,b+2h-\eta)$ and stop; every response has held
weight at most $v+h$. All positives are compatible. Their totals in the
two branches are $E+\varphi h-2\eta$ and $E+dh-2\eta$, so
\begin{align*}
 E+\varphi h-2\eta-d(v+\alpha h)&=\epsilon,\\
 E+dh-2\eta-d(v+h)&=\epsilon.
\end{align*}

To apply consistency, we check the entire history. At the first terminal
the negatives are $K,I,R,H$. Every feasible negative family is a subset
of $\{K,H\}$ or $\{I,R,H\}$. The following positive groups conflict
with the indicated negative families and have the stated strict surpluses:
\begin{center}
\begin{tabular}{lll}\toprule
Negative family & Positive group & Surplus\\\midrule
$\{K\}$ & $\mathcal A\cup\{T\}$ & $E-v$\\
$\{I\}$ & $\mathcal A$ & $\epsilon$\\
$\{R\}$ & $\{T,U\}$ & $\alpha h-\varphi v+\epsilon$\\
$\{H\}$ & $\{U\}\cup\mathcal F$ & $\alpha h-2\eta$\\
$\{R,H\}$ & $\{T,U\}\cup\mathcal F$
 & $\alpha h-\varphi v+\epsilon-2\eta$\\\bottomrule
\end{tabular}
\end{center}
All margins are positive because $\alpha h>\varphi v$ and
$2\eta=\epsilon$. The groups for $K$ and $H$ are disjoint. The group
$\mathcal A$ for $I$ is disjoint from each group for $R$, $H$, or
$\{R,H\}$. Adding the corresponding margins therefore covers all nine
nonempty feasible negative subsets, counting the shared $U$ only once.
Adding $V$ preserves these certificates. The strict exchange argument
in~\eqref{r11:eq:strict-exchange} proves that the complete positives are
the unique optimum at both terminals. Either terminal contradicts
$C(A)\le d$. Thus $J$ must be accepted and retained on the common prefix;
accepting and then discarding it would leave zero reward.

\paragraph{A second perfect completion forces rejection of $L$.}
Define
\[
 \ell=\varphi p-2\epsilon,\qquad s=\epsilon-\ell,
 \qquad L=[s,\epsilon),
\]
and offer negative $L$. It overlaps the sole holder $J$ by $\epsilon$.
Suppose it is accepted; finite robustness forces its retention. Put
$\chi=2\alpha-1=\alpha^3>0$ and $\eta=\chi\epsilon/4$.
Offer a positive partition of $[s+\eta,0)$ with each tile shorter than
$\ell/r$. Historical $L$ forces every tile to be missed by the same
first-departure argument. Next offer positive
$B=[s+\eta-\ell,s+\eta)$, of length $\ell$.
It overlaps $L$ by $\eta$ and is compatible with every earlier positive.
After either response the local held weight is at most $\ell$.
Let $\mathcal Z$ consist of $B$ and this left partition. The full local
positive family $\mathcal P=\mathcal Z\cup\mathcal A$ tiles
$[s+\eta-\ell,p)$, with weight
\begin{equation}
 P_0=p+2\ell-\epsilon-\eta,\qquad
 P_0-d\ell=\chi\epsilon-\eta=3\chi\epsilon/4.
 \label{hist:eq:local-excess}
\end{equation}

The negatives are $\mathcal N=\{K,I,L\}$ and the unavailable
$R=[p,p+z)$. Among $K,I,L$, only $K$ and $I$ conflict. All five
nonempty feasible negative subsets have strict exchange certificates:
\begin{center}
\begin{tabular}{lll}\toprule
Negative family & Positive group & Surplus\\\midrule
$\{K\}$ & $\mathcal A$ & $p-v=1-2\epsilon$\\
$\{I\}$ & $\mathcal A$ & $\epsilon$\\
$\{L\}$ & $\mathcal Z\cup\{J\}$ & $\ell+1-\epsilon-\eta$\\
$\{K,L\}$ & $\mathcal Z\cup\mathcal A$
 & $p+\ell-v-\epsilon-\eta$\\
$\{I,L\}$ & $\mathcal Z\cup\mathcal A$ & $\ell-\eta$\\\bottomrule
\end{tabular}
\end{center}
Every margin is positive for the stated $\epsilon$ range. Apply
Lemma~\ref{r11:lem:quantitative-completion} with $c=d$, $\gamma=0$,
$a=\ell$, and $\zeta=\chi\epsilon/8$. Historical $L$ has weight $a$.
Choose finite $h>\max\{(r-1)\ell,1\}$ with
$\alpha h>z+2\zeta$. Then $h>2\zeta$ and
$p+\alpha h-\zeta>p+z$, the largest old right endpoint, while
\[
 P_0-d\ell-2\zeta=\chi\epsilon/2>0.
\]
Thus every hypothesis of the lemma holds. Its satellite construction
repairs $R$ as well as the new negative $H$; for their joint subset the
new conflict group has surplus $\alpha h-z-2\zeta>0$ and counts its
shared boundary positive once. This group is disjoint from every local
group above, so mixed negative subsets also satisfy strict exchange.
The resulting robustness or actual-perfect consistency violation rules
out acceptance of $L$.

\paragraph{The finite robustness witness.}
Stop on the common prefix immediately after $L$, which must be rejected.
$J$ remains the sole holder, so $\mathrm{ALG}=1$. The compatible
negatives $L,I,R$ tile the entire offered hull, containing every request
on this short input. Length reward therefore gives the exact optimum
\begin{align*}
 \mathrm{OPT}
 &=\ell+(p-\epsilon)+(dv-\epsilon)\\
 &=d+2dv-(2\varphi+6)\epsilon\\
 &=\varphi^5-(18\varphi+14)\epsilon.
\end{align*}
Every partition above is finite: a positive mass can be split into
finitely many tiles below any positive threshold. The two longer
completions are causal witnesses that force decisions on their shared
prefixes; neither changes an earlier label or assumes prefix optimality.
Taking the supremum over positive $\epsilon$ proves $R(A)\ge\varphi^5$.
All constructed positive families are compatible, so permanent filtering
changes no bit and the same witnesses apply.
\end{proof}

In particular, a source-cost policy whose sole negative holder of length
$v$ costs at least $dv$, and whose negatives have no acceptance route
outside that cost gate, satisfies (N). Its positive costs, source fees,
inheritance, and history-dependent special rules do not enter the proof.
Any deterministic policy
with $C\le d$ and finite $R<\varphi^5$ must instead accept some conflicting
negative shorter than $dv$ at a reachable state whose sole holder is a
negative of length $v$.

The zero-ledger policy $q_0$ in Appendix~\ref{r29:sec:necessary-cost}
gives a concrete escape from this early negative prefix. With the same
$v=\varphi(1-4\epsilon)$ and $p=1+v-2\epsilon$, offer $K$ followed
immediately by $I=[\epsilon,p)$, before any positive source arrives.
The ledger is zero and
\[
 w(I)-q_0(v,0)=1+v-3\epsilon-\varphi v=\epsilon>0.
\]
Thus $q_0$ replaces $K$ by $I$, breaking the intended singleton-anchor
branch. This policy is nevertheless $d$-consistent and finitely robust.
It shows why an early-negative branch cannot simply be presumed to reject.
However, $q_0(v,b)\ge v$ at every negative state, so
Theorem~\ref{r29:thm:nonshrinking-floor} still gives
$R(q_0)\ge\varphi^5$.

\section{Additional ledger policies and guarantees}\label{app:additional-policies}
Credit varies the negative ledger coefficient; SUM uses total
conflicting length. MAX uses maximum conflicting length and is retained
as a descriptive experimental comparator. Credit-BANK $T_c$ has negative cost
$\min\{Mv,dv+cb\}$, $c\ge0$. The constant-cost gates are
\begin{equation}
 G_\lambda(D)=
 \begin{cases}
 \lambda\sum_{J\in D}w(J),&\text{SUM},\\
 \lambda\max_{J\in D}w(J),&\text{MAX}.
 \end{cases}
 \label{eq:domain}
\end{equation}
\begin{definition}[Ledger policies]\label{sum:def:policy}
Shaped-BANK $T_*$ uses~\eqref{sg:eq:main-cost} and positive cost $Mv$;
Credit-BANK $T_c$ uses~\eqref{credit:eq:cost}.
SUM-BANK $S_\lambda$, for $\lambda>d$, and MAX-BANK $A_\lambda$,
for $\lambda\ge d$, use Algorithm~\ref{alg:bank} with the respective
gate in~\eqref{eq:domain}. All other rules are shared: ordinary
acceptance has priority, all thresholds accept equality, and special
acceptance requires a positive prediction, only negative conflicts,
and both length and ledger tests.
\end{definition}
The guarantees below concern Shaped, Credit, SUM and RP.

Table~\ref{tab:extended-guarantees} collects the endpoint comparisons.
The full scalar envelope, its exact consistency threshold
$c_*=\alpha^3$, and its global minimum $4\varphi+5$ are in
Theorem~\ref{credit:thm:family}. The fixed comparison $T=T_{\varphi/2}$
has exact robustness $6\varphi+3$.
For SUM, Corollary~\ref{sharp:cor:exact-endpoint} proves the policy-specific
threshold $\lambda\ge\varphi^3$, and Proposition~\ref{sum:prop:robustness}
gives exact robustness $\varphi(2\lambda+1)$ on that range, with the
upper bound $K(\varphi,\lambda)$ from~\eqref{sum:eq:transfer} for $\lambda>d$.
For $c\in[c_*,1]$ and $\lambda\ge\varphi^3$, respectively, both families
satisfy $w(P)\le d\,\mathrm{ALG}$ for feasible complete positives.

For RP, Corollary~\ref{sum:cor:rp} proves exact robustness $2\lambda+2$
when $\lambda\ge2$, and $\lambda+2\lambda/(\lambda-1)$ when
$1<\lambda<2$. Its published consistency bound
$3\lambda/(\lambda-1)$ need not be exact.

Theorem~\ref{hist:thm:singleton-negative} treats the shared singleton-negative obstruction. Tables~\ref{tab:ledger-ablation} and
\ref{tab:absolute-endpoints} compare the policies empirically;
Appendices~\ref{app:ledger-ablation}--\ref{sec:experiment-protocol} specify
the matched repeats and differing robustness budgets.

\begin{table}[t]
\caption{Deterministic guarantees. $d=\varphi^2$, $c_*=\alpha^3$, $B_*=\varphi^5$, $B_{C*}=4\varphi+5$, $B_S=7\varphi+4$. In paired rows the error guarantee applies only to the filtered policy (superscript $F$); $\eta_K$ uses original labels. LR is due to \citet{garay1997}; its interval formulation, Naive and the RP consistency bound are given by \citet{karavasilis2025}. }
\label{tab:extended-guarantees}
\centering\scriptsize\setlength{\tabcolsep}{2.5pt}
\begin{tabular}{@{}lccc@{}}\toprule
Algorithm & Consistency & Robustness & Error guarantee\\\midrule
LR & $M$ & $M$ (exact) & ---\\
RP$_\lambda$, $\lambda\ge2$ & $\le\frac{3\lambda}{\lambda-1}$ & $2\lambda+2$ (exact) & ---\\
Naive & $1$ & $\infty$ & $\mathrm{OPT}\le\mathrm{ALG}+\eta_K$\\
Shaped $T_*,T_*^F$ & $d$ & $B_*$ (exact) & $\mathrm{OPT}\le d\mathrm{ALG}+\eta_K$\\
Credit $T_{c_*},T_{c_*}^F$ & $d$ & $B_{C*}$ (exact) & $\mathrm{OPT}\le d\mathrm{ALG}+\eta_K$\\
SUM $S_{\varphi^3},S_{\varphi^3}^F$ & $d$ & $B_S$ (exact) & $\mathrm{OPT}\le d\mathrm{ALG}+\eta_K$\\
\bottomrule
\end{tabular}\end{table}

\section{Credit-dependent replacement costs}\label{app:credit}

Put $\alpha=1/\varphi$, $d=\varphi^2$, and $M=\varphi^3$.
For any fixed real $c\ge0$, Credit-BANK $T_c$ uses the shared
ledger updates and special tests of Algorithm~\ref{alg:bank}, with
ordinary gate $\sum_{J\in D}\tau_c(J)$, where
\begin{equation}
 \tau_c(J)=\begin{cases}
 Mw(J),&p(J)=1,\\
 \min\{Mw(J),dw(J)+c b(J)\},&p(J)=0.
 \end{cases}\label{credit:eq:cost}
\end{equation}
Write $q_c(v,b)=\min\{Mv,dv+cb\}$ for the negative cost.
The coefficient changes only this ordinary cost and is fixed for
an entire execution. Ordinary acceptance has priority, thresholds
accept equality, and rejection fees use simultaneous old-state
snapshots with length-failure priority. Special acceptance still
requires a positive arrival, only negative conflicts,
$W,B\le\varphi w(I)$, and full ledger inheritance.
The permanent-filter composition is denoted $T_c^F$.

\begin{theorem}[Exact scalar envelope and consistency threshold]
\label{credit:thm:family}
For $0\le c\le1$, both raw and filtered policies have exact robustness
\begin{align}
 R(T_c)=R(T_c^F)&=K(c):=\max\{K_{\rm out}(c),K_{\rm br}(c)\},\notag\\
 K_{\rm out}(c)&=5\varphi+3+\varphi c,\notag\\
 K_{\rm br}(c)&=5\varphi+2+2\varphi c.
 \label{credit:eq:exact-R}
\end{align}
Within this range, their consistency equals $d$ if and only if
$c\ge c_*:=\alpha^3=2\varphi-3$. For $c\in[c_*,1]$, every
execution with feasible complete positive set $P$ satisfies
\begin{equation}
 b(J)\le d w(J)\ \text{throughout every held lifetime},
 \qquad w(P)\le d\,\mathrm{ALG}.\label{credit:eq:strong}
\end{equation}
Among all $c\ge0$ whose raw policy has consistency $d$, the minimum
robustness is $4\varphi+5$, attained only at $c=c_*$. The same
statement holds for the filtered policies.
\end{theorem}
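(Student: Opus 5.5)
The plan has three parts: the consistency threshold via the golden-endpoint interface, the exact robustness via the joint-cut machinery of Theorem~\ref{curve:thm:robustness}, and then minimization over $c$.

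\emph{Consistency upper bound for $c\ge c_*$.} I would apply Theorem~\ref{sg:thm:interface} directly. The positive cost is $Mv$. The negative cost $q_c(v,b)=\min\{Mv,dv+cb\}$ satisfies $q_c\ge dv$ for all $b$, since $M>d$ and $cb\ge0$. For the interface inequality \eqref{sg:eq:interface-capacity} on $0\le b\le dv$, I check the two branches of the minimum separately.
\begin{itemize}
\item The cap branch needs $Mv\ge\varphi v+\alpha b$. This holds because $\varphi v+\alpha dv=2\varphi v\le Mv$.
\item The affine branch needs $dv+cb\ge\varphi v+\alpha b$, i.e.\ $v\ge(\alpha-c)b$. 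The worst case is $b=dv$, which reduces to $c\ge\alpha-\alpha^2=\alpha^3=c_*$.
\end{itemize}
This gives \eqref{credit:eq:strong}. Theorem~\ref{thm:lower} then gives $C=d$, and the filtered version is inherited as for $T_*$.

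\emph{Consistency failure for $c<c_*$.} I would adapt the first part of the proof of Theorem~\ref{r29:thm:nonshrinking-floor}. Start from a negative unit anchor and drive its ledger to $b$ close to $d$ with exterior and interior positive tiles, all failing the special length test; this follows the seed pattern of Proposition~\ref{curve:prop:exact-consistency}. At that ledger, $q_c(1,b)<\varphi+\alpha b$. A negative $W$ of exactly that length then replaces the anchor ordinarily. Next, missed positive tiles to its left and a final positive $L$ of length $\alpha B-\epsilon$ fail the ledger test. Both responses to $L$ then exceed $d$ by a margin of order $(c_*-c)$, just as in the displayed inequality $\varphi B-\epsilon-dc'>0$ there, with $c'=q_c(1,b)$. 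The positives tile the hull, so the labels are perfect in either branch.

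\emph{Robustness upper bound.} I would run Theorem~\ref{curve:thm:robustness} at $a=\alpha$, using terminal weights $q_X$ and Lemma~\ref{sg:lem:charge-transfer}, with the cost-dependent constants recomputed.
\begin{itemize}
\item The affine majorant is $q_c\le dv+cb$. At a special parent, $W,B\le\varphi h$ gives $\sum_i q_i\le(M+\varphi c)h$. So $A=M+\varphi c$ and $\rho=A/M$, and $\rho-1<\varphi$ holds for $c\le1$.
\item The fixed-ledger slope inequality $q_c(W,B)-q_c(v,B)\ge\varphi(W-v)$ holds because both branches have slope at least $d$.
\item The outward case \eqref{sg:eq:outward} becomes $(M+1+\varphi)h+q_c(\varphi h,\varphi h)$. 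Since $q_c(\varphi h,\varphi h)=(M+\varphi c)h$, this equals $K_{\rm out}(c)$.
\item The remaining table rows must be recomputed with the new $A$ and $\rho$. I expect the bridge or no-bridge row to give $K_{\rm br}(c)$, with all other rows below $\max\{K_{\rm out},K_{\rm br}\}$ for $c\in[0,1]$.
\end{itemize}
This bookkeeping, and confirming the precise constant $5\varphi+2+2\varphi c$, is where I expect the main obstacle.

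\emph{Robustness lower bounds.} For $K_{\rm out}$, I would modify Proposition~\ref{sg:prop:floor}. First load the anchor $K$ to ledger $\varphi$ using three positive tiles, as in Theorem~\ref{r29:thm:nonshrinking-floor}, so that its cost becomes $M+\varphi c$. The right reserve can then have length $M+\varphi c-\epsilon$, which gives $\mathrm{OPT}\to K_{\rm out}(c)$. For $K_{\rm br}$, I would build a second five-to-ten-request family. In it, a special unit positive replaces loaded negatives, and a single child bridges both cuts, so that the two endpoint maxima are both paid at cost $M+\varphi c$.

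\emph{Minimization over $c$.} Both $K_{\rm out}$ and $K_{\rm br}$ increase in $c$. At $c_*$ we have $\varphi c_*=\alpha^2=2-\varphi$, so $K_{\rm out}(c_*)=4\varphi+5$ and $K_{\rm br}(c_*)=3\varphi+6<4\varphi+5$. For $c>1$, the loaded-anchor outward family still forces $R\ge5\varphi+3+\varphi\min\{c,\alpha^2\varphi\}>4\varphi+5$, since the cap limits the loaded cost to $M\varphi$. Together with the consistency characterization, this shows $c=c_*$ is the unique minimizer. The filtered statements hold because every witness uses compatible positives.
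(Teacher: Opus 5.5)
Most of your plan coincides with the paper's proof. This covers the interface check (worst case $b=dv$), the joint-cut robustness argument with $A=M+\varphi c$ and $\rho=1+c/d$, the loaded-anchor outward witness, and the minimization over $c$. In the robustness bookkeeping you were worried about, the bridge row $\varphi+2A$ is exactly $K_{\rm br}(c)$ and the remaining rows lie below $K(c)$, so it closes. Your slope-based outward estimate is a valid variant of the paper's affine-majorant computation.

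The genuine gap is in the necessity construction for $c<c_*$. You load the unit anchor to a ledger near $d$ using positives that \emph{all} fail the special length test. For a feasible positive family this is impossible. The interior contributes at most $1$, and each endpoint is crossed by at most one positive, of length below $\alpha$. So the ledger stays below $1+2\alpha=\sqrt5<d$. A smaller ledger does not help within your template either (anchor loaders, then $W$, then tiles inside $W$, then $L$). Let $W$ have length $c'=q_c(1,b)\ge d$. Every positive before $L$ then lies in $W$ or the anchor, except for the exterior part of the positive crossing the anchor's far endpoint; call that crossing's length $r$. The mass before $L$ is therefore below $c'+1+r$. Since $L$, of length $\alpha B-\epsilon<dc'$, fails the ordinary gate, the final ratio is below $\varphi(c'+1+r)/c'$. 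This exceeds $d$ only if $r>\alpha c'-1\ge\alpha$. So the far crossing must pass the length test and be stopped by the ledger test; this is essential, not cosmetic.

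As in the paper, first place the two interior seeds and the left crossing of length $\alpha-\epsilon$, giving ledger $\varphi-3\epsilon$. Then offer a positive starting at $1-\epsilon$ of length $\alpha(\varphi-3\epsilon)-\epsilon\approx1$. It passes the length test but fails the ledger and ordinary tests, lifting the ledger to $d-(3\varphi+1)\epsilon$. Your plan then becomes exactly the paper's eleven-request construction:
$W$ of length $g=d+cb<2\varphi<M$ replaces the anchor at equality;
four left tiles reject;
the final positive is rejected, because its length is below $\alpha(g+2)<dg$.
The excess $\mathrm{OPT}-d\,\mathrm{ALG}$ then tends to $2\varphi-d(1+c)=d(c_*-c)>0$.

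Two smaller points remain. First, your bridge family is only described. The description is right, but it should be written out, since it alone gives exactness on $(\alpha,1]$. Take a unit negative loaded by three interior tiles to ledger about $1$, so its cost is about $d+c$. Add two negatives of length just below that cost, overlapping its endpoints by $\epsilon$, and then a positive of length $\alpha$ accepted specially. The ratios tend to $\varphi(2d+1+2c)=K_{\rm br}(c)$. Second, in the bound for $c>1$ the cap gives $\min\{c,\varphi\}$, not $\min\{c,\alpha^2\varphi\}$, because $M\varphi=M+\varphi\cdot\varphi$. Your weaker expression still gives $5\varphi+4>4\varphi+5$, so the conclusion stands.
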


The branches in~\eqref{credit:eq:exact-R} agree at $c=\alpha$;
the outward branch dominates below it and the bridge branch above it.
In particular, the comparison policy $T=T_{\varphi/2}$ retains
$C(T)=d$ and $R(T)=6\varphi+3$.

\subsection{Consistency from the common source-measure interface}

Theorem~\ref{sg:thm:interface} applies precisely when the scalar
cost satisfies its two negative-cost inequalities. Globally $q_c(v,b)\ge dv$.
For $0\le b\le dv$, the cap branch obeys
\[
 \varphi v+\alpha b\le(\varphi+\alpha d)v
 =2\varphi v<Mv.
\]
The affine branch dominates the same expression exactly when
\begin{equation}
 v+(c-\alpha)b\ge0\qquad(0\le b\le dv).
 \label{credit:eq:interface}
\end{equation}
For $c\ge\alpha$ this is immediate. For $0\le c<\alpha$, its
minimum over this interval is $[1+d(c-\alpha)]v$, so the necessary
and sufficient condition for the interface is
\[
 c\ge\alpha-1/d=\alpha-\alpha^2=\alpha^3=c_*.
\]
At $c=c_*$, equality occurs at $b=dv$, where the cost is
$2\varphi v<Mv$. Thus equality is included without invoking the cap.
The source-measure induction proves
\eqref{credit:eq:strong}, in fact for every $c\ge c_*$.
On $[c_*,1]$ the finite robustness proved below and
Theorem~\ref{thm:lower} give exact consistency $d$.
The following finite construction proves necessity.

\subsection{Eleven requests proving the consistency threshold}
\label{credit:sec:necessity}

Fix $0\le c<c_*$ and put
\[
 k=3\varphi+1,\qquad
 \Delta=2\varphi-d(1+c)>0,\qquad
 D=(\varphi-c)k+1>0.
\]
Choose
\begin{equation}
 0<\epsilon<\min\{1/100,\Delta/(2D)\}.
 \label{credit:eq:necessity-epsilon}
\end{equation}
First offer the negative unit interval $X=[0,1)$ and then the
positive seed intervals $[\epsilon,1/2)$ and $[1/2,1-\epsilon)$.
Their lengths $1/2-\epsilon<\alpha$ fail the special length test
and are below the ordinary cost, which is at least $d$.
Both reject and deposit their complete lengths into the sole
held conflict $X$, giving ledger $1-2\epsilon$.
The positive interval
\[
 A=[2\epsilon-\alpha,\epsilon)
\]
has length $\alpha-\epsilon$ and rejects for the same reasons.
The ledger is now $b_1=\varphi-3\epsilon$.

Next offer positive $B$ starting at $1-\epsilon$ with length
\[
 \ell_B=\alpha b_1-\epsilon
 =1-(3\alpha+1)\epsilon.
\]
Since $\epsilon<1/100$ and $\alpha<31/50$, we have
$\ell_B>0.96>\alpha$. Thus $B$ passes the special length test
but fails the ledger test: $\varphi\ell_B=b_1-\varphi\epsilon<b_1$.
It also fails the ordinary test because $\ell_B<1<d$.
Its complete length enters $X$'s ledger, which becomes
\[
 b_X=d-k\epsilon.
\]
Let $g=d+cb_X$ and offer negative
$J=[\epsilon-g,\epsilon)$. The cap on $X$ is inactive, since
\[
 g\le d(1+c)<d(1+c_*)=2\varphi<M.
\]
Therefore $J$ has exactly the old ordinary cost of $X$ as its
length. It overlaps $X$ in $[0,\epsilon)$, accepts at equality,
and inherits all of $b_X$. We have $d\le g<2\varphi$.

Partition $[2\epsilon-g,2\epsilon-\alpha)$ into four equal
positive tiles. Each has positive length $(g-\alpha)/4<\alpha g$
because $4\alpha>1$, and its length is below $dg$, a lower bound
on $J$'s ordinary cost. Each tile overlaps $J$, fails both the
ordinary test and the special length test, and deposits its full
length. After the four tiles, the ledger of $J$ is
\[
 B_{\rm pre}=b_X+g-\alpha=g+2-k\epsilon.
\]
Finally offer positive $C$ ending at $2\epsilon-g$ with length
\[
 z=\alpha B_{\rm pre}-\epsilon.
\]
It passes the special length test because
$\varphi z-g=2-(k+\varphi)\epsilon>0$, using
$k+\varphi=4\varphi+1<8$. It fails the ledger test strictly,
as $\varphi z=B_{\rm pre}-\varphi\epsilon<B_{\rm pre}$.
It also fails ordinary acceptance, since
\[
 z<\alpha(g+2)<dg;
\]
the last inequality follows from $d-\alpha=2$ and $g>\alpha$.
Its overlap with $J$ is positive: $z>\alpha g>\epsilon$.
Thus every fee in the construction is the full arriving positive
length, and the final held set is $\{J\}$ with reward $g$.

In spatial order the positive requests are $C$, the four new tiles,
$A$, the two seed tiles, and $B$. They are consecutive and tile one
interval. Its left end lies left of $J$ because $z>\epsilon$, and
its right end lies right of $X$ because $\ell_B>\epsilon$;
hence it contains both historical negatives. The positive tiling
attains the length of the entire offered hull and is optimal.
Moreover it is the unique optimum. Covering the two extreme
segments requires $C$ and $B$, respectively. They exclude $J$
and $X$, leaving only the positive tiles, all of which are needed
to cover their positive-length segments. The labels are therefore
perfect and the filter changes none of them.

The exact optimum is
$\mathrm{OPT}=B_{\rm pre}+z=\varphi(g+2-k\epsilon)-\epsilon$,
so
\begin{align}
 \mathrm{OPT}-d\,\mathrm{ALG}
 &=2\varphi-g-(\varphi k+1)\epsilon\notag\\
 &=\Delta-[(\varphi-c)k+1]\epsilon
 =\Delta-D\epsilon>\Delta/2>0.
 \label{credit:eq:necessity-gap}
\end{align}
This is a strict finite violation for every $c<c_*$, for both raw
and filtered policies. At $c=c_*$, all gates and the uniqueness
argument still hold for $0<\epsilon<1/100$; only the strict
violation disappears. Its limit is
\[
 \lim_{\epsilon\downarrow0}\frac{\mathrm{OPT}}{\mathrm{ALG}}
 =\varphi+\frac{2\alpha}{1+c_*}=d.
\]
For positive $\epsilon$ at this endpoint,
$\mathrm{OPT}-dg=-D\epsilon<0$, consistent with the upper bound.

\subsection{Terminal costs and the scalar robustness envelope}

We use the replacement-forest and connected-hull framework of Lemma~\ref{sg:lem:charge-transfer} and the proof of Theorem~\ref{curve:thm:robustness}. The scalar-specific
estimates below verify its hypotheses without any feasible-label
assumption. Throughout this subsection $0\le c\le1$.
For each accepted identity $X$, let $q_X$ be its ordinary cost
immediately before revocation, or at termination if retained.
Ledgers grow by nonnegative fees during a held lifetime, so costs
are nondecreasing and $dw(X)\le q_X\le Mw(X)$.
An ordinary parent of length $h$ satisfies $\sum_iq_i\le h$.
A special parent is positive and has negative, ordinarily accepted
children. Its special tests imply
\begin{equation}
 W,B\le\varphi h,\qquad
 \sum_iq_i\le dW+cB\le Ah,\qquad
 A=M+\varphi c,\quad\rho=A/M=1+c/d.
 \label{credit:eq:terminal}
\end{equation}
In this range $1\le\rho\le1+1/d<d$.

The terminal-cost induction from the shaped proof applies with
these $A,\rho$: an ordinary root's omitted-root bound is
$\rho\sum_iq_i\le\rho h<q_J$; a special root's ordinary
children sum to at most $Ah$, as does its singleton $Mh$.
Thus the full antichain bound is $q_J$ or $Aw(J)$ respectively.
The history bound~\eqref{sg:eq:history} also applies, since it
uses only the same positive cost $Mh$, negative lower bound $dh$,
and special length gate. An ordinary birth gives history at most
$h+\sum_iq_i\le2h$; a special birth gives $h+2W\le Mh$.
The connected hulls obey these same bounds.

Use Lemma~\ref{sg:lem:charge-transfer} with $\omega_X=q_X$:
\begin{equation}
 F_J(x,y)=|H_J\cap[x,y]|+L_J(x)+U_J(y).
 \label{credit:eq:joint-cut}
\end{equation}
At an ordinary root, an eligible root singleton attains the side
maximum $q_J$; an ineligible side has weight at most $\rho h$.
With two eligible endpoints the gap lies inside the root and has
length at most $h$; otherwise use its full history bound $2h$.
Thus the coefficients for two, one or no eligible endpoints are
\begin{equation}
 1+2M,\qquad 2+M+\rho,\qquad 2+2\rho.
 \label{credit:eq:ordinary-cases}
\end{equation}

At a special root $J=[a,b)$, each side chooses the eligible root
singleton of cost $Mh$ or the sum of the children's side maxima.
A zero maximum may choose children. Two singleton choices give
$(1+2M)h$. With exactly one singleton, reflect so it is the left.
If $b>y$, the gap is inside $J$, giving $(1+M+A)h$.
If $b\le y$, every child starts below $b$, so at most one
compatible child $k$ ends above $y$. If such a child exists, the
gap has length at most $h+h_k$ and
\begin{align}
 F_J&\le(M+1)h+(1-\rho)h_k+q_k+\rho W\notag\\
 &\le(M+1)h+(1+d-\rho)h_k+cb_k+\rho W\notag\\
 &\le(M+1)h+(1+d)W+cB\notag\\
 &\le[ M+1+\varphi(1+d+c)]h=K_{\rm out}(c)h.
 \label{credit:eq:outward}
\end{align}
Here $q_k\le dh_k+cb_k$, $1+d-\rho>0$, $h_k\le W$,
and $b_k\le B$. Without an outward child, all right-side maxima
are ineligible, and the full-hull bound gives
$(2M+\rho\varphi)h$.

If both endpoint choices use children, at most one child bridges
both cuts, meaning $s(C_k)<x\le y<f(C_k)$. This child contains
the gap and contributes at most $h_k+2q_k$. Every other child
has at most one eligible endpoint and contributes at most
$q_i+\rho h_i$. This also bounds a child with two ineligible
endpoints because $q_i\ge dh_i>\rho h_i$. Consequently
\begin{align}
 F_J&\le h_k+2q_k+\sum_{i\ne k}(q_i+\rho h_i)\notag\\
 &\le W+2\sum_iq_i\le(1+2d)W+2cB
 \le K_{\rm br}(c)h.
 \label{credit:eq:bridge}
\end{align}
The second line uses $(\rho-1)h_i\le q_i$.
If no child bridges, each contributes at most $q_i+\rho h_i$;
together with the full-hull bound, this gives
$(M+A+\rho\varphi)h$.

All remaining coefficients lie below the two-term envelope.
For the special-root cases,
\begin{align*}
 K_{\rm out}-(1+2M)&=\varphi+\varphi c,\\
 K_{\rm out}-(1+M+A)&=\varphi,\\
 K_{\rm out}-(2M+\rho\varphi)&=1+c,\\
 K_{\rm br}-(M+A+\rho\varphi)&=c.
\end{align*}
For the remaining ordinary cases,
\begin{align*}
 K_{\rm out}-(2+M+\rho)&=3\varphi-1+2\alpha c>0,\\
 K_{\rm out}-(2+2\rho)&=5\varphi-1+(3\varphi-4)c>0.
\end{align*}
At $c=0$ one non-bridge comparison is equality, which is sufficient.
Strict endpoint eligibility assigns contact to the ineligible case;
bridge uniqueness holds also when $x=y$. We have proved
$F_J(x,y)\le K(c)w(J)$ in every case.

For the charge hypothesis, assign each rejected $I$ pieces
$w(I)\tau_c(X)/\sum_{Y\in D}\tau_c(Y)$ to its current conflicts.
Strict ordinary rejection and lifetime monotonicity bound each
piece by $q_X\ge w(X)$. These charges are used only in the analysis. The full antichain bound is at most $Ah$, and
$K_{\rm out}-A=3\varphi+2>0$. Lemma~\ref{sg:lem:charge-transfer}
therefore yields $\mathrm{OPT}\le K(c)\mathrm{ALG}$.
The same upper bound holds on effective labels after filtering.

\subsection{Two finite families matching the envelope}

\paragraph{Eight requests for the outward coefficient.}
Fix $0<\epsilon<1/100$. Offer negative
$K=[1-\epsilon,d-\epsilon)$, of length $\varphi$, then three
positive tiles partitioning $[1+\epsilon,d-\epsilon)$ equally.
Each tile has length $(\varphi-2\epsilon)/3<1=\alpha w(K)$,
so it fails the special length test. Its length is also below
$dw(K)=M$, so it rejects ordinarily. Each pays its full length
to $K$, giving ledger $b_K=\varphi-2\epsilon$.
For $0\le c\le1$, put $g=M+cb_K$. The cost is exactly $g$,
since $g\le M+\varphi<M\varphi$.
Next offer, in this order,
\[
 \begin{array}{ccl}
 I&0&[\epsilon,d-2\epsilon),\\
 R&0&[d-2\epsilon,d-2\epsilon+g-\epsilon),\\
 J&1&[0,1),\\
 L&0&[2\epsilon-M,\epsilon).
 \end{array}
\]
Both $I$ and $R$ overlap the held $K$ and have lengths
$d-3\epsilon<g$ and $g-\epsilon<g$, respectively; being
negative, they reject without changing the ledger.
The positive $J$ fails ordinary acceptance and passes both
special tests: $\varphi w(J)=w(K)$ and $b_K<\varphi$.
It replaces $K$. Finally $L$ overlaps $J$ in positive length
and has length $M-\epsilon<M$, so it rejects against the
positive ordinary cost. The algorithm ends with reward one.

The four positive intervals are compatible, so filtering leaves
all labels unchanged. The intervals $L,I,R$ are consecutive,
compatible, and tile the entire offered hull; all other requests
are contained in that hull. Therefore their total is exactly
\begin{align}
 \mathrm{OPT}&=(M-\epsilon)+(d-3\epsilon)+(g-\epsilon)\notag\\
 &=K_{\rm out}(c)-(5+2c)\epsilon.
 \label{credit:eq:outward-witness}
\end{align}
Taking positive $\epsilon\downarrow0$ gives the outward lower
supremum for both raw and filtered policies.

\paragraph{Seven requests for the bridge coefficient.}
Fix $0<\epsilon<1/100$, put $p=(1-2\epsilon)/3$,
$b_0=1-2\epsilon$, $g=d+cb_0$, and $a=g-\epsilon$.
Offer negative $J=[0,1)$, then three positive tiles partitioning
$[\epsilon,1-\epsilon)$ into lengths $p$, then negative
\[
 L=[\epsilon-a,\epsilon),\qquad
 R=[1-\epsilon,1-\epsilon+a),
\]
and finally positive $Q=[1-\epsilon,1-\epsilon+\alpha)$.
Every tile fails ordinary acceptance and the special length test
because $p<\alpha$, pays exactly $p$ to the sole conflict $J$,
and leaves it held. The ledger becomes $b_0$. Its ordinary cost
is exactly $g<M$, since $d+cb_0<d+1<M$.
Both exterior negatives overlap $J$ in length $\epsilon$ and
have length $a<g$, so they reject. The final $Q$ fails the
ordinary gate but satisfies both special tests:
$\varphi w(Q)=1$ and $b_0<1$. Thus $\mathrm{ALG}=\alpha$.

The two exterior negatives and the three tiles form a feasible
tiling of the entire input hull. All offered requests lie in it,
including $Q\subset R$ because $a>\alpha$. Thus
\begin{equation}
 \frac{\mathrm{OPT}}{\mathrm{ALG}}
 =\frac{2a+b_0}{\alpha}
 =K_{\rm br}(c)-4\varphi(c+1)\epsilon.
 \label{credit:eq:bridge-witness}
\end{equation}
All positives are compatible, so the same execution occurs after
filtering. Their feasible positive mass has ratio
$d-2\varphi\epsilon<d$; the robustness witness has imperfect
labels. Taking positive $\epsilon\downarrow0$ gives the bridge
lower supremum. Since
$K_{\rm out}-K_{\rm br}=1-\varphi c$, the two constructions
match the upper envelope, including the switch at $c=\alpha$
and both endpoints $c=0,1$.

\subsection{The global scalar minimum}
\label{credit:sec:global-minimum}

It remains to exclude larger coefficients. For any fixed $c\ge1$, use the eight-request
outward construction with the actual capped cost
\[
 g=\min\{M\varphi,M+c(\varphi-2\epsilon)\}.
\]
The tile rejections still follow from the lower cost $M$ on $K$.
The negative requests $I,R$ have lengths $d-3\epsilon<g$ and
$g-\epsilon<g$, so reject. The same unit positive $J$ fails the
ordinary gate and passes both special tests; the held negative's ledger
is still $\varphi-2\epsilon$. The final $L$ rejects against $M$.
The compatible $L,I,R$ still tile the entire input hull; hence
$\mathrm{ALG}=1$ and $\mathrm{OPT}=M+d+g-5\epsilon$.
Continuity of the minimum as $\epsilon\downarrow0$ gives
\begin{align}
 R(T_c),R(T_c^F)
 &\ge M+d+\min\{M\varphi,M+c\varphi\}\notag\\
 &=\varphi^5+\varphi\min\{c,\varphi\}\notag\\
 &\ge\varphi^5+\varphi>4\varphi+5.
 \label{credit:eq:large-c-floor}
\end{align}
This remains valid at $c=\varphi$, even if the cap is only reached
in the limit. The filter changes no positive label.

On $[0,1]$ both affine robustness branches have strictly positive
slope, so $K(c)$ is strictly increasing. Since $c_*<\alpha$,
\[
 K(c_*)=\varphi^5+\varphi\alpha^3
 =\varphi^5+\alpha^2=4\varphi+5.
\]
Every $c<c_*$ fails consistency $d$ by the eleven-request
construction. Every $c\in(c_*,1]$ has strictly larger exact
robustness, and~\eqref{credit:eq:large-c-floor} excludes every
$c\ge1$ regardless of its consistency. This proves the global
minimum and uniqueness in Theorem~\ref{credit:thm:family}.
Shaped-BANK's exact robustness $\varphi^5$ improves this scalar
minimum by $\alpha^2$.

\section{The exact consistency threshold of SUM-BANK}
\label{sharp:sec:threshold}
SUM-BANK uses the constant cost $\lambda v$ for either prediction.
Its optimal-consistency range follows from the source-measure
theorem, while a separate finite construction proves necessity.

\begin{theorem}[SUM feasible-positive capacity]
\label{sharp:thm:attainment}
For every $\lambda\ge M=\varphi^3$ and feasible complete positive
set $P$, each held interval satisfies
\begin{equation}
 b(J)\le dw(J),\qquad w(P)\le d\,\mathrm{ALG}.
 \label{sharp:eq:ledger-and-feasible}
\end{equation}
Consequently $C(S_\lambda)=d$.
\end{theorem}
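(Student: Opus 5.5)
The plan is to reduce Theorem~\ref{sharp:thm:attainment} to the golden-endpoint interface, Theorem~\ref{sg:thm:interface}, by reading SUM-BANK as a shared ledger policy with label-independent costs $q_+(v,b)=q_-(v,b)=\lambda v$. SUM-BANK uses exactly the special tests, the fees~\eqref{eq:fees} and full inheritance of Algorithm~\ref{alg:bank}, with ordinary gate $\sum_{J\in D}\lambda w(J)$. Its ordinary gate is therefore the summed gate of that theorem, and only the cost hypotheses need checking.

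\textbf{Positive cost.} We need $q_+(v,b)=\lambda v\ge Mv$, which is exactly $\lambda\ge M$.

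\textbf{Negative cost, global bound.} Condition~\eqref{sg:eq:interface-global} asks for $\lambda v\ge dv$, which holds since $M>d$.

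\textbf{Negative cost, capacity range.} For $0\le b\le dv$, condition~\eqref{sg:eq:interface-capacity} asks for $\lambda v\ge\max\{dv,\varphi v+\alpha b\}$. The right side is largest at $b=dv$, where it equals $\varphi v+\alpha d v=2\varphi v$. Since $2\varphi<\varphi^3=M\le\lambda$, the inequality holds with strict slack. Equivalently, the weighted interface~\eqref{sg:eq:interface-main} follows because $v+\max\{\alpha v,\alpha^2 b\}\le v+\alpha v=\varphi v$ on this range, while $\alpha\lambda v\ge\alpha M v=dv>\varphi v$.

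\textbf{Applying the interface.} Theorem~\ref{sg:thm:interface} now gives, for every execution with feasible complete positive set $P$, the ledger bound $b(J)\le dw(J)$ throughout every held lifetime. Source conservation (Lemma~\ref{lem:conservation}) then gives $w(P)=\sum_{J\in\mathcal S_{\rm final}}b(J)\le d\,\mathrm{ALG}$. This is~\eqref{sharp:eq:ledger-and-feasible}.

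\textbf{Exact consistency.} For $C(S_\lambda)=d$ we also need the matching lower bound. Theorem~\ref{thm:lower} gives $C\ge d$ for every deterministic algorithm with finite robustness. So the remaining task is to certify that $R(S_\lambda)<\infty$ for $\lambda\ge M$.

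I expect this finiteness certificate to be the only real obstacle, since the paper defers SUM's robustness to Proposition~\ref{sum:prop:robustness}. If it may be cited, we are done. Otherwise, I would run Lemma~\ref{sg:lem:charge-transfer} with terminal weights $\omega_X=\lambda w(X)$, which are constant during each lifetime. The steps would be:
\begin{itemize}
\item An ordinary parent of length $h$ pays child costs at most $h$, so its children have total length at most $h/\lambda$.
\item A special parent has negative children of total length at most $\varphi h$, hence child costs at most $\lambda\varphi h$.
\item These give geometric bounds on subtree history and on antichain weight of order $\lambda h$, as in~\eqref{sg:eq:history}.
\item Any crude case analysis of $F_J(x,y)$ then yields some finite constant $K(\varphi,\lambda)$.
\end{itemize}
A crude bound is enough here, since only finiteness is needed. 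Combining it with Theorem~\ref{thm:lower} gives $C(S_\lambda)=d$.
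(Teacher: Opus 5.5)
Your proposal is correct and follows the paper's own proof essentially step for step: instantiate Theorem~\ref{sg:thm:interface} with $q_+(v,b)=q_-(v,b)=\lambda v$, check $\lambda\ge M$, $\lambda\ge d$ and $\varphi v+\alpha b\le2\varphi v<Mv\le\lambda v$ on $0\le b\le dv$, conclude with conservation, and obtain equality by citing Proposition~\ref{sum:prop:robustness} together with Theorem~\ref{thm:lower}. One small slip in your ``equivalently'' aside does not affect the result: on $0\le b\le dv$ you only have $\alpha^2b\le v$, so the correct bound is $v+\max\{\alpha v,\alpha^2b\}\le2v$ rather than $\varphi v$, and $2v$ is still below $dv\le\alpha\lambda v$.
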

\begin{proof}
Apply Theorem~\ref{sg:thm:interface} with
$q_+(v,b)=q(v,b)=\lambda v$. Globally $q\ge dv$ and $q_+\ge Mv$.
For $0\le b\le dv$,
\[
 \varphi v+\alpha b\le2\varphi v<Mv\le\lambda v.
\]
Thus every hypothesis holds with the exact shared fees and
inheritance. Conservation gives the feasible-positive bound.
The finite robustness in Proposition~\ref{sum:prop:robustness}
and Theorem~\ref{thm:lower} give equality for consistency.
\end{proof}

\begin{corollary}[Exact threshold]\label{sharp:cor:exact-endpoint}
For $\lambda>d$, $C(S_\lambda)=d$ if and only if
$\lambda\ge\varphi^3$.
\end{corollary}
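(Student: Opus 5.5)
The "if" direction is exactly Theorem~\ref{sharp:thm:attainment}: for $\lambda\ge\varphi^3$ the constant cost $\lambda v$ satisfies the golden-endpoint interface. Consistency is therefore at most $d$. Finite robustness (Proposition~\ref{sum:prop:robustness}) together with Theorem~\ref{thm:lower} then gives equality. So the work is the "only if" direction. Fix $d<\lambda<M$; I will exhibit a finite perfectly labelled input whose ratio strictly exceeds $d$.

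I would adapt the second part of the nonshrinking-floor proof (Theorem~\ref{r29:thm:nonshrinking-floor}), the one that forced $q_+(u,du)\ge Mu$. In SUM-BANK the positive cost is $\lambda u<Mu$, so that construction should apply almost verbatim with $c=\lambda$. The steps, at unit scale, are these.
\begin{itemize}
\item Offer negative $K=[1-\epsilon,d-\epsilon)$ of length $\varphi$, followed by three positive tiles partitioning $[1,d)$. Each tile has length below $\alpha\varphi=1$, so it fails the special length test. Each is also below $\lambda\varphi$, so it fails the ordinary test. Hence all three are rejected and pay their full length to $K$, giving $K$ ledger $\varphi$.
\item Offer positive $Q=[0,1)$. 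Its cost-$\lambda\varphi>1$ conflict means it fails ordinary acceptance. It passes both special tests at equality, since $w=\alpha W$ and $w=b_D/\varphi$. It therefore replaces $K$ with ledger $d$.
\item Offer negative $W=[\epsilon-\lambda,\epsilon)$. Its length equals $Q$'s ordinary cost $\lambda$, so it replaces $Q$ at equality and inherits ledger $d$.
\item Offer four positive tiles partitioning $[2\epsilon-\lambda,0)$. Each has length below $\alpha\lambda$ and below $\lambda^2$, so each is rejected and paid to $W$. This leaves ledger $B=d+\lambda-2\epsilon$.
\item Offer a final positive $L$ ending at $2\epsilon-\lambda$, of length $\ell=\alpha B-\epsilon$. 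It passes the length test, since $\varphi\ell>\lambda$ for small $\epsilon$ because $\alpha(d+\lambda)\ge\alpha\lambda$ with $\alpha d=\varphi>0$. It fails the ledger test strictly.
\end{itemize}

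Next I would check the two branches on $L$. If $L$ is accepted ordinarily, it must satisfy $\ell\ge\lambda\lambda$, which is false; even if it were, the ratio would be $1+B/\ell>d$. If $L$ is rejected, $\mathrm{ALG}=\lambda$ and the positive total is $\varphi B-\epsilon$. Then
\[
 \varphi B-\epsilon-d\lambda=M-\lambda-O(\epsilon)>0.
\]
The positives are consecutive and tile the whole hull, which contains both negatives. Covering the extreme segments forces $L$ and the rightmost tile, and this excludes $W$ and $K$. Hence the positives form the unique optimum and the labels are perfect.

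The main obstacle is verifying every gate with SUM's constant costs rather than the shaped ones. In particular, $Q$ must not be ordinarily accepted, which needs $1<\lambda\varphi$. The $W$ tiles must fail ordinary acceptance, which needs $(\lambda-2\epsilon)/4<\lambda^2$. Both are easy. The delicate check is that rejection of $L$ is forced: since $\ell\approx\alpha(d+\lambda)<\lambda^2$ for $\lambda>d$, ordinary acceptance indeed fails. I also need the fee case for $L$ to be the ledger-failure branch, so that the full length is credited. Letting $\epsilon\downarrow0$ yields $C(S_\lambda)>d$ strictly, by the positive margin $M-\lambda$.
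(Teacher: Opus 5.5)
Your proposal is correct, but it proves necessity with a different construction from the paper's. Your eleven requests are exactly the second part of the proof of Theorem~\ref{r29:thm:nonshrinking-floor}, specialized to $q_\pm(v,b)=\lambda v$ with $c=\lambda$. In fact SUM's negative cost is nonshrinking ($\lambda v\ge v$) while $q_+(v,dv)=\lambda v<Mv$, so you could simply cite the contrapositive of that theorem. Your gate checks hold with constant costs:
\begin{itemize}
\item $Q$ is special at equality and receives ledger $d$.
\item $W$ replaces $Q$ ordinarily at equality.
\item The four tiles are shorter than $\alpha\lambda$.
\item $L$ fails both the ledger test and the ordinary test, since $\varphi+\alpha\lambda<\lambda^2$ for $\lambda>d$.
\end{itemize}
So the margin $M-\lambda-(2\varphi+1)\epsilon$ is right.

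The paper instead invokes Proposition~\ref{prop:double-source}. Its input passes through two special positive replacements ($Z\to Y\to X\to Q\to J$) before the final anchor. The extra amplification gives $C(S_\lambda)\ge\varphi(1+\varphi/\lambda+\varphi^3/\lambda^2)$, which exceeds your limiting ratio $\varphi(1+d/\lambda)$ by $\varphi(\varphi^3-\lambda)/\lambda^2>0$ on $(d,\varphi^3)$. Both bounds cross $d$ exactly at $\lambda=\varphi^3$, so either proves the threshold. Yours is simpler to verify and reuses a general cost floor. The paper's quantifies the consistency loss more sharply below the threshold.

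Two small remarks. Since $S_\lambda$ is a fixed deterministic policy, the ``accepted'' branch for $L$ is vacuous once you show rejection. Also, one fixed small $\epsilon$ already gives a strict violation, so no limit is needed to conclude $C(S_\lambda)>d$.
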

\begin{proof}
The preceding theorem proves sufficiency. For $d<\lambda<\varphi^3$,
Proposition~\ref{prop:double-source} gives
\[
 C(S_\lambda)\ge
 \varphi\left(1+\frac{\varphi}{\lambda}
                   +\frac{\varphi^3}{\lambda^2}\right)>d.
\]
The expression decreases strictly in $\lambda$ and equals $d$ at
$\varphi^3$.
\end{proof}

\subsection{A two-source lower bound with eleven requests}
\label{app:double-source}

The next construction gives the converse for the sharp golden-line
threshold. It concerns SUM-BANK on that parameter line.

\begin{proposition}\label{prop:double-source}
For every real $\lambda>d=\varphi^2$,
\begin{equation}
 C(S_\lambda)
 \ge \varphi\left(1+\frac{\varphi}{\lambda}
                         +\frac{\varphi^3}{\lambda^2}\right).
 \label{eq:double-source-bound}
\end{equation}
Eleven-request inputs with a unique optimum and perfect predictions
approach this bound. In particular it exceeds $d$ for
$d<\lambda<\varphi^3$ and equals $d$ at the endpoint in the limit.
\end{proposition}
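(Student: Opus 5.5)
The plan is to imitate the eleven-request threshold construction for scalar Credit-BANK in Appendix~\ref{credit:sec:necessity}, but with two successive ordinary negative replacements at equality instead of one. Each replacement multiplies the scale by $\lambda$. We work with the ledger of a single negative holder, which at every step is the only conflict of each arriving request. Every positive then either fails both the ordinary test ($w<\lambda W$) and the special length test ($w<\alpha W$), or passes the length test and fails the ledger test. In both cases the fee rule~\eqref{eq:fees} deposits its full length into that holder.

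The final ratio we want is $\varphi+d/\lambda+\varphi^4/\lambda^2$. If the final holder has length $\lambda^2$ and the last positive $C$ is chosen ledger-failing, with length $\alpha B_{\rm pre}-\epsilon$, then $\mathrm{OPT}\approx B_{\rm pre}+\alpha B_{\rm pre}=\varphi B_{\rm pre}$. So we need $B_{\rm pre}\to\lambda^2+\varphi\lambda+\varphi^3$.

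\textbf{Stage 1 (unit scale).} Offer negative $X=[0,1)$. Next offer two short positive seeds in $[\epsilon,1-\epsilon)$ and a left positive $A$ of length $\alpha-\epsilon$. They are rejected and bring the ledger to $\varphi-O(\epsilon)$.

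\textbf{Stage 2 (length $\lambda$).} A negative $Y$ of length exactly $\lambda=\lambda w(X)$, overlapping $X$ by $\epsilon$, replaces $X$ ordinarily at equality and inherits the ledger. Then the positive source $B$ is offered on the side of $X$ away from $Y$, at length $\alpha b-\epsilon$. It overlaps $X$ but not $Y$, and conflicts with the current holder only through the still-held negative at that moment. If the geometry forces it, place $B$ before $Y$, exactly as in the Credit construction.

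\textbf{Stage 3 (length $\lambda^2$).} A negative $Z$ of length $\lambda^2=\lambda w(Y)$ replaces $Y$ at equality. A few positive tiles, each shorter than $\alpha\lambda^2$, fill the part of $Z$ not covered by earlier positives and are rejected into $Z$'s ledger. Finally the ledger-failing positive $C$ is offered at the far end of $Z$.

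Count the requests: $X$, two seeds, $A$, $B$, $Y$, $Z$, and tiles plus $C$ filling the rest. This should reach eleven. We choose the tile counts and the interval placements so that the positive lengths deposited before $Z$'s tiles total $\varphi\lambda+\varphi^3-\lambda^2+{}$(the tile mass). Equivalently, every positive is placed so that the positives consecutively tile the whole offered hull. Then the ledger equals the hull length minus $w(C)$, and the target identity becomes a linear equation in the free parameters, namely the tile lengths and the overlap positions. We solve that equation first, in the limit $\epsilon=0$, and then perturb by $\epsilon$ as in Proposition~\ref{sg:prop:floor}.

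For each request we then verify the decision. Negatives shorter than $\lambda$ times the holder are rejected and negatives of exactly that length are accepted. Each positive fails both the ordinary test and the length test, or else passes the length test at near-equality and fails the ledger test. Every positive conflicts with exactly one holder, so its full length is deposited.

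Optimality and uniqueness follow as in the Credit argument. The positives tile an interval containing every negative. The two extreme positives are needed to cover the extreme segments, and they exclude the negatives that reach those segments, which forces the full positive tiling. Hence the labels are perfect, and $\mathrm{OPT}/\mathrm{ALG}\to\varphi B_{\rm pre}/\lambda^2$, which equals the bound~\eqref{eq:double-source-bound}. The closing facts are elementary: the right side decreases strictly in $\lambda$, and at $\lambda=\varphi^3$ it equals $\varphi+\alpha+\alpha^2=d$.

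The main obstacle is the geometric bookkeeping. Each intermediate positive must conflict only with the current sole holder, and must lie left of the newer, longer negative, so that every fee is a full deposit. At the same time the positives must be consecutive. This is what forces the second source to be inserted before the second replacement, and it is where we expect the parameter $\varphi^3/\lambda^2$ to arise. We would first try to nest the scales alternately to the left and right, as in the Credit eleven-request family, and adjust the interval placements if an overlap condition fails.
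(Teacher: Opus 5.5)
Your construction keeps every positive rejected, so the only identities ever held are the negatives $X\to Y\to Z$. No witness of that shape can exceed ratio $d$ once $\lambda\ge2\varphi$. The source-measure induction behind Theorem~\ref{curve:thm:interface} at $a=\alpha$ (Theorem~\ref{sg:thm:interface}) uses the hypothesis $q_+\ge Mv$ only where a positive identity is held: at positive births, and for positive children in Lemma~\ref{sg:lem:negative-birth}. The lifetime and capacity step uses only the fee rules~\eqref{eq:fees}. In an execution with no positive holder, SUM's cost $\lambda v$ therefore only has to satisfy $\lambda v\ge dv$ and $\lambda v\ge\varphi v+\alpha b$ for $0\le b\le dv$, that is, $\lambda\ge\varphi+\alpha d=2\varphi$. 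The induction then gives $w(P)=b(Z)\le d\,w(Z)$ whenever $P$ is feasible. Your target $w(P)\to\varphi(\lambda^2+\varphi\lambda+\varphi^3)$ exceeds $d\lambda^2$ exactly when $\lambda<\varphi^3$. So on $2\varphi\le\lambda<\varphi^3$, which is part of the range Corollary~\ref{sharp:cor:exact-endpoint} needs, your ``linear equation in the free parameters'' has no solution compatible with the rejection tests and the compatibility of the positives. The binding constraints are the cap $\alpha b_{\rm pre}$ on each long crosser together with density, not the bookkeeping. For instance, the natural instantiation (the Credit-style unit stage, tiles and a ledger-failing crosser on $Y$, then $Z$ to the right and $C$) gives only $B_{\rm pre}\to\lambda^2+\varphi\lambda+2\varphi-2$. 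That beats $d$ only for $\lambda<2\varphi$, the same threshold as a one-replacement witness. (Separately, a positive offered after $Y$ that meets $X$ but not $Y$ has no current conflict and is accepted, so Stage~2 only makes sense with $B$ before $Y$.)

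The missing idea is that the threshold $\varphi^3$ is the positive-cost coefficient $M$. The witness must therefore exploit SUM's cheap positive cost $\lambda v<Mv$: positives have to be accepted specially and then replaced ordinarily by negatives. The paper's construction proceeds as follows. It starts from a negative of length $d/\lambda$ with two rejected seeds. It accepts the positive $Y$ of length $\varphi/\lambda$ at equality in both special tests, giving ledger $\varphi^3/\lambda$, and replaces it by the negative $X$ with $w(X)=\lambda w(Y)=\varphi$. It then accepts the unit positive $Q$ specially, giving ledger $1+\varphi^3/\lambda$, and replaces it by $J=[0,\lambda)$ at equality. The rejected $T,H_1,H_2$ and the ledger-failing $U$ then give $B\to\lambda+\varphi+\varphi^3/\lambda$, $\mathrm{ALG}=\lambda$ and $\mathrm{OPT}\to\varphi B$. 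Each positive holder carries the accumulated ledger plus its own length into a holder $\lambda$ times longer, at a cost below $M$ per unit length. This is exactly the step that the hypothesis $q_+\ge Mv$ of the capacity theorem controls.
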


\begin{proof}
Write $\beta=\varphi$, so $d=\beta+1$ and $\beta^3=2\beta+1$.
Fix $\lambda>d$ and choose
\[
 0<q,\eta,\zeta<\tfrac14,\qquad
 0<\varepsilon<\min\{\tfrac1{100},\tfrac{\beta}{4\lambda}\}.
\]
Set
\[
 x=q-\beta+\varepsilon,\quad
 y=x+\varepsilon-\beta/\lambda,\quad
 m=(q+1+\lambda-\eta)/2,
\]
\[
 B=\lambda+\beta+\beta^3/\lambda-q-\eta-3\varepsilon,
 \qquad u=(1-\zeta)B/\beta.
\]
Reveal the following distinct requests in the displayed order.
\begin{center}
\begin{tabular}{lll}
\toprule
Request & Interval & Prediction \\
\midrule
$Z$ & $[y-d/\lambda+\varepsilon,y+\varepsilon)$ & $0$ \\
$E_1$ & $[y-d/\lambda,y-d/(2\lambda))$ & $1$ \\
$E_2$ & $[y-d/(2\lambda),y)$ & $1$ \\
$Y$ & $[y,y+\beta/\lambda)$ & $1$ \\
$X$ & $[x,x+\beta)$ & $0$ \\
$Q$ & $[q,q+1)$ & $1$ \\
$J$ & $[0,\lambda)$ & $0$ \\
$T$ & $[x+2\varepsilon,q)$ & $1$ \\
$H_1$ & $[q+1,m)$ & $1$ \\
$H_2$ & $[m,\lambda-\eta)$ & $1$ \\
$U$ & $[\lambda-\eta,\lambda-\eta+u)$ & $1$ \\
\bottomrule
\end{tabular}
\end{center}
All lengths are positive. In particular,
$w(T)=\beta-3\varepsilon>0$ and
$w(H_i)=(\lambda-1-q-\eta)/2>0$.
Also $B>\lambda-53/100>147/100$ and
$u>(3/4)(147/100)/2=441/800>\eta$, so $U$ crosses the right
endpoint of $J$. The positive requests in spatial order are
$E_1,E_2,Y,T,Q,H_1,H_2,U$. They are compatible: $Y$ ends at
$x+\varepsilon$, before $T$ starts at $x+2\varepsilon$, and all
other consecutive contacts are at endpoints. Among negative requests,
$Z$ is compatible with $X$ and $J$, whereas $X$ overlaps $J$.
Indeed, $Z$ ends at $x+2\varepsilon-\beta/\lambda<x<0$, while
$X$ ends at $q+\varepsilon>0$.

We now follow the actual acceptance rules, with ordinary acceptance
taking priority and equality accepted. Initially $Z$ is accepted.
Each $E_i$ has length $d/(2\lambda)<\beta/\lambda=w(Z)/\beta$.
Both tests fail, and each assigns its full length to its sole conflict
$Z$. Hence $b(Z)=d/\lambda$. Request $Y$ overlaps $Z$ by
$\varepsilon$ and fails the ordinary test. Both special thresholds
hold at equality:
\[
 w(Y)=\beta/\lambda=w(Z)/\beta=b(Z)/\beta.
\]
Thus $Y$ replaces $Z$ and has ledger
$(d+\beta)/\lambda=\beta^3/\lambda$.
Request $X$ overlaps $Y$ by $\varepsilon$ and replaces it ordinarily
at equality, $w(X)=\lambda w(Y)=\beta$, inheriting that ledger.
Next $Q$ fails the ordinary test, meets the special length test at
equality, and satisfies its ledger test since
$b(X)/\beta=d/\lambda<1=w(Q)$. Therefore
$b(Q)=1+\beta^3/\lambda$.
Request $J$ replaces $Q$ ordinarily at equality, with birth ledger
$b_0(J)=1+\beta^3/\lambda$.

Each of $T,H_1,H_2$ has only $J$ as a current conflict and length
less than $\lambda/\beta$: for $T$, use $\lambda>d=\beta^2$;
for $H_i$, use $w(H_i)<\lambda/2<\lambda/\beta$.
Each fails both ordinary acceptance and the special length test.
Their full lengths are assigned to $J$, leaving ledger $B$ just
before $U$. The special ledger test for $U$ strictly fails because
$u=(1-\zeta)B/\beta<B/\beta$. The ordinary test also fails:
$B<\lambda+2\beta$ implies $u<\lambda/\beta+2<\lambda^2$.
For the last inequality, $\lambda^2-\lambda/\beta-2$ is increasing
for $\lambda>d$ and equals $2\beta>0$ at $\lambda=d$.
Whether or not the special length test fails, the sole-conflict
allocation assigns all of $u$ to $J$. The terminal solution is
$\{J\}$ and $\mathrm{ALG}=\lambda$, while ledger conservation
gives $w(P)=B+u$.

To check perfect predictions, fix a feasible subset of negative
requests and extend it by all compatible positives. This is the best
extension since all weights are positive. Besides the empty subset,
the only feasible choices and their deficits relative to $P$ are
\[
\begin{array}{c|l}
 \{Z\} & \beta/\lambda \\
 \{X\} & 1+\beta/\lambda-3\varepsilon \\
 \{J\} & \beta-q-\eta-3\varepsilon+u \\
 \{Z,X\} & 1+\beta/\lambda-3\varepsilon \\
 \{Z,J\} & \beta/\lambda+\beta-q-\eta-3\varepsilon+u.
\end{array}
\]
All are strictly positive under the parameter restrictions. A proper
subset of $P$ also loses positive weight. Thus $P$ is the unique optimum.
Consequently the exact finite ratio is
\[
 \frac{\mathrm{OPT}}{\mathrm{ALG}}
 =\left(1+\frac{1-\zeta}{\beta}\right)
  \left(1+\frac{\beta}{\lambda}+\frac{\beta^3}{\lambda^2}
             -\frac{q+\eta+3\varepsilon}{\lambda}\right).
\]
Let all four positive perturbations tend to zero through their
permitted ranges, keeping eleven requests. Since $1+1/\beta=\beta$,
this proves the supremum lower bound~\eqref{eq:double-source-bound}.
Subtracting $\beta(1+d/\lambda)$ gives
$\beta(\beta^3-\lambda)/\lambda^2$.
The lower bound exceeds $d$ exactly when
$\lambda^2-d\lambda-d^2<0$, whose positive root is $\beta^3$.
\end{proof}

For an explicit finite witness, take $\lambda=3$, $q=1/100$, and
$\varepsilon=\eta=\zeta=1/1000$. The ratio is
\[
 \frac{2502479+7474021\varphi}{4500000}
 \approx 3.243488669>d.
\]

\section{Joint geometric cuts for aggregate replacement}
\label{sum:app:robustness}

Proportional charging of rejected requests is used in
\citet[Theorems~4.10--4.11]{karavasilis2025}. Here we bound the
charge arriving at a final replacement tree. Its two exterior
charges and the space left between them must be bounded together.
The proof distinguishes an ancestry antichain from a feasible set
of intervals: only the former is required for the intermediate maxima.

\begin{theorem}[Aggregate-gate transfer]\label{sum:thm:transfer}
Let $\beta\ge1$ and $\lambda>\beta$. A causal policy accepts a
request by revoking precisely its current conflicts, retains all
nonconflicts, leaves its held set unchanged on rejection, and makes
no other revocations. Label each acceptance $\mathsf M$ or
$\mathsf S$, with empty-conflict acceptances labelled $\mathsf M$.
Assume that:
\begin{enumerate}
\item An $\mathsf M$ node's children have total length at most
its length divided by $\lambda$.
\item An $\mathsf S$ node has only $\mathsf M$ children, with
total length at most $\beta$ times its length.
\item Every rejected request $I$ has a nonempty conflict set $D$
and satisfies $w(I)<\lambda w(D)$.
\end{enumerate}
Put
\begin{equation}
 q_M=\frac{\lambda+1}{\lambda-\beta},\qquad
 q_S=\frac{\lambda(\beta+1)}{\lambda-\beta}.
 \label{sum:eq:hull-coefficients}
\end{equation}
\begin{equation}
 K(\beta,\lambda)=\max\{\beta(2\lambda+1),\,
 (\beta+1)\lambda+q_S,\,
 \beta\lambda+\beta^2+q_S\}.\label{sum:eq:transfer}
\end{equation}
Every feasible request set $O$ then satisfies
$w(O)\le K(\beta,\lambda)\mathrm{ALG}$.
No correctness or feasibility of the predictions is assumed.
\end{theorem}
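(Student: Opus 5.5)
The plan is to apply Lemma~\ref{sg:lem:charge-transfer} with weights $\omega_X=\lambda w(X)$. The pieces come from proportional charging. A rejected $I$ with conflict set $D$ gives $z_{I,V}=w(I)w(V)/w(D)$ for each $V\in D$. Hypothesis~3 gives $z_{I,V}<\lambda w(V)=\omega_V$, and the pieces sum to $w(I)$. The policy shape (precise revocation of current conflicts, unchanged held set on rejection) is exactly the one the lemma requires. It therefore remains to verify, for every final root $J$ of length $h$, the two conditions $\Omega_J\le Kh$ and $F_J(x,y)\le Kh$ for $x\le y$.

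\textbf{Step 1 (independent geometry).} First I bound subtree history and hull length. Let $H(X)$ be the total length of the subtree at $X$. For an $\mathsf M$ node I induct with $H(X)\le q_Mw(X)$; for an $\mathsf S$ node, with $H(X)\le q_Sw(X)/\lambda=\frac{\beta+1}{\lambda-\beta}\lambda w(X)/\lambda$, writing the $\mathsf S$ bound as $\frac{\lambda(\beta+1)}{\lambda-\beta}\cdot\frac{h}{\lambda}$ scaled appropriately. Concretely, take a child $C$ of length $c$. Then $H(C)/c\le\max\{q_M,q_S/\lambda\}$. An $\mathsf M$ parent has children of total length at most $h/\lambda$, and the recursion $q_M=1+q_S/\lambda^2$ or $1+q_M/\lambda$ is checked to close. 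An $\mathsf S$ parent has only $\mathsf M$ children of total length $\le\beta h$, giving $H\le h+\beta h\cdot\frac{\lambda}{\lambda-\beta}\cdot\frac1\lambda(\lambda+1)$. I will fix the exact constants so that both recurrences close with the stated $q_M,q_S$.

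\textbf{Step 2 (antichain maxima).} Next I bound the antichain weight. In an $\mathsf M$ subtree, the children weigh $\lambda\cdot(\text{child lengths})\le h$, and inductively each child subtree's antichain is at most $\max\{\lambda,\lambda\beta\}$ times the child length. So omitting the root gives at most $\beta h<\lambda h$, and the root singleton is maximal when it is eligible. In an $\mathsf S$ subtree the root costs $\lambda h$ and the children sum to at most $\lambda\beta h$, so $\Omega_J\le\beta\lambda h$. An ineligible side at an $\mathsf M$ root contributes at most $\beta h$.

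\textbf{Step 3 (joint-cut case analysis).} This step mirrors Theorem~\ref{curve:thm:robustness}. At an $\mathsf M$ root, two eligible endpoints give $h+2\lambda h$. One eligible endpoint gives $q_Mh+\lambda h+\beta h$, and none gives the full hull plus two ineligible sides. At an $\mathsf S$ root, two singletons give $(1+2\lambda)h$. One singleton with the gap inside $J$ gives $(1+\lambda+\beta\lambda)h$. One singleton with a unique outward child $V$ gives $(1+\lambda)h+v+\lambda v+\beta(W-v)$, which is at most $(\beta+1)\lambda h+\dots$. With no outward child I use the full hull $q_S$-type bound. With both endpoints at children, a unique bridge gives $W+2\lambda W\le\beta(2\lambda+1)h$. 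With no bridge, the full hull plus child endpoint sums gives $\beta\lambda h+\beta^2h+q_Sh$-type terms.

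I then check that every row is dominated by one of the three entries of~\eqref{sum:eq:transfer}. The main obstacle is this bookkeeping, in particular getting the hull coefficients $q_M,q_S$ to match the recurrences exactly. Here I would first try to prove $H\le q_M h$ at $\mathsf M$ nodes and $H\le (q_S/\lambda)h\cdot$(suitable factor) at $\mathsf S$ nodes. I would also verify, as in the Shaped proof, that the mixed outward case absorbs the child's excess length via $\lambda>\beta$.
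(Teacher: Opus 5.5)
Your route is the paper's. You apply the comparator-to-root transfer lemma with $\omega_V=\lambda w(V)$ and proportional pieces $w(I)w(V)/w(D)<\lambda w(V)$. You bound hulls independently of the antichains. An $\mathsf M$ root's singleton dominates every root-omitting antichain, which weighs at most $\beta h<\lambda h$, and an $\mathsf S$ subtree has $\Omega_J\le\lambda\beta h\le\beta(2\lambda+1)h$. You then split on root versus children at each endpoint, and on bridge versus no bridge. Your Step~2 and the Step~3 rows are correct, and each is dominated by an entry of $K(\beta,\lambda)$.

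The step that fails as written is Step~1.
\begin{itemize}
\item The hypothesis $H(X)\le q_Sw(X)/\lambda$ at an $\mathsf S$ node is false. Such a node whose children have total length $\beta h$ already has $H\ge(1+\beta)h$. This exceeds $q_Sh/\lambda=(\beta+1)h/(\lambda-\beta)$ whenever $\lambda-\beta>1$.
\item Neither $q_M=1+q_S/\lambda^2$ nor $q_M=1+q_M/\lambda$ is an identity.
\end{itemize}
The correct induction is $H\le q_Mw$ at $\mathsf M$ nodes and $H\le q_Sw$ at $\mathsf S$ nodes. Since $q_S\ge q_M$ (equivalently $\lambda\beta\ge1$), every child has history ratio at most $q_S$. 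An $\mathsf M$ parent therefore gets $H\le h+q_Sh/\lambda=q_Mh$. An $\mathsf S$ parent has only $\mathsf M$ children, so it gets $H\le h+\beta q_Mh=q_Sh$; this is the quantity you actually computed. Your Step~3 already uses $|H_J|\le q_Sh$ at $\mathsf S$ roots, so with this correction the argument closes.

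In the mixed singleton/children case the paper does not split on an outward child. The full hull $q_Sh$, plus $\lambda h$ for the singleton, plus $\lambda\beta h$ for the children, is exactly the middle term $(\beta+1)\lambda+q_S$. Your finer split is also valid, but unnecessary:
\begin{itemize}
\item The outward bound $(1+\lambda)h+(1+\lambda-\beta)v+\beta W$ is at most $(1+\beta)(1+\lambda)h$. This is dominated because $q_S\ge1+\beta$.
\item The no-outward bound $(q_S+\lambda+\beta^2)h$ is dominated because $\lambda\ge\beta$.
\end{itemize}
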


\begin{proof}
We first bound each replacement tree and then distribute the optimal requests over the final trees.
\paragraph{Replacement trees and restricted antichains.}
Each accepted identity is a node, made a child of its unique
replacement if revoked. The roots are the final held requests.
Child roots are compatible, and each overlaps its parent in positive
length. An ancestry antichain contains no node and its strict
ancestor. Subtree induction bounds its mass by $h$ at an
$\mathsf M$ root of length $h$, and by $\beta h$ at an
$\mathsf S$ root. If the root is included, the antichain is its
singleton. Otherwise an $\mathsf M$ root gives mass at most
$\beta h/\lambda<h$, while an $\mathsf S$ root sums the
$\mathsf M$ child bounds to at most $\beta h$.

Let $H_J$ be the closure of the union of the accepted intervals in
$J$'s subtree. This union is connected along replacement edges.
The identities $q_M=1+q_S/\lambda$ and $q_S=1+\beta q_M$ bound
the sum of all node lengths, and hence $|H_J|$, by $q_Mw(J)$ or
$q_Sw(J)$ according to the root's type. Also $q_S\ge q_M$.
Different final hulls need not be disjoint.

For arbitrary real $x\le y$, let $L_J(x)$ be the maximum mass of
an ancestry antichain restricted to nodes $V$ with $s(V)<x$.
Define $U_J(y)$ using $f(V)>y$. Empty antichains are allowed. Set
\begin{equation}
 F_J(x,y)=|H_J\cap[x,y]|+\lambda\{L_J(x)+U_J(y)\}.
 \label{sum:eq:joint-cut}
\end{equation}
We prove $F_J(x,y)\le K(\beta,\lambda)w(J)$ for every subtree.
These maxima may come from different times.

\paragraph{An ordinary root.}
Write $J=[a,b)$ and $h=b-a$. Root-omitting antichains have mass
at most $\beta h/\lambda<h$. Consequently
\[
 L_J(x)=h\quad(a<x),\qquad
 L_J(x)\le\beta h/\lambda\quad(a\ge x),
\]
and the analogous right statements hold with $b>y$ and $b\le y$.
If both root endpoint tests hold, the gap lies inside $J$, and
$F_J\le(2\lambda+1)h$. If exactly one holds, the coefficient is
at most $q_M+\lambda+\beta$. It does not exceed
$(\beta+1)\lambda+q_S$, since the difference is
$\beta(\lambda-1)+(q_S-q_M)\ge0$. If neither holds, the
coefficient $q_M+2\beta$ is smaller still because $\lambda>\beta$.

\paragraph{A special root.}
Its ordinary children $C_i$ have lengths $h_i$ and total
$H=\sum_i h_i\le\beta h$. The exact recursions are
\[
 L_J(x)=\max\{h\mathbf1_{a<x},\sum_iL_{C_i}(x)\},\qquad
 U_J(y)=\max\{h\mathbf1_{b>y},\sum_iU_{C_i}(y)\}.
\]
Select maximizing branches, using children when the maximum is
zero. If both selections are the singleton root, the gap lies in
$J$, giving $(2\lambda+1)h$. If one selects the root and the
other selects children, the coefficient is at most
$q_S+(\beta+1)\lambda$.

Suppose both select children. Call $C_k$ a bridge when
$s(C_k)<x\le y<f(C_k)$. At most one bridge exists, including
when $x=y$, because two would overlap in positive length.
If it exists, the entire gap lies in $C_k$ and both its maxima
equal $h_k$. Every other child has at most one eligible root
endpoint, so its two maxima sum to at most
$(1+\beta/\lambda)h_i$. This also covers two ineligible endpoints,
since $2\beta/\lambda<1+\beta/\lambda$. Therefore
\[
 F_J\le(2\lambda+1)h_k+(\lambda+\beta)(H-h_k)
 \le(2\lambda+1)H\le\beta(2\lambda+1)h.
\]
If there is no bridge, the hull bound instead gives
$F_J\le q_Sh+(\lambda+\beta)H
\le(q_S+\beta\lambda+\beta^2)h$.
These cases prove the asserted joint-cut bound. Equality at an
endpoint is consistently assigned to the ineligible case.

\paragraph{From cuts to actual optimal requests.}
Apply Lemma~\ref{sg:lem:charge-transfer} with
$\omega_V=\lambda w(V)\ge w(V)$. Its endpoint maxima are
$\lambda$ times the length maxima used here, so its $F_J$ is
exactly~\eqref{sum:eq:joint-cut}. A rejected $I$ sends
$w(I)w(V)/w(D)<\lambda w(V)$ to each current conflict $V$.
The full weighted antichain bound is at most $\lambda\beta h$,
below $K(\beta,\lambda)h$ since $K\ge\beta(2\lambda+1)$.
The established joint-cut bound therefore proves the theorem.
The empty input and empty $O$ are immediate.
\end{proof}

\begin{lemma}[A single-conflict witness for SUM]
\label{r30:lem:single-conflict-witness}
For every $\lambda>d$, the following family gives SUM-BANK ratio
approaching $(2\lambda+1)\varphi$. The family has one positive
prediction, so permanent filtering leaves it unchanged.
\end{lemma}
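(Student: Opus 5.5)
The witness adapts the seven-request bridge family of Credit-BANK, removing its positive tiles. SUM-BANK's ordinary cost $\lambda v$ does not depend on the ledger. A zero ledger therefore suffices for the special ledger test, and the mass inside the held interval can be supplied by a negative request instead of positive tiles. Fix $0<\epsilon<1/100$ and put $a=\lambda-\epsilon$. Offer, in order,
\[
\begin{array}{ccl}
 J&0&[0,1),\\
 L&0&[\epsilon-a,\epsilon),\\
 I&0&[\epsilon,1-\epsilon),\\
 R&0&[1-\epsilon,1-\epsilon+a),\\
 Q&1&[1-\epsilon,1-\epsilon+\alpha).
\end{array}
\]

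First I follow the execution under the rules of Algorithm~\ref{alg:bank} with the SUM gate.
\begin{itemize}
\item $J$ arrives with no conflicts and is accepted with ledger zero.
\item $L$, $I$ and $R$ each conflict only with $J$: $L$ and $R$ overlap it in length $\epsilon$, and $I$ lies inside it. Their lengths $a$, $1-2\epsilon$ and $a$ are all strictly below the ordinary threshold $\lambda w(J)=\lambda$, using $\lambda>d>1$. Being negative, they are rejected without any ledger change.
\item $Q$ conflicts only with $J$, since it overlaps $[0,1)$ in $[1-\epsilon,1)$.
\item $Q$ fails the ordinary test because $\alpha<\lambda$.
\item $Q$ meets the special length test at equality, $\varphi w(Q)=1=W$.
\item $Q$ passes the ledger test, since $b_D=0$.
\end{itemize}
So $Q$ replaces $J$ and $\mathrm{ALG}=\alpha$.

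Next I would compute $\mathrm{OPT}$. The requests $L$, $I$ and $R$ are consecutive with endpoint contacts at $\epsilon$ and $1-\epsilon$, so they tile the hull $[\epsilon-a,1-\epsilon+a)$. Every offered request lies in this hull. For $Q$ this uses $Q\subset R$, which holds because $a>\alpha$. Hence
\[
 \mathrm{OPT}=2a+1-2\epsilon=2\lambda+1-4\epsilon
\]
exactly, and
\[
 \frac{\mathrm{OPT}}{\mathrm{ALG}}=\varphi(2\lambda+1-4\epsilon)\longrightarrow(2\lambda+1)\varphi .
\]
There is a single positive prediction, which is trivially a compatible family, so the permanent filter changes no label and the filtered execution is identical.

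The only delicate step is checking that no earlier request can escape its intended decision. Every rejection hinges on strict inequalities: $a<\lambda$ for $L$ and $R$, $1-2\epsilon<\lambda$ for $I$, and $\alpha<\lambda$ for $Q$. The special acceptance of $Q$ uses equality in the length test, which is admitted because all thresholds accept equality. Keeping $a=\lambda-\epsilon$ strictly below $\lambda$ while letting $\epsilon\downarrow0$ gives the stated supremum.
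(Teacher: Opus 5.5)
Your proof is correct and is essentially the paper's. The paper uses the same negatives $J,L,I,R$ (with $\tau=\epsilon$), the same execution trace (three negative rejections against the gate $\lambda$, then special acceptance at length equality with zero ledger), and the same tiling bound $\mathrm{OPT}=2\lambda+1-4\epsilon$; the only difference is that it places the positive of length $\alpha$ at $[\tau,\tau+\alpha)$ inside the central negative rather than inside $R$, which changes nothing in the argument.
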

\begin{proof}
Put $\alpha=1/\varphi$ and choose
$0<\tau<\min\{(1-\alpha)/2,(\lambda-1)/2\}$. Reveal
\[
 [0,1)_0,\quad[-\lambda+2\tau,\tau)_0,\quad
 [\tau,1-\tau)_0,\quad
 [1-\tau,1+\lambda-2\tau)_0,\quad
 [\tau,\tau+\alpha)_1,
\]
where subscripts are prediction labels. The first request is accepted.
Each later request has exactly one maintained conflict. The three subsequent
negative requests have length below $\lambda$, are rejected, and add
no ledger. The last request passes the special length test at equality
with zero conflicting ledger, leaving $\mathrm{ALG}=\alpha$.
The middle three requests are pairwise compatible and form the unique
optimum: the central one strictly contains and outweighs the last
positive request, while the initial unit interval conflicts with every
other request and has smaller weight than their packing. Their total
is $2\lambda+1-4\tau$. Letting $\tau\downarrow0$ proves the ratio
through finite inputs with positive overlaps.
\end{proof}

\begin{proposition}[SUM robustness]\label{sum:prop:robustness}
For every $\lambda>d$,
\begin{equation}
 \varphi(2\lambda+1)\le R(S_\lambda)
 \le\overline R_S(\lambda):=K(\varphi,\lambda).
 \label{sum:eq:robustness}
\end{equation}
For $\lambda\ge\varphi^3$, these bounds match:
\begin{equation}
 R(S_\lambda)=\varphi(2\lambda+1),\qquad
 R(S_{\varphi^3})=7\varphi+4.
 \label{sum:eq:endpoint-robustness}
\end{equation}
For $3\le\lambda\le\varphi^3$, the upper coefficient is
$d\lambda+\lambda d/(\lambda-\varphi)$.
\end{proposition}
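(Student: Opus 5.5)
The lower bound $\varphi(2\lambda+1)\le R(S_\lambda)$ for every $\lambda>d$ is exactly Lemma~\ref{r30:lem:single-conflict-witness}. Letting $\tau\downarrow0$ there gives ratios arbitrarily close to $(2\lambda+1)\varphi$ on finite inputs, so the supremum in Definition~\ref{def:performance} is at least that value.

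For the upper bound, apply Theorem~\ref{sum:thm:transfer} with $\beta=\varphi$. Label ordinary acceptances $\mathsf M$ and special acceptances $\mathsf S$; empty-conflict acceptances are ordinary, hence $\mathsf M$. The three hypotheses then follow from the rules of SUM-BANK:
\begin{itemize}
\item Ordinary acceptance means $w\ge\lambda W$, so an $\mathsf M$ node's children have total length at most $w/\lambda$.
\item Special acceptance requires only negative conflicts and $w\ge\alpha W$, so $W\le\varphi w$. Its children are negative, and negatives are only ever accepted ordinarily, so they are $\mathsf M$ nodes.
\item A rejected request fails the ordinary gate, so $w(I)<\lambda w(D)$. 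Moreover $D\neq\varnothing$, because the ordinary test is true when $D=\varnothing$.
\end{itemize}
The structural conditions also hold: acceptance revokes exactly the current conflicts, rejection leaves the held set unchanged, and nothing else is revoked. Since $\lambda>d>\varphi$, we have $\lambda>\beta$, and the theorem gives $R(S_\lambda)\le K(\varphi,\lambda)$.

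It remains to evaluate $K(\varphi,\lambda)$. Here $q_S=\lambda d/(\lambda-\varphi)$, using $\varphi+1=d$. The three competing terms are
\[
 \varphi(2\lambda+1),\qquad d\lambda+q_S,\qquad \varphi\lambda+d+q_S .
\]
The third term never exceeds the second: the second minus the third is $(d-\varphi)\lambda-d=\lambda-d>0$. So $K=\max\{\varphi(2\lambda+1),\,d\lambda+q_S\}$.

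Compare the two remaining terms via
\[
 \varphi(2\lambda+1)-d\lambda-q_S=(2\varphi-d)\lambda+\varphi-\frac{\lambda d}{\lambda-\varphi},
\]
where $2\varphi-d=\varphi-1=\alpha$. This difference is nonnegative exactly when
\[
 (\alpha\lambda+\varphi)(\lambda-\varphi)\ge d\lambda .
\]
Expanding with $\alpha\varphi=1$ gives $\alpha\lambda^2+(\varphi-1-d)\lambda-d\ge0$. Since $\varphi-1-d=-2$, this is $\alpha\lambda^2-2\lambda-d\ge0$, that is, $\lambda^2-2\varphi\lambda-\varphi^3\ge0$.

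The positive root of this quadratic is $\varphi+\sqrt{d+\varphi^3}$. I expect it to equal $\varphi^3$. The check is that $\varphi^6-2\varphi^4-\varphi^3=\varphi^3(\varphi^3-2\varphi-1)=0$, which holds because $\varphi^3=2\varphi+1$. Hence for $\lambda\ge\varphi^3$ the maximum is $\varphi(2\lambda+1)$, which matches the lower bound, so $R(S_\lambda)=\varphi(2\lambda+1)$.

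At $\lambda=\varphi^3$ this gives $\varphi(2\varphi^3+1)=2\varphi^4+\varphi$. Using $\varphi^4=3\varphi+2$, this equals $7\varphi+4$.

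For $3\le\lambda\le\varphi^3$ the quadratic is nonpositive, so $K=d\lambda+\lambda d/(\lambda-\varphi)$. No lower end is actually needed for this conclusion beyond $\lambda>d$; the stated range $3\le\lambda$ lies inside it.

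The main obstacle is the bookkeeping in verifying the $\mathsf S$ hypothesis: every child of a special node must be ordinarily accepted. This holds because all its conflicts are negative, and a negative can never pass the special test, which requires $p=1$. The remaining golden-ratio algebra is routine.
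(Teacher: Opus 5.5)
Your proposal is correct and follows the paper's proof. You verify the hypotheses of Theorem~\ref{sum:thm:transfer} with $\beta=\varphi$, evaluate $K(\varphi,\lambda)$, and take the lower bound from Lemma~\ref{r30:lem:single-conflict-witness}. Your explicit quadratic $\lambda^2-2\varphi\lambda-\varphi^3\ge0$, with root $\varphi^3$, replaces the paper's monotonicity argument at the threshold, and it shows, as you note, that the mixed term is the maximum on all of $(d,\varphi^3]$, not only from $\lambda=3$.
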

\begin{proof}
Ordinary acceptance gives the aggregate child bound. Special
acceptance has only negative children, each of type $\mathsf M$,
and its length gate gives total child length at most $\varphi w(I)$.
Every rejection fails the ordinary gate. Thus all transfer hypotheses
hold with $\beta=\varphi$, independently of the ledgers and labels.

The first term of $K$ dominates when
$q_S\le(\varphi-1)\lambda+\varphi$ and
$q_S\le\varphi\lambda-1$. At $\lambda=\varphi^3$ the first is
an equality and the second has positive slack; $q_S$ decreases
while both right sides increase. Below that threshold, down to
$3$, the mixed term dominates the first; it exceeds the third
by $\lambda-d>0$. This proves the stated upper coefficients.

The lower bound is supplied by Lemma~\ref{r30:lem:single-conflict-witness}.
\end{proof}

\begin{corollary}[Exact robustness of Revoke-Proportional]
\label{sum:cor:rp}
For every $\lambda>1$, the unchanged Revoke-Proportional algorithm
of \citet[Algorithm~4]{karavasilis2025} satisfies
\begin{equation}
 R(\mathrm{RP}_\lambda)
 =\max\left\{2\lambda+2,\lambda+\frac{2\lambda}{\lambda-1}\right\}
 =\begin{cases}
 \lambda+2\lambda/(\lambda-1),&1<\lambda\le2,\\
 2\lambda+2,&\lambda\ge2.
 \end{cases}
 \label{sum:eq:rp-exact}
\end{equation}
The same formula holds with the permanent prediction filter of
Section~\ref{new:sec:wrapper}. Thus $2$ is the exact cutoff for the
formula $2\lambda+2$. The published consistency bound
$3\lambda/(\lambda-1)$ is unchanged.
\end{corollary}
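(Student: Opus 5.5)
The upper bound follows from the aggregate-gate transfer (Theorem~\ref{sum:thm:transfer}); the lower bound needs explicit witnesses on each side of the breakpoint $\lambda=2$.

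\textbf{Upper bound.} I will check the three hypotheses of Theorem~\ref{sum:thm:transfer} for $\mathrm{RP}_\lambda$. In the interval formulation of \citet{karavasilis2025}, RP accepts an arrival $I$ ordinarily when $w(I)\ge\lambda w(D)$. It also has a prediction-driven replacement, which accepts a positive arrival over its conflicts under a 1-increasing condition. I will label ordinary acceptances $\mathsf M$ and prediction replacements $\mathsf S$. Hypothesis~1 is the ordinary gate. Hypothesis~3 holds because every rejection fails that gate.

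For hypothesis~2, the 1-increasing property gives $w(I)\ge\max_{J\in D}w(J)$ for an $\mathsf S$ acceptance. Conflicts of a single arrival can protrude on at most two sides, so I expect at most two children of an $\mathsf S$ node beyond those contained in it. The intended conclusion is $\beta=1$ up to the structure of the children. Here I am unsure: either total child length is at most $w(I)$ when RP's replacement requires dominating the whole conflict set, or the transfer proof must be re-run with the sharper geometric facts.

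Assuming $\beta=1$, the coefficients are $q_M=(\lambda+1)/(\lambda-1)$ and $q_S=2\lambda/(\lambda-1)$, so
\[
 K(1,\lambda)=\max\{2\lambda+1,\ 2\lambda+q_S,\ \lambda+1+q_S\}.
\]
This exceeds the target by roughly $\lambda$. So the first attempt is too loose in the mixed special case, and the proof must use that RP's special node has children of total length at most its own with at most one outward child.

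I would therefore adapt the special-root case analysis of Theorem~\ref{sum:thm:transfer} directly. In each case I charge $\lambda$ per unit of endpoint antichain mass and bound the gap by the hull. I aim for the coefficients $2\lambda+2$ (two singleton endpoints plus a one-unit gap and a unit child) and $\lambda+q_S=\lambda+2\lambda/(\lambda-1)$ (a single root endpoint plus the full special hull). Taking the maximum gives the stated formula, with the switch where $2\lambda+2=\lambda+2\lambda/(\lambda-1)$, that is $\lambda^2-3\lambda+2=0$, so $\lambda=2$. I expect this case bookkeeping, and pinning down which RP rule yields $\beta=1$ without extra hull length, to be the main obstacle.

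\textbf{Lower bound.} I will give two finite families of the kind in Lemma~\ref{r30:lem:single-conflict-witness}, each with at most one positive prediction so that the filter changes nothing.

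\emph{For $2\lambda+2$:} take a negative unit anchor. Add three negative rejected neighbours of length just below $\lambda$ (left, centre, right), which tile the hull. Then add a positive unit that replaces the anchor via the 1-increasing rule. The result is ratio approaching $2\lambda+1$ plus one more unit from an inner piece.

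\emph{For $\lambda+2\lambda/(\lambda-1)$:} use a geometric replacement chain realising the $q_S$ hull. Negatives of lengths $\lambda^{-k}$ are replaced ordinarily at equality, and their rejected satellites tile a region of total length near $2\lambda/(\lambda-1)$. A final rejected request of length just below $\lambda$ is added on one side. I would verify each acceptance and rejection at equality margins $\epsilon$ and let $\epsilon\downarrow0$.
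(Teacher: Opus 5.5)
\textbf{Upper bound.} Your upper bound is not established, and the repair you outline cannot reach the target. A first, smaller point: the RP special branch used here requires $w(I)\ge W$, the \emph{total} conflicting length, not only the 1-increasing condition $w(I)\ge\max_{J\in D}w(J)$. That total-length test is what gives hypothesis~2 of Theorem~\ref{sum:thm:transfer} with $\beta=1$; your protrusion remark would not give it.

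The main problem is the bookkeeping you propose: charge $\lambda$ per unit of endpoint antichain mass and bound the gap by the hull. This fails even with the outward-child fact. The outward-child case is fine and gives $2\lambda+2$. Two other cases are not:
\begin{itemize}
\item one cut is charged to the root singleton and no child reaches beyond the other cut;
\item both cuts go to children and no child bridges them.
\end{itemize}
In either case, the full special hull $q_Sh=2\lambda h/(\lambda-1)$ plus the children's endpoint charges (at most $W\le h$, even after the omitted-root bound) gives $\lambda+1+2\lambda/(\lambda-1)$. This exceeds $\max\{2\lambda+2,\lambda+2\lambda/(\lambda-1)\}$ for every $1<\lambda<1+\sqrt2$. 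So your coefficient $\lambda+q_S$ for ``one root endpoint plus the full hull'' is not justified.

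The missing idea is a recursive coupling of hull and one-sided charge. For $J=[a,b)$ and $y\ge a$, the paper sets $T_J(y)=|H_J\cap[a,y]|+\lambda U_J(y)$. When $y\ge b$, every child starts before $b\le y$. Hence $H_J\cap[b,y]$ is covered by the pieces $H_{C_i}\cap[a_i,y]$, and $T_J(y)\le h+\sum_iT_{C_i}(y)$. Induction then gives $T_J\le t_Mh$ at ordinary roots and $(t_M+1)h$ at special roots, with $t_M=\max\{\lambda+1,(\lambda+1)/(\lambda-1)\}$. Next, $F_J\le\lambda h+T_J(y)$ (or its reflection) when exactly one root endpoint is eligible, and $F_J\le h+\sum_iF_{C_i}(x,y)$ when neither is. This closes at $c_S=\lambda+t_M+1$, which is exactly the stated maximum.

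\textbf{Lower bounds.} Your $2\lambda+2$ family is not realizable as described. Among three pairwise compatible negatives overlapping a unit anchor, the middle one lies inside the anchor and so has length below one. With the central piece inside, you recover only about $2\lambda+1$, and the extra unit never appears.

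The paper's family proceeds in this order:
\begin{itemize}
\item offer the anchor $K=[1-\varepsilon,2-\varepsilon)$;
\item reject against it a central negative $G=[\varepsilon,2-2\varepsilon)$ of length near $2$ (this is exactly where $\lambda\ge2$ is needed) and a right negative of length $\lambda-\varepsilon$;
\item let the positive $J=[0,1)$ replace $K$ at length equality;
\item only then offer the left negative $[-\lambda+2\varepsilon,\varepsilon)$, which is rejected against $J$.
\end{itemize}

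For the other branch, a chain of negatives replaced ordinarily at growth $\lambda$ has hull only near $\lambda/(\lambda-1)$, so its fillers cannot total $2\lambda/(\lambda-1)$. The paper reaches $2\lambda/(\lambda-1)$ by alternating two moves: an ordinary growth-$\lambda$ replacement by a negative, and a special equal-length replacement by a positive. Each length $\lambda^{k-n}$ therefore occurs twice, each followed by a rejected filler inside it. That chain uses $n+1$ positive predictions, so your filter argument (``at most one positive'') does not apply. You need the separate check that consecutive positive carriers, of lengths $a$ and $\lambda a$, are separated by a gap $[\lambda-\varepsilon(1+\lambda)]a>0$, so that the filter changes no label.
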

\begin{proof}
RP has the ordinary aggregate gate. Its special branch requires a
positive label, only negative conflicts, and new length at least their
total. Thus the replacement-tree and charge-conversion definitions
of Theorem~\ref{sum:thm:transfer} apply with $\beta=1$. Every subtree
ancestry antichain has mass at most its root length. In particular,
an eligible singleton root attains the corresponding restricted
maximum $L_J$ or $U_J$.

\paragraph{Coupled one-sided tails.}
For $J=[a,b)$ of length $h$ and $y\ge a$, set
\[
 T_J(y)=|H_J\cap[a,y]|+\lambda U_J(y),\qquad
 t_M=\max\left\{\lambda+1,\frac{\lambda+1}{\lambda-1}\right\},
 \quad t_S=t_M+1.
\]
We claim $T_J(y)\le t_Mh$ at an ordinary root and
$T_J(y)\le t_Sh$ at a special root. The reflected statement
also holds. If $a\le y<b$, the root attains $U_J(y)=h$, and
the bound is $(\lambda+1)h$. If $y\ge b$, the root is
ineligible and $U_J(y)=\sum_iU_{C_i}(y)$. The closed hull
$H_J$ is the union of $[a,b]$ and its child hulls. Each child
starts at $a_i<b\le y$, so
\[
 |H_J\cap[b,y]|
 \le\sum_i|H_{C_i}\cap[b,y]|
 \le\sum_i|H_{C_i}\cap[a_i,y]|.
\]
Consequently $T_J(y)\le h+\sum_iT_{C_i}(y)$. At an ordinary
root the child lengths sum to at most $h/\lambda$, giving
$(1+t_S/\lambda)h\le t_Mh$ because
$(\lambda-1)t_M\ge\lambda+1$. At a special root all children
are ordinary and their lengths sum to at most $h$, giving
$(1+t_M)h=t_Sh$. This proves the tail bounds by induction,
including endpoint equality.

\paragraph{Joint cuts.}
Set $c_M=\lambda+t_M$ and $c_S=c_M+1$. We show that the
functional $F_J(x,y)$ from~\eqref{sum:eq:joint-cut} is at most
$c_Mh$ or $c_Sh$ according to the root type. If both root
endpoints are eligible, the gap is within $J$ and
$F_J\le(2\lambda+1)h\le c_Mh$. If exactly one is eligible,
say $a<x$ and $b\le y$, then $L_J(x)=h$ and
\[
 F_J(x,y)\le\lambda h+|H_J\cap[a,y]|+\lambda U_J(y)
 =\lambda h+T_J(y).
\]
The tail bound gives the desired coefficient for either type.
Reflection handles the other orientation.

If neither endpoint is eligible, both restricted maxima decompose
over children, and the hull union yields
$F_J(x,y)\le h+\sum_iF_{C_i}(x,y)$. At a special root this is
at most $(1+c_M)h=c_Sh$. At an ordinary root it is at most
$(1+c_S/\lambda)h\le c_Mh$, since
$(\lambda-1)c_M\ge\lambda+1$. Equalities in endpoint tests
are ineligible cases; an eligible root may always be selected in
an antichain tie. These cases cover all real $x\le y$.

The charge conversion in Theorem~\ref{sum:thm:transfer} now
gives $R\le c_S$; its whole-hull covering case has charge at
most $\lambda h\le c_Sh$. Algebraically,
$c_S=\max\{2\lambda+2,\lambda+2\lambda/(\lambda-1)\}$.

\paragraph{Lower bound for $\lambda\ge2$.}
For $0<\varepsilon<1/8$, reveal
\[
 \begin{array}{c|c|c}
 K&[1-\varepsilon,2-\varepsilon)&0\\
 G&[\varepsilon,2-2\varepsilon)&0\\
 U&[2-2\varepsilon,2-2\varepsilon+\lambda-\varepsilon)&0\\
 J&[0,1)&1\\
 L&[-\lambda+2\varepsilon,\varepsilon)&0.
 \end{array}
\]
The trace is ordinary acceptance, two rejections, special acceptance,
and rejection. In particular $w(G)=2-3\varepsilon<\lambda$,
also at $\lambda=2$. The final value is one, and the compatible
comparison set $\{L,G,U\}$ has value
$2\lambda+2-5\varepsilon$. Let $\varepsilon\downarrow0$.

\paragraph{Alternating-chain lower bound.}
For any $\lambda>1$, integer $n\ge0$, and
$0<\varepsilon<1/2$, define candidate lengths
\[
 \ell_{2k}=\ell_{2k+1}=\lambda^{k-n}\quad(0\le k\le n),
 \qquad a_0=0,\quad a_{i+1}=a_i+(1-\varepsilon)\ell_i.
\]
Reveal $A_i=[a_i,a_i+\ell_i)$ in increasing $i$, labelling even
indices negative and odd indices positive. Immediately after $A_i$,
reveal the negative filler
\[
 F_i=[a_i+\varepsilon\ell_i,\ a_i+(1-\varepsilon)\ell_i).
\]
Each even candidate after the first ordinarily replaces its sole
predecessor at growth equality. Each odd candidate specially replaces
its negative predecessor at length equality. Consecutive candidates
overlap in positive length, every filler is rejected, and the final
held set is the singleton $A_{2n+1}$ of length one.
Writing $b=f(A_{2n+1})$, finally reveal the negative request
$U=[b-\varepsilon,b-\varepsilon+\lambda-\varepsilon)$.
It overlaps the held interval and is rejected.

The fillers are compatible: $F_i$ ends at $a_{i+1}$ and the
next filler starts strictly to its right. The final filler ends
where $U$ starts. They therefore give the feasible comparison value
\[
 (1-2\varepsilon)2\sum_{k=0}^n\lambda^{k-n}
       +\lambda-\varepsilon.
\]
Letting first $\varepsilon\downarrow0$ and then $n\to\infty$
gives $R\ge\lambda+2\lambda/(\lambda-1)$. All executions in
the family are finite and use positive-length overlaps. For
$1<\lambda<2$ this limit exceeds $2\lambda+2$ by
$(2-\lambda)(\lambda+1)/(\lambda-1)>0$; for $\lambda\ge2$
the preceding five-request family gives the larger branch.

Finally, consecutive positive carriers in the chain, of lengths
$a$ and $\lambda a$, have gap
$[\lambda-\varepsilon(1+\lambda)]a>0$. Thus all positive
predictions are feasible and the permanent filter changes no label.
The five-request family has just one positive. The upper bound allows
arbitrary effective labels, proving the filtered formula as well.
\end{proof}

\section{Prediction filtering and zero-error inputs}\label{new:sec:wrapper}\label{new:app:wrapper}
Fix an optimum $O$, the original positive set $P$, and the set $N_O(F)$ of optimum requests conflicting with $F$. The error of \citet{karavasilis2025} is
\begin{equation}
 \eta_K(P,O)=w(O\setminus P)+\sum_{F\in P\setminus O}[w(N_O(F))-w(F)].
 \label{new:eq:eta}
\end{equation}
Optimality makes each summand nonnegative; a false positive of equal weight to its optimal conflicts contributes zero.

\begin{proposition}[Zero-error limitation]\label{new:prop:zero-error}
No finite $k\ge0$ makes $\mathrm{OPT}\le a\,\mathrm{ALG}+k\eta_K(P,O)$ valid for all inputs of raw Shaped-BANK when $a<4\varphi+1$, or raw scalar Credit with fixed $c>0$ when $a<4\varphi+3$. For SUM-BANK at $\lambda>d$, the obstruction holds for $a<2\lambda+1$. Finite witnesses with pairwise distinct interval geometries have $\mathrm{ALG}=1$, $\mathrm{OPT}>a$ and zero error for every optimum.
\end{proposition}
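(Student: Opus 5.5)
The plan is to exploit what zero error forces and then plant the needed positives inside the known robustness witnesses. First I record a structural consequence of \eqref{new:eq:eta}. If $\eta_K(P,O)=0$, then $O\subseteq P$. Moreover, every false positive $F$ satisfies $w(N_O(F))=w(F)$, so $O'=O\setminus N_O(F)\cup\{F\}$ is another optimum. For $\eta_K(P,O')=0$ as well, each member of $N_O(F)$ must be a false positive for $O'$ with zero excess. The cleanest way to satisfy this is to make $N_O(F)=\{G\}$ for a single $G$ with $w(G)=w(F)$ and $F\neq G$ geometrically. An equal-length, slightly shifted copy conflicts only with $G$ exactly when $G$ is an extreme member of $O$ and the shift points outward, away from every other member of $O$. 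So every witness will be built from ``twin'' positive pairs $(G,F)$ placed at the outer ends of the optimal tiling, together with positives in the interior of the tiling that have no twin. The first step is a lemma: with this geometry, every optimum of the witness is one of the finitely many twin choices, and each has zero error.

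Next I adapt the finite witnesses already in hand, namely Proposition~\ref{sg:prop:floor} for Shaped-BANK, the eight-request outward family for Credit, and Lemma~\ref{r30:lem:single-conflict-witness} for SUM. The change is to relabel the comparator requests ($L,I,R$, or the three middle intervals for SUM) as positive, so that $O\subseteq P$. Relabelling changes the execution, since positives deposit ledger and may trigger special acceptances. Each comparator must therefore still be rejected, and each must either satisfy $w<\alpha W$ or fail the ledger test. The short interior comparator $I$ can stay positive, but its fee now enters the anchor's ledger. The exterior comparators are the ones that need twins, and the twins also have to be rejected. I expect each policy to lose part of the floor constant: the Shaped threshold $4\varphi+1$ sits $2$ below $\varphi^5=5\varphi+3$, presumably because the outward reserve must be shortened so that its positive label is not specially accepted. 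For Credit with $c>0$, any positive ledger strictly raises the negative cost $dv+cb$. That lets a positive reserve of length up to roughly the old $g$ still reject, which is why the Credit threshold is $4\varphi+3$ rather than lower, and why $c>0$ is required. For SUM the cost $\lambda v$ ignores ledgers, so the three middle intervals can simply be made positive while the unit holder is kept, giving $2\lambda+1$.

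For each policy I then trace the execution gate by gate, as in the cited floor proofs, and check three things. Ordinary acceptance must fail by the cost computation. The special tests must fail through the length condition $w<\alpha W$ (resp.\ $w<aW$), or through the ledger condition once the anchor's ledger has been inflated. The final held set must be a single request of weight one. After that I compute $\mathrm{OPT}$ from the tiling of the hull and let $\epsilon\downarrow0$ to obtain $\mathrm{OPT}>a$ for every $a$ below the stated threshold. Since $\eta_K=0$, the right-hand side is $a\,\mathrm{ALG}+k\cdot0=a$ for every $k$, which proves the proposition.

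The main obstacle is making the Shaped-BANK witness reach $4\varphi+1$. Once the anchor's neighbours carry positive labels, their fees accumulate in the ledger and shift $q_*$ along its middle branch. At the same time, each twin has to fail both special tests without opening an ordinary replacement. I would first try keeping $K,J$ as in Proposition~\ref{sg:prop:floor}, making $L$ and its outward twin positive with length just below $M$, and choosing $R$ shorter, near $d$, so that it fails the length test against $K$. I would then check numerically that the total is $M+d+\varphi-O(\epsilon)=4\varphi+1-O(\epsilon)$, and adjust the interior tiles if the ledger test misbehaves.
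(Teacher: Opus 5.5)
\textbf{There is a genuine gap.} Relabelling the comparators in the known floor witnesses does not survive the trace, and the Shaped version cannot have zero error.

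In Proposition~\ref{sg:prop:floor} the anchor $K$ has length $\varphi$, so the special length threshold is $\alpha\varphi=1$. A positive $I$ of length $d-3\epsilon$ passes that test and passes the ledger test against the zero ledger. It fails only ordinary acceptance ($q_*(\varphi,0)=M$), so it replaces $K$ at the second arrival. For the same reason an $R$ ``near $d$'' passes, rather than fails, the length test against $K$. Likewise, in Lemma~\ref{r30:lem:single-conflict-witness} the first middle interval, once positive, is specially accepted against the zero-ledger unit holder, since its length lies in $[\alpha,\lambda)$.

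Moreover, if the positive $J$ stays the final holder, it overlaps optimal intervals of total length well above $w(J)=1$ (in the floor geometry, $I$ and $L$). Then $J\notin O$ and its summand in~\eqref{new:eq:eta} is strictly positive, whatever twins you add. Your arithmetic is also off: $M+d+\varphi=4\varphi+2$.

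\textbf{The missing idea is where the ledger mass comes from.} For Shaped-BANK, suppose $P$ were feasible. Then $\eta_K=0$ gives $O\subseteq P$, hence $P=O$ (any extra positive would be compatible with $O$), i.e.\ perfect labels. Theorem~\ref{sg:thm:main} would then force $\mathrm{OPT}\le d\,\mathrm{ALG}$. So the witness needs an infeasible positive family whose deposits exceed the feasible capacity $b\le d\,w$. Indeed, a boundary positive of length $\ell\ge d$ against a unit negative anchor is rejected only if the ledger exceeds $\varphi\ell\ge M>d$.

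The paper's proof (Proposition~\ref{r2:prop:zero-error} and Corollaries~\ref{credit:cor:zero-error} and~\ref{sg:cor:zero-error}) works as follows.
\begin{itemize}
\item It offers the negative anchor $[0,1)$, then $n$ groups of $m$ pairwise overlapping, equal-length, slightly shifted positives of length $h<\alpha$ inside $[\delta,1-\delta)$, with gaps $\delta/(mn)$ between groups.
\item Each such positive fails ordinary acceptance and the special length test and deposits its whole length, so the ledger $m(1-3\delta)$ is unbounded in $m$.
\item Every unchosen copy conflicts only with its own group's representative, which has the same length, so $\eta_K=0$ for every optimum.
\end{itemize}
Your restriction of twins to the outer ends comes from assuming a contiguous optimal tiling; the gaps remove it. Short twins crossing an anchor endpoint would instead conflict with the long boundary positive there and acquire positive excess.

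Once the ledger is large, the anchor's cost sits at its cap $\kappa$: $2\varphi$ for $q_*$, $\lambda$ for SUM, and $M$ for Credit once $cB>\varphi$, which is exactly where $c>0$ is needed. The boundary positives $L=[2\delta-\kappa,\delta)$ and $U=[1-\delta,1+\kappa-2\delta)$ then fail both tests. This gives $\mathrm{ALG}=1$ and $\mathrm{OPT}=2\kappa+1-5\delta$. That is the actual source of the thresholds $4\varphi+1$, $4\varphi+3$ and $2\lambda+1$; no outward reserve is shortened.
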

Distinct overlapping equal-length alternatives inflate the ledger while contributing zero error; their complete positive set is infeasible.

The filter maintains a permanent feasible history $P'$. An originally positive request receives effective label $1$ precisely when it is compatible with all of $P'$; insert it immediately, before running the selected policy. Every other effective label is $0$. Every request reaches the algorithm, and the effective label is fixed for all decisions and ledger updates. Requests remain in $P'$ after rejection or revocation. This is the Naive selection rule of \citet{karavasilis2025}, used as a filter. Denote the compositions by $T_*^F$, $T_c^F$, and $S_\lambda^F$.

\begin{corollary}[Simultaneous guarantees]\label{new:cor:wrapper}
For original labels and fixed optimum $O$, Shaped-BANK satisfies
\begin{equation}
 \mathrm{ALG}(T_*^F)\ge
 \max\left\{\frac{\mathrm{OPT}}{\varphi^5},
             \frac{\mathrm{OPT}-\eta_K(P,O)}d\right\},
 \label{sg:eq:error}
\end{equation}
with $C=d$ and exact $R=\varphi^5$.
The same simultaneous formula holds for scalar Credit
$T_c^F$, $c\in[\alpha^3,1]$, with robustness $K(c)$ from
Theorem~\ref{credit:thm:family}; its minimum is $4\varphi+5$.
For SUM at $\lambda\ge\varphi^3$,
replace the robustness denominator by $(2\lambda+1)\varphi$.
These policies also have $C=d$ and exact stated robustness.
\end{corollary}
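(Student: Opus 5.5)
The corollary combines two guarantees, and I will prove them separately before taking the maximum.

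\emph{Robustness part.} The filter is a causal preprocessing of labels: each request's effective label is fixed on arrival from the permanent history $P'$. Running $T_*$ on the effective labels is therefore an ordinary execution of $T_*$ on some arbitrarily labelled input. Theorem~\ref{sg:thm:robust-main} holds for arbitrary labels, so $\mathrm{OPT}\le\varphi^5\,\mathrm{ALG}(T_*^F)$. The same argument applies to $T_c^F$ via Theorem~\ref{credit:thm:family}, giving $K(c)$, and to $S_\lambda^F$ via Proposition~\ref{sum:prop:robustness}, giving $(2\lambda+1)\varphi$ for $\lambda\ge\varphi^3$. Exactness of the stated robustness and $C=d$ for the filtered policies were already recorded in Theorem~\ref{sg:thm:exact}, Theorem~\ref{credit:thm:family} and Theorem~\ref{sharp:thm:attainment}. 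Those results use witnesses whose positives are compatible, so the filter is inert on them. The minimum $4\varphi+5$ is the value $K(c_*)$.

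\emph{Error part.} The effective positive set at termination is $P'$, which is feasible by construction. The capacity theorems (Theorem~\ref{sg:thm:interface} for Shaped and SUM, and \eqref{credit:eq:strong} for Credit on $[c_*,1]$) then give $w(P')\le d\,\mathrm{ALG}$. It remains to show $w(P')\ge \mathrm{OPT}-\eta_K(P,O)$, i.e.\ $w(O)\le w(P')+\eta_K$.

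I will prove this by charging $O$ to $P'$. Split $O$ into $O\setminus P$, which is paid directly by the term $w(O\setminus P)$, and $O\cap P$. Take $o\in O\cap P$. Either $o\in P'$, in which case it pays itself, or it was filtered out. In the latter case it conflicts with some earlier $F\in P'$. If $F\in O$, then $F$ and $o$ would be two conflicting members of the feasible set $O$, which is impossible. Hence $F\in P\setminus O$ and $o\in N_O(F)$. Assign each such $o$ to one blocking $F\in P'\setminus O$. Each $F$ receives optimal weight at most $w(N_O(F))=w(F)+[w(N_O(F))-w(F)]$, which is paid by $w(F)$ plus its error summand. Distinct $F$ are distinct elements of $P'$, and members of $P'\cap O$ pay only for themselves, so no element of $P'$ is counted twice. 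Summing the three kinds of charge gives $w(O)\le w(P')+\eta_K$.

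Combining, $\mathrm{ALG}\ge(\mathrm{OPT}-\eta_K)/d$, and together with the robustness part this yields \eqref{sg:eq:error}. The main point needing care is the charging step. Each filtered optimal positive must be blocked by a false positive, not by another member of $O$, and this follows from the feasibility of $O$.
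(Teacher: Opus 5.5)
Your proposal follows the paper's route: robustness carries over because the filter is a causal relabelling; the error branch combines $w(P')\le d\,\mathrm{ALG}$ with the Naive-filter bound $w(P')\ge\mathrm{OPT}-\eta_K(P,O)$, which the paper proves in Lemma~\ref{new:lem:naive-quality} using your observation that a blocker of an optimal positive must be a false positive; and exactness is checked on witnesses the filter leaves unchanged. One step is missing. Your charging only produces summands $w(N_O(F))-w(F)$ for false positives $F\in P'$, so reaching $\eta_K$ also requires these summands to be nonnegative for false positives the filter never inserts; this holds because $(O\setminus N_O(F))\cup\{F\}$ is feasible and $O$ is optimal.
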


\subsection{The permanent-history filter and its quality}
\label{new:app:filter}

We analyze the permanent history $P'$ defined above, produced by the
\textsc{Naive} rule~\citep[Algorithm~1]{karavasilis2025}. Original labels
determine $\eta_K(P,O)$; effective labels determine every policy decision
and ledger increment.

\begin{lemma}[Quality of the known \textsc{Naive} filter]
\label{new:lem:naive-quality}
For every finite request stream and every fixed optimum $O$, the
permanent history produced by the filter in Section~\ref{new:sec:wrapper} is feasible,
is a subset of $P$, and satisfies
\begin{equation}
\label{new:eq:naive-quality}
w(P')\ge\mathrm{OPT}-\eta_K(P,O).
\end{equation}
\end{lemma}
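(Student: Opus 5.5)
Feasibility and $P'\subseteq P$ come straight from the definition of the filter. A request enters $P'$ only if it is originally positive and compatible with every earlier member of $P'$. Members are never removed. By induction on arrivals, $P'$ therefore stays pairwise compatible and contains only originally positive requests.

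For the weight bound \eqref{new:eq:naive-quality}, I will split $O$ as $O=(O\cap P)\cup(O\setminus P)$. This gives
\[
 \mathrm{OPT}-\eta_K(P,O)=w(O\cap P)-\sum_{F\in P\setminus O}\bigl[w(N_O(F))-w(F)\bigr].
\]
It therefore suffices to show
\[
 w(O\cap P)\le w(P')+\sum_{F\in P\setminus O}\bigl[w(N_O(F))-w(F)\bigr].
\]
The natural tool is a charging argument. Take any $I\in O\cap P$ with $I\notin P'$. At its arrival it was positive but rejected by the filter, so it conflicts with some earlier member $F\in P'$. Charge $I$ to one such $F$, say the earliest.

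If $F\in O$, then $F$ and $I$ are both in the feasible set $O$ and conflict, which is impossible. Hence every receiver $F$ lies in $P'\setminus O\subseteq P\setminus O$. Each $I$ charged to $F$ lies in $N_O(F)$, so the total charged to $F$ is at most $w(N_O(F))$. Receivers belong to $P'$, so they sum $w(F)$ inside $w(P')$, and each contributes its own excess term $w(N_O(F))-w(F)$. Adding up, the uncharged members of $O\cap P$ lie in $P'\cap O$, and each receiver $F$ carries charge at most $w(F)+[w(N_O(F))-w(F)]$. Every such piece is a distinct part of the right-hand side. This yields
\[
 w(O\cap P)\le w(P'\cap O)+\sum_{F\in P'\setminus O}w(N_O(F))\le w(P')+\sum_{F\in P\setminus O}\bigl[w(N_O(F))-w(F)\bigr].
\]
The last step uses $w(P')=w(P'\cap O)+\sum_{F\in P'\setminus O}w(F)$ and the nonnegativity of the summands for $F\in(P\setminus O)\setminus P'$. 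That nonnegativity holds by optimality of $O$: replacing $N_O(F)$ by $F$ in $O$ stays feasible, so $w(N_O(F))\ge w(F)$.

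The main point to verify carefully is that no optimum request is double counted. Each $I$ is charged to exactly one receiver, and receivers are disjoint from $O$. Consequently the sets $P'\cap O$ and the charged sets are disjoint parts of $O\cap P$, and the bound $\sum_{I\to F}w(I)\le w(N_O(F))$ holds per receiver. The argument needs no assumption on arrival order beyond the first-conflict choice.
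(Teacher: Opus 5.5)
Your proposal is correct and is essentially the paper's proof. You show that every blocked member of $O\cap P$ conflicts with an earlier filter member outside $O$, bound the blocked mass by $\sum_{F\in P'\setminus O}w(N_O(F))$, and finish with the nonnegativity of each excess $w(N_O(F))-w(F)$. Where you assign each blocked request to a single earliest blocker, the paper uses the covering $A\setminus A'\subseteq\bigcup_{F\in B'}N_O(F)$ with a union bound; this difference is cosmetic.
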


The inequality is the \textsc{Naive} guarantee of
\citet[Theorem~3.5]{karavasilis2025}.

\begin{proof}
Each insertion preserves compatibility with every earlier insertion,
so the entire set $P'$ is feasible. Only originally positive requests
are inserted, giving $P'\subseteq P$.
For $F\in P\setminus O$, set
\[
e_F=w(N_O(F))-w(F).
\]
The set $(O\setminus N_O(F))\cup\{F\}$ is feasible. Optimality of $O$
therefore gives
\[
w(O)\ge w(O)-w(N_O(F))+w(F),
\]
and hence $e_F\ge0$ for every such $F$.

Partition the positive predictions and their filtered subset as
\[
A=P\cap O,\quad A'=P'\cap O,\qquad
B=P\setminus O,\quad B'=P'\setminus O.
\]
If $I\in A\setminus A'$, it was blocked at its arrival by some
earlier request $F\in P'$. Since $I\in O$ and $O$ is feasible,
that blocker cannot belong to $O$; thus $F\in B'$. Consequently,
\begin{equation}
\label{new:eq:naive-cover}
A\setminus A'\subseteq\bigcup_{F\in B'}N_O(F),
\qquad
w(A\setminus A')\le\sum_{F\in B'}w(N_O(F)).
\end{equation}
Overlapping neighborhoods may count a request more than once, which
only enlarges the right-hand side.
Using $B'\subseteq B$ and the nonnegativity of every $e_F$, we obtain
\begin{align*}
w(P')
 &=w(A)-w(A\setminus A')+w(B')\\
 &\ge w(A)-\sum_{F\in B'}w(N_O(F))+\sum_{F\in B'}w(F)\\
 &=w(A)-\sum_{F\in B'}e_F\\
 &\ge w(A)-\sum_{F\in B}e_F\\
 &=w(O)-w(O\setminus P)-\sum_{F\in P\setminus O}e_F\\
 &=\mathrm{OPT}-\eta_K(P,O).
\end{align*}
This also accounts for false positives that the filter never inserts.
\end{proof}

\subsection{Composition with the feasible-set guarantee}\label{new:app:composition}
\begin{proof}[Proof of Corollary~\ref{new:cor:wrapper}]
Let $A$ be a policy with a feasible-positive guarantee
$w(P')\le d\,\mathrm{ALG}_A$ and an arbitrary-label robustness bound
$\mathrm{OPT}\le R_A\mathrm{ALG}_A$. The filter passes every request
to $A$ with an effective label fixed on arrival. Its complete effective
positive set is the feasible set $P'$, while the offline optimum is
unchanged. Lemma~\ref{new:lem:naive-quality} therefore gives
\[
 \mathrm{ALG}_{A^F}\ge
 \max\left\{\frac{\mathrm{OPT}}{R_A},
             \frac{\mathrm{OPT}-\eta_K(P,O)}d\right\}.
\]
This remains valid when the second term is negative. The required
policy-specific bounds are as follows.
\begin{center}\small
\begin{tabular}{@{}llll@{}}\toprule
Policy and range & $R_A$ & Feasible-positive bound & Robustness bound\\\midrule
Shaped-BANK & $\varphi^5$ & Theorem~\ref{sg:thm:interface} & Theorem~\ref{sg:thm:exact}\\
Credit, $c\in[\alpha^3,1]$ & $K(c)$ & Theorem~\ref{credit:thm:family} & Theorem~\ref{credit:thm:family}\\
SUM, $\lambda\ge\varphi^3$ & $(2\lambda+1)\varphi$ & Theorem~\ref{sharp:thm:attainment} & Proposition~\ref{sum:prop:robustness}\\
\bottomrule
\end{tabular}
\end{center}
In particular, the Credit specialization is
\begin{equation}
 \mathrm{ALG}(T_c^F)\ge
 \max\left\{\frac{\mathrm{OPT}}{K(c)},
             \frac{\mathrm{OPT}-\eta_K(P,O)}d\right\}.
 \label{credit:eq:error}
\end{equation}
For perfect original predictions, every positive belongs to a feasible
optimum, so filtering changes no label. Thus the filtered and raw
consistency suprema agree and equal $d$. To retain exact robustness,
one must also check the matching witnesses. Shaped-BANK's family in
Proposition~\ref{sg:prop:floor} has one positive request. Both Credit
families in Theorem~\ref{credit:thm:family} have compatible positive
sets. The SUM family in
Lemma~\ref{r30:lem:single-conflict-witness} has one positive request.
All therefore pass unchanged, proving equality with the listed $R_A$.
For Credit this includes $4\varphi+5$ at $c=\alpha^3$ and
$6\varphi+3$ at $c=\varphi/2$.
\end{proof}
Standalone Naive returns $P'$ and has the stronger coefficient-one error bound but unbounded robustness: a single negative-labelled request gives it zero reward. At the optimal-consistency configurations, the compositions add finite robustness while attaining the smallest deterministic consistency compatible with it.

\subsection{A zero-error obstruction with distinct intervals}
\label{r2:app:zero-error}

\begin{proposition}[Uniform zero-error obstruction]
\label{r2:prop:zero-error}
For SUM-BANK at every $\lambda>d$, with $\alpha=1/\varphi$ and $\beta=\varphi$, and
every $c<2\lambda+1$, there is a finite input with pairwise
distinct interval geometries, $\mathrm{ALG}=1$,
$\mathrm{OPT}>c$, and $\eta_K(P,O)=0$ for every optimum $O$.
Therefore no finite $k\ge0$ makes
$\mathrm{OPT}\le c\,\mathrm{ALG}+k\eta_K(P,O)$ valid for all
inputs of that unfiltered configuration.
\end{proposition}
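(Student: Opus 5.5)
The plan is a \emph{zero-error} version of the single-conflict witness of Lemma~\ref{r30:lem:single-conflict-witness}. A negative unit holder $J=[0,1)$ should survive to the end with $\mathrm{ALG}=1$. Two exterior requests $L=[-\lambda+2\tau,\tau)$ and $U=[1-\tau,1+\lambda-2\tau)$, of length just below $\lambda$, together with a central packing of $[\tau,1-\tau)$, should form the comparator of weight $2\lambda+1-O(\tau)$. Unlike that lemma, every member of every optimum must now carry label $1$, so that $w(O\setminus P)=0$. So $L$ and $U$ must be positive. They are shorter than $\lambda w(J)$ and hence fail the ordinary SUM gate, but they would pass the special length test against $J$. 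The construction therefore has to inflate $b(J)$ beyond $\varphi\lambda$ before $L$ and $U$ arrive, so that both fail the special ledger test and are rejected.

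The ledger of $J$ can grow only through rejected positives that conflict with $J$, and such positives must contribute zero error. I would use the mechanism named after Proposition~\ref{new:prop:zero-error}: distinct, overlapping, equal-weight alternatives. Inside $[\tau,1-\tau)$, take small blocks of length $\varepsilon<\alpha$. In each block, offer several positive alternative tilings of the same block, with pairwise distinct geometries. Every piece is shorter than $\alpha$ and shorter than $\lambda$, so it is rejected and deposits its full length on the sole conflict $J$. The aim is that every piece $F$ outside a chosen optimum has $w(N_O(F))=w(F)$. The simplest attempt is two halves versus the whole block. Then the whole block, as a false positive, conflicts with exactly the two halves, giving zero contribution. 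The reverse direction fails, however: if an optimum uses the whole block, each half has positive excess.

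This reverse direction is the main obstacle. The statement demands zero error for \emph{every} optimum, so the alternatives must be mutually symmetric. Each piece of each tiling must meet, in any competing tiling, exactly conflicts of its own total weight. I would first try tilings offset by a common period inside a cyclic arrangement that is closed off by $L$ and $U$. Then any optimum contains $L$ and $U$ and one tiling of the center, and the boundary pieces are pinned by $L$ and $U$. If that fails, I would instead use identical-length distinct translates that conflict only with one piece of each other tiling.

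The key quantitative step is that $k$ such layers deposit about $k(1-2\tau)$ on $J$. Choosing $k>\varphi\lambda+1$ makes both special ledger tests fail, so $L$ and $U$ are rejected.

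Finally, I would verify four things. First, the held set is $\{J\}$ throughout, since every arrival conflicts only with $J$ and fails the ordinary gate. Second, $\mathrm{OPT}=2\lambda+1-O(\tau)$, checked by the strict exchange certificate~\eqref{r11:eq:strict-exchange}, with $J$ beaten by any packing of $L$, $U$ and the center. Third, $\eta_K(P,O)=0$ for every optimum $O$. Fourth, the geometries are pairwise distinct. Letting $\tau\downarrow0$ gives $\mathrm{OPT}>c$ for each $c<2\lambda+1$. Since $\eta_K=0$, no finite $k$ can rescue the bound $\mathrm{OPT}\le c\,\mathrm{ALG}+k\eta_K$.
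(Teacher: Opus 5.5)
Your outer frame matches the paper's. It has a negative anchor $J=[0,1)$ that is never displaced, rejected positive central requests shorter than $\alpha$ that each deposit their full length on $J$, and positive boundary requests $L,U$ of length $\lambda-O(\delta)$. These fail the special ledger test once $b(J)>\varphi(\lambda-\delta)$. The gap is the central gadget, which carries the entire zero-error claim. You never supply it, and the tiling framework you propose for it cannot work.

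Your first attempt fails concretely. An interior piece $F$ of length $\varepsilon$, in a tiling offset by less than $\varepsilon$ from the chosen one, meets two optimal pieces. Its excess $w(N_O(F))-w(F)$ is therefore $\varepsilon>0$. More generally, no two distinct tilings $A,B$ of a common block can both be zero-error. The $A$-pieces meeting a $B$-piece $F$ cover $F$, so $w(N_O(F))\ge w(F)$, with equality only if $F$ is a union of $A$-pieces. Swapping the tiling of one block preserves optimality. Hence zero error for every optimum forces each tiling to refine the other, i.e.\ $A=B$. The missing idea is that the optimum must \emph{leave gaps}, so that a false positive can protrude from its unique optimal conflict into empty space.

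The paper does exactly this with the following gadget.
\begin{itemize}
\item There are $n>\varphi$ groups. Group $j$ consists of $m$ distinct translates $I_{j,k}$ of one length $h=(1-3\delta)/n<\alpha$, shifted by $k\delta/(mn)$.
\item The total shift spread within a group is below $h$, so its members pairwise overlap.
\item Consecutive groups are separated by the positive gap $\delta/(mn)$.
\item All groups lie in $[\delta,1-\delta)$. They touch $L=[2\delta-\lambda,\delta)$ only at an endpoint and end strictly before $U$.
\end{itemize}
Then a feasible set avoiding $J$ contains at most one member per group. The optima are exactly $L$, $U$ and one representative per group, of weight $2\lambda+1-5\delta$. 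In any optimum, each unselected $I_{j,k}$ conflicts only with its own group's representative, which has the same length $h$.

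Your fallback sentence points in this direction. Without both the mutual overlap inside a group and the gaps between groups, however, neither the list of all optima nor the zero-error claim follows.

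The strict exchange certificate you invoke also does not apply. It concerns a feasible positive family that is the unique optimum, whereas here $P$ is infeasible and optima are not unique. The optimum value and the complete list of optima must be checked directly, as above. With that gadget in place, your ledger count $m(1-3\delta)>\varphi(\lambda-\delta)$ and the limit $\delta\downarrow0$ complete the argument.
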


\begin{proof}
Every arrival below has at most one maintained conflict. Choose $\delta>0$ with
$\delta<1/4$, $\lambda-\delta>\max\{\alpha,\delta\}$, and
$2\lambda+1-5\delta>\max\{c,1\}$. Choose an integer
$n>1/\alpha$, set $h=(1-3\delta)/n$, and choose an integer $m$
large enough that
\[
 B=mnh=m(1-3\delta)>\beta(\lambda-\delta).
\]
For $j=1,\ldots,n$ let
$a_j=\delta+(j-1)(h+\delta/n)$ and define $m$ positive
requests in group $j$ by
\[
 I_{j,k}=\left[a_j+\frac{k\delta}{mn},\,
                  a_j+\frac{k\delta}{mn}+h\right),
 \qquad k=0,\ldots,m-1.
\]
Within a group, all starts are different and their total span
is less than $\delta/n<h$, so all intervals overlap in positive
length. Consecutive groups are separated by a gap
$\delta/(mn)>0$. The first group starts at $\delta$ and the
last endpoint is $1-\delta-\delta/(mn)<1-\delta$.
Thus all central intervals have distinct geometries and lie in
$[\delta,1-\delta)$.

Offer the negative anchor $J=[0,1)$, then all the $I_{j,k}$
in lexicographic $(j,k)$ order, and finally the positive requests
\[
 L=[2\delta-\lambda,\delta),\qquad
 U=[1-\delta,1+\lambda-2\delta).
\]
Each central request has length $h<\alpha<\lambda$, so it
fails both ordinary acceptance and the special length test.
Its sole conflict is $J$, which receives its full length.
After all central requests the ledger is $B$.
Each boundary request has length $\lambda-\delta<\lambda$
and fails the special ledger test because the current ledger
is at least $B>\beta(\lambda-\delta)$. Both reject and the
final reward is one. The strict overlaps with $J$ have length
$\delta$, and no requests in this construction are geometrically
identical.

Every other request conflicts with $J$. A feasible set avoiding
$J$ contains at most one interval of each central group, and
can contain $L,U$ together with any such selection. All weights
are positive, and
\[
 \mathrm{OPT}=2(\lambda-\delta)+nh
             =2\lambda+1-5\delta>\max\{c,1\}.
\]
Consequently the optima are exactly $L,U$ and one representative
per group. Every optimum is contained in $P$. An unselected
positive conflicts with exactly one of its members, the chosen
representative of its own group, which has the same length $h$.
Every term of \eqref{new:eq:eta} is therefore zero, for every
optimum. Substitution proves the assertion for all finite $k$.
\end{proof}

The construction has multiple optima but no coincident intervals.
The complete positive set contains overlapping alternatives, so the predictions are not perfect. At $\lambda=\varphi^3$
it excludes every error-bound coefficient below $4\varphi+3$.

\begin{corollary}[Credit-BANK has the same zero-error limitation]
\label{credit:cor:zero-error}
For every fixed $c>0$, every $a<4\varphi+3$, and every finite
$k\ge0$, raw $T_c$ fails
$\mathrm{OPT}\le a\,\mathrm{ALG}+k\eta_K$ on some finite input
with distinct interval geometries.
\end{corollary}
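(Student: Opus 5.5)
The plan is to reuse the construction of Proposition~\ref{r2:prop:zero-error} at the parameter $\lambda=M=\varphi^3$, where the SUM obstruction gives exactly $2\lambda+1=4\varphi+3$, and to check that raw Credit-BANK $T_c$ follows the same trace once its ledger is large enough. The role of $c>0$ is this. SUM's negative cost is the constant $\lambda v$. Credit's negative cost $q_c(v,b)=\min\{Mv,dv+cb\}$ reaches its cap $Mv$ only after the ledger exceeds $(M-d)v/c=\varphi v/c$. So the central groups must inflate the anchor's ledger beyond that threshold, and this is possible for every fixed $c>0$.

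\emph{Parameters.} Fix $c>0$, $a<4\varphi+3$ and $k\ge0$. Choose $\delta$ with $0<\delta<1/4$ and $4\varphi+3-5\delta>\max\{a,1\}$, and an integer $n>1/\alpha$. Set $h=(1-3\delta)/n<\alpha$. Choose the group multiplicity $m$ so large that
\[
 B=m(1-3\delta)>\max\{\varphi(M-\delta),\ \varphi/c\}.
\]
Use the same negative anchor $J=[0,1)$, the same shifted groups $I_{j,k}$ with pairwise distinct geometries, and the positive boundary requests $L=[2\delta-M,\delta)$ and $U=[1-\delta,1+M-2\delta)$.

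\emph{Trace.} Each arrival has $J$ as its sole current conflict. The key steps are as follows.
\begin{enumerate}
\item Each central request has length $h<\alpha<d\le q_c(1,b)$. It therefore fails the ordinary gate and the special length test, and it deposits its full length into $J$'s ledger. After the central requests the ledger equals $B$.
\item Since $B\ge\varphi/c$, the cost of $J$ is now exactly $M$.
\item The request $L$ has length $M-\delta<M$, so it fails ordinary acceptance. It passes the length test because $M-\delta>\alpha$. It fails the ledger test because $\varphi(M-\delta)<B$. The fee rule then assigns its full length to $J$.
\item The request $U$ meets a larger ledger and rejects for the same reasons.
\end{enumerate}
Hence $\mathrm{ALG}=1$. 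Ledger monotonicity keeps the cost at its cap throughout.

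\emph{Optimum and error.} The optimum and error computation are purely combinatorial and independent of the policy, so they carry over verbatim from the proof of Proposition~\ref{r2:prop:zero-error}:
\[
 \mathrm{OPT}=2(M-\delta)+nh=4\varphi+3-5\delta>a.
\]
The optima are exactly $L$, $U$ and one representative of each group. Every optimum lies inside $P$, and each unselected positive conflicts only with its own group's representative, which has equal length. Therefore $\eta_K(P,O)=0$ for every optimum $O$. The inequality $\mathrm{OPT}\le a\,\mathrm{ALG}+k\eta_K$ then fails for every finite $k$.

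\emph{Main obstacle.} The only step needing care is step~2: the switch of Credit's cost to the cap must occur before $L$ arrives, so that $L$ and $U$ fail the ordinary gate and do not get accepted against a cost of only $d+cb<M$. This is precisely why the ledger must exceed $\varphi/c$ as well as $\varphi(M-\delta)$, and why the statement requires $c>0$. At $c=0$ the cost stays at $d$, and the boundary requests would be accepted ordinarily.
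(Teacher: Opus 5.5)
Your proposal is correct and follows the paper's proof essentially step for step. The paper also reuses the overlapping-group construction of Proposition~\ref{r2:prop:zero-error} at $\lambda=M$, and it also takes $m$ large enough that $B=m(1-3\delta)>\max\{\varphi(M-\delta),\varphi/c\}$. The condition $cB>M-d$ then raises the anchor's cost to its cap $M$ before $L$ and $U$ arrive, and the optimum and zero-error computation carries over unchanged.
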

\begin{proof}
Use the construction above with $\lambda=M=\varphi^3$ and
choose $\delta$ so that $2M+1-5\delta>a$. Increase $m$ until
\[
 B=m(1-3\delta)>\max\{\varphi(M-\delta),\varphi/c\}.
\]
Initially the unit negative anchor costs $d$, and every central
length satisfies $h<\alpha<d$, so all central requests reject.
The inequality $cB>\varphi=M-d$ brings the anchor cost to its
cap $M$ before $L,U$ arrive. Each boundary request then fails
the ordinary and ledger tests, and subsequent deposits preserve
the cap. The reward is one and the same optimum calculation
gives $\mathrm{OPT}=2M+1-5\delta>a$ and zero error for every
optimum.
\end{proof}

\begin{corollary}[Shaped-BANK's raw zero-error limitation]
\label{sg:cor:zero-error}
For every $a<4\varphi+1$ and every finite $k\ge0$, raw Shaped-BANK
fails $\mathrm{OPT}\le a\,\mathrm{ALG}+k\eta_K(P,O)$ on a finite
input with pairwise distinct interval geometries,
$\mathrm{ALG}=1$, $\mathrm{OPT}>a$, and
$\eta_K(P,O)=0$ for every optimum $O$.
\end{corollary}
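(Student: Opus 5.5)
The plan is to reuse the construction of Proposition~\ref{r2:prop:zero-error} almost verbatim. The one change is that the boundary length $\lambda$ is replaced by the \emph{capped} negative cost of the anchor under $q_*$, namely $q_*(1,b)=2\varphi$ for $b\ge d$. The anchor has length one, so inflating its ledger past $d$ raises its ordinary cost from $d$ to $2\varphi$. Two boundary positives just shorter than $2\varphi$ can then be rejected on both sides. This yields $\mathrm{OPT}\approx 2\cdot2\varphi+1=4\varphi+1$ with $\mathrm{ALG}=1$.

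Concretely, fix $a<4\varphi+1$. Choose $0<\delta<1/4$ with $4\varphi+1-5\delta>\max\{a,1\}$, an integer $n>1/\alpha$, and $h=(1-3\delta)/n$. Then choose $m$ so large that $B=m(1-3\delta)>\max\{d,\varphi(2\varphi-\delta)\}$.

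\begin{enumerate}
\item Offer the negative anchor $J=[0,1)$.
\item Offer the $mn$ slightly shifted positive intervals $I_{j,k}$ exactly as in Proposition~\ref{r2:prop:zero-error}. These are $n$ groups of $m$ mutually overlapping intervals of length $h$ with distinct geometries, all inside $[\delta,1-\delta)$.
\item Offer the positive boundary requests $L=[2\delta-2\varphi,\delta)$ and $U=[1-\delta,1+2\varphi-2\delta)$, each of length $2\varphi-\delta$.
\end{enumerate}

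Trace the execution of raw Shaped-BANK. Each central request has sole conflict $J$ and length $h<\alpha<d\le q_*(1,b)$. It therefore fails both the ordinary gate and the special length test, and by~\eqref{eq:fees} deposits its full length into $b(J)$. After the central requests $b(J)=B\ge d$, so by~\eqref{sg:eq:cost} the anchor cost is exactly $2\varphi$.

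Each boundary request conflicts only with $J$, which is negative. Its length is $2\varphi-\delta<2\varphi$, so it fails the ordinary test. It passes the special length test but fails the ledger test, since $\varphi w<B$. Its fee (the ledger-failure case, sole receiver $J$) only increases $b(J)$, and the cost stays capped at $2\varphi$ by monotonicity. Both boundary requests therefore reject, and $\mathrm{ALG}=1$.

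The optimality and error computation is copied from Proposition~\ref{r2:prop:zero-error}.
\begin{itemize}
\item Every request other than $J$ conflicts with $J$.
\item A feasible set avoiding $J$ contains at most one interval per central group, together with $L$ and $U$.
\item Hence $\mathrm{OPT}=2(2\varphi-\delta)+nh=4\varphi+1-5\delta>\max\{a,1\}$, and the optima are exactly $L$, $U$ and one representative per group.
\item Every optimum lies inside $P$. Each unselected positive conflicts with exactly one optimum member, namely its group representative, which has the same length $h$.
\item Thus every term of~\eqref{new:eq:eta} vanishes, so $\eta_K(P,O)=0$ for every optimum $O$.
\end{itemize}
Substituting into $\mathrm{OPT}\le a\,\mathrm{ALG}+k\eta_K$ gives $\mathrm{OPT}\le a$, a contradiction for every finite $k$.

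The only delicate step is the trace of the boundary requests. One must confirm two things. First, the cap $2\varphi$ is genuinely reached before $L$ arrives, which requires $B\ge d$. Second, the special route is blocked by the ledger test rather than the length test. Both are ensured by the choice of $m$.
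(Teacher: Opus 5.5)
Your proposal is correct and is essentially the paper's proof. It instantiates the overlapping-group construction of Proposition~\ref{r2:prop:zero-error} at $\lambda=2\varphi$, chooses the multiplicity $m$ so that the anchor's ledger exceeds $\varphi(2\varphi-\delta)>d$ and $q_*$ sits at its cap $2\varphi$, and then rejects both boundary requests of length $2\varphi-\delta$ by the ordinary and ledger tests, with the optimality and zero-error calculation carried over unchanged (your separate condition $B>d$ is already implied by $B>\varphi(2\varphi-\delta)$).
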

\begin{proof}
Use Proposition~\ref{r2:prop:zero-error}'s overlapping-group
construction with $\lambda=2\varphi$. Choose $\delta>0$ small
enough that $4\varphi+1-5\delta>a$. The initial negative unit
anchor costs $d$, and all central requests have length
$h<\alpha<d$. They therefore reject ordinarily and by the
special length test, each depositing its whole length.
The finite multiplicity $m$ gives
\[
 B=m(1-3\delta)>\varphi(2\varphi-\delta)>d.
\]
The shaped cost has reached its cap $2\varphi$ before the two
boundary arrivals. Each boundary length is $2\varphi-\delta$,
below both its ordinary threshold and the ledger threshold.
Both reject, their deposits preserve the cap, and the reward
is one. The optimal sets are the two boundary requests and one
equal-length representative of each central group. They have
weight $4\varphi+1-5\delta>a$, and every unchosen positive has
exactly one equal-weight conflict in each optimum. Thus
$\eta_K(P,O)=0$ for every optimum, as required.
\end{proof}
The two policies' different caps give different zero-error
coefficients. For raw Shaped-BANK, $4\varphi+1>d$ already
excludes the optimal-consistency coefficient in such a bound.

\subsection{Exact implementation of the special tests}\label{new:app:rational-witness}
For rational $w>0$ and $X\ge0$,
\[
 w\ge X/\varphi\quad\Longleftrightarrow\quad w^2+wX-X^2\ge0.
\]
For $X>0$, divide by $X^2$ and use that $t^2+t-1$ is strictly increasing for $t\ge0$ with root $1/\varphi$; for $X=0$, both conditions hold. Taking $X=W$ or $X=B$ implements the two special tests without rounding the golden ratio.

\subsection{Precision, recall, and confidence scores}\label{r12:sec:precision}
Classification summaries and conflict-weighted error answer different
questions. Fix an optimum $O$ of positive weight and write $T=P\cap O$,
$\mathcal F=P\setminus O$. The length-weighted recall is $r_w=w(T)/w(O)$;
when $w(P)>0$, length-weighted precision is $p_w=w(T)/w(P)$.
For each false-positive request $F\in\mathcal F$ define
$\kappa_F=w(N_O(F))/w(F)\ge1$. The error definition gives exactly
\[
 \frac{\eta_K(P,O)}{\mathrm{OPT}}
 =1-r_w+\sum_{F\in P\setminus O}
        \frac{w(F)}{\mathrm{OPT}}(\kappa_F-1).
\]
If $p_w>0$ and every false positive has $\kappa_F\le\kappa$,
this implies the conditional bound
\[
 \frac{\eta_K}{\mathrm{OPT}}
 \le 1-r_w+(\kappa-1)r_w\frac{1-p_w}{p_w}.
\]
There is no universal bound on $\kappa$ in our model: a short
false-positive interval can conflict with a much longer optimal interval.

Indeed, fix $0<\epsilon<1/10$ and an integer $m\ge2$. Let $O$
contain $[0,1)$ and $m^2-1$ mutually compatible intervals to its
right with total length $\epsilon$. Add $m$ distinct intervals of
length $\epsilon/m^2$ inside $[0,1)$, and label every request positive.
These added intervals have combined length at most $\epsilon/m<1$,
so $O$ is the unique optimum. Recall is one, count precision is
$m^2/(m^2+m)\to1$, and weighted precision is
$(1+\epsilon)/(1+\epsilon+\epsilon/m)\to1$. Nevertheless
\[
 \frac{\eta_K}{\mathrm{OPT}}
 =\frac{m-\epsilon/m}{1+\epsilon}\longrightarrow\infty.
\]
Thus even simultaneously high count and weighted precision and perfect
recall do not alone make this error guarantee informative. The observed
conflict costs, not only the confusion matrix, must be reported.

A predictor may output a score $s(I)\in[0,1]$ and fix its bit at arrival
using $p(I)=\mathbf1\{s(I)\ge\tau\}$. For every threshold fixed without
future test outcomes, the deterministic robustness theorem still applies
to the resulting bits, and the filtered error theorem applies to their
actual $\eta_K$. Calibration or threshold selection alone does not bound the
conflict multipliers above.

\section{Randomized guarantees and pathwise protection}
\label{w02r4:sec:randomized-scope}

For the expected-gain comparison, allow randomized algorithms with the same
online information and feasible, irrevocable rejection and revocation
rules as in Section~\ref{sec:model}. Fix each complete nonempty finite
labelled input $\sigma$ independently of the algorithm's random coins.
The input includes the reveal order, all intervals, all labels, and its
termination point; its length and future requests are not disclosed to
the algorithm. Define
\[
 R_{\mathrm{exp}}(A)
 =\sup_{\sigma}\frac{\mathrm{OPT}(\sigma)}
                       {\mathbb E[\mathrm{ALG}_A(\sigma)]},
 \qquad
 C_{\mathrm{exp}}(A)
 =\sup_{\sigma:P\text{ is optimal}}
       \frac{\mathrm{OPT}(\sigma)}
            {\mathbb E[\mathrm{ALG}_A(\sigma)]}.
\]
A zero denominator gives ratio $+\infty$, and no additive term is
allowed. For deterministic algorithms, $R_{\mathrm{exp}}$ and $C_{\mathrm{exp}}$ coincide with the ratios in Definition~\ref{def:performance}.

\subsection{The exact initial-mixture curve}
\label{r12:sec:mixture-exact}
\begin{proposition}[An initial mixture of two existing algorithms]
\label{w02r4:prop:mixture}
\label{r12:prop:mixture-exact}
Let $r_{\mathrm{LR}}=2\varphi+1$ and $0\le\pi<1$. At initialization,
$M_\pi$ selects \textnormal{\textsc{Naive}} with probability $\pi$ and LR with
probability $1-\pi$, and then follows the selected algorithm for the
whole input. Put
\[
 \overline C_\pi
 =\frac{1}{\pi+(1-\pi)/r_{\mathrm{LR}}},
 \qquad
 \overline R_\pi=\frac{r_{\mathrm{LR}}}{1-\pi}.
\]
Then
\[
 C_{\mathrm{exp}}(M_\pi)=\overline C_\pi,
 \qquad
 R_{\mathrm{exp}}(M_\pi)=\overline R_\pi<\infty.
\]
Eliminating $\pi$ gives $R=2\varphi C/(C-1)$ for
$1<C\le\varphi^3$. This is the exact curve of this mixture family.
For arbitrary original labels and any one fixed optimum $O$,
\begin{align}
 \mathbb E[\mathrm{ALG}_{M_\pi}]
 &\ge \pi\max\{0,\mathrm{OPT}-\eta_K(P,O)\}
      +\frac{1-\pi}{r_{\mathrm{LR}}}\mathrm{OPT} \notag\\
 &=\max\left\{
       \frac{\mathrm{OPT}}{\overline R_\pi},
       \frac{\mathrm{OPT}}{\overline C_\pi}
          -\pi\eta_K(P,O)\right\}.
 \label{w02r4:eq:mixture-error}
\end{align}
Here $P$ is the original positive set and $\eta_K(P,O)$ is the error
in~\eqref{new:eq:eta}.
\end{proposition}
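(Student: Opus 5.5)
The plan is to use linearity of expectation over the single initial coin. On every fixed input $\sigma$, $\mathbb E[\mathrm{ALG}_{M_\pi}]=\pi\,\mathrm{ALG}_{\rm Naive}(\sigma)+(1-\pi)\mathrm{ALG}_{\rm LR}(\sigma)$, because the coin is drawn once and each branch is then deterministic. All four claims then reduce to bounds for Naive and LR separately, plus matching inputs that make both branches tight at the same time.

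\emph{Upper bounds.} For robustness I would drop the Naive term, which is nonnegative, and use $\mathrm{ALG}_{\rm LR}\ge\mathrm{OPT}/r_{\rm LR}$; this is LR's prediction-free ratio $M=2\varphi+1$ from \citet[Lemma~4.5]{karavasilis2025}. The result is $\mathbb E[\mathrm{ALG}]\ge\mathrm{OPT}/\overline R_\pi$. For consistency, perfect labels mean $P$ is a feasible optimum. Naive then accepts every positive and never revokes, so $\mathrm{ALG}_{\rm Naive}=\mathrm{OPT}$ and $\mathbb E[\mathrm{ALG}]\ge(\pi+(1-\pi)/r_{\rm LR})\mathrm{OPT}$. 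For the error bound with arbitrary labels, the standalone Naive output is exactly the permanent history $P'$, so Lemma~\ref{new:lem:naive-quality} gives $\mathrm{ALG}_{\rm Naive}\ge\max\{0,\mathrm{OPT}-\eta_K\}$. Adding the LR term gives the first line of \eqref{w02r4:eq:mixture-error}. The second line is routine algebra: the two cases of the inner maximum give $\mathrm{OPT}/\overline R_\pi$ and $\mathrm{OPT}/\overline C_\pi-\pi\eta_K$, and these dominate each other in exactly those cases.

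\emph{Lower bounds.} For robustness, I would take an all-negative input on which LR is nearly tight. Naive gains zero there, so the ratio is $\mathrm{OPT}/((1-\pi)\mathrm{ALG}_{\rm LR})$, and a family with $\mathrm{OPT}/\mathrm{ALG}_{\rm LR}\to M$ gives $\overline R_\pi$ in the supremum. Labels do not affect LR, so the classical tight instances apply with every label set to zero. Consistency is the delicate direction. It needs a perfectly labelled family on which LR's ratio tends to $M$ while Naive attains $\mathrm{OPT}$, which it does automatically under perfect labels. The step I expect to be hardest is checking that some tight LR family can be labelled perfectly, i.e.\ that the optimum the adversary exhibits is exactly the set of positives, unique or at least optimal. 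My first attempt would be the standard LR tightness chain, labelling its offline comparator positive and everything else negative. I would verify optimality with the strict exchange certificate \eqref{r11:eq:strict-exchange}, perturbing lengths by $\epsilon$ so that optimality is strict and unique. If the LR construction uses requests in the comparator that overlap other comparator pieces, I would subdivide them into small compatible tiles. This does not change LR's decisions, since they are label-independent and depend only on lengths relative to the holder. Letting $\epsilon\downarrow0$ then gives $\overline C_\pi$.

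Finally, I would eliminate $\pi$. Set $r=r_{\rm LR}$ and solve $C=1/(\pi+(1-\pi)/r)$ for $1-\pi$: this gives $1-\pi=(r/C-r)/(1-r)\cdot\ldots$, which simplifies to $(1-\pi)=r(C-1)/(C(r-1))$. Then $R=r/(1-\pi)=C(r-1)/(C-1)=2\varphi C/(C-1)$, using $r-1=2\varphi$. As $\pi$ ranges over $[0,1)$, $C$ ranges over $(1,\varphi^3]$.
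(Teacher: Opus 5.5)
Your proposal is correct and follows the paper's proof: linearity over the single initial coin, Naive's permanent-history bound and LR's ratio $r_{\mathrm{LR}}$ for the upper bounds, the identity $a+\pi\max\{0,x\}=\max\{a,a+\pi x\}$ for the second line, and LR-tight inputs labelled either all-negative or with one optimum positive for the matching lower bounds. The step you flag as delicate is immediate, because perfect predictions allow ties: labelling any optimum of an LR-tight input positive already gives a perfect input on which Naive gains $\mathrm{OPT}$ and LR is unchanged, so no uniqueness certificate, perturbation or subdivision is needed. You should drop those steps, since subdividing or perturbing requests can in general change LR's length-based decisions.
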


\begin{proof}
The known guarantees used here are
$\mathrm{ALG}_{\mathrm{Naive}}
 \ge\max\{0,\mathrm{OPT}-\eta_K(P,O)\}$ and
$\mathrm{ALG}_{\mathrm{LR}}\ge\mathrm{OPT}/r_{\mathrm{LR}}$
\citep[Theorem~3.5 and Section~4]{karavasilis2025}.
The first also follows from Lemma~\ref{new:lem:naive-quality}, since
standalone \textsc{Naive} returns its permanent feasible history.
For each fixed input, linearity of expectation gives
\[
 \mathbb E[\mathrm{ALG}_{M_\pi}]
 =\pi\,\mathrm{ALG}_{\mathrm{Naive}}
   +(1-\pi)\,\mathrm{ALG}_{\mathrm{LR}}.
\]
Substituting the two guarantees proves the first line of
\eqref{w02r4:eq:mixture-error}. Its second line is the identity
$a+\pi\max\{0,x\}=\max\{a,a+\pi x\}$ for $\pi\ge0$,
with $a=(1-\pi)\mathrm{OPT}/r_{\mathrm{LR}}$ and
$x=\mathrm{OPT}-\eta_K(P,O)$. In particular, it remains valid for
$\pi=0$ and for $\eta_K(P,O)>\mathrm{OPT}$.
Under perfect predictions, $P=O$ for some optimum, and
\textsc{Naive} accepts every member of $P$ and no other request.
It therefore gains $\mathrm{OPT}$. This yields the stated uniform
consistency bound; the LR term yields the uniform robustness bound.
For matching lower bounds, take finite unlabelled inputs whose LR ratios
tend to its exact worst-case ratio $r_{\mathrm{LR}}$. Labelling one
optimum positive gives Naive reward OPT without changing LR, and hence
mixture ratios tending to $\overline C_\pi$. Labelling every request
negative instead gives Naive reward zero and ratios tending to
$\overline R_\pi$.
\end{proof}

\subsection{A fixed-input lower bound for randomized algorithms}
\label{r12:sec:randomized-lower}

A different, elementary bound
shows a uniform separation from consistency one at every fixed robustness
budget. Define
\[
 L_{\mathrm{exp}}(B)
 =\frac{1}{2\sqrt{1-B^{-1}}-(1-B^{-1})},\qquad B\ge4/3.
\]

\begin{proposition}[A two-input randomized lower bound]
\label{r12:prop:oblivious-lower}
Every randomized algorithm in the fixed-input expected-gain model has
$R_{\mathrm{exp}}\ge4/3$. If $R_{\mathrm{exp}}\le B<\infty$, then
\[
 C_{\mathrm{exp}}\ge L_{\mathrm{exp}}(B)>1.
\]
Both inequalities use inputs with only three requests. In particular this
lower bound is valid for an oblivious adversary.
\end{proposition}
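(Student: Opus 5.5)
The plan is to use Yao-style reasoning on a two-branch tree of fixed inputs sharing a common prefix, in the spirit of Theorem~\ref{thm:lower}. Because the inputs are fixed independently of the coins, the algorithm's behaviour on the shared prefix is one probability distribution over held sets. Each of the two completions then gives a linear constraint on that distribution.

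\textbf{Construction and notation.} Use three requests:
\begin{itemize}
\item a negative anchor $J=[0,1)$;
\item a positive $X=[\delta-t,\delta)$ of length $t\in(0,1)$, conflicting with $J$;
\item on the long branch, a positive $Y=[1-\delta,1-\delta+y)$ that conflicts with $J$ and is compatible with $X$.
\end{itemize}
The short input is $\sigma_1=(J,X)$ and the long input is $\sigma_2=(J,X,Y)$. For $t+y>1$ and small $\delta>0$, the set $\{X,Y\}$ is the unique optimum of $\sigma_2$, so its labels are perfect. The labels of $\sigma_1$ are not perfect, which is acceptable because that input is used only for robustness. After $X$ on the common prefix, the held set is $\{J\}$, $\{X\}$ or $\varnothing$, with probabilities $a,b,1-a-b$.

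\textbf{Step 1: robustness constraint.} The input $\sigma_1$ has $\mathrm{OPT}=1$ and $\mathbb E[\mathrm{ALG}]=a+tb$. Hence $R_{\mathrm{exp}}\le B$ gives $a+tb\ge 1/B$.

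\textbf{Step 2: consistency constraint.} On $\sigma_2$, holding $J$ permits at best a swap to $Y$, worth at most $\max\{1,y\}$. Holding $X$ permits adding $Y$, worth $t+y$. Therefore
\[
\mathbb E[\mathrm{ALG}]\le a\max\{1,y\}+b(t+y),\qquad \mathrm{OPT}=t+y.
\]

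\textbf{Step 3: optimization.} For fixed $(t,y)$, the algorithm's best response is the linear program that maximizes the gain bound of Step 2 over $a,b\ge0$ with $a+b\le1$ and the constraint of Step 1. Its optimum lies at a vertex, where $a+b=1$ and the robustness constraint is tight whenever that is feasible. With $u=1-B^{-1}$, this gives $b=\min\{1,u/(1-t)\}$. The adversary then chooses $(t,y)$ to maximize the resulting ratio $\mathrm{OPT}/\mathbb E[\mathrm{ALG}]$. I expect the trade-off to be in $t$, with $y$ taken to be $1$ or slightly above $1$. An elementary single-variable maximization should then produce
\[
L_{\mathrm{exp}}(B)=\frac{1}{2\sqrt u-u},
\]
the square root arising from the stationarity equation.

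\textbf{Step 4: strictness and the bound $4/3$.} Strictness $L_{\mathrm{exp}}>1$ follows from $2\sqrt u-u<1$, which holds for $u<1$ because $(1-\sqrt u)^2>0$. The bound $R_{\mathrm{exp}}\ge4/3$ should come from the same two-input tree with all labels negative, in the spirit of the classical prediction-free lower bound \citep{tomkins1995}. Require both inputs to satisfy robustness $B$, and find the smallest $B$ for which the linear program in Step 3 is feasible; optimizing $t$ should give $4/3$. This threshold is also where the domain of $L_{\mathrm{exp}}$ begins.

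\textbf{Main obstacle.} The hard part is getting the exact closed form. This requires choosing the right third request, in particular its length $y$ and whether it should conflict with $X$ instead. It also requires checking that the optimal $b$ is not capped at $1$ in the relevant range of $B$. If the first choice does not reproduce $L_{\mathrm{exp}}$, I would vary $y$ as a second free parameter and redo the vertex analysis of Step 3.
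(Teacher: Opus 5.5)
Your framework is the one the paper uses: a fixed two-leaf tree, one distribution over held sets after the shared prefix, and one linear constraint per leaf. The gap is that your particular tree is too weak. Step~3 cannot produce $L_{\mathrm{exp}}(B)$, and Step~4 cannot produce $4/3$.

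Carry out your own optimization. For every $b$, the ratio on $\sigma_2$ is maximized at $y=1$. The best response $b=\min\{1,u/(1-t)\}$ uses only the length of $X$. For $t<1/B$ the ratio is $(1-t^2)/(1-t/B)$, whose supremum is $1+t_*^2=2B/(B+\sqrt{B^2-1})$, attained at $t_*=B-\sqrt{B^2-1}$. This is the exact value of your family, not slack in your bookkeeping. It is strictly below $L_{\mathrm{exp}}(B)$ for every finite $B\ge4/3$: with $r=\sqrt{1-B^{-1}}$, the comparison reduces to $(1-r)^3(5r-1)>0$. For example, at $B=4/3$ you get $8/(4+\sqrt7)\approx1.20$ instead of $4/3$. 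Your bound does exceed one, so it gives the qualitative separation, but not the stated function.

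Likewise, requiring robustness $B$ on both of your inputs leaves a feasible $b$ exactly when $B^{-1}(1+t-t^2)\le1$. So your tree yields only $R_{\mathrm{exp}}\ge5/4$. Separately, your Step~2 bound drops the empty-held-set term $(1-a-b)y$; the fact that holding $J$ dominates holding nothing repairs this.

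The weakness is structural. The stopping leaf penalizes holding $X$ by only $1-t$, and on the long leaf a held $J$ can still be exchanged for $Y$. The missing idea is to put two conflicting requests in the prefix and to \emph{extend} both leaves, each with a request compatible with exactly one of them. The paper uses:
\begin{itemize}
\item a prefix of positive $A=[0,1)$ and negative $N=[1/2,1/2+v)$;
\item then either positive $D=[1,1+v)$, giving the unique perfect optimum $\{A,D\}$;
\item or negative $L=[-1/2,1/2)$, giving the unique optimum $\{N,L\}$.
\end{itemize}
Both optima weigh $v+1$. Let $p=\Pr\{A\text{ held after the prefix}\}$. Then $\mathbb E[\mathrm{ALG}_+]\le v+p$ and $\mathbb E[\mathrm{ALG}_-]\le v+1-vp$. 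At $v=1$ the two expected rewards sum to at most $3$, so $R_{\mathrm{exp}}\ge4/3$. In general, consistency on the first input and robustness on the second give $1/C_{\mathrm{exp}}\le v/(v+1)+(1-B^{-1})/v$. Choosing $v=r/(1-r)$ makes the right side $2r-r^2$, which is exactly $1/L_{\mathrm{exp}}(B)$.
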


\begin{proof}
Fix $v\ge1$. The following two inputs have the same labelled prefix:
\[
 A=[0,1),\ p(A)=1;\qquad
 N=[1/2,1/2+v),\ p(N)=0.
\]
The first input ends with $D=[1,1+v)$ labelled positive. The second ends
with $L=[-1/2,1/2)$ labelled negative. All conflicts have positive
overlap. Both optima have weight $v+1$: the first is uniquely $\{A,D\}$
and has perfect predictions; the second is uniquely $\{N,L\}$ and
does not. These inputs and their stopping points are fixed before the
algorithm draws its coins.

Let $p$ be the probability that $A$ is still held after the common
prefix. Coupling the two executions on the same random tape gives the
same value of $p$. If $A$ is held, the respective final rewards are at
most $v+1$ and $1$. If it is not held, no recall permits it to return,
and the rewards are at most $v$ and $v+1$. These bounds also cover an
empty held set and voluntary revocation. Consequently
\[
 \mathbb E[\mathrm{ALG}_{+}]\le v+p,
 \qquad
 \mathbb E[\mathrm{ALG}_{-}]\le v+1-vp.
\]
For $v=1$, their sum is at most $3$, whereas both optima are $2$;
therefore robustness is at least $4/3$.

Write $c=C_{\mathrm{exp}}$ and assume $R_{\mathrm{exp}}\le B$.
The consistency guarantee on the first input and robustness guarantee
on the second imply
\[
 \frac{v+1}{c}-v\le p
 \le\frac{(v+1)(1-B^{-1})}{v}.
\]
Thus, for every $v\ge1$,
\begin{equation}
 \frac1c\le \frac{v}{v+1}+\frac{1-B^{-1}}{v}.
 \label{r12:eq:randomized-two-input}
\end{equation}
Set $t=\sqrt{1-B^{-1}}\in[1/2,1)$ and
$v=t/(1-t)\ge1$. The right side becomes $2t-t^2$, proving the
claimed bound. Since $t<1$, this number is strictly less than one.
\end{proof}

The bound is tight for this two-input family, including its stopping
prefixes. At the displayed $v=t/(1-t)$, accept $A$, retain it at $N$
with probability $t$, and otherwise replace it by $N$. At $D$, take
$A+D$ when $A$ survived and obtain $v$ otherwise. At $L$, retain $A$
when it survived and take $N+L$ otherwise. The two final expected
rewards are $v+t$ and $v+1-vt$, giving exactly
$(C_{\mathrm{exp}}^{\mathcal F},R_{\mathrm{exp}}^{\mathcal F})=(L_{\mathrm{exp}}(B),B)$ for the restricted family $\mathcal F$ of these two inputs and their two distinct nonempty proper prefixes. The prefix after $N$ has expected
reward $2t$ and ratio $v/(2t)=1/(2(1-t))\le B$; the first prefix
has ratio one.

\begin{corollary}[An unattained randomized consistency infimum]
\label{w02r4:prop:randomized-infimum}
Among randomized algorithms with finite $R_{\mathrm{exp}}$, the infimum of
$C_{\mathrm{exp}}$ is one and is not attained.
\end{corollary}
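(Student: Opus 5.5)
The statement has two halves. First, the infimum of $C_{\mathrm{exp}}$ over algorithms with finite $R_{\mathrm{exp}}$ is at most one, so it equals one. Second, no such algorithm achieves $C_{\mathrm{exp}}=1$.

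\emph{Upper bound.} I would use the initial mixture $M_\pi$ of Proposition~\ref{w02r4:prop:mixture}. For every $0\le\pi<1$ it has
\[
 R_{\mathrm{exp}}(M_\pi)=r_{\mathrm{LR}}/(1-\pi)<\infty,
 \qquad
 C_{\mathrm{exp}}(M_\pi)=1/\bigl(\pi+(1-\pi)/r_{\mathrm{LR}}\bigr).
\]
As $\pi\uparrow1$ the denominator tends to one, so $C_{\mathrm{exp}}(M_\pi)\downarrow1$. Trivially $C_{\mathrm{exp}}\ge1$ for every algorithm, because $\mathbb E[\mathrm{ALG}]\le\mathrm{OPT}$ on each fixed input. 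Together these show the infimum is exactly one.

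\emph{Non-attainment.} Let $A$ be any randomized algorithm with $R_{\mathrm{exp}}(A)<\infty$, and pick a finite $B\ge\max\{R_{\mathrm{exp}}(A),4/3\}$. By Proposition~\ref{r12:prop:oblivious-lower}, $R_{\mathrm{exp}}(A)\ge4/3$, so such a $B$ exists. The same proposition gives
\[
 C_{\mathrm{exp}}(A)\ge L_{\mathrm{exp}}(B)>1.
\]
The strictness holds because $t=\sqrt{1-B^{-1}}<1$ implies $2t-t^2<1$. Hence no algorithm with finite expected robustness attains consistency one.

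There is no real obstacle here. The only point to check is that the lower bound of Proposition~\ref{r12:prop:oblivious-lower} holds at every finite budget $B\ge4/3$, which covers every finitely robust algorithm. The separation is not uniform: $L_{\mathrm{exp}}(B)\to1$ as $B\to\infty$. This is consistent with the infimum being one and not attained.
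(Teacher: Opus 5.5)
Your proposal is correct and follows the paper's own proof: the initial mixtures $M_\pi$ have finite expected robustness and consistency tending to one as $\pi\uparrow1$, the bound $\mathbb E[\mathrm{ALG}]\le\mathrm{OPT}$ gives $C_{\mathrm{exp}}\ge1$, and Proposition~\ref{r12:prop:oblivious-lower} gives $C_{\mathrm{exp}}\ge L_{\mathrm{exp}}(B)>1$ at every finite budget, which rules out attainment. Your choice $B\ge\max\{R_{\mathrm{exp}}(A),4/3\}$ and your remark that the separation is not uniform in $B$ make explicit what the paper's one-line argument leaves implicit.
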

\begin{proof}
Every feasible reward is at most OPT. The initial mixtures have finite
robustness and consistency tending to one as $\pi\uparrow1$, while
Proposition~\ref{r12:prop:oblivious-lower} excludes equality at every finite budget.
\end{proof}

\subsection{What changes under adaptive or runwise guarantees}
\label{r12:sec:adaptive-scope}

An adaptive-offline adversary chooses each next labelled request and
the stopping point from the observed past requests and retained sets,
without seeing future random bits, and compares the result with the
offline optimum of the realized history. Its expected robustness
criterion is
\[
 \mathbb E[\mathrm{OPT}(\sigma)]
 \le B\,\mathbb E[\mathrm{ALG}(\sigma)]
\]
for every such finite adversary. Here $\sigma$ is random.

\begin{proposition}[The initial mixture is not adaptively robust]
\label{r12:prop:adaptive-mixture}
For every $0<\pi<1$, the initial mixture $M_\pi$ has infinite
expected robustness against adaptive-offline adversaries.
\end{proposition}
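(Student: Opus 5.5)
Fix $0<\pi<1$ and a target budget $B<\infty$. I will build an adaptive-offline adversary whose expected optimum exceeds $B$ times the expected reward of $M_\pi$. The key observation is that the initial coin is revealed by the algorithm's first visible action, and afterwards the branch is deterministic. So the adversary can tailor the rest of the input to the branch it has identified.

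\textbf{Step 1: identify the branch.} First offer a single request $I_0=[0,1)$ labelled negative. \textsc{Naive} rejects every negative request and holds nothing. LR, being robust, accepts $I_0$ (on the one-request prefix it must hold something). The adversary therefore sees, with probability $\pi$, an empty held set, and with probability $1-\pi$, a held set $\{I_0\}$.

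\textbf{Step 2: exploit the Naive branch.} If the held set is empty, the adversary keeps offering only negative requests, say $N$ disjoint unit intervals $[2j,2j+1)$. \textsc{Naive} rejects them all and keeps reward zero, while the realized optimum is $N+1$. If the held set is $\{I_0\}$, the adversary stops immediately. This branch has optimum one, and LR holds $I_0$, so its reward is one.

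\textbf{Step 3: compute the expectations.} We obtain
\[
 \mathbb E[\mathrm{OPT}]=\pi(N+1)+(1-\pi),
 \qquad
 \mathbb E[\mathrm{ALG}]=1-\pi .
\]
Their ratio is unbounded as $N\to\infty$, so no finite $B$ works and the expected robustness is infinite. The adversary is finite for each $N$, uses only past observations, and never sees future coins. This matches the definition above.

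\textbf{Expected obstacle.} The only delicate point is the first acceptance by LR. I would justify it either from LR's deterministic rule in the formulation of \citet[Lemma~4.5]{karavasilis2025}, where an empty-conflict arrival is accepted, or from its finite robustness $M$ on the one-request prefix. Everything else is direct bookkeeping.
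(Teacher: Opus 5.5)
Your proof is correct and follows the paper's argument: the first negative request $[0,1)$ reveals the coin, and the adversary stops on the LR branch. The only difference is that on the Naive branch the paper offers a single long negative $[2,2+H)$ rather than your $N$ disjoint units, so its adversary never uses more than two requests; in your version take $j\ge1$ in $[2j,2j+1)$ so the units avoid $I_0$ and the optimum is indeed $N+1$.
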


\begin{proof}
Offer one negative interval $[0,1)$. Naive rejects it and LR accepts
it, so the observed retained set reveals the chosen algorithm. If it
was accepted, stop. Otherwise offer one more negative interval
$[2,2+H)$ and stop. Naive rejects that request as well. Thus
\[
 \mathbb E[\mathrm{ALG}]=1-\pi,
 \qquad
 \mathbb E[\mathrm{OPT}]=1+\pi H.
\]
The ratio is unbounded as finite $H$ increases. Each adversary uses
at most two requests and no future coin information.
\end{proof}

Even on a fixed one-request negative input, the mixture has zero reward with probability $\pi$. Hence its expected-gain bound cannot
be interpreted as a finite per-run ratio with failure probability less than $\pi$.

\subsection{Finite-tree transfer of the lower bounds}\label{r39:sec:finite-tree-transfer}
The deterministic lower bounds do extend to randomized algorithms
required to satisfy both stated ratios almost surely on every fixed
input. To see this, take a finite deterministic adversary tree proving
that a pair $(c,B)$ is impossible. It has finitely many terminal input
sequences. If both ratios held almost surely on each fixed sequence,
the intersection of those finitely many probability-one events would
still have probability one. A tape in that intersection defines a
deterministic causal rule that survives the entire tree, a
contradiction. This argument does not extend to ratios of expected
gains: distinct leaves can offset one another in expectation.

Ben-David et al.'s adaptive-offline derandomization theorem addresses
a single comparison criterion. With two criteria, perfect and imperfect terminal histories carry different target factors. Adaptive robustness covers every adversary, whereas adaptive consistency covers adversaries whose every terminal history has perfect predictions. These separate requirements do not supply the mixed-terminal comparison needed to transfer the deterministic frontier to expected adaptive guarantees.

\subsection{Almost-sure robustness with expected consistency}
\label{r37:sec:hybrid}\label{r38:sec:hybrid}
Fix the complete input, including termination, independently of the coins.
The hybrid requirements are
\[
 \Pr\{\mathrm{OPT}(\sigma)\le B\mathrm{ALG}(\sigma)\}=1
 \quad\text{for every fixed finite labelled }\sigma,
\]
and $\mathrm{OPT}\le C_{\rm exp}\mathbb E[\mathrm{ALG}]$ on perfectly
labelled inputs, without an additive term. The finite-tree transfer in Appendix~\ref{r39:sec:finite-tree-transfer} requires both guarantees almost surely.

\begin{proposition}[A necessary hybrid budget bound]
\label{r38:prop:hybrid-simple}
Every almost-surely $B$-robust algorithm, for finite $B>1$, satisfies
\[
 C_{\mathrm{exp}}\ge 2+\frac1{B-1}.
\]
\end{proposition}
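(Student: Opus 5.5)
The plan is to reproduce the deterministic stopping-prefix argument of Theorem~\ref{thm:lower}, using only almost-sure robustness on fixed shorter inputs. Each such input pins down the algorithm's action on the shared prefix with probability one. A single longer, perfectly labelled input then bounds the expected reward and hence expected consistency. Fix $B>1$ and a small $\delta>0$. Let $x$ be slightly below $1/(B-1)$, so that $(1+x-2\delta)/x>B$. All inputs below are fixed in advance, independently of the coins, and there are finitely many of them. We may therefore intersect their probability-one robustness events and argue on a single tape in that intersection.

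\emph{Forced prefix.} First offer the negative anchor $J=[0,1)$. On the one-request input, an empty held set gives ratio $+\infty$, so $J$ is accepted almost surely. Next offer positive tiles partitioning $[\delta,1-\delta)$, each shorter than $1/B$. On the prefix ending at the first tile where $J$ is abandoned, the held value is at most that tile. This is below $1/B$ while $\mathrm{OPT}\ge1$. Hence, on the intersected event, every tile is rejected and $J$ is retained. Voluntary deletion only lowers the reward and is excluded in the same way.

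Now offer the positive $L=[\delta-x,\delta)$, which conflicts only with $J$. Stopping here gives a fixed input on which the tiles and $L$ form a feasible packing of weight $1-2\delta+x$. If $J$ is revoked for $L$, or dropped, the held value is at most $x$, and the ratio exceeds $B$ by the choice of $x$. So almost surely $L$ is rejected and $J$ alone is held. This is the step where robustness, rather than consistency, does the work, exactly as the main-text hint says.

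\emph{Perfect terminal.} Append the positive $U=[1-\delta,2-\delta)$ and stop. The positives (the tiles, $L$ and $U$) are pairwise compatible and each conflicts only with the sole negative $J$. Their weight $2+x-2\delta$ exceeds $w(J)=1$, so by the exchange certificate~\eqref{r7:eq:exchange} they are the unique optimum and the labels are perfect. On the probability-one event the held set before $U$ is $\{J\}$. The tiles and $L$ are unrecoverable, so the final reward is at most $\max\{w(J),w(U)\}=1$. Thus $\mathbb E[\mathrm{ALG}]\le1$ and $C_{\rm exp}\ge 2+x-2\delta$. Letting $\delta\downarrow0$ and $x\uparrow1/(B-1)$ gives $C_{\rm exp}\ge2+1/(B-1)$.

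The main obstacle is checking that the null sets cause no trouble. The robustness events are attached to the fixed stopping inputs, namely the anchor alone, each tile prefix, and the prefix ending at $L$. The consistency bound, by contrast, is evaluated in expectation on the longer input. Because all inputs share their prefixes and there are finitely many of them, coupling on a common tape transfers the almost-sure conclusions about the prefix to the terminal input. Elsewhere in the argument, one must also check the strict ratio inequality at $L$ and that the tiles are strictly shorter than $1/B$.
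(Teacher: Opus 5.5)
Your proposal is correct and follows the paper's proof essentially step for step. Both use the same anchor, tiles, $L$ and $U$ construction with $t<s/(B-1)$. Almost-sure robustness on the finitely many stopping prefixes forces retention of $J$ and rejection of $L$, causality on a common tape carries this to the perfect terminal with reward at most one, and one then passes to the limit $\delta\downarrow0$, $t\uparrow s/(B-1)$. You certify optimality only for the longer input; that is all the argument needs, so omitting the paper's extra conditions $t>\delta$ and $s+t>1$ is harmless.
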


\begin{proof}
Put $s=1-2\delta$. Present a negative unit interval $J=[0,1)$,
then positive tiles partitioning $[\delta,1-\delta)$, each of length
less than $1/B$, then a positive interval $L=[\delta-t,\delta)$.
Choose $t>\delta$, $s+t>1$, and $t<s/(B-1)$. The input either ends
there or continues with the positive unit interval
$U=[1-\delta,2-\delta)$. Both complete positive sets are optimal,
of weights $s+t$ and $s+t+1$ respectively.

At the first tile on which the algorithm abandons $J$, its available
reward is at most that tile's length, whereas the prefix optimum is
at least one. Almost-sure robustness on each of these finitely many
prefixes therefore forces it to keep $J$ and reject all tiles.
On the fixed input ending at $L$, accepting $L$ would give
$(s+t)/t>B$, so this action has probability zero. By causality the
same holds on the fixed longer input. There its final reward is at
most one, since all tiles and $L$ are irrecoverable and $J$ conflicts
with $U$. Hence $C_{\mathrm{exp}}\ge s+t+1$. Let $\delta\downarrow0$
and $t\uparrow s/(B-1)$, retaining the strict inequalities.
\end{proof}

The bound is exact for the original single-anchor/two-terminal family when $B\ge3$. For $0<t\le1$, accepting $L$
with probability $p$ gives maximal expected rewards $1-p+pt$ and
$1+pt$. When both actions are legal, $p=1/(1+s)$ equalizes the two
ratios at $1+s$. Below $t=s/(B-1)$ only rejection is legal, giving
$1+s+t$; for $t\ge1$, accepting always gives a ratio at most $1+s$.
These responses protect every prefix of this family for $B\ge3$.
Taking suprema gives $2+1/(B-1)$. At the original $t=\alpha$, the balanced ratios tend to two.

\paragraph{Limits of mixing the stated deterministic policies.}
Let $\alpha=1/\varphi$ and $d=\varphi^2$.
There is a common perfect-input obstruction to initially mixing
SG, LR, and the stated SUM/Credit policies. Take $0<t<\alpha$ and
$0<\delta<\min\{t,1/4\}$. Start with a negative unit anchor $[0,1)$,
then positive tiles partitioning $[\delta,1-\delta)$,
each shorter than $\alpha$. Next issue
$L=[\delta-t,\delta)$ and $U=[1-\delta,2-\delta)$.
All these policies retain the anchor until $U$;
after $U$ their gain is at most one. The positives tile the whole
request hull, of length $2-2\delta+t\to d$ as
$t\uparrow\alpha$ and $\delta\downarrow0$.
The same input therefore precludes consistency below $d$ for every
initial mixture of these policies. The strict inequality $t<\alpha$
removes the complementarity present at $t=\alpha$.

\begin{proposition}[Sharp guarantee for a safe action family]
\label{r38:prop:random-action-envelope}
Consider a policy that accepts empty-conflict arrivals, accepts whenever
$w(I)\ge4w(D)$, and otherwise may accept only when $I$ is positive,
all conflicts are negative, and $w(I)\ge w(D)/2$.
An acceptance replaces exactly $D$; a rejection changes nothing.
Its choice within this last action set may be randomized and depend
on the history. Every execution satisfies
$\mathrm{OPT}\le18\,\mathrm{ALG}$. The coefficient $18$ is sharp
uniformly over this action family, also after permanent feasibility
filtering.
\end{proposition}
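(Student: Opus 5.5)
The plan is to obtain the upper bound directly from the aggregate-gate transfer, Theorem~\ref{sum:thm:transfer}, with $\lambda=4$ and $\beta=2$. The sharpness witness will be the single-conflict family of Lemma~\ref{r30:lem:single-conflict-witness}, rescaled to these two gates. Randomization causes no difficulty for the upper bound. Each realized execution is a causal run in which every acceptance is one of the three permitted actions, and Theorem~\ref{sum:thm:transfer} is a pathwise statement about such runs. So I would fix an arbitrary execution, meaning a coin tape and a history-dependent choice, and verify the hypotheses on it.

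\textbf{Step 1 (labelling and hypotheses).} Label empty-conflict acceptances and acceptances with $w(I)\ge4w(D)$ as $\mathsf M$, and the remaining acceptances, the half-length ones, as $\mathsf S$. Acceptance revokes exactly $D$ and rejection changes nothing, so the structural assumptions of the theorem hold. Hypothesis~1 is the ordinary gate: an $\mathsf M$ node's children have total length at most $w/4$. For hypothesis~2, an $\mathsf S$ node is a positive arrival all of whose conflicts are negative. A negative request can only have been accepted through the empty-conflict or ordinary route, so every child of an $\mathsf S$ node is an $\mathsf M$ node. The gate $w(I)\ge w(D)/2$ gives total child length at most $2w(I)$. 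For hypothesis~3, a rejected arrival has nonempty $D$, since empty-conflict arrivals are always accepted, and it fails the ordinary action, so $w(I)<4w(D)$. If I is not misreading the family, accepting a non-ordinary request is only ever optional. Hence rejection always falls under hypothesis~3, whatever the randomized choice.

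\textbf{Step 2 (the constant).} From~\eqref{sum:eq:hull-coefficients}, $q_S=\lambda(\beta+1)/(\lambda-\beta)=12/2=6$. The three terms of~\eqref{sum:eq:transfer} are then $\beta(2\lambda+1)=18$, $(\beta+1)\lambda+q_S=12+6=18$ and $\beta\lambda+\beta^2+q_S=8+4+6=18$, so $K(2,4)=18$. This gives $\mathrm{OPT}\le18\,\mathrm{ALG}$ on every execution. Theorem~\ref{sum:thm:transfer} assumes nothing about labels, so the bound also holds for the effective labels produced by the permanent filter.

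\textbf{Step 3 (sharpness).} For small $\tau>0$, I would reveal $[0,1)_0$, then the negatives $[-4+2\tau,\tau)$, $[\tau,1-\tau)$ and $[1-\tau,5-2\tau)$, and finally the positive $[\tau,\tau+1/2)$. The first request is accepted with empty conflicts. Each later negative has the unit holder as its sole conflict and length below $4$. It therefore fails the ordinary action and is ineligible for the special one, so it is rejected. The final positive conflicts only with the negative unit holder and meets the half-length gate at equality. The member of the family that takes this special action ends with $\mathrm{ALG}=1/2$. The three compatible middle negatives give $\mathrm{OPT}=9-4\tau$, so the ratio is $18-8\tau\to18$. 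This is the uniform sharpness claim: no coefficient below $18$ holds for every member of the family. The input has a single positive request, so the filter leaves it unchanged, and sharpness persists after filtering.

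The main obstacle is Step~1, specifically checking that $\mathsf S$ nodes have only $\mathsf M$ children. This relies on the facts that negative requests never use the half-length action and that a special arrival sees only negative conflicts. The remaining steps are arithmetic.
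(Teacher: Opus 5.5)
Your proposal is correct and follows the paper's proof. You verify the hypotheses of Theorem~\ref{sum:thm:transfer} with $(\beta,\lambda)=(2,4)$, where $q_S=6$ makes all three terms of $K$ equal to $18$, and you match this with a five-request single-conflict witness of ratio $18-8\tau$. The only difference is in the witness: your final positive sits inside the central negative (the rescaled pattern of Lemma~\ref{r30:lem:single-conflict-witness}), while the paper splits the middle into a negative $[\epsilon,1/2-\epsilon)$ and the positive $[1/2-\epsilon,1-\epsilon)$ so that the positive belongs to the tiling; both give $\mathrm{OPT}=9-4\tau$ and are unchanged by filtering.
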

\begin{proof}
Label ordinary and empty-conflict acceptances $\mathsf M$, and other
acceptances $\mathsf S$. Every $\mathsf M$ node has child mass at most
one fourth of its length. Every $\mathsf S$ node has only negative,
hence $\mathsf M$, children, with mass at most twice its length.
Every rejection has nonempty $D$ and $w(I)<4w(D)$.
Theorem~\ref{sum:thm:transfer} applies with $(\beta,\lambda)=(2,4)$:
$q_S=6$ and all three terms defining $K$ equal $18$.

For sharpness, fix $0<\epsilon<1/8$ and offer
\[
 \begin{array}{ccl}
 J&0&[0,1),\\
 X&0&[2\epsilon-4,\epsilon),\\
 Y&0&[1-\epsilon,5-2\epsilon),\\
 Z&0&[\epsilon,1/2-\epsilon),\\
 P&1&[1/2-\epsilon,1-\epsilon).
 \end{array}
\]
The rules force retention of $J$ through $Z$: each subsequent
negative request overlaps $J$ and has length below four.
The final request $P$ has length $1/2$ and is eligible for special
acceptance at equality. On the accepting execution, $\mathrm{ALG}=1/2$.
The compatible requests $X,Z,P,Y$ tile the entire request hull,
so $\mathrm{OPT}=9-4\epsilon$ and the ratio is $18-8\epsilon$.
The policy that accepts every special-eligible arrival realizes
this execution for every $\epsilon$, proving sharpness as
$\epsilon\downarrow0$. There is only one positive request, so
the filter leaves the construction unchanged.
\end{proof}

Let $H_p$ use an independent Bernoulli$(p)$ coin at each eligible
nonordinary positive arrival in this family.
For every fixed $p>0$, the preceding construction accepts $P$ with
probability $p$, so $H_p$ and its filtered form have exact almost-sure
robustness $18$. This uniform sharpness does not impose robustness
$18$ on every policy in the action family or determine its best
hybrid budget. On this input, $\mathrm{OPT}/\mathbb E[\mathrm{ALG}]$
tends to $9/(1-p/2)$, which is below $18$ when $p<1$.

\begin{proposition}[Fixed acceptance probabilities do not beat $d$]
\label{r38:prop:constant-coin-obstruction}
For every constant $p\in[0,1]$, the expected consistency of $H_p$
is at least $d$, despite its pathwise robustness bound of $18$.
\end{proposition}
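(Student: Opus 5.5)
The plan is to exhibit, for each fixed $p\in[0,1]$, a finite family of perfectly labelled inputs fixed independently of the coins on which $\mathrm{OPT}/\mathbb E[\mathrm{ALG}_{H_p}]$ approaches at least $d$. Since $H_p$ consults no ledgers, its behaviour is completely explicit. Ordinary acceptance needs $w(I)\ge4w(D)$. A nonordinary positive arrival whose conflicts are all negative and which satisfies $w(I)\ge w(D)/2$ is accepted with probability $p$, independently of everything else. Every other arrival is rejected. So on any fixed input, $\mathbb E[\mathrm{ALG}]$ is a polynomial in $p$ that can be read off the execution tree.

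The base gadget follows the common perfect-input obstruction above. Offer a negative anchor $J=[0,1)$ and then positive tiles partitioning $[\delta,1-\delta)$, each shorter than $1/2$. Every tile fails both gates, so $J$ is retained deterministically. Next offer a positive $L=[\delta-t,\delta)$ and then a positive $U=[1-\delta,1-\delta+u)$ with $u<4$. As in the lower-bound proofs, I will certify that the positives are the unique optimum: they tile the whole request hull, which contains $J$. Two regimes come out directly.
\begin{itemize}
\item If $t<1/2$, then $L$ is never eligible. At $U$, the algorithm keeps $J$ with probability $1-p$ or takes $U$ with probability $p$. This gives $\mathbb E[\mathrm{ALG}]=1-p+pu$ against $\mathrm{OPT}\to3/2+u$. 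Taking $u\uparrow4$ yields ratio $5.5/(1+3p)$, which is at least $d$ for all small $p$.
\item If $t$ is eligible, branch on the input. One version stops at $L$; the other continues with $U$. Their expected rewards are $pt+(1-p)$ and $1+pt$ respectively (with $u=1$), against optima $s+t$ and $s+t+1$. Taking $t\downarrow1/2$ and the stopping version gives ratio $2/(1-p/2)$, which is at least $d$ for $p$ near one.
\end{itemize}

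These one-stage computations leave a middle window of $p$, roughly $(0.37,0.47)$, uncovered. Closing that window is the main obstacle. My first attempt will iterate the gadget so that each coin flip is punished in whichever direction it falls. Whenever the special replacement happens, the new held interval is a positive $U$ of length $u$. I then continue with a fresh level scaled by $u$: negative requests overlapping $U$ with length below $4u$ (always rejected), then positive tiles and boundary positives placed so that the complete positive family remains a strict-exchange unique optimum. For the unique-optimum certification I will use the strict exchange certificate \eqref{r11:eq:strict-exchange} with disjoint payment groups. In the branch where $J$ survives, the later positives must conflict with $J$ or with the remaining negatives so that they are all missed.

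This produces a recursion for the normalized pair $(\mathrm{OPT},\mathbb E[\mathrm{ALG}])$ of the form $x_{k+1}=F_p(x_k)$, analogous to the binary-reserve maps $F,T$ of Appendix~\ref{r7:app:binary-reserve}. The strategy is then:
\begin{enumerate}
\item For each fixed $p$, show that the fixed point of the ratio under the optimal choice of $(t,u)$ per level is at least $d$.
\item As a fallback, combine the three single-stage bounds by choosing $t,u$ as functions of $p$ and check $\max\ge d$ on the whole interval $[0,1]$.
\end{enumerate}

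Finally, I will let $\delta\downarrow0$ with finite partitions to obtain strict finite witnesses. I will also note that every positive family used is compatible, so the permanent filter leaves all labels unchanged. The conclusion then holds for both $H_p$ and its filtered form.
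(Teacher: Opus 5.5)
\textbf{The open window is real, and your plan does not close it.} The fallback fails outright. In your one-anchor family each ratio is a quotient of affine functions in $t$ (for fixed $u$) and in $u$ (for fixed $t$), so it is extremal at endpoints. The best values are $5.5/(1+3p)$ (first regime, $u\uparrow4$), which reaches $d$ only for $p\le(20-11\varphi)/6\approx0.367$, and $2/(1-p/2)$, which reaches $d$ only for $p\ge4\varphi-6\approx0.472$. The eligible-$L$ variants, stopped or continued, also stay below $d$ on that window.

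Your second regime is also misevaluated. Stopping at $L$ with $t\downarrow1/2$ gives $\mathrm{OPT}\to3/2$, hence ratio $3/(2-p)$, which is at least $d$ only for $p\ge3\varphi-4\approx0.854$. The value $2/(1-p/2)$ actually comes from your first regime with $u=1/2$ instead of $u\uparrow4$. That is essentially the paper's input for $p\ge1/2$, which uses a middle positive $K=[1/4,3/4)$.

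The iterated gadget is only a hope: the map $F_p$ is never defined, and no fixed point is computed. Moreover, inputs must be fixed independently of the coins, so you cannot continue ``whenever the special replacement happens.'' A single common continuation must act on two states at once. One is a held negative $J$. The other is a held positive $U$, which no later positive may overlap and no negative shorter than $4u$ can displace. You do not say how one continuation penalizes both.

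\textbf{The missing idea is to make one coin flip cost two anchors.} The construction runs as follows.
\begin{enumerate}
\item Issue $n$ disjoint negative unit anchors $J_i$ first.
\item Fill each anchor with rejected positive tiles shorter than $1/2$, and place rejected positives of length $t<1/2$ across the two outer endpoints.
\item Only then issue, left to right, positive bridges $I_i$ of length $r\in[1,4)$, each overlapping $J_i$ and $J_{i+1}$ by $\varepsilon$.
\end{enumerate}
The bridges are mutually compatible and each has $w(D)\in\{1,2\}$. So every bridge passes the special gate, fails the ordinary gate, and is accepted exactly by its own independent coin. An interior anchor survives only if both neighbouring bridges reject, which happens with probability $(1-p)^2$. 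Hence
\[
\mathbb E[\mathrm{ALG}]=(n-1)pr+2(1-p)+(n-2)(1-p)^2,\qquad \mathrm{OPT}=n+(n-1)r+2t-2n\varepsilon,
\]
and as $n\to\infty$ the ratio tends to $(1+r)/((1-p)^2+pr)$.

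With $r\uparrow4$ this is $5/(1+p)^2>d$ for $p<1/d$. With $r=1$ it is $2/(1-p+p^2)\ge d$ on $[1/d,1/2]$, using $(1+1/d)^2=5/d$ and $1-1/d+1/d^2=2/d$. Your $u=1/2$ gadget gives $2/(1-p/2)\ge8/3$ for $p\ge1/2$. Together these cover your window and all of $[0,1]$.
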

\begin{proof}
Write $q=1-p$. For $n\ge2$, $1\le r<4$, and small $\varepsilon>0$,
first issue disjoint negative unit anchors
\[
 J_i=[a_i,a_i+1),\qquad a_i=i(1+r-2\varepsilon),
 \quad 0\le i<n.
\]
Then issue positive tiles of length below $1/2$ partitioning each
$[a_i+\varepsilon,a_i+1-\varepsilon)$, and positive intervals of
length $t<1/2$ crossing the two exterior anchor endpoints by
$\varepsilon$, where $t>\varepsilon$.
All are rejected. Finally issue, from left to right, the positive bridges
\[
 I_i=[a_i+1-\varepsilon,a_{i+1}+\varepsilon),\qquad 0\le i<n-1.
\]
Each has length $r$. The positives tile the entire request hull,
so their optimal value is
$n+(n-1)r-2n\varepsilon+2t$.
At $I_i$, the untouched right anchor is still held, and the left
anchor may also be held. Thus $w(D)\in\{1,2\}$, and every bridge
passes the special gate and strictly fails the ordinary gate.
Its acceptance is exactly its independent coin outcome.
Consequently
\[
 \mathbb E[\mathrm{ALG}]=(n-1)pr+2q+(n-2)q^2.
\]
Taking $\varepsilon\downarrow0$, $t\uparrow1/2$, and then
$n\to\infty$ gives the consistency lower bound
\[
 H_r(p)=\frac{1+r}{(1-p)^2+pr}.
\]
Set $p_*=1/d=(3-\sqrt5)/2$. If $p<p_*$, let $r\uparrow4$;
then $H_r(p)\to5/(1+p)^2>d$, since $(1+p_*)^2=5/d$.
If $p_*\le p\le1/2$, use $r=1$:
$1-p+p^2\le1-p_*+p_*^2=2/d$, hence $H_1(p)\ge d$.
At $p=p_*$ this is a supremum statement, sufficient to exclude
every uniform consistency constant below $d$.

For $p\ge1/2$, use a different perfect input with
$0<\varepsilon<1/4$. Start with $[0,1)$
negative, and finish with $K=[1/4,3/4)$ positive. Before $K$, issue
positive tiles partitioning $[\varepsilon,1/4)$ and
$[3/4,1-\varepsilon)$, and the positive intervals
$[\varepsilon-t,\varepsilon)$ and
$[1-\varepsilon,1-\varepsilon+t)$, where
$\varepsilon<t<1/2$ and all tiles are shorter than $1/2$.
These early positives are rejected. The positives together tile
a hull of length $1-2\varepsilon+2t\to2$.
Only $K$ invokes a coin, yielding expected gain $1-p/2$.
The limiting ratio is $2/(1-p/2)\ge8/3>d$.
\end{proof}

State-dependent probabilities and correlated decisions, which remain pathwise protected within Proposition~\ref{r38:prop:random-action-envelope}'s action set, are outside this obstruction.

\section{Same-threshold ledger ablation}\label{app:ledger-ablation}
The ablation removes only SUM-BANK's special ledger acceptance test,
retaining ordinary priority, length thresholds, all-negative conflicts,
ledger updates and irrecoverability. It measures the gate's total effect
on the resulting trajectory.

\subsection{A five-request explanation of the ledger test}
Fix $\lambda\ge\varphi^3$ and $0<\varepsilon<1/8$. Reveal
the following intervals in order; the subscript is the prediction:
\[
 [0,1)_0,\quad
 [\varepsilon,1/2)_1,\quad
 [1/2,1-\varepsilon)_1,\quad
 [-1/\varphi+2\varepsilon,\varepsilon)_1,\quad
 [1-\varepsilon,1-\varepsilon+1/\varphi)_1.
\]
The four positives tile the full input hull, giving the unique
optimum $1+2/\varphi-3\varepsilon$. The initial negative
conflicts with each of them. Both policies reject the next three
requests because their lengths are below $1/\varphi$; none passes
the ordinary gate. Immediately before the last arrival, the held
length is one and the ledger is $B=\varphi-3\varepsilon>1$.
The last positive passes the special length test at equality.
Without the ledger test it replaces the negative, leaving reward
$1/\varphi$ and ratio
\[
 \varphi+2-3\varphi\varepsilon\longrightarrow\varphi+2>d.
\]
With the test, it is rejected because
$B>1=\varphi w(I)$, and SUM retains reward one. Permanent
filtering changes no label because the positives are compatible.
Thus the ablation loses the proved consistency guarantee even
though its mean effect on the workloads below is small. This
construction also applies to Shaped-BANK and every Credit-BANK $T_c$
in the stated family when only its special ledger test is disabled and ledger
updates are retained: throughout the prefix the sole held negative
has ordinary cost at least $d$, whereas every positive length is
at most $1/\varphi$. Hence all ordinary tests still fail, and the
same length and ledger decisions follow. The conclusion concerns only these policies with the test removed.

\subsection{Paired workload intervention}

Each of the eight conditions has 100 repetitions; within a repetition, enabled and
disabled SUM gates share inputs, optima, permutations and labels.
Differences are percentage points of $\mathrm{ALG}/\mathrm{OPT}$, using
the separate eight-comparison bootstrap family in
Appendix~\ref{sec:experiment-protocol}. This intervention does not cover
Shaped-BANK or the low-FP rates in Table~\ref{tab:main-mixtures}.

\begin{table}[htbp]\centering\small
\caption{Ledger ablation, 100 repetitions per condition. Win/tie/loss counts compare exact paired
rewards; positive favors the enabled gate.}
\label{tab:ledger-ablation}
\begin{tabular}{llrrr}\toprule
Log & Condition & Mean difference & Adjusted 99.375\% CI & Win/tie/loss\\\midrule
NASA & Perfect & 0.00006 & $[0.00000,0.00019]$ & 2/98/0 \\
NASA & FN .1 & 0.00000 & $[0.00000,0.00000]$ & 0/100/0 \\
NASA & FP .1 & 0.01227 & $[0.00238,0.02129]$ & 72/0/28 \\
NASA & FP 1 & 0.00000 & $[0.00000,0.00000]$ & 0/100/0 \\
CTC & Perfect & 0.00776 & $[0.00297,0.01375]$ & 17/81/2 \\
CTC & FN .1 & 0.00406 & $[0.00064,0.00847]$ & 10/89/1 \\
CTC & FP .1 & 0.54094 & $[0.48731,0.59718]$ & 99/0/1 \\
CTC & FP 1 & 0.00000 & $[0.00000,0.00000]$ & 0/100/0 \\
\bottomrule\end{tabular}\end{table}
All-positive inputs make the special branch inactive, producing identical complete trajectories. On NASA FN .1, no disabled-gate veto occurs in these streams, also giving equality. The remaining comparisons include losses in individual repetitions despite nonnegative means. 

\section{Reward and error-bound activation}\label{r5:app:current-comparison}
\begin{table*}[t]\centering\small\setlength{\tabcolsep}{5pt}
\caption{Random-order reward, percent of optimum (20 inputs per condition).
Each cell is $\text{mean}_{-\ell}^{+u}\,/\,q_{.05}$: the pointwise 95\%
bootstrap interval is $[\text{mean}-\ell,\text{mean}+u]$; $q_{.05}$ is the
empirical individual-run fifth percentile: the sample minimum for each
deterministic method, and the weighted quantile of 40 branch outcomes for
each mixture. Mix$_C$, Mix$_B$ use $\pi=\alpha^2/2$, $1/\varphi$ in Section~\ref{sec:randomized-main} (expected consistency $d$, resp. robustness $\varphi^5$); means integrate the coin. Act. counts inputs with $\eta_K<(4-2\varphi)\mathrm{OPT}$ (Appendix~\ref{sec:experiment-protocol}). SG denotes Shaped-BANK; $\mathrm{SG}^F$ is filtered SG. These are exploratory fixed-log summaries.}
\label{tab:main-mixtures}
\begin{tabular}{@{}llcrrrr@{}}\toprule
Log & Labels & Act. & LR & $\mathrm{Mix}_C$ & $\mathrm{Mix}_B$ & $\mathrm{SG}^F$\\\midrule
NASA & Perfect & 20/20 & $96.83_{-.13}^{+.14}\,/\,96.25$ & $97.43_{-.10}^{+.12}\,/\,96.47$ & $98.79_{-.05}^{+.05}\,/\,96.51$ & $99.10_{-.09}^{+.09}\,/\,98.76$\\
NASA & FP 0.01923 & 20/20 & $96.83_{-.13}^{+.14}\,/\,96.25$ & $94.95_{-.15}^{+.14}\,/\,85.85$ & $90.75_{-.37}^{+.36}\,/\,85.41$ & $97.07_{-.10}^{+.11}\,/\,96.61$\\
NASA & FP 0.1 & 0/20 & $96.83_{-.13}^{+.14}\,/\,96.25$ & $91.66_{-.18}^{+.20}\,/\,69.28$ & $80.11_{-.44}^{+.44}\,/\,67.55$ & $94.36_{-.20}^{+.20}\,/\,93.60$\\
NASA & FP 1 & 0/20 & $96.83_{-.13}^{+.14}\,/\,96.25$ & $90.21_{-.11}^{+.12}\,/\,61.36$ & $75.42_{-.24}^{+.24}\,/\,61.00$ & $89.12_{-.28}^{+.27}\,/\,87.84$\\
CTC & Perfect & 20/20 & $93.27_{-.11}^{+.10}\,/\,92.77$ & $94.55_{-.08}^{+.08}\,/\,92.86$ & $97.43_{-.04}^{+.04}\,/\,92.99$ & $95.58_{-.14}^{+.14}\,/\,94.83$\\
CTC & FP 0.0034 & 20/20 & $93.27_{-.11}^{+.10}\,/\,92.77$ & $91.74_{-.15}^{+.15}\,/\,84.26$ & $88.33_{-.39}^{+.42}\,/\,83.22$ & $93.80_{-.17}^{+.16}\,/\,92.95$\\
CTC & FP 0.1 & 0/20 & $93.27_{-.11}^{+.10}\,/\,92.77$ & $84.81_{-.12}^{+.12}\,/\,48.49$ & $65.89_{-.28}^{+.28}\,/\,47.58$ & $89.25_{-.14}^{+.14}\,/\,88.72$\\
CTC & FP 1 & 0/20 & $93.27_{-.11}^{+.10}\,/\,92.77$ & $87.72_{-.10}^{+.11}\,/\,63.87$ & $75.32_{-.22}^{+.22}\,/\,63.43$ & $86.39_{-.16}^{+.16}\,/\,85.68$\\
\bottomrule\end{tabular}\end{table*}

Table~\ref{tab:main-mixtures} gives the reward, uncertainty and tail summaries
discussed in Section~\ref{sec:empirical}.
Figure~\ref{r5:fig:current-comparison} compares eight policies on twenty
shared joint repetitions. Appendix~\ref{sec:experiment-protocol} defines
the policies, inputs and conditional uncertainty estimates.

\begin{table*}[t]\centering\small
\caption{Mean retained reward (\% of OPT; twenty shared repetitions per cell).
Shaped-BANK (SG) and RP at $\lambda=(5\varphi+1)/2$ share exact
robustness $5\varphi+3$. Superscript $F$ denotes permanent filtering.
The best scalar Credit uses $c_*=2\varphi-3$ and budget $4\varphi+5$;
SUM has budget $7\varphi+4$; MAX($\varphi^3$) is a descriptive comparator.
Credit at $c=\varphi/2$ is included in the full comparison tables.}
\label{tab:absolute-endpoints}\setlength{\tabcolsep}{4pt}
\begin{tabular}{@{}llrrrrrrrr@{}}\toprule
Log & Labels & SG$^F$ & SG & Scalar$^F$ & SUM$^F$ & MAX($\varphi^3$) & LR & Half(4) & RP$^F$\\\midrule
NASA & Perfect & 99.10 & 99.10 & 99.10 & 99.17 & 98.93 & 96.83 & 99.26 & 98.67 \\
NASA & FP .1 & 94.36 & 94.92 & 94.35 & 92.26 & 96.20 & 96.83 & 93.15 & 91.76 \\
NASA & FP 1 & 89.12 & 88.63 & 89.08 & 88.56 & 93.07 & 96.83 & 89.01 & 88.17 \\
CTC & Perfect & 95.58 & 95.58 & 95.58 & 95.91 & 95.51 & 93.27 & 97.22 & 92.80 \\
CTC & FP .1 & 89.25 & 89.68 & 89.24 & 87.34 & 90.63 & 93.27 & 86.84 & 87.24 \\
CTC & FP 1 & 86.39 & 85.72 & 86.37 & 85.73 & 89.15 & 93.27 & 86.12 & 85.37 \\
\bottomrule\end{tabular}\end{table*}

\paragraph{Reward comparisons.}
Filtered SG loses to filtered SUM on both perfect-label cells,
and to raw SG on all six nonzero low-FP cells. MAX($\varphi^3$)
beats SG at FP $.1$, $.5$ and $1$ on both logs; LR is strongest among
the displayed methods at those rates and at FN 1.
Table~\ref{tab:chronology-diagnostic} contains a low-FP exception to this ranking.
 The filtered-RP means favor SG in
all 20 cells, but at CTC FP .1 the raw-RP paired interval includes zero.

Figure~\ref{fig:sg-low-fp} shows activation counts 20,7,0 on NASA and
20,5,0 on CTC at the three nonzero low-FP rates; both perfect cells
activate 20/20. In contrast, the main FP grid has nonpositive error branches.

Scalar Credit at $c=\varphi/2$ has a different low-FP threshold
and middle-rate counts 9/20 and 11/20. SG's larger guaranteed robust
floor can reduce the error-branch activation count while strengthening the worst-case guarantee.

\paragraph{Best scalar control.}
Filtered SG minus filtered best scalar Credit ($c_*$;
Appendix~\ref{sec:scalar-protocol}) has fifteen positive means,
four exact zero means and one negative mean; the range is
$[-0.00085,0.03927]$ percentage points. At NASA FP$.05769$,
SG minus scalar is $-0.00085$ with pointwise interval
$[-0.00946,0.00592]$ and 2/17/1 wins/ties/losses. Raw differences
have fourteen positive and six zero means, with maximum 0.16959
points at CTC FP$.5$ (interval $[0.13983,0.20048]$).
The workload-reward differences are small despite the strict improvement
in the worst-case guarantee.

\begin{figure}[htbp]\centering
\includegraphics[width=\textwidth]{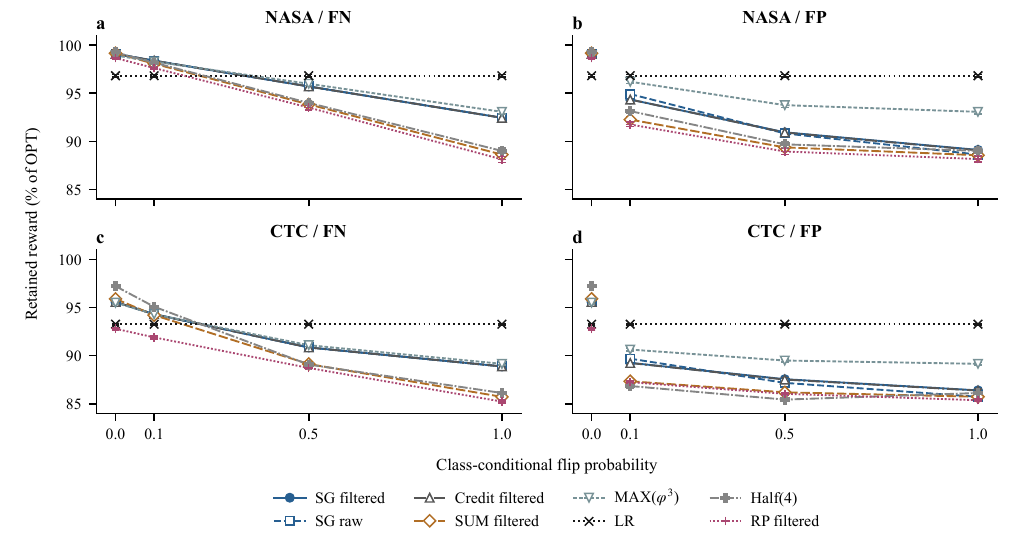}
\caption{Main-regime rewards: NASA/CTC rows and FN/FP columns. The eight methods are raw/filtered SG, filtered best scalar Credit, filtered SUM, MAX($\varphi^3$), LR, Half(4), and filtered RP at the SG robustness budget.
Points are twenty-repetition means with pointwise 95\% bootstrap
intervals (10,000 shared resamples). Perfect labels appear in both
columns; FN and FP flip different classes. The FP perfect-to-$0.1$
segment is omitted because the low-FP regime lies between them.
Methods have different guarantee budgets; MAX is descriptive.
Raw/filtered SG coincide in the FN panels, while filtered SG and
scalar Credit nearly overlap at this scale.}
\label{r5:fig:current-comparison}\label{fig:filtered-rp}
\end{figure}

\begin{figure}[htbp]\centering
\includegraphics[width=\textwidth]{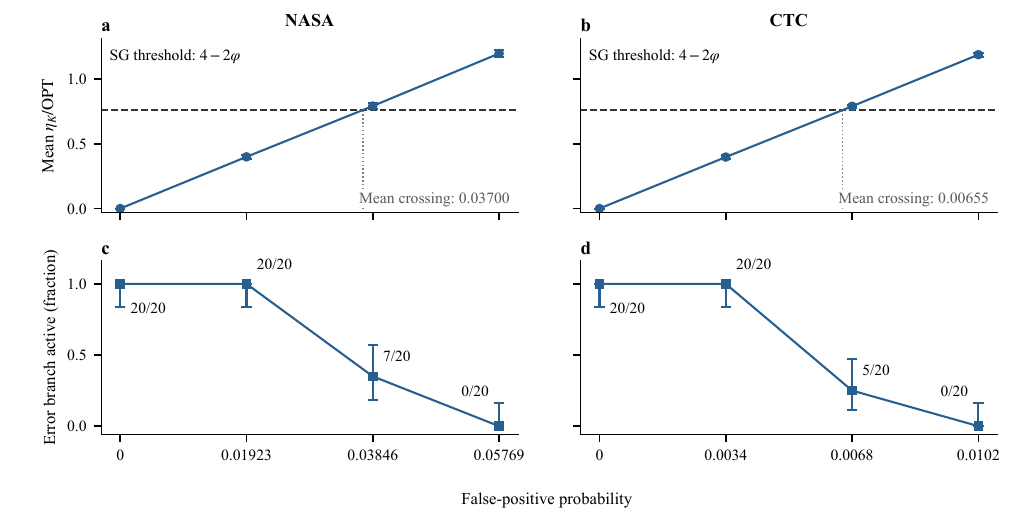}
\caption{Bounded low-FP study. Top: mean original-label
$\eta_K/\mathrm{OPT}$ with pointwise bootstrap intervals; dashed
line: SG's exact activation threshold $4-2\varphi$. Bottom:
actual activation counts out of twenty with pointwise Wilson
intervals. Dotted vertical references mark the analytic expected-error
crossings, $.03700$ and $.00655$; the measured grid was fixed around
the Credit crossings. Rates share nested labels and reveal orders. Mean-error crossing
and realization-level activation are distinct quantities.}
\label{fig:sg-low-fp}
\end{figure}

\subsection{Uncertainty and individual-run rewards}
\label{r39:sec:saved-quantiles}
Table~\ref{tab:main-mixtures} uses the twenty matched inputs and
pointwise bootstrap protocol of Appendix~\ref{sec:experiment-protocol}.
Each input fixes the requests, order, labels and termination; mixture
coins introduce no additional input replications.

The mixture mean integrates the initial policy coin. Its empirical
fifth percentile instead uses the forty branch outcomes with masses
$\pi/20$ and $(1-\pi)/20$. For a fixed input, both branch probabilities exceed $.05$, so
the conditional fifth percentile is $\min\{N_r,L_r\}$. Averaging these
conditional quantities is not the pooled quantile in Table~\ref{tab:main-mixtures}.
The latter can differ between Mix$_C$ and Mix$_B$ even though their
conditional minima coincide. 

Explicitly, with rewards $N_r,L_r$ as percentages of optimum,
\[
 \widehat F_\pi(t)=\frac1{20}\sum_{r=1}^{20}
 [\pi\mathbf1_{N_r\le t}+(1-\pi)\mathbf1_{L_r\le t}],
 \qquad q_{.05}=\inf\{t:\widehat F_\pi(t)\ge .05\}.
\]

\section{Shared experimental protocol and provenance}
\label{sec:experiment-protocol}

\subsection{Instances, policies and comparison budgets}
Raw/filtered SG and raw/filtered RP at $\lambda=(5\varphi+1)/2$
share exact robustness $5\varphi+3$. Each uses two fixed logs, seven
main conditions (perfect labels; FN or FP at $.1,.5,1$), and three
additional low-FP rates per log, with twenty joint repetitions, giving 400 cases. The low-FP rates are
$.01923,.03846,.05769$ on NASA and $.0034,.0068,.0102$ on CTC.

Thirteen main-regime controls share those repetitions: raw/filtered
Credit ($c=\varphi/2$), raw/filtered SUM ($\lambda=\varphi^3$),
MAX at $\varphi^3$ and $\Lambda$, LR, Half(4), and five RP settings.
These RP settings are $\varphi^3$, $(7\varphi+2)/2$,
$\lambda_0=(B_0+\sqrt{B_0(B_0-8)})/4$ with $B_0=8\varphi+5$,
and raw/filtered $(6\varphi+1)/2$. The last pair matches Credit's
budget $6\varphi+3$; $(7\varphi+2)/2$ matches SUM's $7\varphi+4$.
RP at $\lambda_0$ has exact robustness $2\lambda_0+2$, not $B_0$.
Here $\Lambda=(5\varphi+2+\sqrt{33\varphi+21})/2\simeq9.357715$; MAX is a descriptive
comparator with no claimed guarantee.

The separate SUM ledger ablation (Appendix~\ref{app:ledger-ablation}) uses 100
repetitions; its paired differences are computed within that study.

The logs are NASA-iPSC-1993-3.1-cln~\cite{pwaNASA} and
CTC-SP2-1996-3.1-cln~\cite{pwaCTC} from the Parallel Workloads
Archive~\cite{pwa2014}. Mirrored inputs match later official downloads;
earlier download logs and original cleaning source are unavailable.
Raw SWF files are not redistributed because unrestricted redistribution is not
established. Mapping follows
SWF~\citep{pwaSWF}.

Each retained record gives $[s,s+t)$ with weight $t>0$ seconds.
NASA uses SWF column 2 as a start-time surrogate, without adding its
unknown wait $-1$; CTC uses $s=\texttt{submit}+\texttt{wait}$ with known
nonnegative wait. Cleaning requires eighteen columns, valid required
integer fields, nonnegative start surrogate/submission and positive
runtime; neither status nor processor count is a filter. All exclusions
are zero-runtime records. Stable IDs are physical SWF line numbers;
coincident intervals remain distinct.

\begin{table}[htbp]\centering\small
\caption{Fixed workload-derived instances. Exact objectives are seconds.}
\label{tab:e00-data-details}
\begin{tabular}{@{}lrr@{}}\toprule
Quantity & NASA & CTC\\\midrule
Data records & 18,239 & 77,222\\
Excluded zero-runtime records & 173 & 17\\
Retained intervals & 18,066 & 77,205\\
Exact optimum $V$ & 5,816,181 & 28,353,920\\
Fixed optimal-set size (FN eligible) & 3,635 & 1,464\\
Complement size (FP eligible) & 14,431 & 75,741\\
\bottomrule\end{tabular}\end{table}

Integer weighted-interval dynamic programming sorts by
$(\mathrm{finish},\mathrm{start},\mathrm{id})$ and computes
$V_j=\max\{V_{j-1},w_j+V_{k(j)}\}$, where $k(j)$ counts earlier
requests finishing no later than request $j$ starts. A tie excludes
the current request, fixing an optimum $O$ independently of reveal
order and labels. Endpoint contact is compatible. Only offline
optimization and label construction use the full instance. Online
methods see immutable labels one request at a time, without termination
knowledge, recall or geometric expiration. This single-resource,
random-reveal reduction does not reproduce queues or processor allocation.

\subsection{Randomization and uncertainty}
For each log and repetition $r=0,\ldots,19$, PCG64 streams supply
one reveal permutation and one uniform per stable ID. FN flips members
of $O$ to zero; FP flips its complement to one. An eligible ID flips
when its uniform is below the binary64 conversion of the stored rational
rate $q$. Rates therefore share nested labels; methods and error types
share streams, so cells are not independent samples. Perfect FN/FP labels
coincide. FN positives are feasible; FP positives need not be.

Rewards are $100\,\mathrm{ALG}/V$. The bootstrap draws $10{,}000$ resamples of the twenty whole repetitions per log, shared
across methods and conditions. Pointwise 95\% percentile intervals use
linear interpolation at .025 and .975; paired intervals resample
within-repetition differences in percentage points. The 60 primary
contrasts compare filtered SG with raw SG and raw/filtered matched RP
in twenty cells. The 34 secondary contrasts compare filtered SG with
filtered $c=\varphi/2$ Credit in twenty cells and filtered SUM in fourteen
main cells. Forty further contrasts compare raw/filtered SG with their
best scalar counterparts. The ancillary tables (see \texttt{anc/README.md}) give full summaries for all twenty cells and all 134 contrasts,
including adverse and zero results. These bootstrap intervals and the Wilson activation intervals are pointwise and
conditional on the two fixed logs.

The SUM ablation uses $10{,}000\times100$ whole-repetition resamples and
quantiles .003125/.996875 for its eight-comparison family, separately
from the pointwise SG comparisons. Without public preregistration, the ablation conditions, low-FP rates and $c_*$ were
fixed before their outcomes; the eight-policy display before scalar outcomes; and the
models, thresholds and independent-cohort comparisons in
Appendix~\ref{app:rollout-interface} before test rewards.

\subsection{Original-label error and activation}\label{sec:error-scale}
For original positives $P$ and fixed optimum $O$, compute exactly
\[
 \eta_K=w(O\setminus P)+\sum_{F\in P\setminus O}
 \bigl[w(N_O(F))-w(F)\bigr].
\]
Original labels define this statistic; immutable effective labels govern
filtered decisions and fees. SG's error branch exceeds its robust floor
precisely when
\[
 \eta_K/V<1-d/(5\varphi+3)=4-2\varphi\simeq .763932.
\]
For marginal FP probability $q$, $\mathbb E[\eta_K/V]=qS$, with
$S=120075280/5816181\simeq20.6450$ on NASA and
$S=3309283172/28353920\simeq116.7134$ on CTC.
Credit's threshold $(4-\varphi)/3\simeq .793989$ gives mean crossings
$.038459/.006803$; SG gives $.0370031776/.0065453654$.
The added grid surrounds both crossings below the original minimum
nonzero rate $.1$. 

\subsection{Policy definitions and implementation}
With $d=\varphi+1$, $\alpha=\varphi-1$, $M=2\varphi+1$, SG uses
negative cost $\min\{2\varphi v,\max\{dv,\varphi v+\alpha b\}\}$
and positive cost $Mv$, capped per holder before summation. Breakpoints
$b=\varphi v,dv$ and rational/golden-field comparisons are exact.
Ordinary acceptance has priority; non-strict boundaries accept equality.
Rejection fees use old values, and replacement inherits whole ledgers.
Permanent positive history records a compatible effective positive even
if subsequently rejected or revoked.

Credit at $c=\varphi/2$ uses $v+(v+\min\{v,b/2\})\varphi$ for a
negative holder. SUM's ordinary test uses total conflicting length;
MAX changes only this ordinary quantity to maximum conflicting length,
retaining SUM's special gates and ledger updates. LR follows Algorithm~3
of \citet{karavasilis2025}: empty acceptance or strict $x>\varphi y$,
where $y$ is maximum conflicting length. RP follows Algorithm~4: after
ordinary acceptance fails, a positive arrival with all-negative conflicts
must satisfy $w(I)\ge W$. Half(4) changes this last test to $2w(I)\ge W$
and uses ordinary parameter 4. It is an author-chosen empirical control;
no RP certificate or optimal-parameter claim transfers to it.
These controls use the definitions above, rather than reproducing published workload
curves.

An independent full-scan implementation agrees on all SG decisions and rewards; labels,
errors and bootstrap summaries were separately reconstructed.

\paragraph{Arithmetic resources.}\label{r39:sec:arithmetic-resources}
For fixed parameters of the displayed SUM, Shaped and $T_a$ costs,
a full-scan implementation uses $O(n^2)$ exact arithmetic operations
and comparisons, and $O(n)$ stored scalars and interval records over
$n$ arrivals. At time $t$, scan at most $t-1$ holders to form conflicts,
sum lengths, costs and ledgers, then update fees or replace records.
Each displayed cost takes constant arithmetic work; fee denominators
are positive on their branches. The permanent positive filter adds one
scan and at most $t$ records. Source measures and replacement forests
are proof objects, not stored policy state. Parameters are exact
arithmetic constants; rational inputs and rational $a$ use rational
operations, and the golden endpoint uses a fixed quadratic field.
This bounds scalar counts, not their bit sizes or floating-point error;
OPT computation and retained diagnostic traces are excluded.

\subsection{Strongest-scalar control}\label{sec:scalar-protocol}
The scalar $c_*=\alpha^3=2\varphi-3$ is the best consistent scalar endpoint, with exact
robustness $4\varphi+5$ and cost $\min\{Mv,dv+c_*b\}$. Raw/filtered scalar Credit uses
the same 400 cases and bootstrap resamples as SG.
Appendix~\ref{r5:app:current-comparison} summarizes its forty contrasts; $c=\varphi/2$
remains in the full tables.

\section{Experimental diagnostics: randomization, learned labels, and application limits}
\label{app:randomized-experiment}

\paragraph{A randomized control with its coin integrated exactly.}
Naive accepts a positively predicted request only when it is compatible
with every previously accepted request, and never revokes one. The
initial mixture selects Naive with probability $\pi$ and LR otherwise,
once before the input is processed. For each fixed input and prediction
vector we report
\[
 100\{\pi\,\mathrm{Naive}+(1-\pi)\,\mathrm{LR}\}/\mathrm{OPT}.
\]
We use $\pi_C=\alpha^2/2$ and
$\pi_B=1/\varphi$, with expected guarantees $(\bar C,\bar R)\le
(d,2d)$ and $(\varphi^5/(3\varphi+3),\varphi^5)$, respectively.
Inputs are fixed independently of the coin; these are expected-reward
guarantees, not finite per-execution robustness bounds.

In the perfect/FP $.1$/FP $1$ rows of Table~\ref{tab:main-mixtures}, the underlying Naive means are
$100.00/69.78/62.18\%$ on NASA and $100.00/48.97/64.23\%$ on CTC.
The budget-matched mixture beats filtered SG on perfect CTC labels,
but loses at FP $.1$ on both logs; the consistency-matched mixture
beats filtered SG at FP $1$ on both logs. 

\paragraph{Finite binary64 decision check.}
With integer geometry/rewards and binary64 thresholds/ledgers, literal
min/max evaluation agrees with exact breakpoint evaluation on branches,
labels, final IDs and rewards for 800 SG streams (400 cases;
38,108,400 arrivals). This finite check gives no guarantee for
adversarial near-boundary inputs or accumulated ledger error.

\paragraph{A CTC submission-order diagnostic.}
We keep CTC's interval geometry $[\mathrm{submit}+\mathrm{wait},
\mathrm{submit}+\mathrm{wait}+\mathrm{runtime})$, fixed optimum and uniform arrays, but reveal requests in ascending submission
time, breaking ties by physical-line ID. Perfect labels give one
deterministic case; FP probabilities $.0068$ and $.1$ each use the twenty label streams. Both SG variants use exact arithmetic. This changes reveal order alone; the application interpretation is
examined in Appendix~\ref{sec:r030-decision-alignment}.

\begin{table}[htbp]\centering\small
\caption{CTC submission-order sensitivity. Mean percentages of the fixed
optimum; $n=1$ for perfect labels and $n=20$ for each nonzero FP rate.
The initial coin is integrated for both mixture columns.}
\label{tab:chronology-diagnostic}
\begin{tabular}{@{}lrrrrrr@{}}\toprule
Condition & Naive & LR & $\mathrm{Mix}_C$ & $\mathrm{Mix}_B$ & SG & $\mathrm{SG}^{F}$\\\midrule
Perfect & 100.00 & 94.98 & 95.94 & 98.08 & 95.61 & 95.61\\
FP .0068 & 95.05 & 94.98 & 94.99 & 95.02 & 95.30 & 95.28\\
FP .1 & 78.11 & 94.98 & 91.76 & 84.55 & 94.10 & 93.57\\
\bottomrule\end{tabular}\end{table}

At FP $.1$, submission order raises filtered SG's mean from
$89.25\%$ to $93.57\%$, and the budget-matched mixture's from
$65.89\%$ to $84.55\%$. LR remains better than filtered SG.
At FP $.0068$, filtering costs only $0.020$ percentage points,
compared with $1.276$ under the random-reveal protocol.
The perfect-label mixture still exceeds SG. The magnitude of these comparisons thus depends strongly on reveal order.

\subsection{Learned labels, structured errors, and a planted obstruction}
\label{sec:r029-extensions}
To test errors beyond independent flips, we construct separate
instances from the logs. Requests are ordered by
$(t_i,\mathrm{ID}_i)$, where $t_i$ is SWF column 2 and $\mathrm{ID}_i$
is the physical-line identifier. The 60/20/20 split starts at
$\lfloor .6n\rfloor,\lfloor .8n\rfloor$ and advances each cut past equal
timestamps. Training/validation/test counts are 10,839/3,613/3,614 for
NASA and 46,323/15,441/15,441 for CTC. Consecutive blocks of 512 requests form separate
instances; a final block shorter than 32 is appended to its predecessor.
Each block has its own canonical offline optimum, so conflicts across
blocks are omitted. Its optimum uses the tie-breaking DP of Appendix~\ref{sec:experiment-protocol}. Block boundaries, unlike
split boundaries, may separate equal timestamps. There are 7 NASA and 31 CTC test blocks. These blocks
are not assumed to be independent samples from a workload population.

\paragraph{A concrete prediction pipeline.}
For each log, an $\ell_2$-regularized logistic model learns membership in
the training blocks' canonical optima. Its seven causal features are
log-duration, log-lead-time, log-gap since the preceding reveal, log-count
and log-total-length of previously revealed conflicts, and two time-of-day
coordinates. Logarithms use $\log(1+x)$; lead time is $s_i-t_i$.
History starts empty in every block, including at split boundaries;
the first gap is zero. Earlier equal-time requests enter in ID order.
Conflict features include all earlier overlapping reveals in the block, regardless of labels
or acceptance, and their full lengths are summed. The periodic features
are sine and cosine of $2\pi(t_i\bmod86400)/86400$, with no calendar offset.
Training-only unweighted means and population standard deviations
scale all splits, with zero deviations replaced by one. The constructed
interval inputs reveal runtime at arrival; NASA's column 2 is a start-time surrogate. See
Appendix~\ref{sec:r030-decision-alignment} for the application boundary.

Writing $h_i=\beta_0+\beta^\top z_i$ for standardized features, training
minimizes the \emph{sum} of logistic losses
$\sum_i[\log(1+\exp h_i)-y_i h_i]+\|\beta\|_2^2/2$;
the intercept is unpenalized. A deterministic Newton solver fits the model.
Labels are $\mathbf1\{p_i\geq\tau\}$. Validation pooled F1 chooses
$\tau\in\{.05,.10,\ldots,.95\}$, breaking ties toward the largest
threshold and giving $.35$ and $.20$; neither test data nor online reward
is used for tuning. Fixed $.5$, explicit all-negative/all-positive and
perfect-label controls are also reported.

\begin{table}[t]\centering\small
\caption{Held-out learned-label checks. Precision/recall pool request
counts; weighted recall and $\eta_K/\mathrm{OPT}$ pool lengths. Reward
rows are means of block percentages. The selected threshold maximizes
validation F1, not the online objective.}
\label{tab:r029-learning}
\begin{tabular}{@{}lrr@{}}\toprule
Quantity & NASA & CTC\\\midrule
Precision / recall & $.851/.660$ & $.428/.426$\\
Weighted recall & $.610$ & $.364$\\
$\eta_K/\mathrm{OPT}$ & $.577$ & $1.574$\\
Active error branch & $6/7$ & $0/31$\\\midrule
Shaped / filtered & $94.58/94.58$ & $89.81/89.66$\\
SUM / MAX at $\varphi^3$ & $95.17/95.17$ & $89.43/89.95$\\
LR / Naive & $96.86/61.43$ & $91.25/62.68$\\
Consistency-matched mixture & $90.09$ & $85.79$\\
Budget-matched mixture & $74.96$ & $73.59$\\
Filtered Shaped, all-negative & $96.96$ & $90.30$\\
Naive, all-positive & $90.45$ & $83.51$\\\bottomrule
\end{tabular}
\end{table}

Both constant controls outperform their corresponding learned-label
policies in Table~\ref{tab:r029-learning}. Fixed threshold $.5$ gives
$94.88/89.90\%$ for filtered Shaped, and perfect labels give
$97.61/96.58\%$. Validation F1 need not select useful advice for the
online objective or determine conflict-weighted error.

\paragraph{Matched counts with different error structure.}
On each test block we create FP errors at target $.01$ or FN errors at
target $.1$, rounding the eligible count to the nearest integer.
Uniform placement, length-proportional sampling without replacement,
and a cyclic consecutive burst in eligible reveal order have exactly
the same error count within a block. They do not match error weight or
per-request marginals. Each block and placement uses one fixed seed; the comparisons are descriptive. With 85 total NASA false negatives,
uniform versus length-weighted placement changes mean $\eta_K/\mathrm{OPT}$
from $.081$ to $.526$, Naive's reward from $91.85$ to $47.41\%$, and
Shaped's from $97.92$ to $96.71\%$. For CTC the corresponding numbers are
$.108$ versus $.272$, $89.24$ versus $72.80\%$, and $94.96$ versus
$92.86\%$. Bursts and FP placements give different rankings; the ancillary tables (see \texttt{anc/README.md}) list all 228 cases,
including unfavorable comparisons.

\paragraph{Embedding the zero-error construction.}
We instantiate the distinct-geometry family with $\delta=1/100$, three
central groups, eight alternatives per group, and rational boundary
parameter $16/5<2\varphi$. Its 27 requests have
$\mathrm{OPT}=147/20$, raw Shaped reward $1$, and $\eta_K=0$.
A spatially disjoint scaled copy is embedded into the first test block
of each log, whose background labels are perfect. The witness supplies
a fraction $a\in\{0,1/4,1/2,9/10,1\}$ of total optimum; its requests
retain their adversarial order. At $a=1/2$, raw/filtered Shaped give
$56.46/93.05\%$ on NASA and $55.12/91.72\%$ on CTC. At $a=1$, raw
Shaped, SUM and MAX give $20/147$, filtered Shaped and LR give
$638/735$, and Naive is optimal. The error remains zero, although the
positive set is infeasible. This explains the filter's role on this proved construction.

Decisions and ledgers use exact rational/golden arithmetic; mixture coins are integrated analytically.

\subsection{Decision-aligned labels and the application boundary}
\label{sec:r030-decision-alignment}
We keep the split, seven features and linear-logistic capacity of
Appendix~\ref{sec:r029-extensions}, but align training and threshold
selection with decision costs. This secondary design was chosen after test-block inspection. It uses weighted
logistic training, motivated by prediction versus decision loss
\citep{elmachtoub2020spo}, rather than SPO+ or its guarantee.

\paragraph{A cost target tied to the error.}
For a training block with optimum $O$ of value $V$, assign cost
\[
 c_i=\begin{cases}
 w_i/V,&i\in O,\\
 \bigl(w(N_O(i))-w_i\bigr)/V,&i\notin O.
 \end{cases}
\]
Then $\sum_i c_i\mathbf1\{b_i\ne\mathbf1_{i\in O}\}
=\eta_K(b,O)/V$. In addition to the ordinary-logistic score,
we fit a score using $c_i$-weighted binary log loss, globally rescaling
training costs to mean one and retaining the same L2 coefficient~1
on the summed loss. Thus the loss weights are $mc_i/\sum_jc_j$ for
$m$ training requests. Zero-cost negatives have zero loss weight but
remain in the unweighted training standardization. Only the cost model is fitted anew, using the same solver and an
additional final gradient infinity-norm check below $10^{-5}$.
Such a cost score is not an ordinary calibrated membership probability. Neither does minimizing $\eta_K$ maximize retained reward.

Each score uses the 21 validation thresholds $0,.05,\ldots,1$; the endpoints give
all-positive and all-negative labels on the evaluated inputs. Explicit constant controls
are retained.
Selection maximizes pooled F1, minimizes mean block $\eta_K/V$, or
maximizes mean block filtered-SG reward; ties choose the largest
threshold. Together with fixed $.5$ thresholds and negative/positive/
perfect controls, this gives 11 conditions for each of the eight policies.

\begin{table}[t]\centering\small
\caption{Secondary objective-alignment diagnostic. Entries are mean block
percentages for filtered SG on the same 7 NASA and 31 CTC test blocks.
Reward selection uses only validation blocks. Threshold 1 gives
all-negative labels on these scores.}
\label{tab:r030-alignment}
\begin{tabular}{@{}lrr@{}}\toprule
Condition & NASA & CTC\\\midrule
Ordinary score, F1-selected & 94.58 & 89.66\\
Ordinary score, error-selected & 93.71 & 90.30\\
Ordinary score, reward-selected & 96.96 & 90.30\\
Cost score, F1-selected & 93.13 & 90.21\\
Cost score, error-selected & 94.91 & 90.30\\
Cost score, reward-selected & 96.96 & 90.48\\\midrule
All-negative & 96.96 & 90.30\\
Perfect labels & 97.61 & 96.58\\
LR (ignores labels) & 96.86 & 91.25\\\bottomrule
\end{tabular}
\end{table}

Reward selection chooses all-negative advice for both ordinary scores and
the NASA cost score. Their improvement over F1 therefore cannot be credited
to learning. The CTC cost score selects $.30$ and gains $.1802$ percentage
points over the constant: 3 blocks improve, 27 tie and 1 worsens. The
leave-one-block-out mean range is $[.1018,.1991]$ points, a sensitivity range, not a confidence interval. Its mean remains $.7690$
points below LR. The ancillary tables (see \texttt{anc/README.md}) list all twelve paired contrasts, including adverse results.

The error-oriented NASA score reduces pooled $\eta_K/V$ from $.5767$ to
$.4838$ and makes the error branch active in all seven blocks rather than
six, yet gives lower reward than the all-negative control, whose error
branch is inactive. Error informativeness and empirical utility are thus
distinct diagnostics. The CTC reward-selected cost score reduces pooled
error from $1.5735$ to $.9501$, activates the error branch in only 4 of
31 blocks and produces four special admissions. These results support
objective sensitivity, not a claim that objective mismatch is the sole
reason for the ordinary predictor's failure.

\paragraph{Feasibility and boundary context.}
Permanently projecting each positive through the existing causal Naive
filter gives feasible immutable labels. Raw SG on those labels exactly
equals filtered SG on the original labels in every evaluated case, verifying the wrapper identity. Feasibility
does not resolve future membership ambiguity: the first offer $[0,1)$
has identical observed features whether the input stops or later reveals
$[0,2)$, although its optimal membership differs. An immutable feasible
positive history cannot retain both overlapping offers.

We also concatenate each test tail, recomputing its optimum and causal
features without internal resets while keeping all learned coefficients,
standardization and validation thresholds fixed. History still starts
empty, with first gap zero, at the test boundary. NASA's sum of block optima is
$1{,}054{,}070$, compared with continuous optimum $1{,}032{,}493$;
CTC's corresponding values are $16{,}376{,}035$ and $6{,}390{,}734$.
There are 357/152,107 cross-block overlap pairs and 64/1,041 canonical
label changes. Hence the blocks are substantively different constructed
problems, particularly for CTC. Continuous-tail reward-selected cost SG
gives $97.16/94.03\%$, versus $97.26/94.95\%$ for LR; the CTC margin
over constant-negative SG is only $.0182$ points. This jointly changes
feature context, labels, geometry and normalization, so it does not isolate a boundary effect.

\paragraph{What these inputs mean for an application.}
Prefix-invariant features are causal only \emph{given known interval
endpoints}. SWF distinguishes realized runtime from requested time
\cite{pwaSWF}. NASA substitutes start times for unavailable submissions
\cite{pwaNASA}; CTC's actual waits and runtimes are retrospective
observations~\cite{pwaCTC}. Under the chronological interpretation,
ordinary-F1 filtered SG revokes 58/58 previously started NASA intervals
and 20/263 previously started CTC intervals. These are simulated revocations in historical traces. A real
reservation interface can specify start and duration while disallowing
deletion during running jobs~\cite{slurmReservations}. Thus a sufficient
model interpretation is a provisional planning batch completed before
any service starts; interleaved execution needs additional constraints
and analysis. The current data do not validate that application.

A valid application test needs complete offers, original reveal times,
then-available features and permission for the algorithm's actions.
Completed-service records and these adjacent-block comparisons establish
neither that match nor improvement over LR.

\section{Forecasting the effect of an advice bit}\label{app:rollout-interface}
An alternative to optimal-membership prediction is to estimate how the
current advice bit affects final retained length. This interface specializes
fixed-reference rollout \citep{tesauro1996rollout} to BANK. It preserves
arbitrary-bit robustness, with accuracy measured by conditional reward contrasts.

\paragraph{The conditional target.}
Consider a finite horizon with possibly earlier random termination and
integrable terminal rewards. The exogenous offer process is independent
of the algorithm's choices. A state $s$ contains the observed history,
time, held set, ledgers, permanent filter history, and any reference
memory. Fix a deterministic pure causal bit rule $A$, defined at every
reachable counterfactual state. Let $Q^A(s,b)$ be expected final length
after using bit $b$ now and following $A$ thereafter, under the true
conditional future-offer and stopping law. Define
\[
 \Delta_A(s)=Q^A(s,1)-Q^A(s,0).
\]
A paired estimate replays the same hypothetical suffix under both
current bits, with each branch continuing $A$ from its own state.
The deployed controller $\nu$ recomputes this estimate at each arrival
and chooses its sign; ties may use $A(s)$. Hypothetical suffixes must not
be the test episode's realized future. Historical samples alone also do
not certify that their conditional law matches deployment.

\paragraph{Why immediate retained value is insufficient.}
An advice bit can change a ledger even when both current actions reject
the request. One reachable example begins with a negative $[0,1)$,
followed by positive intervals
$[-.6,.01)$, $[.01,.5)$ and $[.5,.99)$. All three are rejected, leaving
the unit holder with ledger $1.59$. At $I=[.99,1.6)$ both bits again
reject; bit one raises its ledger to $2.20$. Now offer the negative
$[0,2.8)$. The smaller ledger permits replacement, while the larger one
prevents it. A final negative $[1,4)$ gives branch rewards $2.8$ and $4$;
replacing that final offer by $[3,6)$ gives $5.8$ and $4$.
The bit contrast is therefore $6/5$ or $-9/5$ despite identical
immediate reward and identical suffix lengths. The positive intervals
are compatible, so filtering does not erase the example. It diagnoses
information that the interface can use; it is not a performance
comparison of policies started from empty state.

\begin{proposition}[Expected transfer of a reference certificate]
\label{r32:prop:rollout-transfer}
Let $Y_A,Y_\nu$ be final rewards of the reference and deployed
controllers, both using filtered Shaped-BANK from empty state.
For an estimated contrast $\widehat\Delta_A$, suppose the quantities
below are integrable and put
\[
 \varepsilon_{\rm roll}
   =\mathbb E_\nu\sum_{t}
        |\widehat\Delta_A(s_t)-\Delta_A(s_t)|.
\]
Let $\eta_A$ be the error in~\eqref{new:eq:eta} of the raw labels
produced by $A$ on its own trajectory. Then
\begin{equation}
 \mathbb E Y_\nu\ge
 \max\left\{\frac{\mathbb E\mathrm{OPT}}{\varphi^5},
       \frac{\mathbb E\mathrm{OPT}-\mathbb E\eta_A}{d}
             -\varepsilon_{\rm roll}\right\}.
 \label{r32:eq:rollout-transfer}
\end{equation}
\end{proposition}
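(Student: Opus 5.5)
The first term follows pathwise from robustness, and the second by comparing the deployed controller $\nu$ with the reference $A$ through a performance-difference telescoping.

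\textbf{Robust term.} Every bit sequence $\nu$ emits is an arbitrary causal labelling. Filtered Shaped-BANK therefore satisfies $\mathrm{OPT}\le\varphi^5 Y_\nu$ on every realized execution, by Theorem~\ref{sg:thm:exact} and Corollary~\ref{new:cor:wrapper}. Taking expectations gives $\mathbb E Y_\nu\ge\mathbb E\mathrm{OPT}/\varphi^5$. Note that $\mathrm{OPT}$ depends only on the exogenous offer process, not on the bits.

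\textbf{Reference certificate.} Along the trajectory of $A$, the effective labels are those produced by $A$. Corollary~\ref{new:cor:wrapper} then gives, pathwise, $Y_A\ge(\mathrm{OPT}-\eta_A)/d$. Taking expectations yields $\mathbb E Y_A\ge(\mathbb E\mathrm{OPT}-\mathbb E\eta_A)/d$.

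\textbf{Transfer.} It remains to show $\mathbb E Y_\nu\ge\mathbb E Y_A-\varepsilon_{\rm roll}$. Define hybrids $\nu_k$ that follow $\nu$ for the first $k$ arrivals and $A$ afterwards. Then $\nu_0=A$, and $\nu_N=\nu$ at the finite horizon $N$. Because the offer and stopping law is exogenous and independent of the choices, conditioning on the state $s_t$ reached under $\nu$ gives
\[
 \mathbb E Y_{\nu_{t+1}}-\mathbb E Y_{\nu_t}
 =\mathbb E_\nu\bigl[Q^A(s_t,\nu(s_t))-Q^A(s_t,A(s_t))\bigr].
\]
If $\nu(s_t)=A(s_t)$, this increment is zero. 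Otherwise $\nu$ picks the bit favoured by the sign of $\widehat\Delta_A(s_t)$. If that sign disagrees with the sign of $\Delta_A(s_t)$, the loss is at most $|\Delta_A(s_t)|\le|\widehat\Delta_A(s_t)-\Delta_A(s_t)|$. If the signs agree, the increment is nonnegative. Ties resolve to $A(s_t)$ and contribute zero. Each increment is therefore at least $-\mathbb E_\nu|\widehat\Delta_A(s_t)-\Delta_A(s_t)|$. Summing over $t$, using integrability to exchange sum and expectation, gives $\mathbb E Y_\nu\ge\mathbb E Y_A-\varepsilon_{\rm roll}$. Combining this with the reference certificate and the robust term yields the displayed maximum.

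\textbf{Main obstacle.} The delicate step is justifying the telescoping identity. One must check that $Q^A$ at a state visited by $\nu$ equals the conditional expected reward of continuing with $A$. This requires the state to include the full algorithmic memory: ledgers, the filter history $P'$, and the reference memory. It also requires the future-offer law conditioned on the observed history not to depend on which bit was chosen, which is the exogeneity assumption. A further point is that early termination is absorbed into $Q^A$. I would handle this by defining $Q^A$ on the full augmented state and invoking the tower property at each arrival time.
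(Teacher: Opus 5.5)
Your proof is correct and follows the paper's route: pathwise robustness gives the first branch, the filtered-wrapper certificate on $A$'s own trajectory gives the reference bound, and a performance-difference bound shows each per-step term $(\nu(s_t)-A(s_t))\Delta_A(s_t)$ is at least $-|\widehat\Delta_A(s_t)-\Delta_A(s_t)|$. The only difference is that you derive the performance-difference identity yourself via hybrid controllers, whereas the paper cites it as Lemma~4.3 of Ross and Bagnell.
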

\begin{proof}
The finite-horizon performance-difference identity
\citep[Lemma~4.3]{ross2014interactive} gives
\[
 \mathbb E Y_\nu-\mathbb E Y_A
 =\mathbb E_\nu\sum_t
    (\nu(s_t)-A(s_t))\Delta_A(s_t).
\]
The sign rule satisfies
$(\nu-A)\widehat\Delta_A\ge0$ and $|\nu-A|\le1$,
so each summand is at least
$-|\widehat\Delta_A-\Delta_A|$.
Apply Corollary~\ref{new:cor:wrapper} to $A$'s own trajectory.
Independently, arbitrary-bit robustness applies pathwise to $\nu$,
giving the other branch of~\eqref{r32:eq:rollout-transfer}.
\end{proof}

The error is measured on $\nu$'s visited states, not on a fixed archive
of reference states; it includes scenario-model and sampling error.
The member-label error $\eta_A$ belongs to $A$, not to $\nu$.
Exact conditional contrasts remove $\varepsilon_{\rm roll}$, but do
not turn rollout bits into perfect membership predictions or supply
a per-execution consistency-$d$ guarantee. With an all-negative
reference, $\eta_A=\mathrm{OPT}$ and the second branch is vacuous.
Thus a good causal reference is a substantive part of this interface.

\subsection{Complete-policy diagnostics}
We use two fixed synthetic offer laws, with all policies starting empty.
The 16 interval types combine starts $\{0,3,6,9\}$ and lengths
$\{1,3,6,12\}$. Initial types are uniform; episodes stop with
probability $.1$ after each offer and are capped at 24 offers. IID
transitions are uniform. In the Context law, an observed binary context
is fixed for each episode. Given current type $i$, next-type weights are
$1+4\mathbf1\{j\text{ compatible with }i\}+2\mathbf1\{w_j\le3\}$
in context zero and
$1+4\mathbf1\{j\text{ contains }i\}+2\mathbf1\{w_j\ge6\}$
in context one. Both laws have full support. Laplace-smoothed counts fit
conditional or pooled transitions; fitted stopping accounts for censoring
at the cap. 

\paragraph{Learning versus lookahead.}
The original attribution study used six independent Context training
trials and two IID controls, each with 2,048 training and 64 fresh test
episodes. Every method continues with the all-negative reference after
the current advice bit. Conditional, pooled and uniform models differ
only in their future laws; uniform learns no transitions and knows the
specified stopping rate. A known-law control knows the generator but
never the realized future. The original Monte Carlo comparison used 16/128 common-random-number
suffixes; the finite-expectation analysis below reuses its fitted models and test episodes.

\begin{table}[t]\centering\small
\caption{Fixed all-negative-reference attribution cohort: six Context
trials and 384 test episodes. No-MC columns use certified finite
expectations on the same models and episodes as the 128-suffix column.
Retained and revoked lengths are in the original interval units;
the primary paired outcome is retained length.}
\label{tab:r33-expectation}
\begin{tabular}{@{}lrrrr@{}}\toprule
& 128 suffixes & No MC & No MC & No MC\\
Method & \% OPT & \% OPT & Retained & Revoked\\\midrule
Conditional & 94.334 & 94.817 & 13.76302 & 4.86719\\
Pooled & 94.474 & 94.458 & 13.68750 & 4.91406\\
Uniform & 94.143 & 94.423 & 13.68229 & 4.66406\\
Known law & 94.660 & 94.665 & 13.74740 & 4.89063\\
Myopic & --- & 90.652 & 12.91667 & 2.45313\\
All negative & --- & 87.297 & 12.45573 & .83333\\
\bottomrule\end{tabular}\end{table}

Finite-state recursion evaluates the 512 test episodes (384 Context,
128 IID) through the remaining horizon. Outward bounds certify nonzero contrast signs under the sampler's 53-bit law, and exact
rational recursion certifies the remaining 26 zeros.

On the 384 Context episodes, conditional minus uniform gains $.08073$
in mean retained length and $.3943$ percentage points in mean
$\mathrm{ALG}/\mathrm{OPT}$ (13 wins, 365 ties, 6 losses). Four of six trial length differences are positive and two negative.
Uniform minus myopic gains $.76563$ length and $3.7711$ points.
Conditional minus pooled gains $.07552$ length and $.3596$ points,
compared with $-.1409$ points using 128 suffixes. Sampling precision
therefore affects this cohort's comparison. In all 128 IID episodes,
the four no-Monte-Carlo laws have identical terminal values, averaging
$95.4432\%$ of OPT. These reused-test results are descriptive; Table~\ref{tab:r34-g-confirmation} reports an
independent cohort.

\paragraph{Independent cohort.}
The independent cohort used 24 Context and eight IID training trials,
each with 2,048 training and 64 test episodes under the same generator,
fitting rules, all-negative reference and finite-expectation evaluation.
All seven policies started empty; a swapped control exchanged the fitted
context rows. The primary contrast was conditional minus uniform length,
averaged within training trial (Table~\ref{tab:r34-g-confirmation}).
Secondary intervals are descriptive and unadjusted for multiplicity.

\begin{table}[t]\centering\small
\caption{Independent Context cohort: 24 training trials and 1,536
 test episodes. Differences are in retained-length units; Student-$t$
 intervals use 23 degrees of freedom. Episodes sharing a fitted model
 are not independent training replications.}
\label{tab:r34-g-confirmation}
\begin{tabular}{@{}lrr@{}}\toprule
Predeclared comparison & Mean difference & Trial-level 95\% interval\\\midrule
Conditional $-$ uniform (primary) & $+.003255$ & $[-.041704,+.048214]$\\
Conditional $-$ pooled & $-.010417$ & $[-.053072,+.032239]$\\
Conditional $-$ swapped & $+.024740$ & $[-.033268,+.082747]$\\
Uniform $-$ myopic & $+.462891$ & $[+.375407,+.550374]$\\
Known law $-$ uniform & $+.019531$ & $[-.019780,+.058843]$\\
\bottomrule\end{tabular}\end{table}

The primary trial means split 11 positive, three zero and ten negative
(22 episode wins, 1,496 ties, 18 losses). Its interval confirms neither a
positive mean gain nor a zero effect; the earlier six-trial positive cohort
remains separate. Uniform minus myopic gains $2.0836$ percentage points
in mean $\mathrm{ALG}/\mathrm{OPT}$ (descriptive 95\% interval $[1.5439,2.6234]$).
Conditional/pooled/uniform mean test negative log-likelihoods per opportunity, averaging
trials equally, are $2.630468/2.772968/2.818286$: better future-law fit does
not establish added terminal reward. In the eight IID trials, conditional
minus uniform is $-.007813$ length, with 95\% interval $[-.049707,.034082]$.

\paragraph{Earlier cohorts and interpretation.}
Earlier exploratory cohorts used four fitted training trials per law
and separate test episodes. They do not provide uniform learning gains:
with an all-negative reference, fitted Context rollout averaged
$92.83\%$ versus uniform's $94.36\%$, and increasing suffixes from
16 to 128 lowered another fitted Context mean from $95.44\%$ to $94.98\%$.
A learned-reference diagnostic instead fit $A$ from context, type and
index bin with Laplace smoothing and threshold $1/2$. Its error branch
was active in 134/256 episodes; rollout improved every trial mean over
$A$, but fitted-minus-uniform averaged $-.25/+.13$ points in IID/Context.
The earlier cohorts' shared fitted models and different references preclude pooling with
the cohorts above.

The known-law controller is not optimal continuation: it follows its
fixed reference in hypothetical suffixes and replans at the next actual
arrival. All tables average episode ratios, not
$\mathbb E\mathrm{ALG}/\mathbb E\mathrm{OPT}$ in the certificate.
Cancellation is material: in the attribution cohort, mean revoked
length is $2.4531$ for myopic, $4.6641$ for uniform and $4.8672$ for
conditional rollout; in the learned-reference cohort it rises from
$1.70/1.85$ for $A$ to $4.39/4.16$ with fitted-law rollout.
On the original attribution cohort's fixed Context trajectories, subtracting a proportional
fee $c$ per revoked length gives conditional minus uniform mean net
reward $(31-78c)/384$. It stays positive throughout the unchanged
algorithm's fee domain $0\le c<1/(2\varphi)$; the six trial differences
remain four positive and two negative. This does not hold across all
environments: the IID rollout-minus-myopic mean is $(63-245c)/128$,
which turns negative above $c=9/35$ within that domain. This is
retrospective accounting in original length units, with no new samples
or fee-aware reoptimization.

On the new fixed Context trajectories, charging $c$ per revoked length
gives conditional minus uniform mean net reward $(5-190c)/1536$ and
uniform minus myopic $(711-2887c)/1536$. Their sign changes at $1/38$
and $711/2887$ both lie inside $0\le c<1/(2\varphi)$. Thus the original
cohort's positive conditional-minus-uniform net difference throughout that domain does not extend to this cohort.

These studies do not estimate deployment-state error in~\eqref{r32:eq:rollout-transfer},
and matched suffix counts do not equalize inference costs.

\section{Further model comparisons}\label{sec:comparison-appendix}
\subsection{Further related models and source scope}
\label{app:further-literature}

\paragraph{Guarantees in the same model.}
The extended version of \citet{karavasilis2025}, Theorem~3.5, proves
proportional-weight Naive; Appendix~\ref{new:app:filter} gives our
self-contained proof. The proceedings, Section~4, Algorithm~3 and
Lemma~4.5, give LR's maximum-conflict rule at parameter $\varphi$ with
competitive ratio at most $2\varphi+1$, attributed there to \citet{garay1997}.
Section~\ref{sec:intro} cites the stopping-prefix antecedent and
finitely robust $1$-increasing barrier. The latter uses $\alpha=1$: for $\alpha<1$,
its construction need not force rejection, since the internal request
has length $v-2\epsilon>\alpha v$ when $\epsilon<(1-\alpha)v/2$.

The matching prediction-free lower bound is Theorem~1 in Section~2 of the October 28, 1995 author manuscript of \citet{tomkins1995}. It allows preemption of infinite-duration calls on a line,
with length rewards paid only to calls retained at termination.
Represent a route from $i$ to $j$ by $[i,j)$: shared-edge conflict,
endpoint compatibility, length reward, and permanent rejection or
preemption agree with our model. Fix every prediction to zero. Any
deterministic algorithm here then induces a prediction-free algorithm
on this subclass, so $R\ge2\varphi+1$. No request expires, and arbitrary
stopping prefixes are allowed. This is a supremum lower bound, not a
claim that a finite witness attains equality. 

\paragraph{Relation to earlier charging arguments.}
For the unweighted $2k$ guarantee,
\citet[Section~3, Propositions~1--2]{borodin2023anyorder} charge optimal
intervals to holders and transfer charges through replacements, separately
bounding two direct and at most $2k-2$ transferred charges. We retain this
device, adding cost-weighted ancestry antichains and the two-cut functional
$F_J(x,y)$ to bound the remaining geometric gap and both exterior charges
jointly under special, potentially length-decreasing replacements and
nonconstant ledger costs.

\paragraph{Prediction and decision interfaces.}
Our bits refer to one fixed final optimum, with deterministic robustness
for every fixed label sequence. In contrast, \citet[Section~2]{dallot2026v2}
use answer-dependent guidance whose replacement by adversarial guidance
is selected independently at each step. TrustGreedy of \citet{boyar2026}
maximizes cardinality using an upfront prediction of the full input set;
it revises a planned solution while preserving irrevocably accepted
requests. \citet{li2026simplechain} study unit-length, unit-profit simple
chains under random and adversarial orders, including complexity with
a trustworthy advice tape.

Several adjacent guarantees allow different decisions.
\citet[Theorems~4.1 and~5.2]{daneshvaramoli2025knapsack} mix fractional
knapsack decisions from critical-value predictions; their integral
conversion assumes bounded unit values and sufficiently small item
weights. It addresses a capacity constraint, not interval conflicts.
\citet{grigorescu2025} use real
packing decisions, advice interpolation and a quantified feasibility
relaxation. Scheduling with partial size predictions permits rate sharing
and resumption \citep{benomar2024partial}; paging permits reloads and a
consumable, resettable relative prediction budget \citep{chen2026paging}.
These mechanisms differ from BANK's inherited positive mass and permanent
rejection. The request-prediction framework of
\citet[Section~1, Theorems~1.1--1.2]{ameli2026covering} uses irrevocable
covering decisions, feasibility under unions, and subadditive costs.
Feasible interval selections need not remain feasible under unions,
so that covering guarantee does not directly transfer.

\paragraph{Weights, order, and source versions.}
The technical version of \citet{berg2025faw}, Corollary~36, treats unit
profits, irrevocable decisions and bounded overlap $t$, with
$1\le\alpha\le t$ and decision bit $0$ denoting acceptance. Splitting
length-weighted intervals into unit requests can change feasibility,
prediction identities and available online decisions, so it does not
transfer that guarantee. \citet{antoniadis2026switching} use membership
bits and length weights, but require release-time order and irrevocable
selection; their guarantees involve additive terms, length ratios or
randomization. Our results instead use arbitrary order and strict ratios.
Berg and Boyar are compared through their public technical versions;
the simple-chain source is arXiv:2608.10376v1, and the paging and
concave-packing sources are arXiv:2606.01342v3 and
arXiv:2411.08332v1.

\paragraph{Advice, learning, and randomization.}
The Pareto viewpoint for trusted and untrusted advice precedes this work
\citep{angelopoulos2020}; \citet{mitzenmacher2020} survey prediction-based
algorithms. \citet{khodak2022} learn predictors through algorithm-dependent
quality measures, without directly supplying a guarantee for our
membership interface. Theorem~2.1 of \citet{bendavid1994} derandomizes
against adaptive offline adversaries in a request-answer model. Our
mixture instead fixes the complete labelled input before randomization,
with consistency and robustness suprema over different input classes;
that theorem alone does not yield an adaptive two-criterion extension.

\paragraph{Immediate service and learning.}
Capacity-constrained online learning loses delayed feedback when a
tracked round is abandoned \citep{ryabchenko2025capacity}; its regret
objective depends on which observations survive. Our duration guarantee
does not imply a learning-regret bound.
\citet{wang2026immediate} study no-buffer scheduling with abandonment
and censored reward feedback. Their general objective uses task rewards,
not occupied duration. With one machine, arrival times $b_i$, known lengths $\ell_i$,
rewards equal to $\ell_i$, and time to finish the final accepted jobs,
the intervals $[b_i,b_i+\ell_i)$ form a start-ordered subclass of our
model. Completed jobs can remain in the held-set record because later
arrivals cannot conflict with them. Our upper guarantees therefore
apply to this restricted case; our full-class lower bounds and their
bandit regret bounds do not transfer between the two models.

\paragraph{Earlier scheduling and nearby tradeoffs.}
Early interval scheduling includes \citet{lipton1994} and
\citet{woeginger1994}; the latter rewards completion of fixed-start/end jobs
started at arrival and lost if aborted, rather than terminal held-set rewards.
\citet{seiden1998} improves the randomized guarantee in this start-ordered setting,
distinguished from any-order selection by
\citet[Section~1]{borodin2023anyorder}; its guarantees do not replace LR
for arbitrary order.

\citet{fung2012} randomize initially between two deterministic schedulers;
arrival order and completion rewards prevent transferring their constants.
\citet{jin2022} prove a tight two-stage matching tradeoff where naive coin
interpolation can be suboptimal for vertex weights. Its advice/hindsight
ratios and irrevocable batches do not give an interval lower bound.
\citet{antoniadis2020} robustify metrical-task and line-matching predictions
through different states and interfaces. \citet{nair2021} learn MIP partial
assignments and branching decisions.

\end{document}